\newtheorem{theorem}{Theorem}[section]
\newtheorem{lemma}[theorem]{Lemma}
\newtheorem{corollary}[theorem]{Corollary}
\newtheorem{example}[theorem]{Example}
\newtheorem{proposition}[theorem]{Proposition}
\newtheorem{definition}{Definition}[section]
\newtheorem{fact}{Fact}[section]
\newtheorem{remark}{Remark}[section]
\renewcommand{\arraystretch}{1.2}
\newcommand{\poly}[1]{\mathrm{poly}\left(#1\right)}
\newcommand{\pr}[1]{\mathrm{Pr}\left[#1\right]}
\newcommand{\lr}[1]{\left(#1\right)}
\newcommand{\bH}{\mathbb{H}}
\newcommand{\C}{\mathbb{C}}
\newcommand{\R}{\mathbb{R}}
\newcommand{\cC}{\mathcal{C}}
\newcommand{\cD}{\mathcal{D}}
\newcommand{\cE}{\mathcal{E}}
\newcommand{\cL}{\mathcal{L}}
\newcommand{\cM}{\mathcal{M}}
\newcommand{\cN}{\mathcal{N}}
\newcommand{\cP}{\mathcal{P}}
\newcommand{\cS}{\mathcal{S}}
\newcommand{\cT}{\mathcal{T}}
\newcommand{\cU}{\mathcal{U}}
\newcommand{\udt}{\mathcal{U}_{\delta t}}
\newcommand{\tudt}{\widetilde{\mathcal{U}}_{\delta t}}
\newcommand{\bnorm}[1]{\Big\|#1\Big\|}
\newcommand{\erfc}{\mathrm{erfc}}
\newcommand{\tmix}{T_\mathrm{mix}}
\newcommand{\Tset}{\mathscr{T}}
\newcommand{\trange}{\operatorname{range}_{\Tset}}
\newcommand{\dist}{\operatorname{dist}}
\newcommand{\range}{\operatorname{range}}
\newcommand{\acosh}{\operatorname{acosh}}
\def\ba#1\ea{\begin{equation}#1\end{equation}}
\definecolor{Magenta}{rgb}{1,0,1}
\begin{document}
\title{Slow Mixing of Quantum Gibbs Samplers}

\author{
{\sf David Gamarnik}\thanks{Operations Research Center, Statistics and Data Science Center,  Sloan School of Management, MIT; e-mail: {\href{mailto:gamarnik@mit.edu}{\texttt{gamarnik@mit.edu}}}} \thanks{Funding from NSF Grant CISE-2233897 is gratefully acknowledged.}
\and
{\sf Bobak T.\ Kiani}\thanks{John A. Paulson School of Engineering and Applied Sciences, Harvard; e-mail: {\href{mailto:bkiani@g.harvard.edu}{\texttt{bkiani@g.harvard.edu}}}.}
\and
{\sf Alexander Zlokapa}\thanks{Center for Theoretical Physics, MIT; e-mail: {\href{mailto:azlokapa@mit.edu}{\texttt{azlokapa@mit.edu}}} } \thanks{Funding from the Hertz Foundation is gratefully acknowledged.}
}
\date{ }

\maketitle
\begin{abstract} 

Thermal (Gibbs) state preparation is a common task in physics and computer science. Recent works~\cite{chen2023efficient,gilyen2024quantum,ding2024efficient} propose algorithms based on open system dynamics associated with the cooling of a system coupled to a bath. The cost of these algorithms is given by a mixing time, analogous to the mixing time of classical Metropolis-like sampling algorithms. Few methods are available to bound mixing times of non-commuting quantum Hamiltonians or to show slow mixing of quantum systems. In contrast, a standard set of tools exist classically to lower-bound mixing times of canonical systems such as the Ising model on a lattice and constraint satisfaction problems on random hypergraphs.

Here, we provide a quantum generalization of these tools via a generic bottleneck lemma that implies slow mixing of quantum systems. This lemma centers on distinct quantum measures of distance, analogous in role to Hamming distance in the classical setting, yet rooted in uniquely quantum principles. We show two such suitable measures of distance in terms of Bohr spectrum jumps and the locality of Lindblad operators. We apply our bottleneck lemma to show unconditional lower bounds on the mixing time of the recently proposed Gibbs samplers for several families of Hamiltonians.
\begin{enumerate}
    \item \emph{Commuting systems: classical Hamiltonians.} Given a classical Hamiltonian and a classical bottleneck giving mixing time lower bound $\tmix = 2^{\Omega(n^\alpha)}$ for classical Metropolis-like algorithms at inverse temperatures $\beta > \beta_c$, we show quantum Gibbs samplers also satisfy $\tmix = 2^{\Omega(n^\alpha)}$ for $\beta > \beta_c$ despite the ability to perform off-diagonal operations; such models include random $K$-SAT instances and $p$-spin glasses.
    \item \emph{Commuting systems: stabilizer code Hamiltonians.} Our formalism yields a short proof of exponential lower bounds on mixing times $\tmix = 2^{\Omega(n)}$ of good $n$-qubit stabilizer codes at constant temperature.
    \item \emph{Non-commuting systems: classical Hamiltonians in transverse field.}
    For $H=-H_0 - h \sum_i X_i$ with constant-degree $H_0$ diagonal in the $Z$ basis, we show that a linear free energy barrier in $H_0$ implies $\tmix = 2^{\Omega(n)}$ for (non-geometrically) local Gibbs samplers at constant temperature and small constant field strength $h$.
    For a sublinear free energy barrier, we develop techniques based on the Poisson Feynman-Kac representation to lift classical bottlenecks of diagonal Hamiltonians to quantum bottlenecks in the presence of a transverse field. As an example, we show an asymptotically tight lower bound of $\tmix = 2^{n^{1/2-o(1)}}$ for the $\sqrt n \times \sqrt n$ 2D ferromagnetic transverse field Ising model and (geometrically) quasi-local Gibbs samplers at constant temperature.
\end{enumerate}
\end{abstract}

\pagebreak
\tableofcontents
\pagebreak

\section{Introduction}

The problem of efficiently sampling from a thermal (Gibbs) distribution is a fundamental task in computer science and statistical mechanics. For a quantum system described by a Hamiltonian $H$ and an inverse temperature $\beta$, the task of a quantum Gibbs sampler is to prepare the Gibbs state
\begin{equation}
    \rho_\beta = \frac{e^{-\beta H}}{\text{Tr}(e^{-\beta H})}.
\end{equation}
In physics, $\rho_\beta$ represents the thermal equilibrium state of the system and is thus of interest in the context of quantum simulation, such as measuring properties of low-energy states or probing phase transitions with respect to temperature~\cite{alhambra2023quantum}. In computer science, 
sampling from the Gibbs state is a fundamental step in the algorithmic problem of counting in graphs~\cite{jerrum1996markov}. Also
sampling from $\rho_\beta$ is used as a primitive in tasks such as quantum semidefinite programming solvers~\cite{brandao2017quantum,van2017quantum,brandao2019quantum}, and is generally expected to be useful in optimization due to the applicability of its classical analogue of Metropolis sampling~\cite{metropolis1953equation}.

In nature, a physical system in thermal equilibrium with a heat bath can be modeled as a system interacting with its environment in the weak-coupling infinite-time limit. The dynamics of this open system are governed by the Lindblad equation; in particular, the weak-coupling infinite-time limit corresponds to a choice of quantum channel $\cL$ known as the Davies generator~\cite{davies1974markovian,davies1976quantum,davies1979generators}, whose time evolution $e^{\cL t}$ maps any input state to the Gibbs state as $t\to\infty$. The Davies generator has a fixed point $\cL(\rho_\beta) = 0$ and satisfies detailed balance (i.e., reversibility). However, the implementation of such generators is challenging due to potential non-locality and lack of smoothness in the construction of the Lindbladian. Finding an efficient alternative to the Davies generator that is implementable on a quantum computer has been a long-standing challenge. Recent developments have made significant success in this regard~\cite{chen2023efficient,gilyen2024quantum,ding2024efficient}, constructing Lindbladians $\cL$ that overcome some of the limitations posed by Davies generators yet also satisfy the fixed point and detailed balance properties. 

To guarantee that a Lindbladian Gibbs sampler prepares the Gibbs state \emph{efficiently} from any initialization, one must upper-bound the mixing time associated with approximating the Gibbs distribution to within a specified accuracy. More formally, given a Lindbladian $\cL$, $\tmix(\cL)$ is the smallest time for which, for any states $\rho$ and $\sigma$, $\norm{e^{\cL T}[\rho-\sigma]}_1 \leq \epsilon \norm{\rho - \sigma}_1 $ for constant $\epsilon \leq \frac{1}{2}$:
\begin{equation}
\label{eq:tmix}
    \tmix := \max_{\rho, \sigma} \min \left\{ T: \norm{e^{\cL T}[\rho-\sigma]}_1 \leq \epsilon \norm{\rho - \sigma}_1 \right\}.
\end{equation}
Tools to upper or lower bound the mixing times for quantum Gibbs samplers remain limited; specific results known in the literature will be given below. Most works focus on upper bounds and either address commuting Hamiltonians (using log-Sobolev inequalities~\cite{kastoryano2013quantum,bardet2024entropy,bardet2023rapid,capel2020modified,bardet2022approximate,bardet2021modified,gao2022complete,kochanowski2024rapid}) or irreversible quantum Gibbs samplers (using hypocoercivity~\cite{fang2024mixing,li2024quantum}). Few results are known showing provable lower bounds on the mixing time of these quantum Gibbs samplers. Available proofs are largely restricted to mixing times of commuting Hamiltonians such as those for quantum codes~\cite{dennis2002topological,alicki2010thermal,hong2024quantum}. To the best of our knowledge, the following natural questions remain open.
\begin{enumerate}[label=(\roman*)]
    \item Given a classical Hamiltonian that mixes slowly for classical Metropolis-like algorithms, can a quantum Gibbs sampler be efficient?
    \item In the non-commuting setting, can a quantum Gibbs sampler be efficient for such classical Hamiltonians when a transverse field is introduced?
\end{enumerate}
In this work, we will provide lower bounds --- in particular, exponential lower bounds for several models --- answering these questions. Our results will be obtained in the low-temperature regime,
when the temperature is below some constant threshold (system
size independent), and when the 
parameter associated with the transverse field is also a constant
below some threshold.

Bounds on the mixing time for classical Hamiltonians 
are understood rather well for classical
sampling algorithms such as Markov Chain Monte Carlo (MCMC) and its specific
variant called Glauber dynamics. Polynomial upper
bounds on mixing time can be obtained by considering isoperimetric properties
of the underlying chain state space, specifically using 
Poincar\'e/Cheeger/log-Sobelev inequalities,
or other methods, such as coupling~\cite{levin2017markov}.
Lower bounds on mixing time can be obtained
by identifying subsets of the state space for which the (probabilistic) flow
out of the subsets is low, often exponentially low. These subsets then
create
barriers to fast mixing sometimes called free energy barrier. A canonical example
of slow mixing example is the 
Ising model on a lattice~\cite{thomas1989bound,levin2017markov},
and hard-core model, also on a lattice
\cite{randall2006slow}. The Ising model on the lattice
is one of the models we consider in this paper.

Slow mixing has been analyzed also in the context of random structures such as 
random graphs, random constraint satisfaction problems (random CSPs) and (classical) spin glasses.
In this case, identifying sets with low flow-out directly is difficult. Instead,
the existence of such sets is proven probabilistically roughly along the following
lines. First, the property known as \emph{overlap gap property} (OGP) is proven to take
place with high probability. The property states that
the set of pairwise distances between states with low energy  
exhibit  a gap~\cite{mezard2009information,gamarnik2021overlap,gamarnik2022disordered,gamarnik2023disordered}. The presence of the
gap implies that the low energy states can be partitioned into clusters, separated
by large distance and large energy barriers, thus creating
free energy barriers~\cite{ben2018spectral}. The presence of such clusters
in random CSPs was established in~\cite{achlioptas2006solution} and~\cite{mezard2005clustering}, and 
in spin glasses it was established 
in~\cite{talagrand2000rigorous,gamarnik2023shattering}. For a book
treatment see~\cite{talagrand2010mean}.
Such clusters constitute examples
of subsets with low flow-out rates, thus proving slow mixing. The clustering
structure of the solutions space is often called dynamical replica symmetry
breaking in the statistical physics literature~\cite{mezard2009information}.

Our work provides a quantum generalization of the classical tools associated with insights such as slow mixing arising due 
to sets with low flow out and the OGP.
Key to this work is a notion of distance associated with quantum Gibbs samplers constructed from transitions between eigenstates that are accessible via the Davies generator. While classically, the Hamming distance often suffices to define an OGP, the correct quantum generalization is not obvious: for example, how far apart are the (unnormalized) states $\ket{0\cdots0} + \ket{1\cdots 1}$ and $\ket{0 \cdots 0} - \ket{1\cdots 1}$, or the states $\ket{0\cdots 0}$ and $\sqrt{1-\epsilon^2}\ket{0\cdots 0} + \epsilon \ket{1\cdots 1}$? For Gibbs samplers based on Lindbladian dynamics, one natural measure of distance is based on the lightcone of the Lindbladian $\cL$. However, we are interested in the regime where $\cL$ is applied exponentially many times, preventing a naive lightcone analysis from showing slow mixing. Moreover, some of our results will hold even when the Lindblad operators are geometrically non-local.

We provide a measure of distance in terms of jump operators and the Hamiltonian eigenbasis, giving an appropriate metric to prove a bottleneck lemma. We apply the lemma to show slow mixing under quantum Gibbs samplers for all classical Hamiltonians that mix slowly under classical Metropolis-like algorithms, such as random $K$-SAT instances. Our methods easily extend to commuting Hamiltonians, such as those of quantum stabilizer codes. This formalism thus resolves the first open question in the negative.

Adding a transverse field (with constant strength) to a classical Hamiltonian breaks commutativity of the Hamiltonian; generically, this makes the Davies generator non-local and introduces technical challenges due to a nontrivial eigenbasis. We show that for local (or quasi-local) Gibbs sampling algorithms, such as~\cite{chen2023efficient,gilyen2024quantum} applied to geometrically local Hamiltonians, slow mixing lower bounds persist in the presence of a transverse field. We prove this by using the Poisson Feynman-Kac representation~\cite{crawford2007thermodynamics,leschke2021free} of the Gibbs state, and then by upper bounding the support on the classical bottleneck region. These bounds are  computed in our  non-commuting Hamiltonian Gibbs measure by evaluating a path integral over independent Poisson point processes associated with the  Feynman-Kac representation. As an application of this argument, we show a superpolynomial mixing time lower bound for the 2D transverse field Ising model in the ferromagnetic phase. This provides a negative answer to the second open question for one of the most canonical systems.

Before presenting our results in more detail in Sec.~\ref{sec:summary}, we review some of the related work in the literature.

\subsection{Related work}
Our results concern mixing time lower bounds that apply to the successful line of recent work on Lindbladian Gibbs samplers~\cite{rall2023thermal,chen2023quantum,chen2023efficient,wocjan2023szegedy,ding2024single,ding2024efficient,gilyen2024quantum} based on the Davies generator~\cite{davies1974markovian,davies1976quantum,davies1979generators}. At the same time though we should acknowledge the existence of Gibbs samplers based on other primitives such as quantum phase estimation and singular value transformation~\cite{temme2011quantum,chowdhury2016quantum,gilyen2019quantum}. We briefly review what is known of the mixing time~\eqref{eq:tmix} of Lindbladian Gibbs samplers with fixed point $e^{\cL t}(\rho_\beta) = \rho_\beta$.

The mixing time can be analyzed by bounding the spectral gap associated with the map $\cL$. For example, lower bounds on the spectral gap using correlation decay and Poincar\'e inequalities can show upper bounds on the mixing time of commuting Hamiltonians~\cite{kastoryano2016quantum,temme2017thermalization}. (In comparison, our results concern \emph{lower} bounds on the mixing time.) For specific commuting models related to error correction, a notable line of work shows results related to the Arrhenius law~\cite{bravyi2009no,brown2016quantum}: heuristically, the Arrhenius law anticipates that the mixing time is exponential in the energy barrier given by the maximal energy of intermediate states in a sequence of local transformations that take a ground state to an orthogonal ground state, minimized over all such sequences. However, the connection between the energy barrier and the thermalization time is not fully rigorous. Instead, the Arrhenius law only provides an \emph{upper} bound on the mixing time~\cite{bravyi2013quantum,michnicki20143d,yoshida2014violation}. A more careful characterization of the mixing time requires analyzing \emph{free} energy, which includes the entropic contribution.

In the self-correcting quantum memory literature, related upper bounds on the mixing times of error correction codes and stabilizer Hamiltonians are found in~\cite{alicki2009thermalization,komar2016necessity,lucia2023thermalization,temme2015fast}. Several lower bounds on mixing time (with respect to the Davies generator) are also known. The most important of these works are the 4D toric code lower bounds~\cite{dennis2002topological,alicki2010thermal}, which use a Peierls-like argument to show slow mixing (and requires a Lindbladian that satisfies detailed balance). Work related to codes under weak local perturbation includes~\cite{hastings2011topological}, which rules out topological order from local commuting Hamiltonians on the 2D lattice. Perhaps the most relevant result in this line of work is~\cite{chesi2010thermodynamic}, which considers non-commuting Hamiltonians due to subsystem codes and shows a polynomial mixing time lower bound (as opposed to the superpolynomial lower bounds shown here, which we refer to as \emph{slow} mixing) via the structure provided by logical operators in the code.

Another line of work on mixing time bounds depends on log-Sobolev (and modified log-Sobolev) inequalities~\cite{kastoryano2013quantum,bardet2024entropy,bardet2023rapid,capel2020modified,bardet2022approximate,bardet2021modified,gao2022complete,kochanowski2024rapid} to show upper bounds on mixing time. These results are largely restricted to local commuting Hamiltonians, which imply locality of the Davies generator. We note the work of~\cite{chen2021fast} instead uses a modified log-Sobolev inequality under the eigenstate thermalization hypothesis. Generalizing these techniques to non-commuting Hamiltonians without such structure remains an open question~\cite{kochanowski2024rapid}. A distinct approach to upper-bounding mixing times is based on hypocoercivity~\cite{fang2024mixing,li2024quantum}. Although these techniques can provide mixing time upper bounds for non-commuting models such as the 1D transverse field Ising model, they only apply to Lindbladian Gibbs samplers with broken detailed balance (i.e., irreversible Markov processes), unlike the Gibbs samplers of~\cite{chen2023efficient,gilyen2024quantum,ding2024efficient}.

Recent results have also analyzed hardness of Gibbs sampling of instantaneous quantum polynomial time (IQP) circuits\footnote{$n$-qubit unitaries for these circuits take the form $U=H^{\otimes n} D H^{\otimes n}$ where $D$ is a constant depth diagonal unitary and $H$ are Hadarmard gates.} at constant temperature~\cite{bergamaschi2024quantum,rajakumar2024gibbs}. These works construct a family of commuting Hamiltonians whose Gibbs states can be efficiently sampled quantumly, but, assuming the polynomial hierarchy does not collapse, cannot be efficiently sampled classically. Given block encoding access to a Hamiltonian, the work of~\cite{wang2023quantum} also shows a query lower bound of $\Omega(\beta)$ for Gibbs state preparation.
Finally, the recent work of~\cite{hong2024quantum} shows a mixing time lower bound of $e^{\Omega(\sqrt n)}$ for a specific class of quantum stabilizer codes with distance $d=\Theta(\sqrt{n})$ at low enough constant temperature; they comment that their techniques should extend to show a lower bound of $e^{\Omega(n)}$.

We note that exponential mixing time lower bounds for local Gibbs samplers --- much like mixing time lower bounds on classical Glauber dynamics and other local classical algorithms --- do not preclude the possibility of efficient algorithms that lack such structure. A notable example is the work of~\cite{bravyi2014monte,bravyi2017polynomial}, which provide classically efficient algorithms to estimate the partition function of families of quantum spin glasses such as the ferromagnetic transverse field Ising model, analogous to the seminal algorithm of~\cite{jerrum1993polynomial} for the classical Ising model. However, as it stands now, there is no known polynomial time equivalence between efficient estimation of partition functions and efficient preparation of Gibbs states in the quantum setting. For example, the algorithm of \cite{bravyi2017polynomial} does not directly allow for the preparation of a Gibbs state or the efficient estimation of the thermal expectation of arbitrary local observables.

\subsection{Summary of our results}
\label{sec:summary}

We summarize the techniques developed in this work, and we informally state the results that follow from the main lemmas. Specifically, 
 (1) we derive a  certain bottleneck lemma which we use for proving the slow mixing of quantum systems, (2) we derive tools for generalizing classical bottlenecks to quantum Hamiltonians when a transverse field is introduced, and (3) we provide concrete examples of systems with slow mixing.

In the classical setting of a Hamiltonian defined over length-$n$ bitstrings, a bottleneck argument (see, e.g., \cite{levin2017markov} for a pedagogical introduction) uses three orthogonal regions $A, B, C \subset \{0,1\}^n$. 
The regions are chosen such that $A$ and $C$ have large support in the Gibbs measure $\pi_\beta$, while the bottleneck region $B$ has support $\pi_\beta(B) = \exp\left[-n^\alpha\right]$ for some $\alpha>0$. 
For simplicity, we take for now $B = (A \cup C)^c$. If regions $A$ and $C$ are separated by Hamming distance $d$, and the classical Markov process flips fewer than $d$ bits per step, then an initial configuration in $A$ must necessarily pass through $B$. Since $\pi_\beta$ is the stationary distribution of the Markov process, it can be shown that time $\exp[n^\alpha]$ is required to traverse the bottleneck and acquire large support in $C$.

The quantum generalization of this argument replaces $\pi_\beta$ and orthogonal regions $A, B, C$ with the Gibbs state $\rho_\beta$ and orthogonal projectors $\Pi_A, \Pi_B, \Pi_C$. Two technical contributions are required. First, a notion of distance between regions (analogous to Hamming distance) is needed to measure the separation between $A$ and $C$. Second, the range of the Lindbladian Gibbs sampler $\cL$ must be bounded under this measure of distance 
in order to show that  $\Tr[\Pi_C \cL(\rho)] = 0$ holds for any state $\rho = \Pi_A \rho \Pi_A$, further implying that the Gibbs sampler must pass through region $\Pi_B = I - \Pi_A - \Pi_C$. We provide suitable definitions for distance 
between subspaces and the range of a Lindbladian to allow a quantum bottleneck lemma for slow mixing. An informal and simplified form of our result is stated here; see Proposition~\ref{prop:mix_time_discrete} and Corollary~\ref{cor:slow} for formal and more general results.
\begin{lemma}[Bottleneck lemma for slow mixing, informal]
\label{lem:bottleneck}
    Let $\Pi_A, \Pi_C$ be orthogonal projectors in a Hilbert space $\bH_n$ of dimension $2^n$, and let $\Pi_B = I - \Pi_A - \Pi_C$. Let $H$ be a Hamiltonian acting on $\bH_n$ and $\beta$ be an inverse temperature, and let $\cL$ be a Lindbladian Gibbs sampler with fixed point $\rho_\beta = e^{-\beta H}/\Tr(e^{-\beta H})$. Then, for some well-defined notion of distance and range, assume
    \begin{equation}
        \dist(A, C) > \range(\cL), \quad \Tr[\Pi_B \rho_\beta] = \exp[-\Omega(\poly{n})], \quad \mathrm{and} \quad \Tr[\Pi_A \rho_\beta], \Tr[\Pi_C \rho_\beta] = \Omega(1).
    \end{equation}
    Further assume
    \begin{equation}
        \Tr[\Pi_C \frac{e^{-\beta H/2} \Pi_A e^{-\beta H/2}}{\Tr[\Pi_A e^{-\beta H}]}] = o(1),
    \end{equation}
    which is trivially satisfied if $[\Pi_A,H]=[\Pi_C,H]=0$.
    Then the mixing time $\tmix$ of the Gibbs sampler satisfies
    \begin{equation}
        \tmix(\cL) = \exp\left[\Omega(\poly{n})\right].
    \end{equation}
\end{lemma}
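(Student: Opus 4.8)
The plan is to adapt the classical bottleneck argument to the quantum setting by tracking the overlap of the evolved state with the projector $\Pi_C$. Fix the initial state $\rho_0 = e^{-\beta H/2}\Pi_A e^{-\beta H/2} / \Tr[\Pi_A e^{-\beta H}]$; this is the ``$\beta/2$-twisted'' conditioning of the Gibbs state on $A$, which is the natural choice because it is a valid density matrix, it lies ``inside $A$'' in the appropriate sense, and — crucially — it can be written as $\rho_0 = e^{-\beta H/2}(\sigma)e^{-\beta H/2}/Z_A$ for a density matrix $\sigma$ supported on $\range(\Pi_A)$, so that the detailed-balance structure of $\cL$ can be invoked cleanly. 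The first step is to show that $\rho_0$ starts with essentially no mass in $C$: by the third displayed hypothesis, $\Tr[\Pi_C \rho_0] = o(1)$. The second step is to show that the fixed point $\rho_\beta$ has $\Omega(1)$ mass in $C$ by hypothesis, so $\rho_0$ and $\rho_\beta$ are $\Omega(1)$-separated in a quantity ($\Tr[\Pi_C \cdot]$) that is $1$-Lipschitz with respect to trace distance; hence by the definition \eqref{eq:tmix} of $\tmix$, it suffices to lower-bound the time for $e^{\cL t}[\rho_0]$ to acquire $\Omega(1)$ mass in $C$.

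The heart of the argument is a differential inequality for $f(t) := \Tr[\Pi_C\, e^{\cL t}[\rho_0]]$ together with a bound on $g(t) := \Tr[\Pi_B\, e^{\cL t}[\rho_0]]$. The key structural input — this is where the ``$\dist(A,C) > \range(\cL)$'' hypothesis and the definition of $\range$ enter — is that $\cL$ cannot move amplitude directly from the $A$-subspace to the $C$-subspace: formally, $\Tr[\Pi_C\, \cL(\rho)] = 0$ whenever $\rho = \Pi_A\rho\Pi_A$ (and more generally the flow into $C$ is controlled only by the current mass in $B$). Differentiating, $\frac{d}{dt} f(t) = \Tr[\Pi_C\, \cL(e^{\cL t}[\rho_0])]$, and decomposing $e^{\cL t}[\rho_0]$ into its $\Pi_A$-, $\Pi_B$-, $\Pi_C$-block pieces, the $\Pi_A$-block contributes zero by the range hypothesis, so $\frac{d}{dt} f(t)$ is bounded by $\cL$'s operator norm times $g(t)$ plus possibly a term proportional to $f(t)$ itself. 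Since $\|\cL\| = \poly{n}$ (the Gibbs sampler is built from $\poly{n}$ local jump operators), we get $\frac{d}{dt} f(t) \le \poly{n}\,(g(t) + f(t))$. To close the loop we must argue that $g(t)$ stays small for a long time: here one uses that $\rho_\beta$ is the fixed point and the comparison between $e^{\cL t}[\rho_0]$ and $\rho_\beta$ restricted to $B$, invoking $\Tr[\Pi_B \rho_\beta] = \exp[-\Omega(\poly{n})]$ and a detailed-balance / reversibility estimate (à la the classical conductance bound) to show $g(t) \le \exp[-\Omega(\poly{n})] + (\text{small})\cdot t$. Combining the two gives, by Grönwall, that $f(t)$ remains $o(1)$ until $t = \exp[\Omega(\poly{n})]$, which is exactly the claimed lower bound on $\tmix$.

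The main obstacle — and the place where genuine care is needed — is controlling the ``leakage'' of mass into $B$, i.e. proving the bound on $g(t)$. Classically this is handled by a conductance argument using that the Gibbs measure is stationary and reversible; quantumly one must make sense of the analogous statement at the level of density matrices and the Lindbladian $\cL$, using detailed balance of the Gibbs sampler (the KMS or GNS inner product under which $\cL$ is self-adjoint) to relate the flow $\Tr[\Pi_B \cL(\cdot)]$ from the $A$-side to the stationary weight $\Tr[\Pi_B \rho_\beta]$. The subtlety is that $\Pi_A$ and $\Pi_C$ need not commute with $H$, so the ``$\beta/2$-twist'' in the definition of $\rho_0$ is essential: it is what makes the relevant flow quantities compare correctly to their stationary values under the detailed-balance inner product. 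A secondary technical point is bounding $\|\cL\|$ and verifying that the notion of $\range(\cL)$ used in the hypotheses really does enforce $\Tr[\Pi_C\cL(\rho)]=0$ for $\rho$ supported on $A$; this should follow directly from the definitions of distance and range given earlier (Proposition~\ref{prop:mix_time_discrete}), since those are designed precisely so that a single application of $\cL$ moves support by at most $\range(\cL)$ in the relevant metric, and $\dist(A,C) > \range(\cL)$ then forbids a direct $A \to C$ transition.
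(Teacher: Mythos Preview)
Your high-level structure --- the choice of initial state $\rho_0 = e^{-\beta H/2}\Pi_A e^{-\beta H/2}/\Tr[\Pi_A e^{-\beta H}]$ and the plan to track $f(t)=\Tr[\Pi_C\,e^{\cL t}[\rho_0]]$ and $g(t)=\Tr[\Pi_B\,e^{\cL t}[\rho_0]]$ --- matches the paper's. But two of your key steps differ from the paper's in ways that matter, and one of them is a genuine gap.

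\textbf{Control of $g(t)$.} You propose to bound the leakage into $B$ via detailed balance / KMS self-adjointness / a conductance-type argument. The paper does something much simpler that uses \emph{only} positivity and the fixed-point property, not reversibility: since $\Pi_A \preceq I$, one has $e^{-\beta H/2}\Pi_A e^{-\beta H/2} \preceq e^{-\beta H}$, i.e.\ $\rho_0 \preceq \rho_\beta/\Tr[\Pi_A\rho_\beta]$. The evolution is completely positive and fixes $\rho_\beta$, so this operator inequality is preserved for all $t$, giving the \emph{uniform} bound $g(t) \le \Tr[\Pi_B\rho_\beta]/\Tr[\Pi_A\rho_\beta]$. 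This is exactly why the $\beta/2$-twist in $\rho_0$ is the right choice --- not because of any detailed-balance inner product, but because it makes $\rho_0$ operator-dominated by a multiple of the fixed point.

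\textbf{The increment inequality.} Your differential inequality $f'(t) \le \poly{n}\,(g(t)+f(t))$ is where the argument actually breaks. With a $\poly{n}\cdot f$ term on the right, Gr\"onwall gives $f(t) \lesssim e^{\poly{n}\,t}\bigl(f(0)+\poly{n}\int_0^t g\bigr)$; even with $g$ uniformly $\exp[-n^\alpha]$, this only forces $t \gtrsim n^\alpha/\poly{n}$ before $f$ becomes $\Omega(1)$ --- a polynomial lower bound, not the claimed $\exp[\Omega(\poly{n})]$. The paper circumvents this by discretizing to $\cT_\delta = I + \delta\cL$ and using that $\cT_\delta$ is CPTP: then $\cT_\delta^\dagger(\Pi_C) \preceq I$ together with the range condition (which forces the support of $\cT_\delta^\dagger(\Pi_C)$ into the $(\Pi_B+\Pi_C)$-block) yields $\Tr[\Pi_C\,\cT_\delta(\sigma)] \le \Tr[(\Pi_B+\Pi_C)\sigma]$ with coefficient \emph{one}, i.e.\ $f(t)-f(t-1)\le g(t-1)$. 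Telescoping then gives $f(t)\le f(0)+t\cdot\sup_s g(s)$ with no $\norm{\cL}$ anywhere, and now the uniform bound on $g$ immediately gives $t \ge \exp[\Omega(\poly{n})]$. The point is that bounding the flow into $C$ via $\norm{\cL}$ is far too crude; you must use the contractivity of the channel itself.
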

We deliberately omit the definitions of distance and range from the informal statement of the bottleneck lemma, as they require more technical background. Briefly, we describe two approaches to defining distance and range. The first approach defines distance in terms of the Davies generator, which in a single step, jumps between different eigenbases of the Hamiltonian (these transitions are expressed in the so-called Bohr spectrum). As the Gibbs samplers of~\cite{chen2023efficient,gilyen2024quantum} can be written as polynomials of Davies generator jump operators, we show that the range of the Gibbs sampler can be directly measured under this definition of distance. However, without good control over the eigenstates of $H$ (e.g., for non-commuting Hamiltonians), it can be difficult to show that $\Pi_A$ and $\Pi_C$ are far apart with respect to this distance. Consequently, we also introduce a second approach that is instead based on the locality of the associated Lindbladians. For geometrically local Hamiltonians, this notion of distance is often easier to analyze. In particular, the Gibbs samplers of~\cite{chen2023efficient,gilyen2024quantum} are quasi-local for lattice Hamiltonians due to Lieb-Robinson bounds, allowing easier access to both $\dist(A, C)$ and $\range(\cL)$. Either approach provides sufficient definitions for Lemma~\ref{lem:bottleneck} to follow; see Sec.~\ref{sec:distdavies} and~\ref{sec:distlocal} for formal definitions.

For classical Hamiltonians, it is straightforward to define $\Pi_A, \Pi_B$ and $\Pi_C$ diagonal in the computational basis satisfying the conditions of Lemma~\ref{lem:bottleneck}. In the theorem below, we simplify the conditions on the Hamiltonian and Lindbladian needed for slow mixing; a more general result is available in Proposition~\ref{prop:slowclassical}.
\begin{theorem}[Slow mixing of classical Hamiltonians, informal]
\label{thm:classical}
    Let $H$ be a Hamiltonian diagonal in the $Z$ basis, and let $\Pi_A, \Pi_C$ be orthogonal projectors diagonal in the $Z$ basis such that $A$ and $C$ are separated by a Hamming distance that increases with system size. Let $\Pi_B = I - \Pi_A - \Pi_C$ and $\rho_\beta$ be a Gibbs state at inverse temperature $\beta$ such that
    \begin{equation}
        \Tr[\Pi_B \rho_\beta] = \exp[-\Omega(\poly{n})], \quad \mathrm{and} \quad \Tr[\Pi_A \rho_\beta], \Tr[\Pi_C \rho_\beta] = \Omega(1).
    \end{equation}
    Then a Lindbladian Gibbs sampler $\cL$ with fixed point $\rho_\beta$ and constant-local jump operators satisfies
    \begin{equation}
        \tmix(\cL) = \exp[\Omega(\poly{n})].
    \end{equation}
\end{theorem}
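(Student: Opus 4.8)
The plan is to reduce Theorem~\ref{thm:classical} to the general bottleneck lemma (Lemma~\ref{lem:bottleneck}, i.e., Proposition~\ref{prop:mix_time_discrete}) by verifying its hypotheses one by one in the classical (computational-basis-diagonal) setting. Since $H$ is diagonal in the $Z$ basis and $\Pi_A, \Pi_C$ are diagonal in the $Z$ basis, they all commute, so $[\Pi_A, H] = [\Pi_C, H] = 0$. This immediately discharges the ``off-diagonal overlap'' condition
\begin{equation}
\Tr\!\left[\Pi_C \frac{e^{-\beta H/2} \Pi_A e^{-\beta H/2}}{\Tr[\Pi_A e^{-\beta H}]}\right] = \frac{\Tr[\Pi_C \Pi_A e^{-\beta H}]}{\Tr[\Pi_A e^{-\beta H}]} = 0,
\end{equation}
using orthogonality $\Pi_A \Pi_C = 0$. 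The two support conditions $\Tr[\Pi_B \rho_\beta] = \exp[-\Omega(\poly{n})]$ and $\Tr[\Pi_A \rho_\beta], \Tr[\Pi_C \rho_\beta] = \Omega(1)$ are assumed outright in the theorem statement, so nothing is needed there.

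The one genuinely substantive step is to show that the Hamming-distance separation between $A$ and $C$ implies $\dist(A,C) > \range(\cL)$ for the appropriate quantum notion of distance and range. Here I would invoke the locality-based notion of distance from Sec.~\ref{sec:distlocal}: because $H$ is diagonal in the $Z$ basis, its eigenbasis is exactly the computational basis, so the jump operators of a constant-local Gibbs sampler, when expanded in the Bohr/eigenbasis picture, act as bounded-range operators in Hamming distance — a single application of $\cL$ changes the computational-basis support of any input by at most $O(1)$ bits (more precisely, twice the locality, accounting for the conjugation structure $L \rho L^\dagger$ and the Lindbladian form). One then argues that for $\rho = \Pi_A \rho \Pi_A$, the state $\cL(\rho)$ has zero overlap with $\Pi_C$ provided the Hamming distance between $A$ and $C$ exceeds this constant range; since the Hamming separation is assumed to grow with $n$, this holds for all sufficiently large $n$. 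This is the place where diagonality of $H$ is essential: it is what makes the eigenbasis trivial and hence makes ``constant-local jump operators'' translate cleanly into ``constant range in the relevant distance.''

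With all hypotheses of Lemma~\ref{lem:bottleneck} verified, the conclusion $\tmix(\cL) = \exp[\Omega(\poly{n})]$ follows directly. I expect the main obstacle — really the only non-bookkeeping point — to be carefully matching the constant-locality assumption on the jump operators to whichever of the two formal distance/range definitions is used in Proposition~\ref{prop:mix_time_discrete}, and confirming that the Gibbs samplers of interest (e.g., \cite{chen2023efficient,gilyen2024quantum}) indeed have jump operators that are constant-local, or can be truncated to constant-local at negligible cost in the fixed-point and detailed-balance properties; this is a structural check on the sampler construction rather than a new estimate. Everything else reduces to the commutativity relations $\Pi_A \Pi_C = 0$ and $[\Pi_A, H] = [\Pi_C, H] = 0$ together with the assumed Gibbs-measure support bounds, so the proof is short once the reduction is set up. A fully general version that weakens ``constant-local'' and spells out the precise Hamming-distance growth rate needed is deferred to Proposition~\ref{prop:slowclassical}.
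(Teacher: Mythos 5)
Your overall reduction is correct and mirrors the paper's: Proposition~\ref{prop:mix_time_discrete} is the right target, commutativity of $H$ with $\Pi_A, \Pi_C$ disposes of the decoupling condition $\Tr[\Pi_C\sigma_0]=0$ exactly as you compute, the two support conditions are hypotheses, and the only substantive work is converting ``$k$-local jump operators + growing Hamming separation'' into the distance-exceeds-range condition of the bottleneck lemma, which is exactly what the formal version, Proposition~\ref{prop:slowclassical}, does.

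However, the formalism you say you would invoke is the wrong one, and the distinction is not cosmetic. You cite the locality-based notions $\range_Z$, $\dist_Z$ of Sec.~\ref{sec:distlocal}, but the paper's proof of this theorem deliberately uses the jump-operator notions $\trange$, $\dist_{\Tset}$ of Sec.~\ref{sec:distdavies}. The issue is that $\range_Z(\cL)=O(1)$ requires the Lindblad/Kraus operators of the channel to act nontrivially on $O(1)$ qubits. For a Davies-type sampler the Lindblad operators are $\sqrt{\gamma(\nu)}\,A_\nu^a$, and $A_\nu^a$ projects onto the set of bitstrings for which applying $A^a$ changes the energy by exactly $\nu$. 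For a dense diagonal Hamiltonian --- Curie--Weiss, $p$-spin glass, random $K$-SAT, all listed in Table~\ref{tab:slow_mixing} as applications of this theorem --- the energy change from a single flip depends on essentially all $n$ spins, so $A_\nu^a$ acts nontrivially on $\Theta(n)$ qubits and $\range_Z(\cL)=\Theta(n)$. The locality-based route therefore fails on exactly the models this theorem is meant to cover. The $\Tset$-based notion is the fix: each transition $\ket{x}\bra{y}\in\Tset$ generated by a $k$-local $A^a$ has $|x-y|_H\le k$ regardless of the operator nonlocality of $A_\nu^a$, so a channel of degree $\ell$ in $\Tset$ moves Hamming support by at most $k\ell$, eventually exceeded by the growing $|A-C|_H$. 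Ironically, the prose in your proposal already describes precisely this $\Tset$-style reasoning (``expanded in the Bohr/eigenbasis picture, act as bounded-range operators in Hamming distance''), so the repair is simply to cite $\trange$ and $\dist_{\Tset}$ rather than $\range_Z$ and $\dist_Z$. (Your ``twice the locality'' aside is also slightly off: for diagonal $\Pi_C$ and $\sigma=\Pi_A\sigma\Pi_A$ the bound is $k\ell$ from the Kraus degree, not a factor of two from the $L\rho L^\dagger$ conjugation, but since only $O(1)$ vs.\ growing matters here this does not change the conclusion.)
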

The conditions of Theorem~\ref{thm:classical} are satisfied in the classical analogue, where a Gibbs sampler is typically constructed from single bit flips (equivalently, constant-local jump operators). Moreover, since $\Tr[\Pi_B \rho_\beta] = \pi_\beta(B)$ for a diagonal Hamiltonian and projector $\Pi_B$, the bottleneck condition is identical to the classical case. Informally, the above theorem can thus be rephrased as a proof that if a classical Hamiltonian $H$ mixes slowly under Metropolis-like sampling algorithms, then it will necessarily mix slowly under a quantum Gibbs sampler. We can also show a concrete consequence of the theorem for a particular model and Gibbs sampler; we mention the Ising model due to its relevance later in this work.
\begin{corollary}
\label{cor:classical2d}
    There exists a constant $\beta_*$ such that for all $\beta \geq \beta_*$, the Gibbs sampler $\cL$ of~\cite{chen2023efficient} preparing $\rho_\beta$ of the classical 2D Ising model $H = -\sum_{\langle i, j \rangle} Z_i Z_j$ has superpolynomial mixing time.
\end{corollary}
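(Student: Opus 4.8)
The plan is to instantiate Theorem~\ref{thm:classical} with the classical low-temperature bottleneck for the $2$D Ising model, and then to check that the sampler $\cL$ of~\cite{chen2023efficient}, applied to this (commuting) Hamiltonian, has constant-local jump operators. Write $n = L^2$ so that $H = -\sum_{\langle i,j\rangle} Z_i Z_j$ lives on the two-dimensional $L\times L$ grid. I would take the three regions $A,B,C$ to be the two magnetized phases and their complement; the content of the argument then reduces to (a) a Peierls/contour estimate for $\rho_\beta$, and (b) a locality statement for the jump operators.

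For (a), fix a small constant $\delta \in (0,1)$ and, in the computational basis, set $A = \{z \in \{0,1\}^n : \sum_i (-1)^{z_i} \geq (1-\delta)n\}$, $C = \{z : \sum_i (-1)^{z_i} \leq -(1-\delta)n\}$, and let $\Pi_B = I - \Pi_A - \Pi_C$ be the projector onto $(A\cup C)^c$. By the global spin-flip symmetry of $H$ one has $\Tr[\Pi_A \rho_\beta] = \Tr[\Pi_C \rho_\beta]$, so it suffices to show $\Tr[\Pi_B \rho_\beta] = \exp[-\Omega(L)]$; this then forces $\Tr[\Pi_A \rho_\beta] = \Tr[\Pi_C \rho_\beta] = \tfrac12\bigl(1 - \Tr[\Pi_B\rho_\beta]\bigr) = \Omega(1)$, as required by Theorem~\ref{thm:classical}. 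The bound $\Tr[\Pi_B \rho_\beta] = \pi_\beta(B) \leq \exp[-c\beta L]$ (for an absolute $c>0$ and all $\beta$ above a suitable threshold $\beta_*$) is the standard statement that at low temperature any configuration whose magnetization is bounded away from $\pm n$ must carry a system of Peierls contours of total length $\Omega(L)$ separating a macroscopic region of $+$ spins from a macroscopic region of $-$ spins, and the weighted sum of $e^{-2\beta\,(\text{length})}$ over all such contour systems is exponentially small in $L$; I would cite this (e.g.~\cite{thomas1989bound,levin2017markov}), taking care of the bookkeeping that a separating contour has length $\Omega(L)$ (it is either non-contractible, hence of length $\geq L$, or it encloses area $\Theta(L^2)$ and hence has perimeter $\Omega(L)$ by the isoperimetric inequality on the grid/torus). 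Since $\exp[-\Omega(L)] = \exp[-\Omega(n^{1/2})]$ is of the form $\exp[-\Omega(\poly{n})]$, and since any $z \in A$ and $z' \in C$ differ on at least $(1-\delta)n$ coordinates so that $A$ and $C$ are separated by Hamming distance $\Omega(n)$ (which grows with $n$), every metric and measure hypothesis of Theorem~\ref{thm:classical} is met.

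For (b), the jump operators in~\cite{chen2023efficient} are obtained by conjugating fixed single-site operators $A_a$ (e.g.\ $A_a = X_i$) by the Heisenberg flow $e^{iHt}$ and integrating against a fixed filter function. Because $H$ is a sum of commuting local terms, $e^{iHt} X_i e^{-iHt}$ only involves the $\leq 4$ bonds $-Z_iZ_j$ incident to site $i$ and is therefore supported on $\{i\}$ together with its $\leq 4$ lattice neighbors, uniformly in $t$; integrating against the filter preserves this support. Hence each jump operator acts on $O(1)$ qubits, so Theorem~\ref{thm:classical} applies and gives $\tmix(\cL) = \exp[\Omega(n^{1/2})]$, which is superpolynomial; taking $\beta_*$ to be the threshold from the Peierls bound completes the proof.

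The main obstacle is step (a): making the contour estimate $\pi_\beta(B) = \exp[-\Omega(L)]$ rigorous and uniform over all $\beta \geq \beta_*$, in particular the combinatorial point that a ``mixed'' configuration genuinely costs a contour of linear (not merely logarithmic) size, together with the bookkeeping needed on a finite grid/torus; ensuring that no Gibbs mass leaks outside $A\cup C$ other than into $B$ is automatic here because $B = (A\cup C)^c$. The locality check in step (b), by contrast, is routine once one uses that $H$ is commuting. One may equivalently simply quote the known $\exp[\Omega(L)]$ lower bound on the Glauber mixing time of the low-temperature $2$D Ising model, whose proof supplies exactly the bottleneck estimate required here.
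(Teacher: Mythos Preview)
Your overall strategy and step (b) are correct. The gap is in your specific bottleneck in step (a). It is not true that every $\sigma$ with intermediate magnetization carries a single separating Peierls contour of length $\Omega(L)$: the configuration with $\delta n/2$ isolated minus spins in a plus sea lies in your $B$ but has only length-$4$ contours. The isoperimetric inequality does force the \emph{total} contour length of any $\sigma\in B$ to be $\Omega(L)$, but that does not yield $\pi_\beta(B)\le e^{-cL}$ by a naive energy--entropy count: at any fixed constant $\beta$, \emph{typical} configurations (which are near all-plus or all-minus and hence lie in $A\cup C$) already have total contour length $\Theta(ne^{-8\beta})\gg L$ from many small excitations, so the event ``total contour length $\ge cL$'' is not rare and does not isolate $B$. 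The bound $\pi_\beta(|M|\le(1-\delta)n)\le e^{-\tau(\beta)L}$ is in fact true, but it requires surface-tension/Wulff large-deviation estimates rather than the elementary Peierls argument you sketch, and it is not what \cite{thomas1989bound,levin2017markov,randall2006slow} prove.

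The fix---which is what the paper (implicitly via Table~\ref{tab:slow_mixing}, and explicitly at $h=0$ in Section~\ref{sec:localtfim}) and your cited references actually do---is to take $A,B,C$ based on monochromatic crossings and fault lines rather than magnetization: $B$ consists of configurations containing a fault line with at most $c_0\sqrt n$ defects. For this $B$ the Peierls argument is clean: flipping one side of a fixed fault line of length $\ell$ with $k$ defects raises the weight by $e^{2\beta(\ell-2k)}$, and a union bound over the at most $(\sqrt n+1)3^\ell$ fault lines of each length $\ell\ge\sqrt n$ gives $\pi_\beta(B)\le e^{-c\sqrt n}$ once $2\beta>\log 3$ (this is Lemma~\ref{lem:faultsupport} at $h=0$). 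With these regions one also has $|A-C|_H\ge c_0\sqrt n-3$ (Corollary~\ref{cor:defect3}), and Theorem~\ref{thm:classical} applies. Your closing hedge is therefore the right move, but the bottleneck it supplies is the fault-line one, not the magnetization one.
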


Other classical Hamiltonians, summarized in \Cref{tab:slow_mixing}, also fall under the scope of Theorem~\ref{thm:classical} and have a superpolynomial mixing time for quantum Gibbs samplers based on the Davies generator, such as~\cite{chen2023efficient,gilyen2024quantum,ding2024efficient}.

\begin{table}[ht]
    \centering
    \renewcommand{\arraystretch}{1.5} 
    \begin{tabular}{p{6cm}cp{5cm}}
        \hline
        \textbf{Classical model} & \textbf{Mixing Time} & \textbf{Bottleneck} \\
        \hline
        Curie Weiss \cite{levin2017markov} \newline $-\frac{1}{n} \sum_{i,j=1}^n Z_i Z_j$ & $\exp(\Omega(n))$ for $\beta > 1$ & Bitstrings with Hamming weight $\Omega(n)$ \\
        \hline
        Ising and hard-core models on $d$-dimensional lattice \cite{randall2006slow,thomas1989bound} \newline $-\sum_{\langle i,j \rangle} Z_i Z_j$ & $\exp(\Omega(n^{\frac{d-1}{d}}))$ for $\beta > \beta_c$ & Configurations with fault lines across lattice \\
        \hline
        $p$-spin glass \cite{talagrand2000rigorous,talagrand2010mean,gamarnik2023shattering} \newline $\frac{1}{n^{(p-1)/2}}\sum_{i_1 < \cdots < i_p} J_{i_1 \dots, i_p} Z_{i_1} \cdots Z_{i_p}$ & $\exp(\Omega(n))$ for $\beta > \beta_c$ & Low energy configurations are clustered via overlap gap property \\
        \hline
        Random $K$-SAT ($K\geq 8$) \cite{achlioptas2006solution,mezard2005clustering} \newline $-\sum_{P_i: K-\text{local projector}} P_i$ & $\exp(\Omega(n))$ for $\beta > \beta_c$ & Near-satisfying assignments are clustered via overlap gap property \\
        \hline
    \end{tabular}
    \caption{Examples of slow-mixing classical Hamiltonians. Here $\beta_c$ indicates a critical temperature which is $O(1)$ but may depend on other particulars of the problem (e.g. dependence on $p$ in the spin glass model). }
    \label{tab:slow_mixing}
\end{table}

The proof of Theorem~\ref{thm:classical} uses the definitions of distance and range based on the Davies generator, since the eigenbasis of the Hamiltonian is easily written in the computational basis. For generic Hamiltonians, however, showing that $\Tr[\Pi_B \rho_\beta]$ is superpolynomially suppressed is technically challenging. We provide two families of Hamiltonians with additional structure that allows straightforward construction of the bottleneck. 
\begin{itemize}
    \item \emph{Quantum stabilizer codes} (Sec.~\ref{sec:codes}). In a quantum error correction code, one can identify regions $A$ and $C$ with codewords separated by the distance of the code. Any Gibbs sampler whose Kraus operators are within the distance of the code consequently enters the bottleneck region between codewords. For a good code (with distance linear in the number of qubits $n$), the bottleneck is exponentially suppressed in $n$, producing a mixing time of $\exp[\Omega(n)]$. This is an improvement over the $\exp[\Omega(\sqrt n)]$ lower bound of~\cite{hong2024quantum}, which used a code with distance $\sqrt n$. However, we note that the techniques of~\cite{hong2024quantum} should be extendable to produce a similar result to ours. 
    \item \emph{Non-commuting stoquastic Hamiltonians with a linear free energy barrier} (Sec.~\ref{sec:linear}). For a classical constant-degree Hamiltonian $H_0$ with a classical bottleneck $B \subset \{0,1\}^n$ that is suppressed as $\exp[-\Omega(n)]$ under the Gibbs measure, the Feynman-Kac representation (see below) can be used to directly show that introducing a transverse field $H = H_0 - h\sum_i X_i$ with sufficiently small constant $h$ still leaves a free energy barrier of size $n$ intact. That is, for $\Pi_B = \sum_{z \in B} \ketbra{z}$, the non-commuting Hamiltonian satisfies $\Tr[\Pi_B \rho_\beta] = \exp[-\Omega(n)]$. The bottleneck lemma can then be applied to any Lindbladian Gibbs sampler whose locality is strictly upper-bounded by the Hamming distance of classical regions $A$ and $C$ on either side of the bottleneck when the regions are separated by Hamming distance linear in $n$, giving a mixing time lower bound of $\exp[\Omega(n)]$.
\end{itemize}

We now turn to non-commuting Hamiltonians with a sublinear free energy barrier, i.e., $\Tr[\Pi_B \rho_\beta] = \exp[o(n)]$. These represent many physically relevant systems such as those on $d$-dimensional lattices for $d \geq 2$. One of the most well-studied examples is the classical 2D Ising model, addressed in Corollary~\ref{cor:classical2d}. To show that the 2D transverse field Ising model
\begin{equation}
    H = -\sum_{\langle i, j\rangle} Z_i Z_j - h\sum_{i=1}^n X_i
\end{equation}
mixes slowly, the results discussed thus far are insufficient. In particular, since $\norm{h \sum_{i=1}^n X_i} = \Theta(n)$, a direct application of standard inequalities is too coarse to resolve the barrier of size $\sqrt n$. An alternative approach would ideally identify the bottleneck precisely in the eigenbasis of $H$; then computing $\Tr[\Pi_B \rho_\beta]$ would be straightforward, and the range of the Gibbs sampler could be directly computed using the measure of distance characterized by the Davies generator. Unfortunately, explicitly constructing $\Pi_B$ in the Hamiltonian eigenbasis such that $\Tr[\Pi_B \rho_\beta] = \exp[-\Omega(n^\alpha)]$ is difficult, since perturbation theory breaks down for constant $h$.

To overcome this challenge, we provide an approach that allows the bottleneck region to be constructed in the computational basis from the classical bottleneck $B \subset \{0,1\}^n$. Our method is based on the Poisson Feynman-Kac formalism~\cite{kac1974stochastic,crawford2007thermodynamics,leschke2021free}, where the transverse field is treated as the generator of a continuous-time Markov chain. Historically, the underlying ideas of this probabilistic interpretation extend back at least to the path-integral quantum Monte Carlo method proposed by~\cite{suzuki} applicable to stoquastic Hamiltonians~\cite{bravyi2006complexity}. This allows the Gibbs state to be expressed as a collection of Poisson point processes. For any $\Pi_B = \sum_{z \in B} \ketbra{z}$, this permits us to compute $\Tr[\Pi_B \rho_\beta]$ for any Hamiltonian of the form $H = H_0 - h\sum_{i=1}^n X_i$, where $H_0$ is a Hamiltonian diagonal in the $Z$ basis. It also allows the decoupling property $\Tr[\Pi_C e^{-\beta H/2} \Pi_A e^{-\beta H/2}]/\Tr[\Pi_A e^{-\beta H}] = o(1)$ to be shown; roughly, we do so by discretizing timesteps of the Poisson process and showing that no transitions from $A$ to $C$ are possible through Poisson paths over imaginary time $\sim \beta/n^2$.

We apply the Feynman-Kac representation and our bottleneck lemma to show slow mixing of the 2D transverse field Ising model. The proof (shown in Sec.~\ref{sec:localtfim}) uses a Peierls-like argument to bound the support of the bottleneck region, which is constructed from fault lines across the 2D lattice. Under the transverse field, fault lines become ``dressed'' according to the path integral given by the Poisson Feynman-Kac formula. We show that the dressed fault lines provide a bottleneck that separates regions $\Pi_A$ and $\Pi_C$ under the locality measure of distance. For a sufficiently local Lindbladian Gibbs sampler, the bottleneck $\Pi_B$ is unavoidable. For a concrete result, we show using Lieb-Robinson bounds that the Gibbs sampler of~\cite{chen2023efficient}, which is quasi-local on the lattice, mixes slowly.

\begin{theorem}[Slow mixing of the 2D transverse field Ising model, informal]
\label{thm:tfimslow}
    Let $\rho_\beta$ be the thermal state of the 2D transverse field Ising model
    \begin{equation}
        H = -\sum_{\langle i, j \rangle} Z_i Z_j - h\sum_{i=1}^n X_i,
    \end{equation}
    where $\langle i, j \rangle$ denotes unique pairs of neighboring vertices on a 2D lattice of dimension $\sqrt n \times \sqrt n$. 
    Let $\cL$ be a Lindbladian with fixed point $e^{\cL t}(\rho_\beta) = \rho_\beta$ such that each Lindblad operator acts nontrivially only on a geometrically local region of at most $R \times R$ qubits.
    There exists constant $\beta_*$ such that for all constant $\beta \geq \beta_*$, the mixing time to prepare $\rho_\beta$ satisfies
    \begin{equation}
        \tmix(\cL) = \exp[\Omega(\sqrt n)]
    \end{equation}
    for all $0 \leq h \leq h_*$ and $R \leq \lambda_*\sqrt n$, where $h_*(\beta), \lambda_*(\beta)$ are constants independent of $n$.
\end{theorem}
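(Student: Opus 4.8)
The plan is to apply the bottleneck lemma (Lemma~\ref{lem:bottleneck}, in the precise form of Proposition~\ref{prop:mix_time_discrete} and Corollary~\ref{cor:slow}) with $Z$-diagonal projectors built from the magnetization bottleneck of the classical $2$D Ising model, and to transport the two genuinely non-classical inputs --- the bottleneck weight $\Tr[\Pi_B\rho_\beta]$ and the decoupling ratio --- across the transverse field via the Poisson Feynman--Kac representation. Fix $\epsilon\in(0,\tfrac12)$ and set $\Pi_A=\sum_{z:\,m(z)>(1-2\epsilon)n}\ketbra{z}$, $\Pi_C=\sum_{z:\,m(z)<-(1-2\epsilon)n}\ketbra{z}$, $\Pi_B=I-\Pi_A-\Pi_C$, where $m(z)=\sum_i(-1)^{z_i}$; equivalently $z\in B$ exactly when both spin values occupy at least $\epsilon n$ sites, so the $\pm$ contour (``fault line'') of $z$ has length $\ge c_\epsilon\sqrt n$ by the edge-isoperimetric inequality on the $\sqrt n\times\sqrt n$ box. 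Three of the four hypotheses of the lemma are immediate once $\Tr[\Pi_B\rho_\beta]=o(1)$ is known: the symmetry $[\prod_iX_i,H]=0$ forces $\Tr[\Pi_A\rho_\beta]=\Tr[\Pi_C\rho_\beta]=\tfrac12(1-\Tr[\Pi_B\rho_\beta])=\Omega(1)$; and since each Lindblad operator is supported on at most $R^2$ qubits, $\cL$ maps any $\rho=\Pi_A\rho\Pi_A$ to an operator supported on configurations within Hamming distance $R^2$ of $A$, whence $\Tr[\Pi_C\cL(\rho)]=0$ as soon as $R^2<(1-2\epsilon)n$, i.e.\ $\dist(A,C)>\range(\cL)$ in the locality metric --- exactly the regime $R\le\lambda_*\sqrt n$ with $\lambda_*\asymp\sqrt{1-2\epsilon}$.

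The heart of the argument is $\Tr[\Pi_B\rho_\beta]\le e^{-\Omega(\sqrt n)}$. Writing $-h\sum_iX_i=-h\sum_i(X_i-I)-hnI$, one gets $e^{-\beta H}=e^{\beta hn}e^{\beta(-H_0+G)}$ with $G=h\sum_i(X_i-I)$ the generator of independent rate-$h$ single-spin flips, and the Feynman--Kac formula then yields $\Tr[\Pi_B\rho_\beta]=\mathbb{P}_{\mathrm{wl}}[\sigma(\cdot,0)\in B]$, where $\mathbb{P}_{\mathrm{wl}}$ is the loop measure on space--time spin trajectories $\sigma:\Lambda\times[0,\beta]\to\{\pm1\}$, periodic in imaginary time, with density $\propto\exp\bigl(\int_0^\beta\sum_{\langle i,j\rangle}\sigma_i(\tau)\sigma_j(\tau)\,d\tau\bigr)$ relative to a product of rate-$h$ Poisson flip processes. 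In this $(2+1)$-dimensional picture the boundary of the ``$-$''-region is a closed surface with \emph{spatial} plaquettes (each costing a factor $e^{-2}$ per unit imaginary time from the Ising term) and \emph{temporal} plaquettes (one per Poisson jump, each carrying fugacity $h$). A Peierls/flip estimate --- the ``dressed'' analogue of the classical droplet lower bound --- gives $\mathbb{P}_{\mathrm{wl}}[\text{time-}0\text{ contour has length}\ge\ell]\le e^{-c\ell}$ for $\beta\ge\beta_*$ and $h\le h_*(\beta)$; the geometric input is that any such surface has total (spatial $+$ temporal) area $\ge c'\ell$, because going once around the imaginary-time circle the length of the time-$\tau$ contour changes by only $O(1)$ per Poisson jump yet must return to its value at $\tau=0$. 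Since $B\subseteq\{\text{time-}0\text{ contour}\ge c_\epsilon\sqrt n\}$, this gives $\Tr[\Pi_B\rho_\beta]\le e^{-\Omega(\sqrt n)}$.

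For the decoupling hypothesis, I would bound $\Tr[\Pi_Ce^{-\beta H/2}\Pi_Ae^{-\beta H/2}]/\Tr[\Pi_Ae^{-\beta H}]$ by combining the Feynman--Kac estimate $\langle z|e^{-\beta H/2}|z'\rangle\le e^{\beta hn/2}e^{\beta|E|/2}\,\mathbb{P}[\mathrm{Poisson}(hn\beta/2)\ge D(z,z')]$ --- using $\sum_{\langle i,j\rangle}\sigma_i\sigma_j\le|E|$ pointwise and that any path from $z'$ to $z$ makes $\ge D(z,z')$ flips --- against $\Tr[\Pi_Ae^{-\beta H}]\ge\langle{+}|e^{-\beta H}|{+}\rangle\ge e^{\beta|E|}$. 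Since $D(z,z')\ge(1-2\epsilon)n$ for every $z\in C$, $z'\in A$, a Chernoff bound for $\mathrm{Poisson}(\tfrac12 h\beta n)$ makes the ratio $e^{-\Omega(n)}=o(1)$ provided $h\le h_*(\beta)$ is small enough that $\tfrac12 h\beta$ sits below $1-2\epsilon$ with room to absorb the $O(1)$ per-site overhead --- equivalently, chopping $[0,\beta/2]$ into $\Theta(n^2)$ slabs of width $\sim\beta/n^2$, whp no slab carries enough Poisson jumps to move from $A$ to $C$. All four hypotheses of Lemma~\ref{lem:bottleneck} then hold, giving $\tmix(\cL)=e^{\Omega(\sqrt n)}$ (the concrete statement for the quasi-local sampler of~\cite{chen2023efficient} additionally needs a Lieb--Robinson truncation of its jump operators to $R\times R$ support).

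I expect the main obstacle to be the $(2+1)$-dimensional Peierls estimate in continuous time: one must tame the formally infinite entropy of ``dressed fault-line'' surfaces --- temporal plaquettes can occur at arbitrary imaginary times --- by integrating the jump times against the Poisson $1/k!$ weights, and verify that the per-plaquette fugacities $e^{-2}$ (spatial) and $h$ (temporal) beat this entropy \emph{uniformly} for all $\beta\ge\beta_*$, so that $n$-independent constants $\beta_*,h_*,\lambda_*$ genuinely exist. A secondary technical point is stating the edge-isoperimetric bound sharply for the free $\sqrt n\times\sqrt n$ box (rather than a torus) and matching it to whichever of the two precise locality definitions of $\dist(A,C)$ and $\range(\cL)$ the lemma is invoked with.
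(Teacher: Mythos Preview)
Your approach is genuinely different from the paper's, and the difference matters. The paper does \emph{not} use magnetization-based regions: it takes $A\subset S_+$, $C\subset S_-$ to be configurations with a monochromatic crossing and the property that every \emph{fault line} (a single self-avoiding path from one side of the box to the other) has at least $c_0\sqrt n$ defects, so that $B$ consists of configurations admitting a fault line with \emph{few} defects. The Peierls bound is then a sum over one self-avoiding path (entropy $\le (\sqrt n+1)3^\ell$), with the transverse field handled by a Feynman--Kac comparison of the diagonal entries $\langle\sigma|e^{-\beta H}|\sigma\rangle$ versus $\langle\sigma'|e^{-\beta H}|\sigma'\rangle$ where $\sigma'$ flips one side of the fault line (Lemmas~\ref{lem:midratio}--\ref{lem:faultsupport}); no $(2{+}1)$-dimensional surface is ever summed over. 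The price is that $A$ and $C$ are only $\Theta(\sqrt n)$ apart in Hamming distance (Corollary~\ref{cor:defect3}), so decoupling needs the discretization lemma (Lemma~\ref{lem:sigma0}), and the statement $\Tr(\Pi_A O\Pi_C)=0$ for $R\times R$-local $O$ is argued geometrically via the number of \emph{defects} an $R\times R$ patch can create, not via Hamming distance (Lemma~\ref{lem:tfimbottleneck}).

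Your magnetization regions make locality and decoupling trivial --- Hamming separation is $\Theta(n)$ --- but push all the work into $\Tr[\Pi_B\rho_\beta]\le e^{-\Omega(\sqrt n)}$, and here there is a real gap. The isoperimetric bound only says the \emph{total} $\pm$ boundary of $z\in B$ has length $\ge c_\epsilon\sqrt n$; this boundary can be fragmented into many small closed contours, and your sketch silently treats it as a single curve. Even at $h=0$, a one-path Peierls bound does not yield $\mathbb{P}_\beta[|m|\le(1-2\epsilon)n]=e^{-\Omega(\sqrt n)}$: the entropy of contour \emph{collections} of total length $\ell$ is far larger than $3^\ell$, and the naive splitting $e^{-2\beta\ell}=e^{-\beta\ell}\cdot e^{-\beta\ell}$ against the remaining partition function gives only $e^{O(n)-\beta\ell}$, useless for $\ell=\Theta(\sqrt n)$. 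The correct classical statement is the surface-tension large-deviation estimate (Dobrushin--Koteck\'y--Shlosman type), and carrying it to $h>0$ via the space--time representation is a substantial result in its own right. Your ``area $\ge c'\ell$'' heuristic also runs into the fact that the total number of Poisson flips in $[0,\beta]$ is $\Theta(n)\gg\sqrt n$ under the a priori measure, so one cannot simply case-split on ``few flips''. The paper's fault-line regions are engineered precisely to avoid all of this: the bottleneck becomes a single-path Peierls sum, at the cost of the more intricate decoupling argument you would have gotten for free.
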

\begin{corollary}
\label{cor:tfimslowchen}
    For a 2D transverse field Ising model with constant $h, \beta$ satisfying the conditions of Theorem~\ref{thm:tfimslow}, the Gibbs sampler $\cL$ of~\cite{chen2023efficient} has superpolynomial mixing time. 
    In particular, as $\beta \to \infty$, the mixing time is lower-bounded by $\tmix(\cL) \geq \exp[n^{1/2-o(1)}]$ for all $0 \leq h < 1 + o_\beta(1)$.
\end{corollary}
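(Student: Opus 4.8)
The plan is to obtain Corollary~\ref{cor:tfimslowchen} directly from Theorem~\ref{thm:tfimslow} by checking that the Gibbs sampler $\cL$ of~\cite{chen2023efficient}, applied to the $\sqrt n \times \sqrt n$ transverse field Ising model, meets its hypotheses: (a) $\rho_\beta$ is a fixed point, $e^{\cL t}(\rho_\beta) = \rho_\beta$; and (b) each Lindblad operator is, up to an $n$-dependent operator-norm error, supported on a geometrically local $R \times R$ block with $R \le \lambda_*(\beta)\sqrt n$. Property (a) is established in~\cite{chen2023efficient} --- their construction fixes $\rho_\beta$ exactly and satisfies detailed balance --- so the real work is (b) together with an argument that the small non-local tail of $\cL$ does not destroy the bottleneck.

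For (b), I would start from the structure of the jump operators of~\cite{chen2023efficient}: they are operator Fourier transforms of single-site Paulis, $\hat A_a(\omega) \propto \int f(t)\, e^{iHt} A_a e^{-iHt} e^{-i\omega t}\, dt$, where $f$ has width $\Theta(\beta)$ and Gaussian-type tails $\exp[-\Omega(t^2/\beta^2)]$; the time/frequency discretization in~\cite{chen2023efficient} turns the integrals into $\poly{n}$ terms and changes only prefactors, not supports. Since $H = -\sum_{\langle i,j\rangle} Z_i Z_j - h\sum_i X_i$ is geometrically local with bounded couplings, the Lieb-Robinson bound gives $\bigl\| e^{iHt} A_a e^{-iHt} - \mathcal{P}_{B_r(a)}(e^{iHt} A_a e^{-iHt}) \bigr\| \le C\exp[v|t| - \mu r]$, with velocity $v = O(1+h)$ and $\mathcal{P}_{B_r(a)}$ the conditional expectation onto operators supported on the radius-$r$ ball about $\mathrm{supp}(A_a)$. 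Convolving with $f$ and splitting the $t$-integral at $|t| \approx \mu r/(2v)$ bounds the distance from $\hat A_a(\omega)$ to a strictly $R \times R$-local operator ($R = \Theta(r)$) by $\exp[-\Omega(\min(\mu r,\, r^2/\beta^2))]$. For the fault-line bottleneck of Sec.~\ref{sec:localtfim}, the regions $A$ and $C$ (the two low-temperature phases either side of a dressed fault line) are at Hamming distance $\Theta(n)$, so it suffices that an $R \times R$ block contain fewer than $\dist_{\mathrm{Ham}}(A,C)$ qubits, i.e.\ $R \le \lambda_*(\beta)\sqrt n$ for a suitable constant $\lambda_*(\beta) > 0$ (which stays bounded below as $h \le h_*(\beta)$ keeps $v = O(1)$); for such $R$ the truncation error is $\epsilon_{\mathrm{trunc}} = \exp[-\Omega(\sqrt n)]$ for every constant $\beta$.

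It remains to run the bottleneck argument with the true (quasi-local) $\cL$ rather than its truncation. The one place strict locality is used in the proof of Theorem~\ref{thm:tfimslow} is that $\Pi_A \hat A_a^{(R\times R)} \Pi_C = 0$ (no direct $A$--$C$ jumps), which for the true jump operators weakens to $\bigl\| \Pi_A \hat A_a(\omega) \Pi_C \bigr\| \le \epsilon_{\mathrm{trunc}}$; propagating this through the flow-into-$C$ estimate of Proposition~\ref{prop:mix_time_discrete} contributes an additive term of order $T\, \epsilon_{\mathrm{trunc}}\, \poly{n}$ over time $T$, on top of the genuine-barrier contribution $T\, \Tr[\Pi_B \rho_\beta] = T\exp[-\Omega(\sqrt n)]$. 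Both terms stay $o(1)$ as long as $T \le \exp[c\sqrt n]$ for a small enough constant $c > 0$ (determined by the barrier rate and by $\mu\lambda_*$), so the quasi-local form of Proposition~\ref{prop:mix_time_discrete} gives $\tmix(\cL) = \exp[\Omega(\sqrt n)]$, i.e.\ superpolynomial mixing --- and hence, trivially, $\tmix(\cL) \ge \exp[n^{1/2 - o(1)}]$. For the statement as $\beta \to \infty$, I would track the field threshold: $h_*(\beta)$ in Theorem~\ref{thm:tfimslow} is the radius of convergence of the Peierls-type series in the Poisson Feynman-Kac expansion of the dressed fault lines (Sec.~\ref{sec:localtfim}), that expansion converges for $h < 1 + o_\beta(1)$ with the resulting free-energy barrier still of rate $\Theta(\sqrt n)$, and the locality condition is unaffected since $v = O(1+h) = O(1)$ throughout this range.

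The main obstacle is this last step: the non-local tail of $\cL$ produces an error in the bottleneck estimate of the \emph{same order}, $\exp[-\Theta(\sqrt n)]$, as the bottleneck effect itself, so it cannot be dismissed as negligible. One cannot sidestep this by replacing $\cL$ with a strictly $R \times R$-local channel $\tilde\cL$, since $\tilde\cL$ no longer fixes $\rho_\beta$ and the channel is iterated exponentially many times; instead one must keep $\cL$, bound $\| \Pi_A \hat A_a(\omega) \Pi_C \|$ directly via Lieb-Robinson, and verify that the accumulated leakage $T\, \epsilon_{\mathrm{trunc}}\, \poly{n}$ together with the accumulated honest barrier flow stays below threshold for $T$ up to $\exp[\Omega(\sqrt n)]$. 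Formalizing this requires the bottleneck lemma stated with a quasi-local (error-tolerant) range hypothesis, rather than the strict-locality version of Theorem~\ref{thm:tfimslow} applied verbatim; given that, the corollary follows.
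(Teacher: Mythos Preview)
Your overall architecture is correct and matches the paper: truncate $\cL$ to a strictly local $\tilde\cL$ via Lieb--Robinson, then feed the pair $(\cL,\tilde\cL)$ into the error-tolerant bottleneck lemma. You worry that such a lemma needs to be proved, but it is already Corollary~\ref{cor:slow}: that statement takes as input any $\widetilde\cL$ with $\|\cL-\widetilde\cL\|_\diamond\le\epsilon$ and outputs $\tmix(\cL)\ge\exp[\Omega(n^\gamma)]$ provided $\exp[n^\gamma]\le\min\bigl(\Tr[\Pi_A\rho_\beta]\Tr[\Pi_C\rho_\beta]\Tr[\Pi_B\rho_\beta]^{-1},\ \Tr[\Pi_C\rho_\beta]/\epsilon\bigr)$. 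The paper's proof is then two lines: Lemma~\ref{lem:chen-local} gives the truncation with $R=O((\beta+1)\log(J/\epsilon))$, and one chooses $\epsilon=2^{-n^{1/2-o(1)}}$ so that $R=o(\sqrt n)$ lands inside the allowed locality range of Lemma~\ref{lem:tfimbottleneck}.

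There is, however, a real error in your justification of the locality hypothesis. You claim $\dist_{\mathrm{Ham}}(A,C)=\Theta(n)$ and then argue that $R\times R$ locality suffices because $R^2<\Theta(n)$. In fact the $c_0$-regions satisfy only $\dist_{\mathrm{Ham}}(A,C)\ge c_0\sqrt n-3$ (Corollary~\ref{cor:defect3}); with your reasoning this would force $R^2=O(\sqrt n)$, i.e.\ $R=O(n^{1/4})$, and hence $\epsilon_{\mathrm{trunc}}=\exp[-\Omega(n^{1/4})]$ and only $\tmix\ge\exp[\Omega(n^{1/4})]$. The actual reason an $R\times R$ operator cannot send $A$ to $C$ is geometric, not Hamming-metric: modifying an $R\times R$ patch of a configuration in $S_+$ can create a fault line with at most $O(R)$ defects (the argument in the proof of Lemma~\ref{lem:tfimbottleneck}), and $A,C$ are defined so that every fault line carries at least $c_0\sqrt n$ defects. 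This is what permits $R$ up to a constant times $\sqrt n$, which in turn is what makes $\epsilon_{\mathrm{trunc}}$ small enough to match the bottleneck scale.
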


We note that slow mixing here occurs for transverse fields of constant (with respect to $n$) strength $h\leq 1 +o_\beta(1)$ for large enough $\beta$. This slow mixing regime is within the ferromagnetic phase of the model where the coupling terms $Z_iZ_j$ dominate the energy contribution. This phase is predicted to occur at values of $h$ below a critical strength $h^*\approx 3$ at low temperatures \cite{pfeuty1971ising,PhysRev.124.768,blass2016test}.

\subsection{Open questions}

Our work establishes superpolynomial mixing time lower bounds for various quantum Gibbs samplers in the low-temperature regime. Several open questions remain, which we outline below.

\paragraph{Lower bounds for non-commuting random quantum Hamiltonians}

An important direction for future research is extending our mixing time lower bounds to commonly studied \emph{random} quantum Hamiltonians that are non-commuting, such as quantum spin glasses \cite{anschuetz2024bounds,erdHos2014phase}. Classically, slow mixing has been analyzed for random structures like random constraint satisfaction problems and spin glasses using techniques based on the overlap gap property (OGP) and free energy barriers~\cite{gamarnik2021overlap,gamarnik2023disordered,arous2020free}. In the quantum setting, randomness and non-commutativity introduce significant challenges. Since the eigenbasis for these Hamiltonians is not easy to describe, the jump operators are not readily available for these systems. Extending our bottleneck arguments to these systems (lacking additional structure such as a transverse field) would involve dealing with the complexity of the quantum state space and addressing the lack of an analogous notion of distance between configurations used to define an OGP (for classical Hamiltonians, the Hamming distance is used to define OGPs). 

\paragraph{Lower bounds for dense Ising models}

Another open question is whether mixing time lower bounds can be established for \emph{dense} Ising models with a transverse field, such as the all to all connected Curie-Weiss model, in the quantum setting. Here, for example, the Hamiltonian takes the form
\begin{equation}
    H = -n^{-1}\sum_{1 \leq i < j \leq n} Z_i Z_j - h \sum_{i=1}^n X_i.
\end{equation}
Without the transverse field ($h=0$), slow mixing is well-understood for these models due to the presence of a free energy barrier that scales linearly with the system size~\cite{levin2017markov}. However, in the quantum case with a transverse field, we currently cannot bound the locality of the Gibbs sampler. Standard tools like Lieb-Robinson bounds fail for dense Hamiltonians because the interaction range is not bounded geometrically in a lattice. 

\paragraph{Initialization-dependent mixing time lower bounds}

Our results here do not directly give mixing time lower bounds for \emph{specific} initializations, such as the completely mixed state $\rho = 2^{-n} I$. Even classically, showing slow mixing from specific initial states is challenging because it often requires precise control over the dynamics starting from that state. Only recently have such slow mixing results been proven for specific initializations in well-studied classical models~\cite{gamarnik2024hardness,sellke2024threshold}. Addressing this question could provide a more nuanced understanding of the thermalization process in quantum systems and might reveal initialization-dependent barriers to efficient Gibbs sampling, potentially giving insights into algorithms that mix faster than predicted by worst-case analysis.

\paragraph{Beyond Gibbs sampling}

Finally, our work raises the question of whether the techniques developed here can be extended to show lower bounds for quantum algorithms beyond Gibbs sampling. In the classical realm, OGP-based arguments have been used to establish lower bounds against a wide class of algorithms, including those that are robust to certain types of perturbations or noise~\cite{gamarnik2021overlap,huang2022tight}. Given that our work represents a quantum analogue of these arguments, it is natural to ask whether similar lower bounds can be proven for other quantum algorithms, particularly those aimed at solving optimization or sampling problems.

\subsection{Organization}
The paper is organized as follows. In Sec.~\ref{sec:prelim}, we introduce notation, Lindbladian Gibbs samplers, and the jump operators associated with the Davies generator. In Sec.~\ref{sec:slow}, we define distance between regions of the Hilbert space based on the number of jump operators (Sec.~\ref{sec:distdavies}) and number of local operators (Sec.~\ref{sec:distlocal}) required to move from one region to another. These allow the bottleneck lemma to be formalized in Sec.~\ref{sec:bottle}, showing sufficient conditions for slow mixing. We present examples of systems with slow mixing in Sec.~\ref{sec:systems}, including classical Hamiltonians (Sec.~\ref{sec:classical}), commuting quantum Hamiltonians (Sec.~\ref{sec:codes}), and finally a more general approach to non-commuting quantum Hamiltonians including the 2D transverse field Ising model (Sec.~\ref{sec:localtfim}).

\section{Preliminaries}
\label{sec:prelim}
\subsection{Notation}
We denote the set of $n$ elements as $[n] := \{1, \dots, n\}$. 
For bitstrings $x,y \in \{0,1\}^n$, we denote their Hamming distance as $|x-y|_H$ which is equal to the number of entries at which the values in $x$ and $y$ differ. Given a set of bitstrings $S_1$ and $S_2$, we similarly define their Hamming distance as the minimum distance between bitstrings in $S_1$ and $S_2$.

We denote the Hilbert space of $n$ qubits as $\bH_n = \mathbb{C}^{2^n}$, and the set of states on $n$ qubits as $\cS_n$. We denote the set of bounded operators on $\bH_n$ as $\mathcal{O}_{\bH_n}$.
For $A,B \in \mathcal{O}_{\bH_n}$, we say $A\succeq B$ if $A-B$ is positive semi-definite.
A Hamiltonian is a bounded Hermitian operator acting on $\bH_n$.
A Gibbs state of a Hamiltonian $H$ at inverse temperature $\beta$ is given by
\begin{equation}
    \rho_\beta := \frac{e^{-\beta H}}{Z(\beta)}, \quad Z(\beta) := \Tr(e^{-\beta H}).
\end{equation}
For a Hamiltonian $H$, we write its eigenbasis as
\begin{equation}
    H = \sum_{i=1}^{2^n} E_i \ket{\psi_i}\bra{\psi_i}.
\end{equation}
We use $\operatorname{Spec}(H) = \{E_i\}$ to denote the spectrum or set of energy eigenvalues for a Hamiltonian $H$. The set of possible differences in energies of eigenstates (commonly termed the Bohr spectrum) is denoted by 
\begin{equation}
    B(H) := \{E_i - E_j: i,j \in [2^n]\}.
\end{equation}
The Pauli operators are denoted by
\begin{equation}
    I = \begin{pmatrix} 1 & 0 \\ 0 & 1 \end{pmatrix}, \quad X = \begin{pmatrix} 0 & 1 \\ 1 & 0 \end{pmatrix}, \quad Y = \begin{pmatrix} 0 & -i \\ i & 0 \end{pmatrix}, \quad Z = \begin{pmatrix} 1 & 0 \\ 0 & -1 \end{pmatrix}.
\end{equation}
As operators acting on $n$ qubits, the Pauli operator $P^a$ which acts on qubit $a$ is
\begin{equation}
    P^a = \underbrace{I \otimes \cdots \otimes I}_{a-1} \otimes P \otimes \underbrace{I \otimes \cdots \otimes I}_{n-a},
\end{equation}
where  $P \in \{X,Y,Z\}$. The set of single qubit Pauli operators is thus $\{P^a: P \in \{X,Y,Z\}, a \in [n]\}$. We denote the $n$-qubit Pauli group by $\cP_n = \{E_n \otimes \cdots \otimes E_1 | E_i \in \{I,X,Y,Z\}\}$.

We formally define a quantum channel. Let $\bH$ be a Hilbert space and $L(\bH)$ be the set of linear operators on $\bH$. A map $\cE: L(\bH) \to L(\bH)$ is a quantum channel if it is a completely positive trace-preserving (CPTP) map; i.e., if for any auxiliary Hilbert space $\bH_n$ of arbitrary dimension $2^n$, $\cE$ satisfies
\begin{equation}
    (\cE \otimes I_{2^n})(X) \geq 0 \text{ for any positive operator }X\text{,\; and } \Tr(\cE(Y)) = \Tr(Y) \text{ for any } Y \in L(\bH).
\end{equation}
Any CPTP map $\cE$ can be represented in terms of Kraus operators $\{K_i\}$, which are a set of linear operators on $\bH$ such that the action of $\cE$ on $\rho$ is given by
\begin{equation}
    \cE(\rho) = \sum_i K_i \rho K_i^\dagger \;\text{ such that } \;\sum_i K_i^\dagger K_i = I,
\end{equation}
where the index $i$ runs over a finite or countably infinite set. We will use $\cL$ to denote a quantum channel that corresponds to a Lindbladian, which we now define in the context of Gibbs sampling.

\subsection{Gibbs sampling}
We refer to a quantum channel $\cL$ as a Lindbladian if it acts on a density matrix $\rho$ as
\begin{equation}
    \cL(\rho) := -i\Big[G, \rho\Big] + \sum_a L^a \rho (L^a)^\dagger - \frac{1}{2}\{(L^a)^\dagger L^a, \rho\},
\end{equation}
where $G$ is a Hermitian operator and $\{L^a\}$ are arbitrary linear operators. We will further refer to a Lindbladian Gibbs sampler as a Lindbladian whose fixed point is the Gibbs state $\rho_\beta$. Classically, a Markov chain defined by transition matrix $P$ satisfies detailed balance with respect to distribution $\pi$ if $\pi_i P_{ij} = \pi_j P_{ji}$, implying that $\pi$ is a stationary state of the Markov chain. In our work, we only require that $\rho_\beta$ is the fixed point of Lindblad dynamics which can be guaranteed through quantum generalizations of detailed balance, i.e., that
\begin{equation}
    \cL(\rho_\beta) = 0.
\end{equation}

The \emph{time evolution} under the Lindbladian $\cL$ refers to the application of the quantum channel $e^{\cL t}$ for some $t > 0$. In proofs, it is often also convenient to refer to the discrete Gibbs sampler $\cT$ evolved for integer time $t$, giving the quantum channel $\cT^t$. We formally introduce the notion of a mixing time $\tmix(\cdot)$ for both the discrete and continuous settings; if the input to $\tmix(\cdot)$ is obvious from context, we will at times simply denote the mixing time as $\tmix$.

\begin{definition}[Mixing time (discrete sampler)]
Given a quantum channel $\cT$ and constant $\epsilon \leq \frac{1}{2}$, let $\tmix(\cT)$ be the smallest time for which, for any states $\rho$ and $\sigma$,
\begin{equation}
    \norm{\cT^{\tmix}[\rho-\sigma]}_1 \leq \epsilon\norm{\rho - \sigma}_1.
\end{equation}
\end{definition}

\begin{definition}[Mixing time (Lindbladian)]
Given a Lindbladian $\cL$ and constant $\epsilon \leq \frac{1}{2}$, let $\tmix(\cL)$ be the smallest time for which, for any states $\rho$ and $\sigma$,
\begin{equation}
    \norm{e^{\cL \tmix}[\rho-\sigma]}_1 \leq \epsilon\norm{\rho - \sigma}_1.
\end{equation}
\end{definition}

We remark that as long as the threshold $\epsilon$ is small enough, varying the constant $\epsilon$ (independently of other variables such as the number of qubits $n$) can only change the mixing time by a constant factor.
It is straightforward to relate mixing times of discrete and continuous-time Lindbladians by controlling the higher powers of the series $e^{\cL t} = \sum_{j=0}^{\infty} (\cL t)^j / j!$. Informally, these higher order powers do not have a significant effect on mixing times since first order Trotter formulas state that \cite{trotter1959product}
\begin{equation}
    \lim_{m \to \infty}\left(I + m^{-1} \cL\right)^{tm} = \exp(\cL t),
\end{equation}
and thus one can analyze $\cL$ to its first order. We formalize one variant of this below, which we use in showing exponential mixing time lower bounds.

\begin{lemma}\label{lem:discrete_convert_to_continuous}
    Given Lindbladian $\cL$ where $\|\cL\| = O(\operatorname{poly}(n))$ and time $t = \exp(an^\alpha + o(n^\alpha))$ for constants $a>0$ and $\alpha>0$, set $\delta = \exp(-bn^\alpha)$ for any constant $b>a$, it holds that 
    \begin{equation}
        \exp(t\cL) = \left(I + \delta \cL\right)^{t/\delta} + R , \quad \|R\|=o(1). 
    \end{equation}
\end{lemma}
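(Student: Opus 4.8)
The plan is to estimate the difference $\exp(t\cL) - (I+\delta\cL)^{t/\delta}$ by bounding the superoperator norm of the remainder $R$ directly via a telescoping/power-series comparison. Write $N := t/\delta = \exp((a+b)n^\alpha + o(n^\alpha))$ (taking $t/\delta$ to be an integer, or replacing it by $\lceil t/\delta\rceil$ at a cost that is absorbed into the $o(\cdot)$ terms). The basic idea is that $e^{\delta\cL}$ and $I+\delta\cL$ agree to first order in $\delta$, with error $\|e^{\delta\cL} - (I+\delta\cL)\| \le \sum_{j\ge 2}\delta^j\|\cL\|^j/j! \le \tfrac{1}{2}\delta^2\|\cL\|^2 e^{\delta\|\cL\|}$. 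Since $\|\cL\| = O(\operatorname{poly}(n))$ and $\delta = \exp(-bn^\alpha)$, we have $\delta\|\cL\| = o(1)$, so this per-step error is at most $C\delta^2\|\cL\|^2$ for some constant $C$ and large $n$.

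Next I would use the standard telescoping identity: for any two superoperators $\Phi,\Psi$ acting on density operators with $\|\Phi\|,\|\Psi\|\le 1$ in the induced trace-norm (or diamond) norm,
\begin{equation}
    \Phi^N - \Psi^N = \sum_{k=0}^{N-1} \Phi^{k}\,(\Phi-\Psi)\,\Psi^{N-1-k},
\end{equation}
so that $\|\Phi^N - \Psi^N\| \le N\,\|\Phi-\Psi\|\,\max_k\|\Phi^k\|\,\|\Psi^{N-1-k}\|$. Here $\Phi = e^{\delta\cL}$ is a quantum channel (CPTP), hence a contraction in trace norm, so $\|\Phi^k\|\le 1$ for all $k$. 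The map $\Psi = I+\delta\cL$ is trace-preserving but not necessarily positive; however $\|\Psi\| \le 1 + \delta\|\cL\|$, so $\|\Psi^{N-1-k}\| \le (1+\delta\|\cL\|)^{N} \le \exp(N\delta\|\cL\|) = \exp(t\|\cL\|)$. Combining, $\|R\| \le N\cdot C\delta^2\|\cL\|^2\cdot \exp(t\|\cL\|) = C\,t\,\delta\,\|\cL\|^2\,\exp(t\|\cL\|)$.

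The main obstacle — and the point the argument hinges on — is that this crude bound carries the factor $\exp(t\|\cL\|)$, which is doubly exponential in $n^\alpha$ and swamps the gain $\delta = e^{-bn^\alpha}$. The fix is to avoid bounding $\|\Psi^{m}\|$ by $\exp(t\|\cL\|)$ and instead compare against the \emph{exact} channel: replace $\Psi^{N-1-k}$ by $\Phi^{N-1-k} = e^{(N-1-k)\delta\cL}$ (a contraction) plus an inductively controlled error. Concretely, I would run the telescoping with the roles arranged so every ``long'' power is of the genuine channel $\Phi$: write $\Psi^N - \Phi^N = \sum_{k=0}^{N-1}\Psi^{k}(\Psi-\Phi)\Phi^{N-1-k}$, and bound $\|\Psi^k\|$ only for $k$ up to the first index where the accumulated error is still $o(1)$; alternatively, and more cleanly, observe that since $\cL$ generates a contraction semigroup, $I+\delta\cL$ is also a contraction in trace norm once $\delta$ is small enough relative to the dissipative structure — but this requires care since $I+\delta\cL$ need not be positive. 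The robust route is the asymmetric telescoping $\Psi^N - \Phi^N = \sum_{k}\Psi^k(\Psi-\Phi)\Phi^{N-1-k}$ together with the a priori bound $\|\Psi^k\|\le (1+\delta\|\cL\|)^k \le e^{k\delta\|\cL\|}$; since each term also contains a factor $\|\Psi-\Phi\|\le C\delta^2\|\cL\|^2$, and there are $N$ terms, we get $\|R\|\le \sum_{k=0}^{N-1} e^{k\delta\|\cL\|} C\delta^2\|\cL\|^2 \le C\delta^2\|\cL\|^2 \cdot \frac{e^{N\delta\|\cL\|}-1}{\delta\|\cL\|} = C\delta\|\cL\|\,(e^{t\|\cL\|}-1)$. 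This still has the bad $e^{t\|\cL\|}$, so ultimately one must exploit that $\cL$ is a \emph{Lindbladian}: the semigroup $e^{s\cL}$ is a contraction for all $s\ge0$, and a short lemma shows that for $\delta\|\cL\|$ small the map $I+\delta\cL$ satisfies $\|(I+\delta\cL)^m\|_{1\to1} \le e^{O(m\delta^2\|\cL\|^2)}$ by comparing $(I+\delta\cL)^m$ term-by-term with $e^{m\delta\cL}$ using the same telescoping one level down (an induction on $m$). Feeding $\|(I+\delta\cL)^{N}\|\le e^{O(N\delta^2\|\cL\|^2)} = e^{O(t\delta\|\cL\|^2)} = e^{o(1)} = O(1)$ back into the telescoping estimate gives $\|R\| \le N\cdot C\delta^2\|\cL\|^2 \cdot O(1) = O(t\delta\|\cL\|^2) = \exp(an^\alpha - bn^\alpha + o(n^\alpha))\cdot\operatorname{poly}(n) = o(1)$ since $b>a$, which is the claim. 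So the real work is the one-line-stated but genuinely necessary lemma that $I+\delta\cL$ remains essentially a contraction over $N$ steps; everything else is bookkeeping with the power series and the telescoping sum.
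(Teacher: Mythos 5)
Your proposal is correct, and the overall strategy — compare $e^{\delta\cL}$ to $I+\delta\cL$ at a single Trotter step and telescope over $N = t/\delta$ steps — is the same as the paper's. Where you genuinely add value is in your careful treatment of the step the paper's terse proof glosses over: passing from $\left(I + \delta\cL + \delta^2 O(\|\cL\|^2)\right)^{t/\delta}$ to $\left(I+\delta\cL\right)^{t/\delta} + N\cdot\delta^2 O(\|\cL\|^2)$ implicitly requires that intermediate powers $(I+\delta\cL)^m$ stay uniformly bounded for $m\le N$. You correctly observe that the crude bound $\|(I+\delta\cL)^m\|\le (1+\delta\|\cL\|)^m\le e^{t\|\cL\|}$ is useless (doubly exponential), and that the fix must use the Lindbladian structure: $e^{\delta\cL}$ is CPTP and therefore a contraction in the induced trace norm, so $\|I+\delta\cL\|\le \|e^{\delta\cL}\|+\|e^{\delta\cL}-(I+\delta\cL)\|\le 1+O(\delta^2\|\cL\|^2)$, hence $\|(I+\delta\cL)^m\|\le e^{O(m\delta^2\|\cL\|^2)}=e^{O(t\delta\|\cL\|^2)}=1+o(1)$. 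Feeding that into the telescoping sum (with the genuine channel $e^{\delta\cL}$ on one side as a contraction) gives $\|R\|\le N\cdot O(\delta^2\|\cL\|^2)\cdot(1+o(1)) = O(t\delta\|\cL\|^2)=\exp(-(b-a)n^\alpha+o(n^\alpha))\cdot\operatorname{poly}(n)=o(1)$, as required.

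One simplification you could make: you do not need a separate ``one level down'' inductive telescoping to establish $\|(I+\delta\cL)^m\|\le e^{O(m\delta^2\|\cL\|^2)}$. Submultiplicativity of the superoperator norm together with the single-step bound $\|I+\delta\cL\|\le 1+C\delta^2\|\cL\|^2$ immediately yields $\|(I+\delta\cL)^m\|\le(1+C\delta^2\|\cL\|^2)^m\le e^{Cm\delta^2\|\cL\|^2}$, with no induction. Also, a minor point of hygiene that both you and the paper elide: $t/\delta$ is treated as an integer; one should replace it by $\lceil t/\delta\rceil$, which you noted and correctly absorbed into the $o(n^\alpha)$. Overall, your proof is correct and is in fact a more complete account of the argument than the one the paper records.
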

\begin{proof}
    We have
    \begin{equation}
        \begin{split}
            \exp(\cL t) &= \exp(\delta \cL)^{t/\delta} \\
            &= \left( I + \delta \cL + \delta^2 O(\|\cL\|^2) \right)^{t/\delta} \\
            &= \left( I + \delta \cL \right)^{t/\delta} + \exp((b+a)n^\alpha + o(n^\alpha))O(\exp(-2b n^\alpha) \|\cL\|^2) \\
            &= \left( I + \delta \cL \right)^{t/\delta} + \exp(-\Omega(n^\alpha)).
        \end{split}
    \end{equation}
    In the second line, we used Taylor's approximation theorem and in the last line, we used the assumption that $b>a$ and $\|\cL\| = O(\poly{n})$.
\end{proof}

\begin{corollary}
    Given Lindbladian $\cL$ where $\|\cL\| = O(\poly{n})$ and time $t = \exp(an^\alpha + o(n^\alpha))$ for constants $a>0$ and $\alpha>0$, set $\delta = \exp(-bn^\alpha)$ for any constant $b>a$. If the discrete Gibbs sampler $\cT := I + \delta \cL$ satisfies $\tmix(\cT) = \exp[\Omega(n^\alpha)]$, then $\cL$ also satisfies 
    $\tmix(\cL) = \exp[\Omega(n^\alpha)]$.
\end{corollary}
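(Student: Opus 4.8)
The plan is to argue directly: from the hypothesis that the \emph{discrete} chain $\cT = I+\delta\cL$ mixes slowly, I will exhibit an explicit continuous time $t$ at which $e^{\cL t}$ fails to contract trace distance, and conclude $\tmix(\cL) > t = \exp[\Omega(n^\alpha)]$. The only input beyond the definitions is Lemma~\ref{lem:discrete_convert_to_continuous}, which turns a power of $\cT$ into $e^{\cL t}$ up to a vanishing error, and the remark that changing the constant threshold $\epsilon$ alters a mixing time by at most a constant factor. Crucially, the constants $a$ and $b$ are not truly fixed for us: the hypotheses of Lemma~\ref{lem:discrete_convert_to_continuous} permit any $0<a<b$, and I will choose them small, as the argument requires.

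The steps, in order. First, write $\tmix(\cT) \ge \exp(c_0 n^\alpha)$ for some constant $c_0>0$ and all large $n$; this is exactly what $\tmix(\cT)=\exp[\Omega(n^\alpha)]$ provides. Choose $a,b$ with $0<a<b$ and $a+b<c_0$. Set $k := \lfloor \exp((a+b)n^\alpha)\rfloor$ and $t := k\delta$, so that $t = \exp(a n^\alpha + o(n^\alpha))$ and $t/\delta = k$ \emph{exactly} (this avoids any rounding issue). Since $k \le \exp((a+b)n^\alpha) < \exp(c_0 n^\alpha) \le \tmix(\cT)$ for large $n$, the integer $k$ lies strictly below the mixing time of $\cT$, so by definition there exist states $\rho,\sigma$ with $\|\cT^{k}[\rho-\sigma]\|_1 > \epsilon\|\rho-\sigma\|_1$. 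Now apply Lemma~\ref{lem:discrete_convert_to_continuous} (using $\|\cL\| = O(\poly n)$) to get $e^{\cL t} = \cT^{k} + R$ with $\|R\|=o(1)$, whence by the triangle inequality
\[
\|e^{\cL t}[\rho-\sigma]\|_1 \;\ge\; \|\cT^{k}[\rho-\sigma]\|_1 - o(1)\,\|\rho-\sigma\|_1 \;>\; \tfrac{\epsilon}{2}\,\|\rho-\sigma\|_1
\]
for all large $n$. Since each $e^{\cL s}$ is a trace-norm contraction on differences of states, $T\mapsto\|e^{\cL T}[\rho-\sigma]\|_1$ is non-increasing, so this shows $e^{\cL T}$ fails the $(\epsilon/2)$-contraction for every $T\le t$, i.e.\ the mixing time of $\cL$ at threshold $\epsilon/2$ exceeds $t = \exp(an^\alpha + o(n^\alpha))$. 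By the remark on changing $\epsilon$, $\tmix(\cL)$ at the original threshold is still $\ge \exp(an^\alpha+o(n^\alpha))/O(1) = \exp[\Omega(n^\alpha)]$.

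The routine parts are the triangle-inequality estimate (absorbing $\|R\|_{1\to1}=o(1)$ and the factor $\epsilon\mapsto\epsilon/2$) and the monotonicity of trace distance under the semigroup. The one genuine subtlety, and the step I would be most careful about, is the bookkeeping of constants: discretizing with step size $\delta=\exp(-bn^\alpha)$ inserts a conversion factor $\delta^{-1}=\exp(bn^\alpha)$ between discrete-step count and continuous time, and this factor is itself of order $\exp[\Theta(n^\alpha)]$. Hence one cannot transport a bare "$\exp[\Omega(n^\alpha)]$" across the two time parametrizations for free; the argument goes through only because we take $a$ and $b$ strictly below the exponential rate $c_0$ guaranteed by $\tmix(\cT)=\exp[\Omega(n^\alpha)]$, so that the chosen discrete horizon $k\approx\exp((a+b)n^\alpha)$ still sits below $\tmix(\cT)$. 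I would state this choice of constants explicitly at the very start of the proof so the dependence is transparent.
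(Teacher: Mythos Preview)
The paper states this corollary immediately after Lemma~\ref{lem:discrete_convert_to_continuous} with no proof, treating it as an immediate consequence; so there is nothing to compare against beyond the lemma itself. Your overall architecture (pick a discrete horizon $k<\tmix(\cT)$, find $\rho,\sigma$ not yet contracted, transfer to $e^{\cL t}$ via Lemma~\ref{lem:discrete_convert_to_continuous} and the triangle inequality, then use that $e^{\cL s}$ is a trace-norm contraction) is exactly the right one.

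There is, however, a genuine gap in your handling of $b$. You write that ``the constants $a$ and $b$ are not truly fixed for us'' and then choose $a,b$ with $a+b<c_0$. But $b$ \emph{is} fixed by the statement: it determines $\delta=\exp(-bn^\alpha)$ and hence $\cT=I+\delta\cL$, which is precisely the object on which the hypothesis $\tmix(\cT)=\exp[\Omega(n^\alpha)]$ is placed. If you change $b$ you change $\cT$, and the constant $c_0$ you extracted from $\tmix(\cT)$ need not survive; choosing $b$ \emph{after} extracting $c_0$ is circular. Worse, observe that for \emph{any} $\cT=I+\delta\cL$ one has the trivial bound $\|\cT^k(\rho-\sigma)\|_1\ge (1-\delta\|\cL\|_{1\to1})^k\|\rho-\sigma\|_1$, which already forces $\tmix(\cT)\ge \Theta(1)/(\delta\|\cL\|)=\exp(bn^\alpha)/\poly n$. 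So the hypothesis, read literally, is satisfied with any $c_0<b$, and then your constraint $a+b<c_0$ has no solution with $a>0$.

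The fix is to keep $b$ fixed and vary only the continuous-time exponent: apply Lemma~\ref{lem:discrete_convert_to_continuous} with the \emph{given} $\delta$ and with a new time $t'=\exp(a'n^\alpha)$ for any $0<a'<c_0-b$, so $t'/\delta\approx\exp((a'+b)n^\alpha)<\exp(c_0 n^\alpha)\le\tmix(\cT)$. This requires the implicit (and in the paper's applications always satisfied) reading that the hypothesis supplies $c_0>b$, i.e.\ the discrete lower bound genuinely beats the trivial $\delta^{-1}$ scaling. With that reading your remaining steps (the $\epsilon\mapsto\epsilon/2$ triangle-inequality estimate and the monotonicity argument) are correct as written. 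I would replace your paragraph about ``$a$ and $b$ not truly fixed'' with this observation: only $a$ is free, and the nontrivial content of the hypothesis is exactly that $c_0-b>0$.
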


\subsection{Jump operators}
\label{sec:jump}
We briefly review the physics background associated with the Lindbladians in the study of open system dynamics. The Lindblad master equation $\dot \rho = \cL(\rho)$ describes the evolution of state $\rho$ in the same Hilbert space as a system Hamiltonian $H$ when the system is coupled to a large bath. The operator $G$ produces coherent time evolution, and the operators $\{L^a\}$ produce dissipative dynamics. In this physical model, the system and bath evolve under total Hamiltonian $H_\mathrm{tot} = H + H_B + H_\mathrm{int}$, where $H_\mathrm{int}$ couples the system Hamiltonian ($H$) to the bath Hamiltonian ($H_B$) in the form $H_\mathrm{int} = \sum_a A^a \otimes B^a$ for \emph{jump operators} $A^a$ on the system and $B^a$ on the bath.

Each jump operator $A^a$ is chosen as a local operator, such as a local Pauli operator. In a classical Markov chain describing the evolution of a probability distribution over bitstring configurations, each $A^a$ would correspond to flipping the $a$th bit.

Given the Bohr spectrum $B(H) := \{E_i - E_j: i,j \in [2^n]\}$ and a jump operator $A^a$, we can write $A^a$ in the eigenbasis of $H$ as
\begin{equation} \label{eq:bohr_decomposition_A}
    A^a = \left(\sum_{i} \ket{\psi_i}\bra{\psi_i} \right) A^a \left(\sum_{i} \ket{\psi_i}\bra{\psi_i} \right) = \sum_{\nu \in B(H)} \sum_{i,j: E_i - E_j = \nu} \bra{\psi_i} A^a \ket{\psi_j} \ket{\psi_i}\bra{\psi_j} = \sum_{\nu \in B(H)} A_\nu^a,
\end{equation}
where 
\begin{equation}
\label{eq:anua}
    A_\nu^a := \sum_{i,j: E_i - E_j = \nu} \bra{\psi_i} A^a \ket{\psi_j} \ket{\psi_i}\bra{\psi_j} = \sum_{i,j: E_i - E_j = \nu} A_{j\to i}^a.
\end{equation}
$A_\nu^a$ governs transitions $\ket{\psi_i}\bra{\psi_j}$ from eigenstates $j$ to $i$ separated by energy $\nu$.
We can decompose Lindbladian Gibbs sampler in terms of these operators $A_\nu^a$ or $A_{j \to i}^a$. The simplest of these is the \emph{Davies generator}, which emerges in the infinite-time limit of a weakly coupled system to a bath. The Davies generator has Lindblad operators of the form $\sqrt{\gamma(\nu)} A_\nu^a$ for a function $\gamma:\R \to \R$ chosen to satisfy detailed balance. However, these Lindblad operators are not necessarily local, making it difficult to implement efficiently. 

References~\cite{gilyen2024quantum} and~\cite{ding2024efficient} overcome this issue by using Lindblad operators of the form
\begin{equation}
    L^a = \sum_{\nu \in B(H)} \sqrt{\gamma(\nu)} A_\nu^a
\end{equation}
for $\gamma$ chosen both to satisfy detailed balance and to be sufficiently smooth for efficient implementation. The earlier approach of~\cite{chen2023efficient} used the Fourier-transformed jump operator (for $\omega \in \R$)
\begin{equation}
    L^a = \frac{1}{\sqrt{2\pi}} \int_{-\infty}^\infty f(t) e^{iHt}A^a e^{-iHt} e^{-i\omega t} dt = \sum_{\nu \in B(H)} \hat f(\omega - \nu) A^a_\nu
\end{equation}
for Gaussian $f$ and its Fourier transform $\hat f$. Our lower bounds will apply to all these constructions.

\section{Slow mixing}
\label{sec:slow}
\begin{figure}
    \centering
    \includegraphics[width=0.6\textwidth]{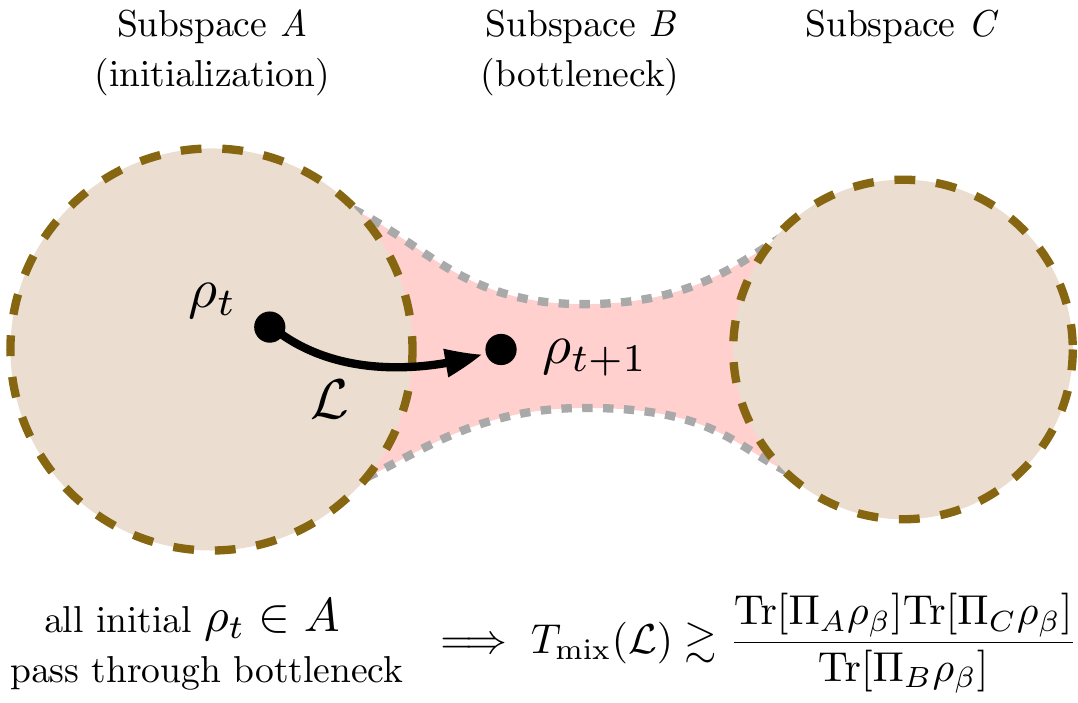}
    \caption{Proofs guaranteeing slow mixing of a given Gibbs sampling algorithm require two ingredients. First, one must find a region $A$ where the state is initialized and a bottleneck $B$ which the sampler must pass through to sample from a region $C$. Second, one must show that the bottleneck region $B$ has vanishing probability. Namely, if the probability of bottleneck region $B$ is exponentially smaller than that of $A$ and $C$, then the mixing time is also exponentially long. }
    \label{fig:bottleneck_illustrative}
\end{figure}

Here, we introduce a generic argument used to show lower bounds on mixing times, schematically illustrated in \Cref{fig:bottleneck_illustrative}. Roughly speaking, the argument consists of two steps. First, we identify a \emph{bottleneck} region $B$ between two regions ($A$ and $C$) that each have large support in the thermal state. The bottleneck must be chosen such that $\Tr(\Pi_B \rho_\beta)$ is exponentially (or superpolynomially) small. Second, we define a notion of distance such that a Gibbs sampler initialized at a state in region $A$ must pass through $B$ to arrive at region $C$. In the classical case, this distance is often taken to be Hamming distance: a classical sampler typically flips a single bit (or constant number of bits) in each step. In the quantum case, we require a generalization of this notion. In the following two subsections, we introduce two such generalizations.

\subsection{Distance in the Hamiltonian eigenbasis}
\label{sec:distdavies}
In classical Hamiltonians, all eigenstates are diagonal in the computational basis. Hence, a natural generalization of Hamming distance is given by the eigenbasis of the Hamiltonian whose thermal state is being prepared.
Recall from Sec.~\ref{sec:jump} that the jump operator $A^a$ can be decomposed in terms of the Bohr spectrum $B(H) := \{E_i - E_j: i,j \in [2^n]\}$ as
\begin{equation}
    A_\nu^a := \sum_{i,j: E_i - E_j = \nu} \bra{\psi_i} A^a \ket{\psi_j} \ket{\psi_i}\bra{\psi_j} = \sum_{i,j: E_i - E_j = \nu} A_{j\to i}^a,
\end{equation}
where $A_\nu^a$ is only supported on transitions where $\bra{\psi_i} A^a \ket{\psi_j} \neq 0$. We also use the notation $A_{j\to i}^a = \bra{\psi_i} A^a \ket{\psi_j} \ket{\psi_i}\bra{\psi_j}$ to denote transitions from eigenstate $j$ to $i$. The Lindbladian for the Davies generator in particular can be written in terms of the operators $A_\nu^a$.

For convenience, we will define the set $\Tset(H, \{A^a\})$ of nonzero $A_{j \to i}^a$ as the set of possible transitions.

\begin{definition}[Set of transitions $\Tset(H, \{A^a\})$]\label{def:set_of_transitions}
    Given a Hamiltonian $H$ with eigenbasis $\{\ket{\psi_i}\}$ and a set of jump operators $\{A^a\}$, the set of transitions $\Tset(H, \{A^a\})$ consists of all possible nonzero $A_{j \to i}^a$ generated by $\{A^a\}$, i.e. 
    \begin{equation}
        \Tset(H, \{A^a\}) = \left\{ A_{j \to i}^a: \; A^a \in \{A^a\}, \ket{\psi_i}, \ket{\psi_j} \in \{\ket{\psi_k}\}, \bra{\psi_i} A^a \ket{\psi_j} \neq 0  \right\}.
    \end{equation}
\end{definition}

When the inputs to $\Tset(H, \{A^a\})$ are obvious from context, we will at times simply denote the set $\Tset(H, \{A^a\})$ as $\Tset$. The above definition recovers the standard support of classical Markov chain samplers as detailed below.

\begin{example}[Classical Markov chain sampler]
    For classical Markov chains, the eigenbasis of a Hamiltonian is the computational basis and the jump operators $\{A^a\}_{a=1}^n$ are given by the set of single qubit Pauli $X$ operators where $A^a = X^a$.
    In this case, $A_{y \to x}^a = \bra{x}X^a\ket{y}\ket{x}\bra{y}$ and supports transitions between computational basis states $\ket{x}$ and $\ket{y}$ which differ only in the $a$-th bit. Therefore, 
    \begin{equation}
        \Tset(H, \{A^a\}) = \left\{ \ket{x}\bra{y}: \; x,y \in \{0,1\}^n , |x-y|_H = 1 \right\}.
    \end{equation}
\end{example}

The Gibbs samplers described in Sec.~\ref{sec:jump} all have a common feature: each Lindblad operator $L^a$ is expressible as a sum of transition operators $A_{j \to i}^a$. Here, we define a notion of range valid for any CPTP map in terms of its degree in transition operators. This jump operator range will capture the number of hops achievable by a single application of Lindbladian $\cL$.

\begin{definition}[Jump operator range]
\label{def:jumpdist}
    Given a quantum channel $\cM$, the jump operator range $\trange(\cM)$ is equal to the minimum positive integer $k$ such that there exist Kraus operators $K_1, \dots, K_D$ which can be written as
    \begin{equation}
        K_i = \sum_{A_1, \dots, A_k \in \Tset} c_i(A_1, A_2, \dots, A_k) A_1 A_2 \cdots A_k
    \end{equation}
    for functions $c_i:\Tset \times \cdots \times \Tset \to \mathbb{C}$ such that $\cM(\rho):=\sum_{i=1}^D K_i \rho K_i^\dagger$. We say that $\cM$ is a \emph{degree-}$k$ channel with respect to $\Tset$.
\end{definition}
In words, a quantum operator $\cM$ has range $k$ if it can be written as a degree $k$ operator in the set $\Tset$. The range of a discrete time sampler $\cT$ or continuous time Lindbladian $\cL$ can be restricted as shown in the previous section by analyzing the set $\Tset(H,\{A^a\})$ (see \Cref{def:set_of_transitions}). Jump operator range is immediately applicable to the Gibbs samplers described explicitly in Sec.~\ref{sec:jump}; we show here an example of its application to a Gibbs sampler with a less explicit relation to the Davies generator.

\begin{example}[Range of sampler from \cite{jiang2024quantum}]
    The Gibbs sampler $\cM$ of \cite{jiang2024quantum} is a degree $2$ operator in the transitions $A_{j \to i}^a$.  Each iteration of this sampler proceeds in three steps which can be summarized as: 
    \begin{enumerate}
        \item A unitary $U_C= \operatorname{QPE}_{1,3} \circ A^a \circ \operatorname{QPE}_{1,2}$ is applied. Here, given a starting state $\ket{\phi}$, $\operatorname{QPE}_{1,2}$ estimates the energy of $\ket{\phi}$ using quantum phase estimation outputting this estimate in register $2$. Then it applies a jump operator $A^a$ and finally applies quantum phase estimation again to estimate the new energy storing the result in register $3$.
        \item A sequence of measurement is applied to the ancillary registers. 
        \item The inverse unitary $U_C^\dagger$ is then potentially applied depending on the outcome of the measurements.
    \end{enumerate} 
    Given a set of jump operators $\{A^a\}$, the range of this sampler $\cM$ is $\trange(\cM)=2$ since the unitary $U_c$ is degree $1$ in jumps $A^a_{j \to i}$ and this operation is applied at most twice (measurements on ancillary measurements cannot add further jumps between eigenbases). See further details in \Cref{app:ex_operator_range_bound}.
\end{example}

The jump operator range of a discrete sampler or Lindbladian will be compared to a distance between subspaces of the Hilbert space: the bottleneck will be constructed such that if two regions $A$ and $C$ are further apart than the range of the operator, then the operator will necessarily pass through the bottleneck. To make this formal, we define a notion of distance between subspaces.

\begin{definition}[Jump operator distance]
    Given a Hamiltonian on $n$ qubits with eigenbasis 
    \begin{equation}
        H = \sum_{i=1}^{2^n} E_i \ket{\psi_i}\bra{\psi_i},
    \end{equation}
    let $B,C \subset \bH_n$ be any two subspaces. Given a set of jump operators $\{A^a\}_a$, let $\Tset(H, \{A^a\})$ denote the set of transitions defined in \Cref{def:set_of_transitions}. 
    The jump operator distance $\dist_{\Tset}(B, C)$ is defined as
    \begin{equation}
        \dist_{\Tset}(B, C) = \min \bigl\{d: \exists \ket{\psi_1} \in B, \exists \ket{\psi_2} \in C, \exists A_1, \dots, A_d \in \Tset(H, \{A^a\}), \bra{\psi_2} A_1 \cdots A_d \ket{\psi_1}| \neq 0 \bigr\}.
    \end{equation}
\end{definition}

In words, the jump operator distance $\dist_{\Tset}(B, C)$ quantifies the minimum number of jump operators that one needs to apply to move from a state in $B$ to a state in $C$.

\subsection{Distance from locality}
\label{sec:distlocal}
We will at times be interested in Hamiltonians formed by a classical Hamiltonian in a transverse field, i.e., of the form
\begin{equation}
    H = H_0 + h\sum_{i=1}^n X_i
\end{equation}
for $H_0$ diagonal in the $Z$ basis. Due to the presence of $H_0$, these Hamiltonians are naturally analyzable in the computational basis when $h$ is sufficiently small. For example, if two projectors $\Pi_A, \Pi_C$ are supported only on bitstrings far apart in Hamming distance --- as is often the case for regions of $H_0$ separated by a free energy barrier --- then the application of a local Lindbladian operator will necessarily traverse the bottleneck region between $A$ and $C$. Local Lindbladians arise naturally in the context of Gibbs sampling, both due to the quasi-locality of time-evolved operators used in Lindbladian constructions for geometrically local Hamiltonians (e.g., see~\cite{chen2023efficient,gilyen2024quantum} and  Sec.~\ref{sec:localtfim}), and due to the practical feasibility of implementing local operations compared to global unitaries. Formally, we define a $k$-local quantum channel as follows.

\begin{definition}[Local quantum channel]
\label{def:local}
    A quantum channel $\cM$ is $k$-local if it is decomposable in the form $\cM(\rho) = \sum_j K_j \rho K'_j$ such that each $K_j, K'_j$ acts nontrivially on at most $k$ qubits. We write
    \begin{equation}
        \range_Z(\cM) = k.
    \end{equation}
\end{definition}

For a state $\rho = \sum_z p_z \ketbra{z}$ written in the computational basis, a $k$-local channel $\cM$ can only change each branch $\ket{z}$ by Hamming distance $k$. Consequently, if the two regions $\Pi_A$ and $\Pi_C$ are separated by more than $k$ and $\Pi_A \rho \Pi_A = \rho$, the state $\cM(\rho)$ will have no support in $\Pi_C$. We formalize this distance between regions as follows.

\begin{definition}[Computational basis distance]
\label{def:distlocal}
    Let $A, B \subset \bH_n$ be two subspaces with corresponding projectors $\Pi_A, \Pi_B$, and define
    \begin{equation}
        S_1 = \{z \in \{0,1\}^n : \bra{z}\Pi_A\ket{z} \neq 0\}, \quad S_2 = \{z \in \{0,1\}^n : \bra{z}\Pi_B\ket{z} \neq 0\}.
    \end{equation}
    The computational basis distance $\dist_Z(A, B)$ is defined as
    \begin{equation}
        \dist_Z(A, B) = \min_{z_1 \in S_1, z_2 \in S_2} |z_1 - z_2|_H.
    \end{equation}
\end{definition}
These notions of range and distance are analogous to the jump operator range and distance, but are often easier to analyze when the exact eigenspaces are difficult to access. Note that any projector can be written in the computational basis, and hence this measure of distance is well-defined for arbitrary subspaces $A, B$ of the Hilbert space.

\subsection{Bottleneck for slow mixing}
\label{sec:bottle}
Classical proofs of slow mixing rely on finding two orthogonal subspaces $A, C \subseteq \mathbb{H}_n$ that are far apart and separated by a ``bottleneck" subspace $B \subseteq \mathbb{H}_n$ (see, e.g., \cite{levin2017markov} for a pedagogical introduction). Let $\Pi_A,\Pi_B,\Pi_C \in \mathcal{O}_{\bH_n}$ be their corresponding projectors. The proof relies on the fact that a state initialized completely in $A$ must ``travel" through $B$ before having support in $C$. As a default choice, the subspace $B$ can be chosen as $\Pi_B = I-\Pi_A-\Pi_C$, but this may not result in the most optimal mixing time. 
In general, the choice of $B$ should be optimally restricted to the set of states reachable from $C$ by a single application of a discrete Gibbs sampler $\cT$. Formally, we will require that $\cT^\dagger(\Pi_C) \preccurlyeq \Pi_B + \Pi_C$. For samplers where $\dist_{\Tset}(A, C) > \trange(\cT)$, there always exists a projector $\Pi_B$ (e.g., one always can set $\Pi_B=I - \Pi_A - \Pi_C$) such that $\cT^\dagger(\Pi_C) \preccurlyeq \Pi_B + \Pi_C $ as we show below. 
\begin{lemma}[Sufficient criteria for construction of bottleneck]
    Let $A, C \subseteq \mathbb{H}_n$ denote orthogonal subspaces with projectors $\Pi_A, \Pi_C \in \mathcal{O}_{\bH_n}$. Let $\cT$ be a discrete time Gibbs sampler where
    \begin{equation}
        \begin{split}
            \dist_{\Tset}(A, C) > \trange(\cT) \quad \text{or} \quad \dist_Z(A, C) > \range_Z(A, C).
        \end{split}
    \end{equation}
    Set $\Pi_B = I - \Pi_A - \Pi_C$. Then, it holds that $\cT^\dagger(\Pi_C) \preccurlyeq \Pi_B + \Pi_C $.
\end{lemma}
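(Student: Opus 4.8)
The plan is to reduce the claimed operator inequality to two statements about the positive operator $X:=\cT^\dagger(\Pi_C)$: that $0\preccurlyeq X\preccurlyeq I$, and that $X$ is supported away from $A$, i.e.\ $\Pi_A X\Pi_A=0$. The first is immediate: $\cT$ is a quantum channel, so $\cT^\dagger$ is completely positive and unital, and $0\preccurlyeq\Pi_C\preccurlyeq I$ forces $0\preccurlyeq X\preccurlyeq\cT^\dagger(I)=I$. Granting the second, write $X=M^\dagger M$ with $M=X^{1/2}$; then $\Pi_A X\Pi_A=(M\Pi_A)^\dagger(M\Pi_A)=0$ gives $M\Pi_A=0$, hence $X\Pi_A=\Pi_A X=0$. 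For any $\ket{v}$ this yields $\bra{v}X\ket{v}=\bra{w}X\ket{w}\le\|\ket{w}\|^2=\bra{v}(I-\Pi_A)\ket{v}$ where $\ket{w}:=(I-\Pi_A)\ket{v}$, so $X\preccurlyeq I-\Pi_A=\Pi_B+\Pi_C$ using $\Pi_B:=I-\Pi_A-\Pi_C$. Thus everything reduces to showing $\Pi_A\cT^\dagger(\Pi_C)\Pi_A=0$.

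Since $\cT^\dagger(\Pi_C)\succeq0$, the operator $\Pi_A\cT^\dagger(\Pi_C)\Pi_A$ is positive, so it vanishes iff its trace does, and $\Tr[\Pi_A\cT^\dagger(\Pi_C)\Pi_A]=\Tr[\cT^\dagger(\Pi_C)\Pi_A]=\Tr[\Pi_C\,\cT(\Pi_A)]$ by the definition of the adjoint. Expanding $\Pi_A=\sum_l\ketbra{\phi_l}$ over an orthonormal basis of $A$ and using positivity of each summand, it suffices to show $\Tr[\Pi_C\,\cT(\ketbra{\phi_l})]=0$ for each $l$, and for this it is enough to show $\Pi_C K\ket{\phi_l}=0$ for every operator $K$ that appears to the left of $\rho$ in the relevant Kraus-type decomposition of $\cT$. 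This is the step where exactly one of the two distance hypotheses is invoked, and I would treat the two cases separately.

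For the jump-operator hypothesis, fix a Kraus representation $\cT(\rho)=\sum_iK_i\rho K_i^\dagger$ with $K_i=\sum_{A_1,\dots,A_k\in\Tset}c_i(A_1,\dots,A_k)\,A_1\cdots A_k$ and $k=\trange(\cT)$. For a fixed tuple and arbitrary $\ket{u},\ket{v}$, the number $\bra{u}\,\Pi_C A_1\cdots A_k\,\Pi_A\ket{v}$ is a matrix element of $A_1\cdots A_k$ between the vector $\Pi_C\ket{u}\in C$ and the vector $\Pi_A\ket{v}\in A$; were it nonzero it would witness $\dist_{\Tset}(A,C)\le k$, contradicting $\dist_{\Tset}(A,C)>\trange(\cT)$. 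Hence $\Pi_C A_1\cdots A_k\Pi_A=0$ for every tuple, so $\Pi_C K_i\Pi_A=0$, and therefore $\Pi_C K_i\ket{\phi_l}=\Pi_C K_i\Pi_A\ket{\phi_l}=0$. For the locality hypothesis, write $\cT(\rho)=\sum_jK_j\rho K'_j$ with $K_j,K'_j$ each nontrivial on at most $k=\range_Z(\cT)$ qubits, and let $S_1,S_2$ be the computational-basis supports of $\Pi_A,\Pi_C$ as in Definition~\ref{def:distlocal}. Then $A\subseteq\operatorname{span}\{\ket{z}:z\in S_1\}$, and for $z'\notin S_2$ one has $\|\Pi_C\ket{z'}\|^2=\bra{z'}\Pi_C\ket{z'}=0$, i.e.\ $\Pi_C\ket{z'}=0$. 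A $k$-local operator alters a computational-basis branch $\ket{z}$ only on those $\le k$ qubits, so $K_j\ket{\phi_l}$ is supported on $\{\ket{z'}:\exists z\in S_1,\ |z-z'|_H\le k\}$; since $\dist_Z(A,C)>k$, none of these $z'$ lie in $S_2$, and so $\Pi_C K_j\ket{\phi_l}=0$. In both cases $\Tr[\Pi_C\,\cT(\ketbra{\phi_l})]=0$, whence $\Tr[\Pi_C\,\cT(\Pi_A)]=0$ and the reduction above closes.

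I do not expect a genuine obstacle; the only point needing care is the locality case, where one must be precise that the subspace $\Pi_C$ is annihilated on computational basis states outside its \emph{diagonal} support $S_2$ (using that $\Pi_C$ is a projector), and that a channel built from $k$-local Kraus operators cannot move any computational-basis branch by more than Hamming distance $k$. The jump-operator case is cleaner because $\dist_{\Tset}(A,C)$ is already defined by quantifying over \emph{all} vectors of $A$ and $C$, so the vanishing of matrix elements immediately upgrades to the operator identity $\Pi_C A_1\cdots A_k\Pi_A=0$.
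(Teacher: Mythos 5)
Your proof is correct, and it reaches the same essential insight as the paper's — namely that the distance hypothesis forces $\Pi_C K_i\Pi_A=0$ for every operator $K_i$ that $\cT$ applies on the left — but the surrounding algebra is packaged differently. The paper's proof is a five-line trace chain: insert $\Pi_A+\Pi_B+\Pi_C=I$ around $\rho$, drop all $\Pi_A$-cross-terms (this is the single step where the distance condition is used, stated without further detail), bound $\Pi_C\preccurlyeq I$, and invoke trace preservation. You instead first establish the support structure of $X:=\cT^\dagger(\Pi_C)$ — showing $0\preccurlyeq X\preccurlyeq I$ from CP-unitality and $X\Pi_A=\Pi_A X=0$ from the distance condition — and then read off $X\preccurlyeq I-\Pi_A=\Pi_B+\Pi_C$. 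This buys you a cleaner separation of the two inputs (CP-unitality vs.\ the distance hypothesis), and you make explicit the reduction from $\Pi_A X\Pi_A=0$ to $\Tr[\Pi_C\cT(\Pi_A)]=0$ via positivity of the trace, as well as the verification of $\Pi_C K_i\Pi_A=0$ in both the $\dist_{\Tset}$ and $\dist_Z$ cases — details the paper compresses into one sentence. The paper's route is shorter; yours is more self-contained. One small point worth keeping explicitly, which you handle correctly: in the $\range_Z$ case the decomposition $\cT(\rho)=\sum_j K_j\rho K_j'$ from Definition~\ref{def:local} need not be a bona fide Kraus representation ($K_j'\ne K_j^\dagger$ in general), so the argument must be stated for the operator on the left of $\rho$ alone, as you do.
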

\begin{proof}
    We have that for any state $\rho$:
    \begin{equation}
        \begin{split}
            \Tr(\cT^\dagger(\Pi_C)\rho) & = \Tr(\Pi_C \cT(\rho)) \\
            &= \Tr( \Pi_C \cT((\Pi_C + \Pi_B + \Pi_A )\rho(\Pi_C + \Pi_B + \Pi_A ))) \\
            &= \Tr( \Pi_C \cT((\Pi_C + \Pi_B  )\rho(\Pi_C + \Pi_B  ))) \\
            &\leq \Tr( \cT((\Pi_C + \Pi_B  )\rho(\Pi_C + \Pi_B  ))) \\
            &= \Tr( (\Pi_C + \Pi_B  )\rho).
        \end{split}
    \end{equation}
    In the second line, we use the fact that $\Pi_C + \Pi_B + \Pi_A = I$. The third line follows from either property, $\dist_{\Tset}(A, C) > \trange(\cT)$ or $\dist_Z(A, C) > \range_Z(A, C)$. In the fourth and fifth lines, we use the completely positive and trace preserving properties of the map $\cT$.
\end{proof}

To produce a mixing time lower bound, we will choose an initial state
\begin{equation}
    \sigma_0 := \frac{e^{-\beta H/2}\Pi_A e^{-\beta H/2}}{\Tr[\Pi_A e^{-\beta H}]}.
\end{equation}
Ideally, $\sigma_0$ is largely supported only in region $A$: the bottleneck argument requires sending $\sigma_0$ through region $B$ to reach region $C$, and hence $\sigma_0$ cannot have large initial support in region $C$. In particular, for the bottleneck $B$ to dominate the mixing time lower bound, we must require a decoupling property
\begin{equation}
\label{eq:cacond}
    \Tr[\Pi_C \sigma_0] = \Tr[\Pi_C \frac{e^{-\beta H/2} \Pi_A e^{-\beta H/2}}{\Tr[\Pi_A e^{-\beta H}]}] = o(1).
\end{equation}
For commuting Hamiltonians, this condition is trivially satisfied; we will develop in Sec.~\ref{sec:localtfim} the tools to show this condition for (stoquastic) non-commuting Hamiltonians. Assuming \eqref{eq:cacond}, we now provide the general bottleneck argument for a discrete Gibbs sampler $\cT^t$ and the continuous Gibbs sampler $e^{\cL t}$.

\begin{proposition}[Mixing time from bottleneck (discrete Gibbs sampler)]\label{prop:mix_time_discrete}
    Denote $H$ as a Hamiltonian acting on $n$ qubits with Gibbs state $\rho_\beta$ at inverse temperature $\beta$. 
    Let $\cT$ denote a discrete time Gibbs sampler with fixed point $\cT(\rho_\beta)=\rho_\beta$. 
    Let $A,B,C \subseteq \mathbb{H}_n$ be orthogonal linear subspaces of $\bH_n$ with corresponding projectors $\Pi_A,\Pi_B,\Pi_C \in \mathcal{O}_{\bH_n}$.
    Assume that for state $\sigma_0 = \frac{e^{-\beta H/2} \Pi_A e^{-\beta H/2}}{\Tr[\Pi_A e^{-\beta H}]}$
    \begin{equation}
        \Tr[\Pi_C \sigma_0] = o(1) \quad \text{ and } \quad \Tr[\Pi_C \rho_\beta] = \Omega(1).
    \end{equation}
    Assume furthermore that there exists a $\widetilde{\cT}$ which satisfies
    \begin{equation}
        \norm{\cT - \widetilde{\cT}}_\diamond \leq \epsilon, \quad \widetilde{\cT}^\dagger(\Pi_C) \preccurlyeq \Pi_B + \Pi_C.
    \end{equation}
    Then the mixing time of $\cT$ is bounded below as
    \begin{equation}
        \tmix(\cT) = \Omega\lr{\min\left[\Tr[\Pi_A \rho_\beta]\Tr[\Pi_C \rho_\beta]\Tr[\Pi_B \rho_\beta]^{-1}, \; \frac{ \Tr[\Pi_C \rho_\beta]}{\epsilon}\right]}
    \end{equation}
\end{proposition}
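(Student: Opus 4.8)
The plan is to track the quantity $f_t := \Tr[\Pi_C\, \cT^t(\sigma_0)]$, which starts at $o(1)$ by hypothesis and which, at the mixing time $T = \tmix(\cT)$, must be close to $\Tr[\Pi_C \rho_\beta] = \Omega(1)$; the gap between these two values can only be closed slowly because the flow across the bottleneck per step is controlled by $\Tr[\Pi_B \rho_\beta]$. First I would establish the ``per-step flow'' estimate: for the ideal sampler $\widetilde{\cT}$ satisfying $\widetilde{\cT}^\dagger(\Pi_C) \preccurlyeq \Pi_B + \Pi_C$, and for any state $\rho$, we have $\Tr[\Pi_C \widetilde{\cT}(\rho)] = \Tr[\widetilde{\cT}^\dagger(\Pi_C)\rho] \le \Tr[(\Pi_B+\Pi_C)\rho] = \Tr[\Pi_C \rho] + \Tr[\Pi_B \rho]$. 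Thus each application of $\widetilde\cT$ increases the $\Pi_C$-mass by at most the current $\Pi_B$-mass. Summing, $\Tr[\Pi_C \widetilde{\cT}^t(\sigma_0)] \le \Tr[\Pi_C \sigma_0] + \sum_{s=0}^{t-1}\Tr[\Pi_B \widetilde{\cT}^s(\sigma_0)]$.

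Second, I would bound each term $\Tr[\Pi_B \widetilde{\cT}^s(\sigma_0)]$ using stationarity of $\rho_\beta$ and a data-processing / operator-monotonicity argument. The standard trick (as in the classical bottleneck proof, e.g.\ \cite{levin2017markov}) is to dominate $\sigma_0$ by a multiple of $\rho_\beta$: indeed $\sigma_0 = \frac{e^{-\beta H/2}\Pi_A e^{-\beta H/2}}{\Tr[\Pi_A e^{-\beta H}]} \preccurlyeq \frac{e^{-\beta H}}{\Tr[\Pi_A e^{-\beta H}]} = \frac{\rho_\beta}{\Tr[\Pi_A \rho_\beta]}$, using $e^{-\beta H/2}\Pi_A e^{-\beta H/2} \preccurlyeq e^{-\beta H/2}\,I\,e^{-\beta H/2} = e^{-\beta H}$. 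Since $\widetilde\cT$ is a (completely) positive map and $\Pi_B \succeq 0$, positivity gives $\Tr[\Pi_B \widetilde{\cT}^s(\sigma_0)] \le \frac{1}{\Tr[\Pi_A \rho_\beta]}\Tr[\Pi_B \widetilde{\cT}^s(\rho_\beta)] = \frac{\Tr[\Pi_B \rho_\beta]}{\Tr[\Pi_A \rho_\beta]}$, where the last equality uses $\widetilde{\cT}(\rho_\beta)=\rho_\beta$. (If one only knows $\cT(\rho_\beta)=\rho_\beta$ rather than $\widetilde\cT(\rho_\beta)=\rho_\beta$, then one absorbs an extra $O(t\epsilon)$ error from $\|\cT-\widetilde\cT\|_\diamond \le \epsilon$ into the telescoping sum, which is exactly why the $\Tr[\Pi_C\rho_\beta]/\epsilon$ term appears in the final bound.) Combining, after $t$ steps of $\widetilde\cT$ we get $\Tr[\Pi_C \widetilde{\cT}^t(\sigma_0)] \le o(1) + t\,\frac{\Tr[\Pi_B \rho_\beta]}{\Tr[\Pi_A \rho_\beta]}$.

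Third, I would convert back from $\widetilde\cT$ to $\cT$: since $\|\cT - \widetilde\cT\|_\diamond \le \epsilon$, a telescoping argument gives $\|\cT^t(\sigma_0) - \widetilde\cT^t(\sigma_0)\|_1 \le t\epsilon$, hence $\Tr[\Pi_C \cT^t(\sigma_0)] \le o(1) + t\,\frac{\Tr[\Pi_B \rho_\beta]}{\Tr[\Pi_A \rho_\beta]} + t\epsilon$. Finally, I would invoke the definition of mixing time applied to the pair $\rho = \sigma_0$, $\sigma = \rho_\beta$: at $t = \tmix(\cT)$ we have $\|\cT^t(\sigma_0) - \rho_\beta\|_1 \le \epsilon' \|\sigma_0 - \rho_\beta\|_1 \le 2\epsilon'$ for the mixing constant $\epsilon' \le 1/2$, which forces $\Tr[\Pi_C \cT^t(\sigma_0)] \ge \Tr[\Pi_C \rho_\beta] - 2\epsilon' \ge \tfrac12 \Tr[\Pi_C \rho_\beta] - o(1)$ (using $\Tr[\Pi_C\rho_\beta]=\Omega(1)$ to absorb the additive constant, possibly after rescaling $\epsilon'$). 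Comparing the upper and lower bounds on $\Tr[\Pi_C \cT^t(\sigma_0)]$ and solving for $t = \tmix(\cT)$ yields $\tmix(\cT) = \Omega\!\left(\min\!\left[\Tr[\Pi_A\rho_\beta]\Tr[\Pi_C\rho_\beta]\Tr[\Pi_B\rho_\beta]^{-1},\ \Tr[\Pi_C\rho_\beta]/\epsilon\right]\right)$, as claimed. I expect the main subtlety to be the bookkeeping in the third step — correctly accumulating the diamond-norm error and the (possibly absent) stationarity of $\widetilde\cT$ so that it produces precisely the $\min$ of the two terms rather than a weaker bound — together with making sure the operator inequality $\sigma_0 \preccurlyeq \rho_\beta / \Tr[\Pi_A\rho_\beta]$ is applied correctly under the positive map $\widetilde\cT$.
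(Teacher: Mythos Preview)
Your proposal is essentially the paper's argument: the key inputs---the operator domination $\sigma_0 \preccurlyeq \rho_\beta/\Tr[\Pi_A\rho_\beta]$, the per-step flow bound via $\widetilde\cT^\dagger(\Pi_C)\preccurlyeq \Pi_B+\Pi_C$, and the iteration---are exactly what the paper uses. The one organizational difference is precisely the subtlety you flag at the end. You evolve with $\widetilde\cT$ and only switch to $\cT$ at the end via a global telescoping $\|\cT^t-\widetilde\cT^t\|_1\le t\epsilon$; but then the stationarity step $\Tr[\Pi_B\widetilde\cT^s(\rho_\beta)]=\Tr[\Pi_B\rho_\beta]$ is unjustified, and replacing it by $\Tr[\Pi_B\widetilde\cT^s(\rho_\beta)]\le \Tr[\Pi_B\rho_\beta]+s\epsilon$ (which is what $\|\cT-\widetilde\cT\|_\diamond\le\epsilon$ actually gives) leads after summing over $s$ to an $O(t^2\epsilon/\Tr[\Pi_A\rho_\beta])$ term, hence a square-root--weaker $\epsilon$-dependence than claimed.

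The paper's fix is simply to reverse the roles: evolve $\sigma_t=\cT^t(\sigma_0)$ with the \emph{true} sampler throughout (so $\cT(\rho_\beta)=\rho_\beta$ is used directly to get $\Tr[\Pi_B\sigma_{t-1}]\le \Tr[\Pi_B\rho_\beta]/\Tr[\Pi_A\rho_\beta]$ with no error), and invoke $\widetilde\cT$ only for the single-step increment,
\[
\Tr[\Pi_C\sigma_t]=\Tr[\cT^\dagger(\Pi_C)\sigma_{t-1}]\le \Tr[\widetilde\cT^\dagger(\Pi_C)\sigma_{t-1}]+\epsilon\le \Tr[\Pi_C\sigma_{t-1}]+\Tr[\Pi_B\sigma_{t-1}]+\epsilon.
\]
This yields exactly $t\epsilon$ total and the stated $\min$. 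So your bookkeeping worry is real, but the resolution is just this swap of which map you iterate and which you approximate.
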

\begin{proof}
    Consider initial state
    \begin{equation}
    \sigma_0 := \frac{e^{-\beta H / 2} \Pi_A e^{-\beta H / 2}}{\Tr[\Pi_A e^{-\beta H }]}.
    \end{equation}
    Letting $\sigma_t = \cT^t(\sigma_0)$, we have
    \begin{equation}
    \begin{split}
        \Tr[\Pi_B  \sigma_{t-1}] &= \Tr[\Pi_B \cT^{t-1}(\sigma_0) ] \\
        &= \Tr[e^{-\beta H } \Pi_A]^{-1} \Tr[\Pi_B \cT^{t-1}(e^{-\beta H / 2} \Pi_A e^{-\beta H / 2}) ] \\
        &\leq \Tr[e^{-\beta H } \Pi_A]^{-1} \Tr[\Pi_B \cT^{t-1}(e^{-\beta H}) ]  \\
        &= \Tr[e^{-\beta H } \Pi_A]^{-1} \Tr[\Pi_B e^{-\beta H} ] \\
        &= \Tr[\Pi_A \rho_\beta]^{-1} \Tr[\Pi_B \rho_\beta],
    \end{split}
    \end{equation}
    where the inequality follows from $\cT$ being a CPTP linear map.
    Since $\cT^\dagger(\Pi_C) \preccurlyeq \Pi_B + \Pi_C$, we have 
    \begin{equation} \label{eq:pre_iterate_pi_B}
    \begin{split}
        \Tr[\Pi_C \sigma_t] &= \Tr[\Pi_C \cT(\sigma_{t-1})] \\
        &= \Tr[\cT^\dagger(\Pi_C) \sigma_{t-1}] \\
        &\leq \Tr[\widetilde\cT^\dagger(\Pi_C) \sigma_{t-1}] + \epsilon \norm{\Pi_C}\\
        &\leq \Tr[\Pi_C  \sigma_{t-1}] + \Tr[\Pi_B  \sigma_{t-1}] + \epsilon.
    \end{split}
    \end{equation}
    Iterating over $t$ in~\eqref{eq:pre_iterate_pi_B} gives
    \begin{equation}
        \Tr[\Pi_C \sigma_t] \leq \Tr[\Pi_C \sigma_0] + \sum_{j=0}^{t-1} \lr{\Tr[\Pi_B \sigma_j] + \epsilon} = t \Tr[\Pi_A \rho_\beta]^{-1} \Tr[\Pi_B \rho_\beta] + t\epsilon + o(1),
    \end{equation}
    since $\sigma_0$ has the assumed property that $\Tr[\Pi_C \sigma_0] = o(1)$. Requiring $\Tr[\Pi_C \sigma_t] = \alpha \Tr[\Pi_C \rho_\beta]$ for some constant $\alpha$ implies
    \begin{equation}
        t \Tr[\Pi_A \rho_\beta]^{-1} \Tr[\Pi_B \rho_\beta] + t\epsilon + o(1) \geq \alpha \Tr[\Pi_C \rho_\beta]
    \end{equation}
    giving mixing lower time bound
    \begin{equation}
        \tmix = \Omega\lr{\min\lr{\Tr[\Pi_A \rho_\beta]\Tr[\Pi_C \rho_\beta]\Tr[\Pi_B \rho_\beta]^{-1}, \frac{\Tr[\Pi_C\rho_\beta]}{\epsilon}}}.
    \end{equation}
\end{proof}

\begin{corollary}[Mixing time from bottleneck (continuous time Lindbladian)]
\label{cor:slow}
    Denote $H$ as a Hamiltonian acting on $n$ qubits with Gibbs state $\rho_\beta$ at inverse temperature $\beta$. Let $A,B,C \subseteq \mathbb{H}_n$ be orthogonal linear subspaces of $\bH_n$ with corresponding projectors $\Pi_A,\Pi_B,\Pi_C \in \mathcal{O}_{\bH_n}$.
    Assume that for $\sigma_0 = \frac{e^{-\beta H / 2} \Pi_A e^{-\beta H / 2}}{\Tr[\Pi_A e^{-\beta H}]}$,
    \begin{equation}
        \Tr[\Pi_C \sigma_0] = o(1) \quad \text{ and } \quad \Tr[\Pi_C \rho_\beta] = \Omega(1).
    \end{equation}
    Let $\cL$ be a Lindbladian Gibbs sampler with fixed point $\cL(\rho_\beta) = 0$
    and $\|\cL\|= O(\poly{n})$. Assume there exists $\widetilde\cL$ satisfying\footnote{As in \Cref{prop:mix_time_discrete}, the criteria $\dist_{\Tset}(A, C) > \trange(\widetilde\cL)$ or $\dist_Z(A, C) > \range_Z(\widetilde\cL)$ suffice to obtain such a $\Pi_B$.}
    \begin{equation}
        \norm{\cL - \widetilde\cL}_\diamond \leq \epsilon, \quad \widetilde{\cL}^\dagger(\Pi_C) = (\Pi_C+\Pi_B)\widetilde{\cL}^\dagger(\Pi_C) (\Pi_C+\Pi_B).
    \end{equation}
    and assume there exists a constant $\gamma > 0$ such that
    \begin{equation}
        \exp\left[n^\gamma\right] \leq \min\left\{\Tr[\Pi_A \rho_\beta]\Tr[\Pi_C \rho_\beta]\Tr[\Pi_B \rho_\beta]^{-1}, \frac{\Tr[\Pi_C \rho_\beta]}{\epsilon}\right\}.
    \end{equation}
    Then $\cL$ satisfies
    \begin{equation}
        \tmix(\cL) = \exp\left[\Omega\lr{n^\gamma}\right].
    \end{equation}
\end{corollary}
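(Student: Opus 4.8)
The plan is to run the iteration of Proposition~\ref{prop:mix_time_discrete} directly in continuous time, tracking the overlap with $C$ along the flow $\sigma_t := e^{\cL t}(\sigma_0)$ starting from the state $\sigma_0 = \frac{e^{-\beta H/2}\Pi_A e^{-\beta H/2}}{\Tr[\Pi_A e^{-\beta H}]}$ in the hypothesis. (One could instead discretize via Lemma~\ref{lem:discrete_convert_to_continuous}, writing $e^{\cL t} = (I+\delta\cL)^{t/\delta} + R$ with $\delta = \exp(-bn^\gamma)$, $b$ a small constant, and $\|R\| = o(1)$, and then run the discrete iteration; but as explained below this does not reduce to Proposition~\ref{prop:mix_time_discrete} verbatim, so I describe the continuous version.) First I would establish the inflow bound $\Tr[\Pi_B\sigma_t] \le \Tr[\Pi_B\rho_\beta]/\Tr[\Pi_A\rho_\beta]$ for all $t\ge 0$, exactly as in Proposition~\ref{prop:mix_time_discrete}: from $e^{-\beta H/2}\Pi_A e^{-\beta H/2}\preccurlyeq e^{-\beta H}$, positivity of $e^{\cL t}$, and the fixed-point identity $e^{\cL t}(e^{-\beta H}) = e^{-\beta H}$, one gets $\sigma_t \preccurlyeq \rho_\beta/\Tr[\Pi_A\rho_\beta]$, and then apply $\Tr[\Pi_B\,\cdot\,]$.

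Next I would set $f(t) := \Tr[\Pi_C\sigma_t]$ and differentiate: $f'(t) = \Tr[\cL^\dagger(\Pi_C)\sigma_t] = \Tr[\widetilde\cL^\dagger(\Pi_C)\sigma_t] + \Tr[(\cL-\widetilde\cL)^\dagger(\Pi_C)\sigma_t]$, where the last term is at most $\|\cL - \widetilde\cL\|_\diamond \le \epsilon$ since $\sigma_t$ is a state and $0\preccurlyeq\Pi_C\preccurlyeq I$. For the first term, the block-support hypothesis $\widetilde\cL^\dagger(\Pi_C) = (\Pi_B+\Pi_C)\widetilde\cL^\dagger(\Pi_C)(\Pi_B+\Pi_C)$ lets me replace $\sigma_t$ by $\tau_t := (\Pi_B+\Pi_C)\sigma_t(\Pi_B+\Pi_C)\succeq 0$ and split into $BB$, $CC$, and $BC$ blocks. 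The $CC$-block of $\widetilde\cL^\dagger(\Pi_C)$ is negative semidefinite --- a general fact for adjoint Lindbladians, since $\Pi_C\widetilde\cL^\dagger(\Pi_C)\Pi_C$ is a sum of terms of the form $-M^\dagger M$ --- so it only helps; the $BB$-block contributes at most $\|\widetilde\cL\|\,\Tr[\Pi_B\sigma_t]$, and the $BC$ cross-block contributes at most $2\|\widetilde\cL\|\sqrt{\Tr[\Pi_B\sigma_t]\,\Tr[\Pi_C\sigma_t]}$ by Cauchy--Schwarz on the off-diagonal block of a positive semidefinite operator. Using the inflow bound, $\|\widetilde\cL\|\le\|\cL\|+\epsilon = O(\poly n)$, and $f(t)\le 1$ (valid while $f(t)\le \tfrac12\Tr[\Pi_C\rho_\beta]$), this yields a bound $f'(t)\le\kappa$ with $\kappa$ \emph{independent} of $f(t)$, of the form $\kappa = O(\poly n)\big(\tfrac{\Tr[\Pi_B\rho_\beta]}{\Tr[\Pi_A\rho_\beta]} + \sqrt{\tfrac{\Tr[\Pi_B\rho_\beta]}{\Tr[\Pi_A\rho_\beta]}}\big) + \epsilon$.

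By the two quantitative hypotheses $\exp[n^\gamma]\le\min\{\Tr[\Pi_A\rho_\beta]\Tr[\Pi_C\rho_\beta]\Tr[\Pi_B\rho_\beta]^{-1},\Tr[\Pi_C\rho_\beta]/\epsilon\}$ and $\|\cL\|=O(\poly n)$, this forces $\kappa = \exp(-\Omega(n^\gamma))$. Since $f(0) = \Tr[\Pi_C\sigma_0] = o(1)$ and $f$ grows at rate at most $\kappa$ while below $\tfrac12\Tr[\Pi_C\rho_\beta]$, we get $f(t) < \tfrac12\Tr[\Pi_C\rho_\beta]$ for all $t \le \exp(\Omega(n^\gamma))$. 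A state with $\Tr[\Pi_C\,\cdot\,] < \tfrac12\Tr[\Pi_C\rho_\beta]$ is at trace distance $\Omega(1)$ from $\rho_\beta$, so $\|e^{\cL t}[\sigma_0 - \rho_\beta]\|_1 > \epsilon_0\|\sigma_0 - \rho_\beta\|_1$ for the constant mixing threshold $\epsilon_0$ and all such $t$; hence $\tmix(\cL) = \exp[\Omega(n^\gamma)]$, completing the proof.

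\textbf{Main obstacle.} The delicate point --- and the reason this is not a one-line corollary of Proposition~\ref{prop:mix_time_discrete} via Lemma~\ref{lem:discrete_convert_to_continuous} --- is the handling of the $B$--$C$ cross-block. If one takes the discrete sampler $\cT := I + \delta\cL$ with approximant $\widetilde\cT := I + \delta\widetilde\cL$, then $\widetilde\cT^\dagger(\Pi_C) = \Pi_C + \delta\widetilde\cL^\dagger(\Pi_C)$ is supported on the $B\cup C$ block but is \emph{not} $\preccurlyeq \Pi_B + \Pi_C$ whenever $\Pi_B\widetilde\cL^\dagger(\Pi_C)\Pi_C\neq 0$ (which is generic), and any residual overshoot gets amplified over the $t/\delta = \exp(\Omega(n^\gamma))$ steps, so Proposition~\ref{prop:mix_time_discrete} cannot be invoked directly. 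The resolution, as above, is that no genuine projection inequality is needed: the cross-block only feeds the superpolynomially small quantity $\Tr[\Pi_B\sigma_t]$ into the growth of $f$, while the $C$--$C$ block is automatically non-positive. Carrying this out --- verifying that the $\exp(-\Omega(n^\gamma))$ suppression of $\kappa$ survives multiplication by the exponentially long time horizon, and, if one prefers the discretized route, that the analogous discrete iteration closes with the same arithmetic --- is the crux of the argument.
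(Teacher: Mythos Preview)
Your approach is correct and reaches the right conclusion, but it is more labor-intensive than the paper's argument, and your diagnosis of the ``main obstacle'' is mistaken.

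The paper discretizes via $\cT_\delta := I + \delta\cL$ (choosing $\delta$ small enough that $\cT_\delta$ is positive) and then reruns the iteration of Proposition~\ref{prop:mix_time_discrete} with the block-support \emph{equality} $\widetilde\cT_\delta^\dagger(\Pi_C) = (\Pi_B+\Pi_C)\widetilde\cT_\delta^\dagger(\Pi_C)(\Pi_B+\Pi_C)$ in place of the inequality. The single key step is
\[
\Tr\bigl[\widetilde\cT_\delta^\dagger(\Pi_C)\,\sigma_{t-1}\bigr]
= \Tr\bigl[\Pi_C\,\widetilde\cT_\delta(\tau_{t-1})\bigr]
\le \Tr\bigl[\widetilde\cT_\delta(\tau_{t-1})\bigr]
= \Tr[\tau_{t-1}]
= \Tr[\Pi_B\sigma_{t-1}] + \Tr[\Pi_C\sigma_{t-1}],
\]
where $\tau_{t-1} := (\Pi_B+\Pi_C)\sigma_{t-1}(\Pi_B+\Pi_C)\succeq 0$, the inequality uses only positivity of $\widetilde\cT_\delta$, and the last equality uses trace preservation. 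No separate handling of $CC$, $BB$, and $BC$ blocks is needed, and the iteration gives the clean increment $\Tr[\Pi_B\sigma_{t-1}]$ rather than your Cauchy--Schwarz cross-term $\sqrt{\Tr[\Pi_B\sigma_{t-1}]}$.

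Your claimed obstruction --- that $\widetilde\cT_\delta^\dagger(\Pi_C)\not\preccurlyeq \Pi_B+\Pi_C$ once the $BC$ off-diagonal block is nonzero --- is actually false under the very assumption your own argument uses (that $\widetilde\cL$ is a genuine Lindbladian, so $\Pi_C\widetilde\cL^\dagger(\Pi_C)\Pi_C\preccurlyeq 0$). Under that assumption $\widetilde\cT_\delta$ is CPTP for small $\delta$, so $\widetilde\cT_\delta^\dagger$ is CP and unital, whence $0\preccurlyeq\widetilde\cT_\delta^\dagger(\Pi_C)\preccurlyeq I$. Combined with the support condition $M = PMP$ for $P=\Pi_B+\Pi_C$, this already forces $M\preccurlyeq P$: for any $\ket\psi$, $\bra\psi M\ket\psi = \bra{P\psi}M\ket{P\psi}\le \bra{P\psi}\ket{P\psi}=\bra\psi P\ket\psi$. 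So the discrete proposition really does apply essentially verbatim once you discretize; your continuous differential inequality and block analysis work too, but are unnecessary detours.
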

\begin{proof}
    The proof in this case follows closely that of \Cref{prop:mix_time_discrete}. We will apply \Cref{lem:discrete_convert_to_continuous} to the map $\cT_\delta = I + \delta \cL$ for $\delta$ small enough so that the map $\cT$ is positive and therefore $\cT_\delta(\rho) \succeq 0$ for all states $\rho$. 

    Following \Cref{prop:mix_time_discrete}, choosing initial state $\sigma_0 := \tr(\Pi_A\rho_\beta)^{-1}\rho_\beta^{1/2} \Pi_A \rho_\beta^{1/2}$ and letting $\sigma_t = \cT_\delta^t(\sigma_0)$, we have
    \begin{equation}
        \begin{split}
            \Tr[\Pi_B  \sigma_{t-1}] &= \Tr[\Pi_B \cT_\delta^{t-1}(\sigma_0) ] \\
            &= \Tr(\Pi_A \rho_\beta)^{-1} \Tr[\Pi_B \cT_\delta^{t-1}(\rho_\beta^{1/2} \Pi_A \rho_\beta^{1/2}) ] \\
            &\leq \Tr(\Pi_A \rho_\beta)^{-1} \Tr[\Pi_B \cT_\delta^{t-1}(\rho_\beta) ] \\
            &= \Tr(\Pi_A \rho_\beta)^{-1} \Tr[\Pi_B \rho_\beta],
        \end{split}
    \end{equation}
    where the inequality follows from $\cT_\delta$ being a positive linear map. 
    Since $\widetilde \cL^\dagger(\Pi_C) = (\Pi_C+\Pi_B)\widetilde\cL^\dagger(\Pi_C) (\Pi_C+\Pi_B)$, we have (for similarly defined $\widetilde{T}_\delta^\dagger$)
    \begin{equation} \label{eq:pre_iterate_pi_B_continuous}
        \begin{split}
            \Tr[\Pi_C \sigma_t] &= \Tr[\Pi_C \cT_\delta(\sigma_{t-1})] \\
            &= \Tr[\widetilde\cT_\delta^\dagger(\Pi_C) \sigma_{t-1}] + \epsilon \norm{\Pi_C}\\
            &= \Tr[(\Pi_C + \Pi_B)\widetilde\cT_\delta^\dagger(\Pi_C)(\Pi_C + \Pi_B) \sigma_{t-1}] + \epsilon\\
            &= \Tr[\Pi_C \widetilde\cT_\delta((\Pi_C + \Pi_B) \sigma_{t-1} (\Pi_C + \Pi_B))] + \epsilon\\
            &\leq \Tr[\widetilde\cT_\delta((\Pi_C + \Pi_B) \sigma_{t-1} (\Pi_C + \Pi_B))] + \epsilon\\
            &= \Tr[\Pi_C  \sigma_{t-1}] + \Tr[\Pi_B  \sigma_{t-1}] + \epsilon.
        \end{split}
    \end{equation}
    Iterating over $t$ in~\eqref{eq:pre_iterate_pi_B_continuous} gives
    \begin{equation}
        \Tr[\Pi_C \sigma_t] \leq \Tr[\Pi_C \sigma_0] + \sum_{j=0}^{t-1} \Tr[\Pi_B \sigma_j] + t\epsilon = t\Tr(\Pi_A \rho_\beta)^{-1} \Tr[\Pi_B \rho_\beta] + t\epsilon + o(1)
    \end{equation}
    since $\sigma_0$ is chosen such that $\Tr[\Pi_C \sigma_0] = o(1)$. Requiring $\Tr[\Pi_C \sigma_t] = \alpha \Tr[\Pi_C \rho_\beta]$ for some constant $\alpha$ implies there exists constant $\alpha'$ such that
    \begin{equation}
        t \geq \frac{\alpha' \Tr[\Pi_C \rho_\beta]}{\Tr[\Pi_A \rho_\beta]^{-1}\Tr[\Pi_B \rho_\beta] + \epsilon} = \Omega\lr{\min\left[\Tr[\Pi_A \rho_\beta]\Tr[\Pi_C \rho_\beta]\Tr[\Pi_B \rho_\beta]^{-1}, \frac{\Tr[\Pi_C \rho_\beta]}{\epsilon}\right]}.
    \end{equation}
    Applying our assumption that this is lower-bounded by $\exp[n^\gamma]$, we apply \Cref{lem:discrete_convert_to_continuous} to conclude.
\end{proof}

\begin{remark}
    In the setting above for \Cref{prop:mix_time_discrete} and \Cref{cor:slow}, when $\Pi_A, \Pi_B, \Pi_C$ are projectors that commute with the Hamiltonian $H$ (e.g., project onto eigenbases of $H$), then it immediately holds that $\Tr[\Pi_C e^{-\beta H / 2} \Pi_A e^{-\beta H / 2}] = 0$. This is the setting used for proving mixing time lower bounds for classical Hamiltonians and stabilizer code Hamiltonians.
\end{remark}

\section{Systems with slow mixing}
\label{sec:systems}
We prove here the main results discussed informally in Sec.~\ref{sec:summary}.

\subsection{Classical models}
\label{sec:classical}
Classical Hamiltonians are those for which the Hamiltonian $H$ is diagonal in the computational basis. Equivalently, these models can be written as polynomials in the Pauli operators which are only in the $Z$ basis. 
\begin{equation}\label{eq:classical_hamiltonian_form}
    \text{(classical Hamiltonian): } \quad H = \sum_{P \in \{E_1 \otimes \cdots \otimes E_n | E_i=I,Z\}} c_P P,
\end{equation}
where $c_P \in \mathbb{R}$ are real coefficients. 

For such Hamiltonians, the eigenbasis consists of bitstrings in the computational basis and bottlenecks constructed classically between bitstrings that are far apart in Hamming distance also apply in the quantum setting.

In classical MCMC, each step of the sampling algorithm flips at most a constant number of bits to produce a candidate state. Bottlenecks in classical models are thus typically formed by finding a subspace where states close to that subspace in Hamming distance have low probability under the Gibbs measure. These bottlenecks also lower bound mixing times for quantum Gibbs samplers.

\begin{proposition}[Bottlenecks in classical models]
\label{prop:slowclassical}
    Let $H$ be a classical Hamiltonian of the form in \Cref{eq:classical_hamiltonian_form}. Given jump operators $\{A^a\}_{a=1}^n$ that are at most $k$-local, let $\cT$ denote a quantum Gibbs sampler satisfying $\trange(\cT) \leq \ell$ (i.e. it is a degree $\ell$ operator in $\Tset(H, \{A^a\})$). Let $A,C \subseteq \mathbb{H}_n$ be orthogonal linear subspaces of $\bH_n$ with projectors $\Pi_A, \Pi_C \in \mathcal{O}_{\bH_n}$ whose basis consists of computational basis states where the Hamming distance satisfies $|A-C|_H>k\ell$. Denote the bottleneck subspace $B \subseteq \mathbb{H}_n$ as spanned by bitstrings within Hamming distance $k\ell$ from $C$
    \begin{equation}
        B = \operatorname{span}\{\ket{x}: 1 \leq |x-C|_H \leq k\ell \}.
    \end{equation}
    Then it holds that
    \begin{equation}
        \tmix(\cT) = \Omega\lr{\Tr[\Pi_A \rho_\beta]\Tr[\Pi_C \rho_\beta]\Tr[\Pi_B \rho_\beta]^{-1}}.
    \end{equation}
\end{proposition}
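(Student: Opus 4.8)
The plan is to reduce the statement to Proposition~\ref{prop:mix_time_discrete}, invoked with $\widetilde{\cT}=\cT$ itself so that its error parameter vanishes ($\epsilon=0$) and the competing term $\Tr[\Pi_C\rho_\beta]/\epsilon$ drops out, leaving exactly the bound $\Omega(\Tr[\Pi_A\rho_\beta]\Tr[\Pi_C\rho_\beta]\Tr[\Pi_B\rho_\beta]^{-1})$. Two things have to be supplied. The initial-state hypothesis is immediate: since $H$ is classical and $\Pi_A$ projects onto computational basis states, $[\Pi_A,H]=0$, so $\sigma_0=e^{-\beta H/2}\Pi_A e^{-\beta H/2}/\Tr[\Pi_A e^{-\beta H}]=\Pi_A\rho_\beta\Pi_A/\Tr[\Pi_A\rho_\beta]$ is supported in $A$ and hence $\Tr[\Pi_C\sigma_0]=0$ because $A\perp C$ (the condition $\Tr[\Pi_C\rho_\beta]=\Omega(1)$ is an implicit standing hypothesis; otherwise the asserted bound is vacuous). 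The substantive task is the bottleneck containment $\cT^\dagger(\Pi_C)\preccurlyeq\Pi_B+\Pi_C$ for the particular $B=\operatorname{span}\{\ket{x}:1\le|x-C|_H\le k\ell\}$.

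For that, first I would turn the structural hypotheses into a Hamming-distance bound on matrix elements. Because $H$ is diagonal in the computational basis, its eigenbasis \emph{is} the computational basis, so every transition operator $A^a_{j\to i}=\bra{z_i}A^a\ket{z_j}\,\ketbra{z_i}{z_j}$ belonging to $\Tset(H,\{A^a\})$ is nonzero only when $z_i$ and $z_j$ agree outside the (at most $k$-element) support of $A^a$, i.e.\ $|z_i-z_j|_H\le k$ (the $\nu=0$ pieces coming from diagonal parts of $A^a$ only make this stronger). Since $\trange(\cT)\le\ell$, there is a Kraus representation $\cT(\rho)=\sum_i K_i\rho K_i^\dagger$ in which each $K_i=\sum c_i(A_1,\dots,A_\ell)A_1\cdots A_\ell$ is a linear combination of products of at most $\ell$ such transitions, so $\bra{x}K_i\ket{y}\ne0$ only if $|x-y|_H\le k\ell$. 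Now $\cT^\dagger(\Pi_C)=\sum_i K_i^\dagger\Pi_C K_i\succeq0$, and for any bitstring $x$ with $|x-C|_H>k\ell$ every $\ket{y}$ reachable from $\ket{x}$ lies outside $C$, so $\Pi_C K_i\ket{x}=0$ and $\bra{x}\cT^\dagger(\Pi_C)\ket{x}=\sum_i\norm{\Pi_C K_i\ket{x}}^2=0$. A positive semidefinite operator with a vanishing diagonal entry has the entire corresponding row and column zero, so $\cT^\dagger(\Pi_C)=(\Pi_B+\Pi_C)\cT^\dagger(\Pi_C)(\Pi_B+\Pi_C)$ since $\Pi_B+\Pi_C$ is the projector onto $\operatorname{span}\{\ket{x}:|x-C|_H\le k\ell\}$. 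Finally, $\cT^\dagger$ is positive and unital, so $\Pi_C\preccurlyeq I$ yields $0\preccurlyeq\cT^\dagger(\Pi_C)\preccurlyeq\cT^\dagger(I)=I$, whence $\norm{\cT^\dagger(\Pi_C)}\le1$ and therefore $\cT^\dagger(\Pi_C)\preccurlyeq\Pi_B+\Pi_C$. The hypothesis $|A-C|_H>k\ell$ guarantees $A$, $B$, $C$ are pairwise orthogonal, exactly the configuration Proposition~\ref{prop:mix_time_discrete} requires; feeding it $\widetilde{\cT}=\cT$, $\epsilon=0$ gives the conclusion.

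The only step that is not mechanical is the passage from the diagonal identity $\bra{x}\cT^\dagger(\Pi_C)\ket{x}=0$ to the operator inequality: it uses positivity of $\cT^\dagger(\Pi_C)$ to annihilate the off-diagonal entries in those rows and columns, and unitality of $\cT^\dagger$ for the norm bound. I do not expect a genuine obstacle — the remaining work is just carefully unwinding the definitions of $k$-locality, jump-operator range (Definition~\ref{def:jumpdist}), and the classical eigenbasis. If one prefers not to invoke Proposition~\ref{prop:mix_time_discrete} as a black box, one can instead copy its short iteration: $\Tr[\Pi_B\sigma_{t-1}]\le\Tr[\Pi_A\rho_\beta]^{-1}\Tr[\Pi_B\rho_\beta]$ using that $\cT$ is CPTP together with $e^{-\beta H/2}\Pi_A e^{-\beta H/2}\preccurlyeq e^{-\beta H}$, then $\Tr[\Pi_C\sigma_t]\le\Tr[\Pi_C\sigma_{t-1}]+\Tr[\Pi_B\sigma_{t-1}]$ from the containment, and sum the telescoping inequality using $\Tr[\Pi_C\sigma_0]=0$.
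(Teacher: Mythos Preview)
Your proposal is correct and follows essentially the same route as the paper: identify that the computational basis is the eigenbasis, so each transition in $\Tset$ moves at most $k$ in Hamming distance, whence $\dist_{\Tset}(A,C)>\trange(\cT)$, and then invoke Proposition~\ref{prop:mix_time_discrete} with $\widetilde\cT=\cT$ and $\epsilon=0$. Your write-up is in fact more careful than the paper's short proof on one point: the paper simply cites Proposition~\ref{prop:mix_time_discrete} via the sufficient-criteria lemma, which is stated for $\Pi_B=I-\Pi_A-\Pi_C$, whereas the proposition uses the smaller annulus $B=\{\ket{x}:1\le|x-C|_H\le k\ell\}$; your PSD-plus-unitality argument (zero diagonal entry kills the whole row/column of $\cT^\dagger(\Pi_C)$, then $\cT^\dagger(\Pi_C)\preccurlyeq I$ compresses it under $\Pi_B+\Pi_C$) cleanly establishes $\cT^\dagger(\Pi_C)\preccurlyeq\Pi_B+\Pi_C$ for this specific $B$.
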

\begin{proof}
    When the jump operators $\{A^a\}_{a=1}^n$ are at most $k$-local operators (e.g., single qubit Pauli $X$ operators), transitions between computational basis states $\ket{x}$ and $\ket{y}$ will only differ in at most $k$ bits. In this case, the set of transitions $\Tset$ from \Cref{def:set_of_transitions} contains only transitions between eigenbases whose Hamming distance is at most $k$:
    \begin{equation}
        \Tset(H, \{A^a\}) \subseteq \left\{ \ket{x}\bra{y}: \; x,y \in \{0,1\}^n , |x-y|_H \leq k \right\},
    \end{equation}
    and thus $\dist_{\Tset}(A, C) > \trange(\cT)$.
    Therefore, the assumptions of \Cref{prop:mix_time_discrete} apply, and we obtain the corresponding mixing time in \Cref{prop:mix_time_discrete}.
\end{proof}

As a reminder, one can use \Cref{cor:slow} to obtain a similar mixing time lower bound for continuous time Lindbladian Gibbs samplers.

\subsection{Commuting Hamiltonians and stabilizer codes}
\label{sec:codes}
Commuting Hamiltonians are instances of Hamiltonians $H=\sum_i h_i$ where each pair of terms $h_i,h_j$ commute with each other and $[h_i,h_j]=0$. These Hamiltonians are often easier to analyze as the eigenbasis of the Hamiltonian can be obtained from the fact that each of the $h_i$ are mutually diagonalizable by the same unitary transformation. Commuting Hamiltonians are the basis for the class of stabilizer codes, and \cite{hong2024quantum} proves slow mixing for quantum codes at constant temperature. Their proof uses a code that has distance scaling as $\sqrt{n}$ achieving a mixing time lower bound of $\exp(\Omega(\sqrt{n}))$. Our proof which applies to codes of distance $\Theta(n)$ achieves a mixing time lower bound of $\exp(\Omega(n))$. Nonetheless, our proof techniques are similar in that both use expansion properties of the quantum code to prove slow mixing.

\begin{definition}[Quantum code]
    An $[[n,k,d]]$ quantum code $\cC$ is a subspace $\cC \subseteq \C^{2^n}$ of dimension $\dim \cC = 2^k$ such that, for all $\ket{\psi_1}, \ket{\psi_2} \in \cC$ satisfying $\bra{\psi_1}\ket{\psi_2} = 0$, and for all Pauli operators $E_1, E_2 \in \cP_n$ of weight at most $d$, $\bra{\psi_1}E_1^\dagger E_2 \ket{\psi_2}=0$.
\end{definition}
In physical terms, an $[[n,k,d]]$ quantum code is one that encodes $k$ logical qubits using $n$ physical qubits and corrects errors up to distance $d$. A code is linear distance if $d = \Theta(n)$. 
We will restrict our attention to stabilizer codes, which admit a convenient Hamiltonian description with an immediate diagonalization.

\begin{definition}[Stabilizer code]
    An $[[n,k,d]]$ quantum code $\cC$ is a stabilizer code if there exists a collection $\{C_i\}_{i=1}^{[n-k]}$ of commuting Pauli operators $C_i \in \cP_n$ that generate group $\cS$ (by multiplication), such that $\cS$ satisfies
    \begin{equation}
        \cC = \{\ket{\psi}:M\ket{\psi} = \ket{\psi} \forall M \in \cS\}.
    \end{equation}
    We say $\cC$ is $\ell$-local if every generator $C_i$ is weight at most $\ell$.
\end{definition}

We describe standard properties of the associated Hamiltonian to a stabilizer code.

\begin{fact}
\label{fact:eigenbasis}
    For every $s \in \{0,1\}^{n-k}$, define subspace
    \begin{equation}
        D_s = \{\ket{\psi} : C_i\ket{\psi} = (-1)^{s_i}\ket{\psi}  \forall i \in [n-k]\}.
    \end{equation}
    The Hamiltonian
    \begin{equation}\label{eq:checks_for_code}
        H = \sum_{i=1}^{n-k} \frac{I-C_i}{2}.
    \end{equation}
    is diagonalized by eigenvalues $\lambda = |s|$ and subspaces $\{D_s:|s|=\lambda\}$.
\end{fact}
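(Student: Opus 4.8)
The plan is to obtain Fact~\ref{fact:eigenbasis} directly from the joint spectral decomposition of the commuting family $\{C_i\}_{i=1}^{n-k}$. First I would record the elementary properties of stabilizer generators: each $C_i$ is a Hermitian Pauli operator with $C_i^2 = I$, so its eigenvalues are $\pm 1$, and since the $C_i$ pairwise commute they are simultaneously diagonalizable. For each sign pattern $s \in \{0,1\}^{n-k}$ I would introduce the operator
\begin{equation}
    P_s := \prod_{i=1}^{n-k} \frac{I + (-1)^{s_i} C_i}{2}.
\end{equation}
Because the $C_i$ commute and each $\tfrac{I \pm C_i}{2}$ is an orthogonal projector, $P_s$ is an orthogonal projector, $P_s P_{s'} = 0$ for $s \neq s'$, and $\sum_{s \in \{0,1\}^{n-k}} P_s = \prod_{i=1}^{n-k}\big(\tfrac{I+C_i}{2} + \tfrac{I-C_i}{2}\big) = I$. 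Since $\ket{\psi} \in D_s$ iff $P_s\ket{\psi} = \ket{\psi}$, we get $\operatorname{range}(P_s) = D_s$ and hence $\bH_n = \bigoplus_{s} D_s$ as an orthogonal direct sum.

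The second step is a one-line computation of $H$ on each block. For $\ket{\psi} \in D_s$ we have $C_i \ket{\psi} = (-1)^{s_i}\ket{\psi}$, so $\tfrac{I - C_i}{2}\ket{\psi} = \tfrac{1 - (-1)^{s_i}}{2}\ket{\psi} = s_i \ket{\psi}$, and summing over $i$ yields $H\ket{\psi} = |s|\,\ket{\psi}$. Thus each $D_s$ is contained in the $|s|$-eigenspace of $H$; combined with the direct-sum decomposition above, the $\lambda$-eigenspace of $H$ is precisely $\bigoplus_{s:\, |s| = \lambda} D_s$, which is the assertion. If one additionally wants $\dim D_s = 2^k$ for every $s$ (not required for the statement as phrased), this follows from the independence of the stabilizer generators, which forces each $P_s$ to have the same rank.

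There is essentially no hard step here: the statement is a packaging of the standard simultaneous diagonalization of commuting Hermitian involutions, and I expect the writeup to be short. The only point requiring a small amount of care is that one should not assume a priori that every sign pattern $s$ yields a nontrivial $D_s$; this is sidestepped entirely by working with the resolution of the identity $\sum_s P_s = I$, which gives the span regardless of whether some $D_s = \{0\}$.
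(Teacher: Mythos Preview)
The paper states this as a standard fact without proof, so there is no proof in the paper to compare against. Your argument is correct and is exactly the standard one: use the commuting Hermitian involutions $C_i$ to build the simultaneous eigenprojectors $P_s$, observe they resolve the identity, and then read off $H\ket{\psi}=|s|\ket{\psi}$ on each $D_s$.
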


To complete the proof of slow mixing in our setting, we will resort to expansion properties of the code. Recent constructions of so-called quantum low-density parity-check (LDPC) codes are instances of stabilizer codes that have nice expansion properties \cite{leverrier2022quantum,panteleev2022asymptotically}.

\begin{definition}[Code bipartite graph]\label{def:code_graph}
    Given a $[[n,k,d]]$ quantum code $\mathcal{C}$ with check matrices $C_1,\dots,C_{n-k}$, its bipartite graph $G(\mathcal{C})$ has $n$ nodes on the left corresponding to qubits and $n-k$ nodes on the right corresponding to unique checks $C_1,\dots,C_{n-k}$ as in \Cref{eq:checks_for_code}. There is an edge between node $i$ on the left and node $j$ on the right if $C_j$ acts nontrivially on qubit $i$. 
\end{definition}

\begin{definition}[$(\gamma, \alpha)$ small set expander] \label{def:small_set_expander}
    Given a $[[n,k,d]]$ quantum code $\mathcal{C}$ with checks $C_1, \dots, C_{n-k}$ and its bipartite graph $G(\mathcal{C})$, let $a=\{I,X,Y,Z\}^n$ be an assignment to the nodes on the left side of the bipartite graph. Let $P_a=\sigma^{a_1} \otimes \cdot \otimes \sigma^{a_n}$ denote the $n$-qubit Pauli corresponding to the assignment. The weight of the assignment $w(a)$ is equal to the number of terms in $a$ not equal to $I$:
    \begin{equation}
        w(a) = |\{a_i \in a: a_i \neq I\}|.
    \end{equation}
    For a given assignment $a$, we define $V(a)$ as the number of violations: 
    \begin{equation}
        V(a) = \left|\{ C_i: [P_a,C_i] \neq 0 \}\right|.
    \end{equation}
    The graph $G(\mathcal{C})$ is a $(\gamma, \alpha)$ small set expander if for every assignment of weight $w(a)\leq \alpha n$, it holds that $V(a) \geq \gamma w(a)$.
\end{definition}

An example of a code that meets the properties above is the CSS code \cite{leverrier2022quantum} used in the proof of the No Low Energy Trivial States (NLTS) theorem \cite{anshu2023nlts}. Here, the Hamiltonian is written as $H = H_X + H_Z$ where $H_X, H_Z$ are local (classical) code Hamiltonians in the $X,Z$ basis respectively. Standard expansion properties from the classical codes of these respective Hamiltonians lower bound the above quantum expansion property since any Pauli assignment $a$ where indices $a_i \neq I$ will violate the checks in either $H_X$ or $H_Z$. Note that for linear distance codes, $\alpha n<\frac{d}{2}$ since by definition there exists a Pauli configuration $a$ of weight $d/2$ which commutes with all checks. Thus, setting $a$ equal to such a Pauli configuration obviously violates the expansion property. 

\begin{proposition}
\label{prop:good_code_mixing}
    Let $\rho_\beta$ be the Gibbs state of the Hamiltonian associated with a $[[n, k, d]]$ stabilizer code (\Cref{eq:checks_for_code}) which is a $(\gamma, \alpha)$ small set expander (\Cref{def:small_set_expander}) and where each qubit participates in at most $\ell_1$ unique checks. Then any Lindbladian Gibbs sampler satisfying $\trange(\cT) \leq \ell_2 n$ with $\ell_2 \ell_1<\alpha$ satisfies $\tmix = e^{\Omega(n)}$ for $\beta>\frac{\log(2)}{\gamma(\alpha-\ell_1 \ell_2)}$.\footnote{Note that $\ell_2 = o(1)$ in many cases. For example, the Davies generator with jump operators being the single qubit Paulis will only move between eigenbases indexed by $s_i$ to $s_j$ in \Cref{fact:eigenbasis} which differ in at most $\ell_1$ locations.}
\end{proposition}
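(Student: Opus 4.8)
The plan is to instantiate the bottleneck argument of Proposition~\ref{prop:mix_time_discrete} (and Corollary~\ref{cor:slow} for the continuous-time conclusion) with projectors $\Pi_A,\Pi_B,\Pi_C$ adapted to the logical structure of the code. All three will commute with $H$, so the decoupling hypothesis $\Tr[\Pi_C e^{-\beta H/2}\Pi_A e^{-\beta H/2}]=0$ holds automatically by the Remark after Corollary~\ref{cor:slow}, and the proof reduces to three estimates: a lower bound on $\dist_{\Tset}(A,C)$, the bounds $\Tr[\Pi_A\rho_\beta],\Tr[\Pi_C\rho_\beta]=\Omega(1)$, and $\Tr[\Pi_B\rho_\beta]=e^{-\Omega(n)}$. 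The bottleneck $B$ will be the span of the ``uncorrectable'' (high-energy) syndrome subspaces, and the core of the proof is showing this region has exponentially small Gibbs weight exactly in the stated temperature range.

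\emph{Regions.} Since $k\ge1$, fix a pair of anticommuting logical Paulis $\bar X,\bar Z$; each commutes with every check and has weight at least the code distance $d$, and $d>2\alpha n$ by the expander property. By Fact~\ref{fact:eigenbasis} choose the $H$-eigenbasis so that every eigenvector lies in some syndrome subspace $D_s$ \emph{and} is an eigenvector of $\bar Z$, so each carries a ``type'' $(s,\lambda)$ with $s$ its syndrome (energy $|s|$) and $\lambda=\pm1$ its $\bar Z$-eigenvalue. Fix a weight budget $W=(\alpha-\ell_1\ell_2)n$, so that $2W<d$ and $2W+\ell_2 n<d$ (using $\ell_1\ell_2<\alpha$, $\ell_1\ge1$, and $d>2\alpha n$). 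Call $s$ \emph{correctable} if some Pauli of weight $\le W$ has syndrome $s$; for such $s$ any two weight-$\le W$ corrections differ by a weight-$<2W<d$ operator of trivial syndrome, hence by a stabilizer, so the $\bar Z$-commutation sign $\epsilon(s)\in\{\pm1\}$ of a weight-$\le W$ correction of $s$ is well defined. Writing $D_s^{(\pm)}$ for the $\pm1$ eigenspaces of $\bar Z$ inside $D_s$ (each of dimension $2^{k-1}$), set
\[
\Pi_A=\!\!\sum_{s\ \mathrm{correctable}}\!\!\Pi_{D_s^{(\epsilon(s))}},\qquad \Pi_C=\!\!\sum_{s\ \mathrm{correctable}}\!\!\Pi_{D_s^{(-\epsilon(s))}},\qquad \Pi_B=I-\Pi_A-\Pi_C=\!\!\sum_{s\ \mathrm{not\ correctable}}\!\!\Pi_{D_s}.
\]
These are orthogonal, complete, and commute with $H$; morally, $A$ and $C$ collect the low-``error'' states that a weight-$\le W$ decoder sends to logical $0$ and logical $1$ respectively, while $B$ holds the uncorrectable syndromes.

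\emph{Distance and weights of $A,C$.} I claim $\dist_{\Tset}(A,C)\ge d-2W>\ell_2 n\ge\trange(\cT)$, which by the ``sufficient criteria for construction of bottleneck'' lemma gives $\cT^\dagger(\Pi_C)\preccurlyeq\Pi_B+\Pi_C$. Indeed, if a product of $d_0$ transitions in $\Tset$ has a nonzero matrix element from $\ket u\in A$ of type $(s_u,\epsilon(s_u))$ to $\ket v\in C$ of type $(s_v,-\epsilon(s_v))$, its cumulative Pauli $Q$ (a product of $\le d_0$ single-qubit Paulis, so $w(Q)\le d_0$) sends type $(s_u,\cdot)$ to type $(s_v,\cdot)$, and a short computation shows $F_{s_v}QF_{s_u}$ (with $F_{s_u},F_{s_v}$ weight-$\le W$ corrections) has trivial syndrome and anticommutes with $\bar Z$; it is therefore a non-stabilizer logical operator of weight $\ge d$ but $\le 2W+d_0$, forcing $d_0\ge d-2W$. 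For the weights, $\bar X$ commutes with $H$ (hence with $\rho_\beta$) and maps $D_s^{(+)}\leftrightarrow D_s^{(-)}$, hence conjugation by $\bar X$ exchanges $\Pi_A$ and $\Pi_C$, so $\Tr[\Pi_A\rho_\beta]=\Tr[\Pi_C\rho_\beta]=\tfrac12(1-\Tr[\Pi_B\rho_\beta])$.

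\emph{Weight of $B$ and conclusion.} Using $\dim D_s=2^k$ and $Z(\beta)=2^k(1+e^{-\beta})^{n-k}$ we get $\Tr[\Pi_B\rho_\beta]=(1+e^{-\beta})^{-(n-k)}\sum_{s\ \mathrm{not\ correctable}}e^{-\beta|s|}$. The key input is \emph{confinement} for $(\gamma,\alpha)$-expanding codes (via a small-set-flip decoder analysis): with the parameters above, every syndrome of weight at most $E^\ast:=\gamma(\alpha-\ell_1\ell_2)n$ admits a correction of weight $\le W<d/2$ and hence is correctable, so every uncorrectable $s$ has $|s|>E^\ast$; bounding crudely $\sum_{|s|>E^\ast}e^{-\beta|s|}\le 2^{n-k}e^{-\beta E^\ast}$ then gives $\Tr[\Pi_B\rho_\beta]\le 2^{n-k}e^{-\beta E^\ast}(1+e^{-\beta})^{-(n-k)}\le\exp[\,n(\log 2-\beta\gamma(\alpha-\ell_1\ell_2))\,]$, which is $e^{-\Omega(n)}$ precisely when $\beta>\frac{\log 2}{\gamma(\alpha-\ell_1\ell_2)}$. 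Consequently $\Tr[\Pi_A\rho_\beta]=\Tr[\Pi_C\rho_\beta]\ge\tfrac14$ and $\Tr[\Pi_A\rho_\beta]\Tr[\Pi_C\rho_\beta]\Tr[\Pi_B\rho_\beta]^{-1}=e^{\Omega(n)}$, so Proposition~\ref{prop:mix_time_discrete}—and then Corollary~\ref{cor:slow}, using $\|\cL\|=O(\poly{n})$—yields $\tmix=e^{\Omega(n)}$. The main obstacle is the confinement step: making ``small syndrome weight $\Rightarrow$ bounded-weight correction'' quantitative with exactly the constant $E^\ast=\gamma(\alpha-\ell_1\ell_2)n$ so that the claimed temperature threshold comes out (this is the only place beyond the bare expansion definition that is really used, and for the good expander/qLDPC codes of interest it is a known property, e.g.\ the codes underlying the NLTS construction); the remaining ingredients—orthogonality, $[\Pi_\bullet,H]=0$, the logical-operator weight bound for the distance, the $\bar X$-symmetry, and the mild $k>1$ bookkeeping of tracking only the single logical qubit $(\bar X,\bar Z)$—are routine.
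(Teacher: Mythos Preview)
Your overall strategy---partition by a logical $\bar Z$-label, use code distance to separate $A$ from $C$, and invoke the bottleneck lemma---mirrors the paper's, but your choice of bottleneck region creates a genuine gap. You take $\Pi_B$ to project onto all \emph{uncorrectable} syndrome subspaces and then assert that every uncorrectable $s$ satisfies $|s|>E^\ast=\gamma(\alpha-\ell_1\ell_2)n$. As you note, this is a \emph{confinement} statement (small syndrome weight $\Rightarrow$ bounded-weight correction exists), and it does \emph{not} follow from the $(\gamma,\alpha)$-expansion hypothesis in Definition~\ref{def:small_set_expander}. Expansion says a Pauli of weight $w\le\alpha n$ has $\ge\gamma w$ violations; it says nothing about syndromes whose \emph{minimum-weight} Pauli representative has weight $>\alpha n$, and such syndromes can in principle have small $|s|$. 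Since the proposition assumes only expansion, your argument does not prove it as stated.

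The paper avoids confinement by building $B$ differently. It fixes half the codewords $S_R$, sets $A=E_W(S_R)$ (eigenstates reachable from $S_R$ by a Pauli of weight $\le W=(\alpha-\ell_1\ell_2)n$), and lets $B=E_{\alpha n}(S_R)\setminus E_W(S_R)$ be the ``annulus'' of states reachable by weight in $(W,\alpha n]$ but not by weight $\le W$. Every state in this $B$ \emph{comes equipped} with a witnessing Pauli of weight in $(W,\alpha n]$, so expansion applies directly to that Pauli and yields energy $>\gamma W$---no inverse/confinement step needed. One sampler step (degree $\le\ell_2 n$ in single-qubit jumps) carries $A$ into $E_{W+\ell_2 n}(S_R)\subseteq E_{\alpha n}(S_R)=A\cup B$, so no direct $A\to C$ transition occurs.

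A secondary issue: you assert $d>2\alpha n$, but expansion only yields $d>\alpha n$ (a nontrivial logical has zero violations, so its weight must exceed $\alpha n$). With only $d>\alpha n$, your choice $W=(\alpha-\ell_1\ell_2)n$ need not satisfy $2W<d$ (take $\ell_1\ell_2<\alpha/2$), so both the well-definedness of $\epsilon(s)$ and the bound $d_0\ge d-2W>\ell_2 n$ can fail. The paper's annulus construction sidesteps this entirely because it never needs to compare two weight-$\le W$ corrections of the same syndrome.
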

\begin{proof}
    
    Let $C_1, \dots, C_{n-k}$ be the checks for the code. To explicitly write a full eigenbasis, we construct a set of additional (nonlocal) checks $C_1',\dots, C_{k}' $ which uniquely determine $2^n$ eigenvectors using \Cref{fact:eigenbasis}. Note that $C_1',\dots, C_{k}' $ do not contribute to the Hamiltonian.

    Any eigenbasis can now be written using a string $s:\{0,1\}^n$ where $s=s^a\oplus s^b$ and $s^a\in\{0,1\}^{n-k}$ and $s^b\in\{0,1\}^k$. $s^a$ denotes violations of the original checks and $s^b$ denotes violations of the additional nonlocal checks. The code subspace corresponds to all possible strings of the form $0^{\oplus n}\oplus s^b$. Let us arbitrarily split the bitstrings $s^b$ in half by defining a set $S_R$ of the code strings as those where the first bit of $s^b$ is equal to $1$:
    $S_R=\{0^{\oplus n}\oplus s^b: s^b_1 = 0\}$.

     Let $\ket{\psi_s}$ denote the unique eigenstate corresponding to bitstring $s$. Let $G(\mathcal{C}')$ now correspond to the bipartite graph (\Cref{def:code_graph}) of the code with the additional checks. This has eigenbases $D_{s'}$ from \Cref{fact:eigenbasis}. We define $E_w(S_R)$ as the set of eigenstates reachable by Pauli operations to eigenstates in $S_R$ that act non-trivially on at most $w$ qubits, i.e. using the notation of \Cref{def:small_set_expander}:
     \begin{equation}
         E_w(S_R) = \left\{ \ket{\psi_s} \in D_{s'}: \exists a \in \{I,X,Y,Z\}^n, s' \in S_R, \ket{\psi_s} = P_a \ket{\psi_{s'}}   , w(a) \leq w \right\}.
     \end{equation}

     We will use \Cref{prop:mix_time_discrete} with the starting state subspace $A$ and bottleneck subspace $B$ defined as:
     \begin{equation}
         \begin{split}
             A = \operatorname{span}\left( E_{(\alpha-\ell_1 \ell_2)n}(S_R) \right), \quad
             B = \operatorname{span}\left( E_{\alpha n}(S_R) \setminus E_{(\alpha-\ell_1 \ell_2)n}(S_R) \right).
         \end{split}
     \end{equation}
     In words, the subspace $A$ are the eigenstates reachable by Paulis applied to at most $(\alpha-\ell_1 \ell_2)n$ qubits and the bottleneck $B$ are the eigenstates not in $A$ and reachable by Paulis applied to at most $\alpha n$ qubits. Since any linear transformation can be written in terms of Pauli operators, this decomposition into local Paulis covers all the operations of a local Gibbs sampler. 
     
     Let $\Pi_A,\Pi_B$ be projectors onto the subspaces $A,B$ respectively. 
     From the expansion property of the code, it holds that for any state $\ket{\psi} \in B$, $\bra{\psi}H \ket{\psi} \geq \gamma (\alpha-\ell_2 \ell_1)n$. Therefore, we have
     \begin{equation}
         \Tr[\Pi_B \exp(-\beta H)] \leq 2^n \exp(-\beta \gamma (\alpha-\ell_1 \ell_2)n).
     \end{equation}
     The $2^n$ factor trivially bounds the dimension of the subspace of $\Pi_B$ which is at most $2^n$. Therefore, for $\beta>\frac{\log(2)}{\gamma(\alpha-\ell_2 \ell_1)}$, it holds that $\Tr[\Pi_B \exp(-\beta H)] = \exp(-\Omega(n))$. Finally, note that $\Tr[\Pi_A \exp(-\beta H)] \geq 2^{k/2}$ since there are at least $2^{k/2}$ orthogonal code states in $A$ of energy $0$. Set $\Pi_C = I - \Pi_A - \Pi_B$. Note that any state in $A$ cannot reach $C$ by a single action of the Gibbs sampler since the Gibbs sampler is $\ell_2 n$ local and each qubit touches at most $\ell_1$ checks. This Gibbs sampler step will thus flip at most $\ell_2\ell_1n$ syndromes so the set of resulting states will be spanned by $E_{\alpha n}(S_R)$.
     Apply \Cref{prop:mix_time_discrete} to conclude.
\end{proof}

\subsection{Non-commuting Hamiltonians with a linear free energy barrier}
\label{sec:linear}
Informally, we show in this subsection that for a classical Hamiltonian $H_0$ (i.e., diagonal in the $Z$ basis) with slow mixing due to a linear free energy barrier, the (stoquastic) non-commuting Hamiltonian with transverse field
\begin{equation}
    H = -H_0 - h \sum_i X_i
\end{equation}
also experiences slow mixing for sufficiently small constant $h$. Our proof uses the classical bottleneck, i.e., projectors $\Pi_A, \Pi_B, \Pi_C$ that are diagonal in the $Z$ basis. If the regions $A$ and $C$ are sufficiently far in Hamming distance, then local Lindbladian Gibbs samplers must pass through the bottleneck region. To use Corollary~\ref{cor:slow}, we must show that the projectors satisfy two properties relative to the Hamiltonian: the bottleneck region must be exponentially small under the Gibbs measure, and the regions $A$ and $C$ must be decoupled.

To show the first property, we use the Feynman-Kac representation of the thermal state. The quantity $e^{-\beta H}$ is written as a path integral from its Trotter expansion, yielding independent Poisson processes. The following lemma, based on the techniques of~\cite{crawford2007thermodynamics,leschke2021free}, holds for any stoquastic model in a transverse field.

\begin{lemma}[Poisson Feynman-Kac representation of Hamiltonian~\cite{crawford2007thermodynamics,leschke2021free}]
\label{lem:fk}
    Let $H$ be a Hamiltonian of the form
    \begin{equation}
        H = -\sum_{i_1, \dots, i_k = 1}^n c_{i_1 \cdots i_k} Z_{i_1} \cdots Z_{i_k} - h \sum_{i=1}^n X_i
    \end{equation}
    for real coefficients $c_{i_1 \cdots i_k}$ and $h \geq 0$. Let $\sigma \in \{\pm 1\}^n$ denote a spin configuration in the computational basis. For all $i \in [n]$, define $s_i(t) \in \{\pm1\}$ by the spin-flip process
    \begin{equation}
        s_i(t) = (-1)^{\cN_i(t)}, \quad \cN_i(t) \sim \mathrm{Pois}(\beta h t),
    \end{equation}
    where $\cN_i(t)$ are independent $\mathbb{N}$-valued Poisson processes on the positive half-line with mean $\beta h t$ and rate $\beta h$. Then
    \begin{equation}
        \bra{\sigma}e^{-\beta H}\ket{\sigma'} = \lr{\cosh \beta h}^n \lr{\tanh\beta h}^{|\sigma-\sigma'|_H} \left\langle \exp\left[\beta\int_0^1 \sum_{i_1, \dots, i_k = 1}^n c_{i_1 \cdots i_k} \sigma_{i_1} s_{i_1}(t) \cdots \sigma_{i_k} s_{i_k}(t) dt\right] \right\rangle_{\beta h},
    \end{equation}
    where $\langle \cdot \rangle_{\beta h}$ denotes averaging over Poisson point processes conditioned on $\sigma_is_i(1)=\sigma_i'$ for all $i \in [n]$, and $|\sigma-\tilde\sigma|_H$ is the Hamming distance between configurations.
\end{lemma}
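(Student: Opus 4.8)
The plan is to prove this classical Feynman--Kac identity by a Lie--Trotter expansion of $e^{-\beta H}=e^{\beta H_0+\beta h V}$, where I write $H_0:=\sum_{i_1,\dots,i_k}c_{i_1\cdots i_k}Z_{i_1}\cdots Z_{i_k}$ for the $Z$-diagonal part and $V:=\sum_{i=1}^n X_i$ for the transverse field. Since both operators are bounded on a finite-dimensional space, the Lie product formula gives $e^{\beta H_0+\beta h V}=\lim_{M\to\infty}\big(e^{\frac{\beta}{M}H_0}e^{\frac{\beta h}{M}V}\big)^M$ in operator norm, hence entrywise. I would expand each transverse-field slice by $e^{\frac{\beta h}{M}X_i}=\cosh\frac{\beta h}{M}\,I+\sinh\frac{\beta h}{M}\,X_i$, so that $e^{\frac{\beta h}{M}V}=\sum_{F\subseteq[n]}(\cosh\frac{\beta h}{M})^{n-|F|}(\sinh\frac{\beta h}{M})^{|F|}X_F$ with $X_F:=\prod_{i\in F}X_i$. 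Inserting this at each of the $M$ slices and reading the resulting product between $\bra{\sigma}$ and $\ket{\sigma'}$ from the $\bra{\sigma}$ side: the diagonal factors $e^{\frac{\beta}{M}H_0}$ act as scalars and each $X_{F_j}$ flips a set of spins, so the surviving terms are indexed by a choice of flip set $F_j\subseteq[n]$ at each slice $j=1,\dots,M$, the configuration evolving as $\sigma=\sigma^{(0)}\to\sigma^{(1)}\to\cdots\to\sigma^{(M)}$ with $\sigma^{(j)}=X_{F_j}\sigma^{(j-1)}$, the matrix element vanishing unless $\sigma^{(M)}=\sigma'$. Each such term contributes the weight
\[
\prod_{j=1}^M\Big(\cosh\tfrac{\beta h}{M}\Big)^{n-|F_j|}\Big(\sinh\tfrac{\beta h}{M}\Big)^{|F_j|}\exp\!\Big[\tfrac{\beta}{M}H_0(\sigma^{(j)})\Big],
\]
where $H_0(\cdot)$ denotes the classical energy functional and the precise slice label in its argument is immaterial in the limit.

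The next step is to reorganize the sum site by site. Let $n_i$ be the number of slices in which site $i$ is flipped; the constraint $\sigma^{(M)}=\sigma'$ forces $n_i$ to have parity $p_i:=[\sigma_i\neq\sigma_i']$, and the free part of the weight factorizes as $\prod_i(\cosh\frac{\beta h}{M})^{M-n_i}(\sinh\frac{\beta h}{M})^{n_i}$. Summing over which of the $M$ slices carry the flips of site $i$ (a uniformly random size-$n_i$ subset of $\{1/M,\dots,M/M\}$, giving $\binom{M}{n_i}$ terms), and using $\binom{M}{n_i}(\sinh\frac{\beta h}{M})^{n_i}(\cosh\frac{\beta h}{M})^{M-n_i}\to(\beta h)^{n_i}/n_i!$ as $M\to\infty$, the flip positions converge to $n_i$ i.i.d.\ uniform order statistics on $[0,1]$. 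Thus in the limit the free weights become the combinatorial factors $\prod_i(\beta h)^{n_i}/n_i!$; summed over count vectors with the required parities these factorize as $\prod_i\sum_{m\equiv p_i\ (2)}(\beta h)^m/m!=(\cosh\beta h)^{n-|\sigma-\sigma'|_H}(\sinh\beta h)^{|\sigma-\sigma'|_H}=(\cosh\beta h)^n(\tanh\beta h)^{|\sigma-\sigma'|_H}$, which is exactly the stated prefactor. Dividing the combinatorial weights by this normalization turns them into the law of independent $\mathrm{Pois}(\beta h)$ counts $\cN_i(1)$ conditioned on the parities $p_i$ (equivalently on $\sigma_i s_i(1)=\sigma_i'$), with jump times placed uniformly given the counts --- precisely the conditioned Poisson process average $\langle\cdot\rangle_{\beta h}$ of the lemma, with $s_i(t)=\sigma_i(t)/\sigma_i$ the piecewise-constant spin interpolant along the path and $s_i(0)=1$.

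It remains to handle the diagonal contribution and justify the interchange of limit and expectation. Along any fixed path, $\frac{\beta}{M}\sum_{j=1}^M H_0(\sigma^{(j)})\to\beta\int_0^1 H_0(\sigma(t))\,dt=\beta\int_0^1\sum_{i_1,\dots,i_k}c_{i_1\cdots i_k}\sigma_{i_1}s_{i_1}(t)\cdots\sigma_{i_k}s_{i_k}(t)\,dt$, since these are Riemann sums of a bounded step function and $\exp(\cdot)$ is continuous; moreover $\exp[\beta\int_0^1 H_0(\sigma(t))\,dt]$ is a bounded continuous functional of the finitely many jump times (bounded uniformly by $e^{\beta\sum|c_{i_1\cdots i_k}|}$), so conditioning on the count vector and invoking weak convergence of the uniformly-sampled slice positions to i.i.d.\ uniforms gives convergence of the conditional expectations, while the count distribution also converges; dominated convergence then lets me exchange $\lim_{M\to\infty}$ with the sum over count vectors. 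Assembling the prefactor, the conditioned Poisson measure, and the path functional yields the claimed identity.

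I expect the main work to be bookkeeping rather than analysis: checking that the limiting combinatorial weights $\prod_i(\beta h)^{n_i}/n_i!$ re-sum, over each parity class, into exactly the prefactor $(\cosh\beta h)^n(\tanh\beta h)^{|\sigma-\sigma'|_H}$ --- equivalently, verifying $\langle 1\rangle_{\beta h}=1$ so that setting all $c_{i_1\cdots i_k}=0$ reduces the statement to the elementary computation $\bra{\sigma}e^{\beta h V}\ket{\sigma'}=(\cosh\beta h)^n(\tanh\beta h)^{|\sigma-\sigma'|_H}$ --- and tracking the conditioning on endpoint parities consistently through the Trotter limit. No genuine analytic subtlety arises, because $H$ acts on a finite-dimensional space and $H_0$ is bounded, so every series and interchange is controlled by crude uniform bounds.
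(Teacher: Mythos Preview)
Your proposal is correct and follows essentially the same Lie--Trotter route as the paper. The only cosmetic difference is packaging: the paper adds and subtracts $hnI$ to recognize $Q=-h\sum_i X_i+hnI$ as the generator of a continuous-time Markov chain and then works directly with the induced c\`adl\`ag path measure $d\nu_\sigma$ (invoking bounded convergence once), whereas you reconstruct the same Poisson process ``by hand'' from the $\cosh/\sinh$ expansion of each Trotter slice and pass to the limit via the binomial-to-Poisson convergence $\binom{M}{m}(\beta h/M)^m\to(\beta h)^m/m!$; both arrive at the identical prefactor and conditioned Poisson average.
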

\begin{proof}
Observe that the matrix
\begin{equation}
    Q = -h \sum_{i=1}^n X_i + D, \quad D = hn \cdot I
\end{equation}
has non-positive off-diagonal elements and has rows that sum to zero. This allows $Q$ to be interpreted as the generator matrix of a continuous-time Markov chain. Letting
\begin{equation}
    A = -D - \sum_{i_1, \dots, i_k = 1}^n c_{i_1 \cdots i_k} Z_{i_1} \cdots Z_{i_k}, \quad B = D - h\sum_{i=1}^n X_i,
\end{equation}
we consider the Lie-Trotter expansion
\begin{equation}
    e^{-\beta H} = \lim_{k \to \infty} \lr{e^{-\beta A/k} e^{-\beta B/k}}^k.
\end{equation}
Let $\sigma \in \{\pm 1\}^n$ denote an element of the computational basis; we use this spin configuration notation rather than the standard bitstring notation in order to write the Hamiltonian more simply.
At some given $k$, the matrix element of $\lr{e^{-\beta A/k} e^{-\beta B/k}}^k$ indexed by basis elements $\sigma, \tilde \sigma \in \{\pm 1\}^n$ is given by
\begin{equation}
    \bra{\sigma} \lr{e^{-\beta A/k}e^{-\beta B/k}}^k \ket{\tilde \sigma} = \sum_{\substack{\sigma(0), \dots, \sigma(k) \in \{\pm 1\}^n\\ \sigma(0) = \sigma, \sigma(k) = \tilde \sigma}} \prod_{i=0}^{k-1} \bra{\sigma(i)}e^{-\beta A/k}e^{-\beta B/k} \ket{\sigma(i+1)}.
\end{equation}
The key observation of the Poisson Feynman-Kac formula is to evaluate this quantity as a path integral determined by the rate matrix $Q$. Formally, the integral is performed over the measurable space $\Omega$ of $\{\pm 1\}^n$ valued c\'adl\'ag paths, i.e., paths that are right continuous with left limits. Denoting a spin path $\{\sigma(s):s\in[0,1]\}$ by $\underline{\sigma}$, where $\sigma(s)$ is the value of the process at time $s$, we let $\dd\nu_\sigma(\underline{\sigma})$ be the induced Markov chain path measure starting from $\sigma$. For eigenvalues of the diagonal part of the Hamiltonian
\begin{equation}
    \mu(\sigma) = -\sum_{i_1, \dots, i_k = 1}^n c_{i_1 \cdots i_k} \sigma_{i_1} \cdots \sigma_{i_k} - hn,
\end{equation}
we can write a matrix element of the Trotterization as
\begin{equation} \label{eq:integral_pre_poisson}
    \bra{\sigma} \lr{e^{A/k}e^{B/k}}^k \ket{\tilde \sigma} = \int_\Omega \exp\left[\sum_{i=0}^{k-1} -\frac{\beta}{k}\mu\lr{\sigma\lr{\frac{i}{k}}}\right] 1_{\sigma(1)=\tilde \sigma} \dd\nu_\sigma(\underline{\sigma}).
\end{equation}
By the bounded convergence theorem, we can take the $k\to\infty$ limit inside the integral, giving
\begin{equation}
    \bra{\sigma} e^{-\beta H}\ket{\tilde \sigma} = \int_\Omega \exp\left[-\beta\int_0^1 \mu(\sigma(s)) ds\right] 1_{\sigma(1)=\tilde \sigma} \dd\nu_\sigma(\underline{\sigma}).
\end{equation}
To evaluate the integral, we note that $\dd\nu_\sigma(\underline{\sigma})$ corresponds to starting from an initial configuration $\sigma$ and evolving in time via spin flips which occur independently at each site according to the arrivals of a Poisson process of rate $\beta h$. The probability the $i$th spin has flipped $k$ times over the time interval $[0, 1]$ is then $e^{-\beta h} (\beta h)^k/k!$. We make this Poisson point process explicit by writing the matrix element as an expectation over the space of paths
\begin{equation}
\begin{split}
    \bra{\sigma}e^{-\beta H}\ket{\sigma} &= \lr{\cosh(\beta h)}^n \lr{\tanh(\beta h)}^{|\sigma-\tilde\sigma|_H} \\
    &\qquad \times \left\langle \exp\left[\beta\int_0^1 \sum_{i_1, \dots, i_k = 1}^n c_{i_1 \cdots i_k} \sigma_{i_1} \cdots \sigma_{i_k} (-1)^{\cN_{i_1}(s)+\cdots+\cN_{i_k}(s)} ds\right] \right\rangle_{\beta h},
\end{split}
\end{equation}
where each $\cN_i(s)$ for $i \in [n]$ and $t \geq 0$ is $\mathbb{N}$-valued simple Poisson process with mean $\beta h t$. The processes $\cN_1, \dots, \cN_n$ are mutually independent. The brackets $\langle\cdot \rangle_{\beta h}$ indicate an expectation over paths conditional on $\sigma_i (-1)^{\cN_i(1)} = \sigma'_i$. To obtain the prefactor, note that the probability of this condition being satisfied for any $i$ is equal to
\begin{equation}
    e^{-\beta h} \sum_{k=0}^\infty \frac{(\beta h)^{2k+(1-\sigma_i\tilde\sigma_i)/2}}{(2k+(1-\sigma_i\tilde\sigma_i)/2)!} = \begin{cases}
        e^{-\beta h} \cosh(\beta h) & \sigma_i = \tilde\sigma_i\\
        e^{-\beta h} \sinh(\beta h) & \sigma_i \neq \tilde \sigma_i.
    \end{cases}
\end{equation}
Taking a product over all $i \in [n]$, we get a prefactor $e^{-\beta hn} \lr{\sinh(\beta h)}^{|\sigma-\tilde\sigma|_H} \lr{\cosh(\beta h)}^{n - |\sigma-\tilde\sigma|_H}$; this is multiplied by $e^{\beta h n}$ from the eigenvalue $\mu$. The result reported in the lemma statement is obtained by a change of variables.
\end{proof}

We use the above result to bound the change in support of a projector diagonal in the computational basis.

\begin{lemma}
\label{lem:hbound}
Let
\begin{equation}
    H = -\sum_{i_1, \dots, i_k = 1}^n c_{i_1 \cdots i_k} Z_{i_1} \cdots Z_{i_k} - h \sum_{i=1}^n X_i
\end{equation}
for real coefficients satisfying $|c_{i_1 \cdots i_k}| \leq c/d$, where each qubit appears in at most $d$ Hamiltonian terms. Then for all $0 \leq h < 1/\beta$, and for any $\Pi$ diagonal in the computational basis,
\begin{equation}
	\left|\log\frac{\Tr[\Pi e^{-\beta (H_0+hV)}]}{\Tr[\Pi e^{-\beta H_0}]}\right| \leq \lr{h\beta e^{\beta c}}^2 n.
\end{equation}
\end{lemma}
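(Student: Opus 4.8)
\emph{The plan is to} reduce the claim to a pointwise bound on individual diagonal matrix elements, and then to evaluate those elements via the Poisson Feynman--Kac representation of Lemma~\ref{lem:fk}. First, since $\Pi$ is diagonal in the computational basis, write (with $\sigma\in\{\pm1\}^n$ labelling basis states) $\Tr[\Pi e^{-\beta H}]=\sum_\sigma \Pi_{\sigma\sigma}\,w_\sigma$ and $\Tr[\Pi e^{-\beta H_0}]=\sum_\sigma \Pi_{\sigma\sigma}\,w^{(0)}_\sigma$, where $w_\sigma:=\bra{\sigma}e^{-\beta H}\ket{\sigma}$ and $w^{(0)}_\sigma:=\bra{\sigma}e^{-\beta H_0}\ket{\sigma}=e^{\beta E(\sigma)}$ with $E(\sigma):=\sum_{i_1,\dots,i_k}c_{i_1\cdots i_k}\sigma_{i_1}\cdots\sigma_{i_k}$. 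All these weights are strictly positive and all $\Pi_{\sigma\sigma}\ge 0$, so it suffices to prove $\bigl|\log(w_\sigma/w^{(0)}_\sigma)\bigr|\le (h\beta e^{\beta c})^2 n$ for every $\sigma$; sandwiching the ratio of sums by the extreme ratios then gives the same bound for the ratio of traces.

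For the pointwise bound, apply Lemma~\ref{lem:fk} with $\sigma'=\sigma$, so the $(\tanh\beta h)^{|\sigma-\sigma'|_H}$ prefactor is trivial and $w_\sigma=(\cosh\beta h)^n\bigl\langle e^{\beta\int_0^1 E(\sigma,\underline{s};t)\,dt}\bigr\rangle_{\beta h}$, where $\underline{s}=(s_1,\dots,s_n)$ are the independent spin-flip sign processes $s_i(t)=(-1)^{\cN_i(t)}$, $E(\sigma,\underline{s};t):=\sum_{i_1,\dots,i_k}c_{i_1\cdots i_k}\sigma_{i_1}s_{i_1}(t)\cdots\sigma_{i_k}s_{i_k}(t)$, and the average is conditioned on $\cN_i(1)$ even for all $i$. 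Since $E(\sigma)=E(\sigma,\underline{s};0)$ and the conditioning event has probability $(e^{-\beta h}\cosh\beta h)^n$, dividing by $w^{(0)}_\sigma$ yields $w_\sigma/w^{(0)}_\sigma=e^{\beta h n}\,\mathbb E\bigl[e^{\beta\int_0^1(E(\sigma,\underline{s};t)-E(\sigma))\,dt}\prod_i\mathbbm{1}[\cN_i(1)\ \mathrm{even}]\bigr]$. The locality/degree hypothesis enters here: flipping one site changes at most $d$ Hamiltonian terms, each by at most $2|c_{i_1\cdots i_k}|\le 2c/d$, so for every $t$ one has $|E(\sigma,\underline{s};t)-E(\sigma)|\le 2c\,|F(t)|$ with $F(t):=\{i:s_i(t)=-1\}$. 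Hence the exponent is at most $2c\beta\sum_i A_i$ in absolute value, where $A_i:=\int_0^1\mathbbm{1}[s_i(t)=-1]\,dt\in[0,1]$; by independence across sites this gives $e^{\beta h n}\phi_-^{\,n}\le w_\sigma/w^{(0)}_\sigma\le e^{\beta h n}\phi_+^{\,n}$ with $\phi_\pm:=\mathbb E[e^{\pm 2c\beta A}\,\mathbbm{1}[\cN(1)\ \mathrm{even}]]$ for a single rate-$\beta h$ Poisson process on $[0,1]$.

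It then remains to show $|\beta h+\log\phi_\pm|\le (h\beta e^{\beta c})^2$. Split the expectation on $\{\cN(1)=0\}$, where $A=0$ exactly (contribution $e^{-\beta h}$), versus $\{\cN(1)\ge 2,\ \mathrm{even}\}$, where $A\le 1$ forces $e^{\pm 2c\beta A}\in[e^{-2c\beta},e^{2c\beta}]$ and which has probability $e^{-\beta h}(\cosh\beta h-1)$. This sandwiches $e^{\beta h}\phi_\pm$ between $1+e^{-2c\beta}(\cosh\beta h-1)\ge 1$ and $1+e^{2c\beta}(\cosh\beta h-1)$, so $|\beta h+\log\phi_\pm|\le\log\!\bigl(1+e^{2c\beta}(\cosh\beta h-1)\bigr)\le e^{2c\beta}(\cosh\beta h-1)\le e^{2c\beta}(\beta h)^2=(h\beta e^{\beta c})^2$, using $\log(1+x)\le x$ and, crucially, $\cosh(\beta h)-1\le(\beta h)^2$ for $\beta h<1$ — this is where the hypothesis $h<1/\beta$ is used. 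Combining over the $n$ sites and then over $\sigma$ completes the argument.

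The main obstacle is controlling the Feynman--Kac expectation tightly enough. Bounding $e^{\pm 2c\beta A}\le e^{2c\beta}$ uniformly would only yield a factor $\sim e^{2c\beta}\cosh(\beta h)$, which neither tends to $1$ nor shrinks with $h$; the resolution is that $A$ vanishes identically unless the site flips, and after conditioning on an even number of flips this requires at least \emph{two} flips, an event of probability $O((\beta h)^2)$, so the deviation from the classical weight is genuinely second order in $\beta h$. A secondary bookkeeping point is that the $(\cosh\beta h)^n$ prefactor of Lemma~\ref{lem:fk} combines with the conditioning normalization $(e^{-\beta h}\cosh\beta h)^{-n}$ to leave exactly $e^{\beta h n}$, which is what makes the logarithm come out as $(\beta h)^2$ rather than $\beta h$.
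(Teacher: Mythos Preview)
Your proposal is correct and follows essentially the same approach as the paper: both apply the Poisson Feynman--Kac representation to the diagonal elements, bound the energy perturbation by $2c$ per site that ever flips (using the degree hypothesis), and exploit that under the parity conditioning a site flips only with probability $\cosh(\beta h)-1=O((\beta h)^2)$, arriving at the same per-site factor $1+e^{2\beta c}(\cosh\beta h-1)$. The only cosmetic differences are that you factorize over sites via independence where the paper sums over the count $m$ of flipped sites (equivalent by the binomial theorem), and you invoke $\cosh x-1\le x^2$ for $x<1$ directly whereas the paper routes through $\cosh x\le e^{x^2/2}$ and $e^y-1<2y$.
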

\begin{proof}
Since $\Pi$ is diagonal in the computational basis, Lemma~\ref{lem:fk} gives for spin configurations (computational basis states) $\sigma \in \{\pm1\}^n$,
\begin{equation}
	\Tr[\Pi e^{-\beta(H_0+hV)}] = (\cosh \beta h)^n \sum_{\sigma \in \Pi} \left\langle\exp\left[\beta \int_0^1 \sum_{i_1,\dots,i_k=1}^n c_{i_1,\dots,i_k}\sigma_{i_1}s_{i_1}(t)\cdots\sigma_{i_k}s_{i_k}(t)dt\right] \right\rangle_{\beta h},
\end{equation}
where the expectation is conditioned on $s_i(1)=1$. The probability of $m$ spins flipping an even positive number of times and everything else flipping 0 times is
\begin{equation}
\begin{split}
    Q_m &= e^{-\beta h n}\binom{n}{m} \lr{\sum_{k \geq 1} \frac{(\beta h)^{2k}}{(2k)!}}^m \lr{\frac{(\beta h)^{0}}{0!}}^{n-m}\\
    &= e^{-\beta h n}\binom{n}{m} (-1 + \cosh \beta h)^m
\end{split}
\end{equation}
and hence, the probability of $m$ spins flipping a positive even number of times conditioned on $s_i(1)=1$ for all $i$ is
\begin{equation}
\begin{split}
    P_m &= \frac{e^{-\beta h n}\binom{n}{m} (-1 + \cosh \beta h)^m}{\sum_{m=0}^n Q_m}\\
    &= (\cosh \beta h)^{-n} \binom{n}{m} (-1 + \cosh \beta h)^m.
\end{split}
\end{equation}
Assuming these spins flip adversarially, we obtain
\begin{equation}
    \frac{\Tr[\Pi e^{-\beta(H_0+hV)}]}{\Tr[\Pi e^{-\beta H_0}]} \leq \left[ 1+e^{2\beta c}(\cosh(\beta h) - 1)  \right]^n.
\end{equation}
We can bound this with inequalities $\cosh(x) \leq \exp(x^2/2)$ and $\log(1+x)\leq x$ to obtain 
\begin{equation}
    \frac{\Tr[\Pi e^{-\beta(H_0+hV)}]}{\Tr[\Pi e^{-\beta H_0}]} \leq \exp\left[ n e^{2 \beta c}(e^{\beta^2h^2/2} - 1) \right],
\end{equation}
and using the bound $e^x<1+2x$ for $x<1$, we have assuming $h<1/\beta$, 
\begin{equation}
    \frac{\Tr[\Pi e^{-\beta(H_0+hV)}]}{\Tr[\Pi e^{-\beta H_0}]} \leq \exp\left[ n e^{2 \beta c}\beta^2h^2 \right].
\end{equation}
A similar bound in the opposite direction gives
\begin{equation}
    \frac{\Tr[\Pi e^{-\beta(H_0+hV)}]}{\Tr[\Pi e^{-\beta H_0}]} \geq \left[ 1+e^{-2\beta c}(\cosh(\beta h) - 1)  \right]^n \geq 1.
\end{equation}
If the ratio of traces is positive, then the above inequalities imply
\begin{equation}
    \frac{\Tr[\Pi e^{-\beta H_0}]}{\Tr[\Pi e^{-\beta(H_0+hV)}]} \geq \exp\left[- n e^{2 \beta c}\beta^2h^2 \right], \quad \frac{\Tr[\Pi e^{-\beta H_0}]}{\Tr[\Pi e^{-\beta(H_0+hV)}]} \leq 1.
\end{equation}
Normalizing to obtain bounds on the Gibbs state, we take $\Pi=I$ to get
\begin{equation}
    \exp\left[-n \lr{h \beta e^{\beta c}}^2\right] \leq \frac{\Tr[\Pi e^{-\beta(H_0+hV)}]}{\Tr[e^{-\beta(H_0+hV)}]} \cdot \frac{\Tr[e^{-\beta H_0}]}{\Tr[\Pi e^{-\beta H_0}]} \leq \exp\left[n \lr{h \beta e^{\beta c}}^2\right],
\end{equation}
completing the proof.
\end{proof}

The above lemma allows us to bound the support of $\Pi_B$ on the Gibbs state $e^{-\beta H}$ based on its support on $e^{-\beta H_0}$. We now show the decoupling property required to apply the bottleneck lemma.

\begin{lemma}
\label{lem:lindecoup}
    Let $H = H_0 + h V$ be an $n$-qubit Hamiltonian with $H_0$ diagonal in the computational basis normalized as $\norm{H_0} = n$, and $\ell$-local Hamiltonian $V$ normalized as $\norm{V} = n$. Let $\Pi_A, \Pi_C$ be orthogonal projectors in the computational basis such that $d_H(A, C) \geq d \ell n$ for constant $d > 0$.
    Define states
    \begin{equation}
        \rho_\beta = \frac{e^{-\beta H}}{\Tr[e^{-\beta H}]}, \quad \sigma_0 = \frac{e^{-\beta H/2}\Pi_A e^{-\beta H/2}}{\Tr[\Pi_A e^{-\beta H}]}.
    \end{equation}
    Assume $\Tr[\Pi_A \rho_\beta] \geq e^{-\delta n}$ for some constant $\delta$. Then $|h| \leq \min\left\{\frac{1}{2}, \exp\left[-\frac{3\beta+2\delta}{2d}\right]\right\}$ is a sufficient condition to satisfy
    \begin{equation}
        \Tr[\Pi_C \sigma_0] = \exp\left[-\Omega(n)\right].
    \end{equation}
\end{lemma}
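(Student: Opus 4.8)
The plan is to reduce $\Tr[\Pi_C\sigma_0]$ to a bound on the off-diagonal block $\Pi_C e^{-\beta H/2}\Pi_A$ of the imaginary-time propagator, and to estimate that block by a Duhamel (Dyson) expansion in the perturbation $hV$; the key point is that a matrix element of $e^{-\beta H/2}$ between computational-basis states at Hamming distance $\ge d\ell n$ requires at least $dn$ insertions of the $\ell$-local operator $V$, so it inherits the factorial suppression $1/(dn)!$. For the reduction, note that $\Pi_A,\Pi_C$ and $e^{-\beta H/2}$ are Hermitian, so writing $N:=\Pi_C e^{-\beta H/2}\Pi_A$ we get $\Tr[\Pi_C e^{-\beta H/2}\Pi_A e^{-\beta H/2}]=\Tr[NN^\dagger]\le \operatorname{rank}(N)\norm{N}^2\le 2^n\norm{N}^2$; for the denominator, $\Tr[\Pi_A e^{-\beta H}]=\Tr[\Pi_A\rho_\beta]\Tr[e^{-\beta H}]\ge e^{-\delta n}\Tr[e^{-\beta H}]$, and since $\norm{H}\le\norm{H_0}+h\norm{V}\le\tfrac32 n$ (using $h\le\tfrac12$) the largest eigenvalue of $H$ is at most $\tfrac32 n$, so $\Tr[e^{-\beta H}]\ge e^{-3\beta n/2}$. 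Hence
\begin{equation}
    \Tr[\Pi_C\sigma_0]\ \le\ 2^n e^{\delta n}e^{3\beta n/2}\,\norm{\Pi_C e^{-\beta H/2}\Pi_A}^2 .
\end{equation}

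To bound the off-diagonal block, expand $e^{-\beta H/2}=e^{-\tfrac\beta2(H_0+hV)}$ by Duhamel,
\begin{equation}
    e^{-\beta H/2}=\sum_{k\ge0}(-h)^k\int_{0\le s_1\le\cdots\le s_k\le\beta/2} e^{-s_1H_0}V e^{-(s_2-s_1)H_0}V\cdots Ve^{-(\beta/2-s_k)H_0}\,ds_1\cdots ds_k .
\end{equation}
Each $e^{-tH_0}$ is diagonal in the computational basis and hence preserves every basis state, while each insertion of $V$ connects basis states at Hamming distance at most $\ell$; therefore the $k$-th term, conjugated by $\Pi_C$ on the left and $\Pi_A$ on the right, vanishes unless $k\ell\ge d_H(A,C)\ge d\ell n$, i.e.\ unless $k\ge dn$ (in particular the $k=0$ term $\Pi_C e^{-\beta H_0/2}\Pi_A=e^{-\beta H_0/2}\Pi_C\Pi_A=0$). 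Bounding the surviving terms by $\norm{e^{-tH_0}}\le e^{tn}$, $\norm{V}=n$, and $\operatorname{vol}\{0\le s_1\le\cdots\le s_k\le\beta/2\}=(\beta/2)^k/k!$ gives
\begin{equation}
    \norm{\Pi_C e^{-\beta H/2}\Pi_A}\ \le\ e^{\beta n/2}\sum_{k\ge dn}\frac{(hn\beta/2)^k}{k!}\ \le\ 2e^{\beta n/2}\Big(\frac{eh\beta}{2d}\Big)^{dn},
\end{equation}
where the last step uses $h\le d/\beta$ (so the tail is geometrically dominated by its first term) together with $(dn)!\ge(dn/e)^{dn}$; one checks that $h\le\min\{\tfrac12,e^{-(3\beta+2\delta)/(2d)}\}$ indeed implies $h\le d/\beta$.

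Combining the two displays,
\begin{equation}
    \Tr[\Pi_C\sigma_0]\ \le\ 4\exp\!\Big[n\Big(\ln2+\tfrac{5\beta}{2}+\delta+2d\ln\tfrac{eh\beta}{2d}\Big)\Big].
\end{equation}
The negative term $2d\ln\frac{eh\beta}{2d}$, produced by the $(dn)!$ suppression of high-order Duhamel terms, dominates the $O(\beta n)$ and $O(\delta n)$ contributions once $h$ is a sufficiently small constant; imposing $\ln\frac{eh\beta}{2d}\le-\frac{\ln2+5\beta/2+\delta}{2d}$ — which, after absorbing the mild factor $\tfrac{2d}{e\beta}$, is guaranteed by the stated bound on $h$ — makes the exponent a strictly negative multiple of $n$, so $\Tr[\Pi_C\sigma_0]=\exp[-\Omega(n)]$.

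The main obstacle is arranging this quantitative balance: the reduction to an operator norm costs a factor $2^n$, and the crude estimates $\norm{e^{-tH_0}}\le e^{tn}$ and $\Tr[e^{-\beta H}]\ge e^{-3\beta n/2}$ cost several further $e^{O(\beta n)}$ factors, so one must verify that the factorial gain $1/(dn)!$ — equivalently $(O(h\beta/d))^{\Theta(n)}$ — still wins, which is exactly what pins down the admissible range of $h$. (A Feynman--Kac path-integral estimate in the spirit of Lemma~\ref{lem:fk}, in which the $d\ell n$ forced spin flips contribute a factor $(\tanh\beta h)^{d\ell n}$ directly, would yield a sharper constant, but the Duhamel bound already suffices for the stated, non-optimized condition.)
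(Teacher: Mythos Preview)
Your proof is correct and follows the same core strategy as the paper: reduce $\Tr[\Pi_C\sigma_0]$ to an operator-norm bound on the off-diagonal block $\Pi_C e^{-\beta H/2}\Pi_A$, then expand the propagator and use the Hamming-distance separation $d_H(A,C)\ge d\ell n$ to kill all low-order terms.

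There are two minor technical differences. First, for the reduction the paper inserts $e^{-\beta H}e^{\beta H/2}$ and applies H\"older, obtaining $\Tr[\Pi_C\sigma_0]\le e^{\delta n}\norm{e^{\beta H/2}\Pi_A e^{-\beta H/2}\Pi_C}$ directly, with no $2^n$ rank factor and no separate lower bound on the partition function; your route via $\Tr[NN^\dagger]\le 2^n\norm{N}^2$ is cruder but compensates by squaring the norm. Second, the paper expands $e^{-\beta H/2}$ as a Taylor series in $H_0+hV$ (bounding $\norm{\Pi_A e^{-\beta H/2}\Pi_C}\le |h|^{dn}e^{n\beta}/(1-|h|)$), whereas you use the Duhamel/Dyson expansion; the two are equivalent here, and your version exposes the $1/(dn)!$ factorial suppression more transparently. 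Both routes yield a bound of the shape $e^{O(\beta+\delta)n}\cdot |h|^{\Theta(dn)}$ and hence the same qualitative conclusion. Your final claim that the stated numerical threshold on $h$ \emph{exactly} suffices is a bit optimistic --- the constants you accumulate ($\ln 2$, $5\beta/2$, the $e\beta/(2d)$ prefactor) do not quite collapse to $e^{-(3\beta+2\delta)/(2d)}$ --- but this is bookkeeping, not a gap in the argument.
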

\begin{proof}
    We first apply H\"older's inequality to obtain
    \begin{equation}
        \Tr[\sigma_0 \Pi_C] \leq \frac{\Tr[e^{-\beta H}]\norm{e^{\beta H/2}\Pi_A e^{-\beta H/2} \Pi_C}}{\Tr[\Pi_A e^{-\beta H}]} = e^{\delta n}\norm{e^{\beta H/2}\Pi_A e^{-\beta H/2} \Pi_C}.
    \end{equation}
    Submultiplicativity of operator norm then gives
    \begin{equation}
        \norm{e^{\beta H/2}\Pi_A e^{-\beta H/2} \Pi_C} \leq \norm{e^{\beta H/2}} \norm{\Pi_A e^{-\beta H/2} \Pi_C} \leq e^{(1+|h|)\beta n / 2}\norm{\Pi_A e^{-\beta H/2} \Pi_C},
    \end{equation}
    where we obtained $e^{(1+|h|)\beta n / 2}$ via the triangle inequality on $\norm{H}$. We then expand the series
    \begin{equation}
    \begin{split}
        \Pi_A e^{-\beta H/2} \Pi_C &= \sum_{m \geq 0} \frac{1}{m!} \lr{\frac{-\beta}{2}}^m \Pi_A (H_0 + h V)^m\Pi_C \\
        &= \sum_{m \geq 0} \frac{1}{m!} \lr{\frac{-\beta}{2}}^m \sum_{b \in \{0,1\}^m} h^{|b|} \Pi_A \lr{\prod_{i=1}^m H_0^{1-b_i} V^{b_i}} \Pi_C\\
        &= \sum_{m \geq dn} \frac{1}{m!} \lr{\frac{-\beta}{2}}^m \sum_{\substack{b \in \{0,1\}^m\\ |b| \geq d n}} h^{|b|} \Pi_A \lr{\prod_{i=1}^m H_0^{1-b_i} V^{b_i}} \Pi_C
    \end{split}
    \end{equation}
    since $A$ and $C$ are separated by Hamming distance $d n$. Applying the triangle inequality and submultiplicativity again, then bounding the geometric sum, we find
    \begin{equation}
        \norm{\Pi_A e^{-\beta H/2} \Pi_C} \leq \sum_{m \geq dn} \frac{1}{m!} \lr{\frac{n\beta}{2}}^m \sum_{\substack{b \in \{0,1\}^m\\ |b| \geq d n}} |h|^{|b|} \leq \frac{|h|^{dn}}{1-|h|} e^{n\beta}.
    \end{equation}
    Combining pieces, we obtain for $|h| \leq 1/2$
    \begin{equation}
        \Tr[\Pi_C \sigma_0] \leq 2\exp\left[n\lr{\frac{2\delta + (3+|h|)\beta}{2} + d \log |h|}\right].
    \end{equation}
    Choosing
    \begin{equation}
        |h| \leq \min\left\{\frac{1}{2}, \exp\left[-\frac{3\beta+2\delta}{2d}\right]\right\}
    \end{equation}
    thus ensures that $\Tr[\Pi_C \sigma_0] = \exp[-\Omega(n)]$.
\end{proof}

We use the above lemmas to show that if $H_0$ has a linear free energy barrier --- \eqref{eq:lineargap} in the proposition below --- and two regions $A$ and $C$ separated with linear Hamming distance, then slow mixing in the classical model implies slow mixing in the quantum model with a (small) constant transverse field.

\begin{proposition}
\label{prop:linear}
    Let $H_0 = -\sum_{i_1\cdots i_k} c_{i_1\cdots i_k} Z_{i_1}\cdots Z_{i_k}$ be an $n$-qubit classical Hamiltonian with $|c_{i_1\cdots i_k}| \leq 1/\deg(H_0)$, where $\deg(H_0)$ is the maximum number of Hamiltonian terms that any qubit appears in. Let $H = H_0 - h \sum_i X_i$. Let $A,C \subseteq \mathbb{H}_n$ be orthogonal linear subspaces of $\bH_n$ with projectors $\Pi_A, \Pi_C \in \mathcal{O}_{\bH_n}$ diagonal in the computational basis states such that the Hamming distance satisfies $|A-C|_H>d n$ for some constant $d > 0$. Denote the bottleneck subspace $B \subseteq \mathbb{H}_n$ by projector $\Pi_B = I - \Pi_A - \Pi_C$.
    Let $\rho_\beta$ and $\rho_\beta^{(0)}$ correspond to Gibbs states of $H$ and $H_0$ respectively. Let $\cT$ denote an $\ell$-local quantum Gibbs sampler with respect to $\rho_\beta$, where $\ell < d n$. If there exists a constant $\beta$ such that
    \begin{equation}
    \label{eq:lineargap}
        \Tr[\Pi_A \rho_\beta^{(0)}] = \Omega(1), \quad \Tr[\Pi_C \rho_\beta^{(0)}] = \Omega(1), \quad \Tr[\Pi_B \rho_\beta^{(0)}] \leq \exp[-a n]
    \end{equation}
    for some constant $a > 0$, then $\tmix(\cT) \geq \exp[\Omega(n)]$ for all
    \begin{equation}
        0 \leq h < \min\left\{\frac{1}{2},\, \frac{1}{\beta},\, \exp\left[-\frac{3\beta}{2d}\right],\, \frac{\sqrt a}{\sqrt 3 \beta} \exp[-\beta]\right\}.
    \end{equation}
\end{proposition}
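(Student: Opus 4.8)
\emph{Proof proposal.} The plan is to feed the computational‑basis projectors $\Pi_A$, $\Pi_B=I-\Pi_A-\Pi_C$, $\Pi_C$ into the discrete bottleneck lemma \Cref{prop:mix_time_discrete}, and to supply its two nontrivial hypotheses --- that $\Tr[\Pi_B\rho_\beta]$ is exponentially small and that $\Tr[\Pi_C\sigma_0]=o(1)$ --- by pushing the classical free energy barrier \eqref{eq:lineargap} through the Poisson Feynman--Kac estimates \Cref{lem:hbound,lem:lindecoup}. The bottleneck geometry is the easy part: since $\cT$ is $\ell$-local with $\ell<dn$ while $\Pi_A,\Pi_C$ are supported on computational‑basis strings at Hamming distance $|A-C|_H>dn$, we have $\dist_Z(A,C)>dn>\ell=\range_Z(\cT)$, so by the bottleneck‑construction lemma of \Cref{sec:bottle} the choice $\Pi_B=I-\Pi_A-\Pi_C$ gives $\cT^\dagger(\Pi_C)\preccurlyeq\Pi_B+\Pi_C$, and in \Cref{prop:mix_time_discrete} we may take $\widetilde\cT=\cT$ and $\epsilon=0$.

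Next I would transfer the barrier to the quantum Gibbs state. Applying \Cref{lem:hbound} to $H_0$ --- valid since $h<1/\beta$, and with the lemma's constant equal to $1$ because $|c_{i_1\cdots i_k}|\le 1/\deg(H_0)$ --- both to a general computational‑basis‑diagonal $\Pi$ and to $\Pi=I$ gives $e^{-\kappa n}\le \Tr[\Pi\rho_\beta]/\Tr[\Pi\rho_\beta^{(0)}]\le e^{\kappa n}$ with $\kappa:=(h\beta e^{\beta})^2$. Combined with \eqref{eq:lineargap} this yields $\Tr[\Pi_A\rho_\beta],\Tr[\Pi_C\rho_\beta]=\Omega(e^{-\kappa n})$ and $\Tr[\Pi_B\rho_\beta]\le e^{-(a-\kappa)n}=o(1)$. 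Since the three weights sum to $1$, at least one of $\Tr[\Pi_A\rho_\beta],\Tr[\Pi_C\rho_\beta]$ is $\Omega(1)$; relabeling $A\leftrightarrow C$ if necessary (every hypothesis is symmetric under this swap) I may assume $\Tr[\Pi_C\rho_\beta]=\Omega(1)$, as required.

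For the decoupling condition I would invoke \Cref{lem:lindecoup} with $V=-\sum_i X_i$ (which is $1$-local with $\norm{V}=n$), $\norm{H_0}\le n$, separation $|A-C|_H>dn$, and $\Tr[\Pi_A\rho_\beta]=\Omega(e^{-\kappa n})\ge e^{-\delta n}$ for a constant $\delta$ arbitrarily close to $\kappa$; the constraints $h\le \tfrac12$ and $h\le\exp[-3\beta/(2d)]$ then yield $\Tr[\Pi_C\sigma_0]=\exp[-\Omega(n)]=o(1)$ for $\sigma_0=e^{-\beta H/2}\Pi_A e^{-\beta H/2}/\Tr[\Pi_A e^{-\beta H}]$. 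With all hypotheses of \Cref{prop:mix_time_discrete} in place, it returns $\tmix(\cT)=\Omega\!\lr{\Tr[\Pi_A\rho_\beta]\Tr[\Pi_C\rho_\beta]\Tr[\Pi_B\rho_\beta]^{-1}}\ge \exp[(a-2\kappa-o(1))n]$, and the stated bound $h<\tfrac{\sqrt a}{\sqrt 3\,\beta}e^{-\beta}$ forces $\kappa<a/3$, so the exponent $a-2\kappa>a/3$ is a positive constant; the remaining conditions $h<1/2$, $h<1/\beta$, $h<\exp[-3\beta/(2d)]$ are exactly those needed to run \Cref{lem:hbound,lem:lindecoup}. (A continuous‑time Lindbladian version follows identically from \Cref{cor:slow}.)

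The main obstacle is the transfer step: a constant transverse field has operator norm $\norm{h\sum_i X_i}=\Theta(n)$, so naive operator‑norm or perturbation‑theory estimates destroy the linear barrier outright. The Poisson Feynman--Kac representation \Cref{lem:fk} circumvents this by rewriting $\Tr[\Pi e^{-\beta H}]$ as an average over independent site‑wise Poisson spin‑flip processes, which perturbs the diagonal Gibbs weights only by a factor $\exp[O((h\beta)^2 n)]$ (\Cref{lem:hbound}); the companion off‑diagonal estimate $\Tr[\Pi_C\sigma_0]=o(1)$ (\Cref{lem:lindecoup}) is the other delicate ingredient, proved by a term‑by‑term expansion of $e^{-\beta H/2}$ that exploits both the $1$-locality of $\sum_i X_i$ and the linear Hamming gap between $A$ and $C$ to show only exponentially suppressed amplitude leaks from $A$ to $C$. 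A minor accounting point --- ensuring the target region of the bottleneck carries $\Omega(1)$ Gibbs weight --- is handled by the relabeling above.
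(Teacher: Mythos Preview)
Your proposal is correct and follows essentially the same route as the paper: apply \Cref{lem:hbound} to transfer the classical barrier \eqref{eq:lineargap} to $\rho_\beta$ (losing $e^{\pm\kappa n}$ with $\kappa=(h\beta e^\beta)^2$), relabel so that $\Tr[\Pi_C\rho_\beta]=\Omega(1)$, invoke \Cref{lem:lindecoup} for the decoupling $\Tr[\Pi_C\sigma_0]=o(1)$, and then feed everything into \Cref{prop:mix_time_discrete} using $\dist_Z(A,C)>\ell=\range_Z(\cT)$. Your accounting ($a-2\kappa$ after relabeling versus the paper's $a-3\kappa$ before relabeling) and your identification of the role of each constraint on $h$ match the paper's argument.
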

\begin{proof}
    By Lemma~\ref{lem:hbound}, if $0 \leq h < 1/\beta$ then the quantities
    \begin{equation}
        T_0 = \Tr[\Pi_A \rho_\beta^{(0)}]\Tr[\Pi_C \rho_\beta^{(0)}]\Tr[\Pi_B \rho_\beta^{(0)}]^{-1}, \quad T = \Tr[\Pi_A \rho_\beta]\Tr[\Pi_C \rho_\beta]\Tr[\Pi_B \rho_\beta]^{-1}
    \end{equation}
    are related by
    \begin{equation}
        T \geq T_0 \exp[-3(h \beta e^\beta)^2 n]
    \end{equation}
    and thus for all
    \begin{equation}
        0 \leq h < \min\left\{\frac{1}{\beta}, \frac{\sqrt a}{\sqrt 3 \beta e^\beta}\right\}
    \end{equation}
    we satisfy
    \begin{equation}
        \Tr[\Pi_B \rho_\beta] = \exp[-\Omega(n)], \quad T = \exp[\Omega(n)].
    \end{equation}
    Similarly, Lemma~\ref{lem:hbound} gives for $0 \leq h < 1/\beta$ that
    \begin{equation}
        \Tr[\Pi_A \rho_\beta] \geq \Tr[\Pi_A \rho_\beta^{(0)}] e^{- (h \beta e^\beta)^2 n}.
    \end{equation}
    Since $\Pi_A + \Pi_B + \Pi_C = I$, we can without loss of generality choose $C$ to be such that $\Tr[\Pi_C \rho_\beta] = \Omega(1)$. Lemma~\ref{lem:lindecoup} can then be applied with condition
    \begin{equation}
        0 \leq h \leq \min\left\{\frac{1}{2}, \exp\left[-\frac{\beta(3+2|h|)}{2d}\right]\right\}
    \end{equation}
    to satisfy $\Tr[\Pi_C \sigma_0] = o(1)$. We loosen this condition to obtain the final bound of
    \begin{equation}
        0 \leq h < \min\left\{\frac{1}{2}, \exp\left[-\frac{3\beta}{2d}\right], \frac{1}{\beta}, \frac{\sqrt a}{\sqrt 3 \beta e^\beta}\right\}.
    \end{equation}
    Finally, we apply Proposition~\ref{prop:mix_time_discrete} under the observation that $\cT^\dagger(\Pi_C) \preccurlyeq \Pi_B + \Pi_C$ due to the locality of the Gibbs sampler being smaller than the Hamming distance between $A$ and $C$.
\end{proof}

Similarly to the previous section, the same result holds for a continuous-time Lindbladian Gibbs sampler $\cL$ instead of the discrete Gibbs sampler $\cT$. Note that the above argument requires the Lindbladian to have locality smaller than the Hamming distance between regions $A$ and $C$. The locality of a Lindbladian Gibbs sampler is a natural property: for example, given a geometrically local Hamiltonian defined on a lattice, the Lindbladian of~\cite{chen2023efficient,gilyen2019quantum} are quasi-local. In practice, locality may be desirable for efficiency of implementation, making Proposition~\ref{prop:linear} relevant.

We remark that this proof only holds for a linear free energy barrier due to the size of $\beta|h|\norm{V} = \Theta(n)$ for constant $\beta, h$. Since $h$ can always be chosen such that $\beta|h|\norm{V}$ is smaller than a free energy barrier of size $\Theta(n)$, the bottleneck lemma can be successfully applied. If instead the free energy barrier were $o(n)$, however, then for constant $h$, the mixing time lower bound would vanish for asymptotically large $n$.

\subsection{Non-commuting Hamiltonians with a sublinear free energy barrier}
\label{sec:localtfim}

Here, we will develop techniques to lower-bound the mixing time of (stoquastic) non-commuting models with a \emph{sub}linear free energy barrier, which prevents standard matrix inequalities (such as those applied in the previous subsection) from showing slow mixing. As an example, we prove slow mixing of the 2D transverse field Ising model on the $\sqrt n \times \sqrt n$ lattice
\begin{equation}
\label{eq:tfim}
    H = -\sum_{\langle i, j \rangle} Z_i Z_j - h\sum_{i=1}^n X_i,
\end{equation}
where $\langle i, j \rangle$ denotes unique pairs (e.g., taking $i < j$) of neighboring vertices on a 2D lattice of dimension $\sqrt n \times \sqrt n$, and we take $h \geq 0$.
Like the purely classical model, we will show a mixing time lower bound of $\exp(\Omega(\sqrt{n}))$ below a critical constant temperature and transverse field strength, proving Theorem~\ref{thm:tfimslow} and Corollary~\ref{cor:tfimslowchen}. 

To use the slow mixing result of Corollary~\ref{cor:slow}, we need to identify regions $A, C$ and $B = (A \cup C)^c$ with several properties. The support of the bottleneck region $B$ must be superpolynomially suppressed in the Gibbs state, and the support of regions $A$ and $C$ must be large. Regions $A$ and $C$ must be separated by a distance sufficiently large such that Lindbladian operators acting on $A$ cannot reach $C$ and instead fall into the bottleneck $B$. Here, the notion of distance will be provided by the number of local (or quasi-local) unitary operations.

Technically, the construction of the region $B$ contains two main components. First, we write the transverse field Ising model in the path integral language of the Feynman-Kac representation introduced in Lemma~\ref{lem:fk}, which interprets the transverse field as a probabilistic point process and enables us to evaluate matrix elements in the computational basis instead of the eigenbasis of the Hamiltonian. Second, we consider the metastable domains introduced by fault lines, which are sufficient to show slow mixing in the classical 2D Ising model. Here, we show fault-line configurations are also exponentially suppressed in the presence of a transverse field via a Peierls-like argument, even though the configurations are addressed by the path integral given by Feynman-Kac.

Similar ideas are used to show large support of regions $A$ and $C$ in the Gibbs state. Here, we show that similar physics persists in the presence of the transverse field. Finally, to guarantee that $A$ and $C$ are sufficiently far apart, we bound the locality of the Gibbs sampler Lindblad operators using Lieb-Robinson bounds.

Finally, to apply the bottleneck lemma, we must show the decoupling condition
\begin{equation}
\label{eq:cacond2}
    \Tr[\Pi_C \frac{e^{-\beta H/2} \Pi_A e^{-\beta H/2}}{\Tr[\Pi_A e^{-\beta H}]}] = o(1).
\end{equation}
This is done by discretizing the Poisson point process of the Feynman-Kac representation over imaginary time; instead of evaluating the path integral for time $\sim \beta$, we show that it suffices to divide it into time steps $\sim \beta/n^2$ and show that (with high probability) no time step permits transitions from region $A$ to $C$.

Roughly speaking, it suffices to show that for small $\Delta = 1/\poly{n}$, the Feynman-Kac spin flip process $e^{\beta H \Delta / 2}$ produces a transition from region $A$ to $C$ with superpolynomially small probability. Iterating over each small time step, we will find that this implies that the transitions $e^{-\beta H/2}$ in $\Tr[\Pi_C e^{-\beta H/2} \Pi_A e^{-\beta H/2}]$ ultimately get stuck in the bottleneck.

To show \eqref{eq:cacond2}, we first provide some generic lemmas for stoquastic systems composed from classical Hamiltonians in a transverse field. The probabilistic interpretation of Lemma~\ref{lem:fk} implies that $e^{-\beta H}$ has matrix elements that are only nonnegative (in the computational basis), which aids in showing \eqref{eq:cacond2}. In particular, the two following facts hold (shown in Appendix~\ref{app:localtfim}).

\begin{fact}
\label{fact:pos}
    For arbitrary matrices $A, B$ with only nonnegative entries and for diagonal projector $\Pi$,
    \begin{equation}
        \tr(A \Pi B) \leq \tr(A B).
    \end{equation}
\end{fact}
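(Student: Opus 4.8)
\textbf{Proof proposal for Fact~\ref{fact:pos}.}
The plan is simply to expand the trace in the computational basis and exploit nonnegativity termwise. Write $\Pi = \sum_{j\in S}\ketbra{j}$ for some subset $S$ of basis indices, so that $\Pi$ is diagonal with entries $\Pi_{jk} = \pi_j\,\delta_{jk}$ where $\pi_j\in\{0,1\}$; in particular $I-\Pi$ is also diagonal with entries $(1-\pi_j)\in\{0,1\}\subseteq[0,\infty)$.

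First I would record the two elementary expansions
\begin{equation}
    \tr(AB) = \sum_{i,j} A_{ij} B_{ji}, \qquad \tr(A\Pi B) = \sum_{i,j} A_{ij}\,\pi_j\, B_{ji},
\end{equation}
the second following from $\Pi$ being diagonal. Subtracting,
\begin{equation}
    \tr(AB) - \tr(A\Pi B) = \sum_{i,j} A_{ij}\,(1-\pi_j)\,B_{ji} = \tr\bigl(A(I-\Pi)B\bigr).
\end{equation}
Since every $A_{ij}\geq 0$, every $B_{ji}\geq 0$, and every $(1-\pi_j)\geq 0$ by hypothesis, each summand is nonnegative, hence the difference is nonnegative and $\tr(A\Pi B)\leq\tr(AB)$.

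There is essentially no obstacle here; the only point to be careful about is that the claim uses the computational (i.e.\ $Z$) basis implicitly — ``diagonal projector'' and ``nonnegative entries'' must refer to the same fixed basis, which is the basis in which the Feynman--Kac matrix elements of $e^{-\beta H}$ are nonnegative. As long as $A$, $B$, and $\Pi$ are all expressed in that common basis, the termwise argument above suffices.
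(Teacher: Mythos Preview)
Your proof is correct and is essentially the same as the paper's: both expand $\tr(A\Pi B)=\sum_{i,j}A_{ij}\Pi_{jj}B_{ji}$ in the fixed basis and use that each term is dominated by $A_{ij}B_{ji}\geq 0$. Your version just makes the nonnegativity of the difference $\tr\bigl(A(I-\Pi)B\bigr)$ explicit, which is a cosmetic variation.
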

\begin{fact}
\label{fact:pow}
    Let $A$ be a PSD matrix such that $A^p$ only has nonnegative entries for all $p \geq 0$, and let $\Pi$ be a diagonal projector. Then, for any $\gamma \geq 1$,
    \begin{equation}
    \lr{\Tr\lr{(A \Pi A)^\gamma}}^{1/\gamma} \leq \lr{\Tr(\Pi A^{2\gamma})}^{1/\gamma}.
    \end{equation}
\end{fact}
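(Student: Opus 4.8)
The plan is to reduce the claimed bound to a standard operator-compression inequality, and then prove that inequality by diagonalizing the compressed operator and applying scalar (Jensen) convexity one diagonal entry at a time.

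For the reduction, note that since $\Pi$ is an orthogonal projector, setting $B:=\Pi A$ gives $B^\dagger B = A\Pi A$ and $BB^\dagger = \Pi A^2\Pi$, two positive semidefinite operators with identical nonzero spectra (with multiplicity); hence for $f(t)=t^\gamma$, which has $f(0)=0$, one gets $\Tr[(A\Pi A)^\gamma] = \Tr[(\Pi A^2\Pi)^\gamma]$. Writing $M:=A^2\succeq 0$, using $A^{2\gamma}=(A^2)^\gamma=M^\gamma$, and $\Tr[\Pi A^{2\gamma}]=\Tr[\Pi M^\gamma\Pi]$, it therefore suffices to prove
\[
    \Tr[(\Pi M\Pi)^\gamma]\ \le\ \Tr[\Pi M^\gamma\Pi] \qquad\text{for all }M\succeq 0,\ \gamma\ge1,
\]
after which taking $\gamma$-th roots (both sides are nonnegative and $x\mapsto x^{1/\gamma}$ is increasing) yields the Fact.

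To prove this compression inequality, let $S=\operatorname{ran}(\Pi)$ and choose an orthonormal basis $\{e_i\}$ of $S$ consisting of eigenvectors of $\Pi M\Pi$ restricted to $S$, with eigenvalues $\mu_i\ge0$ (all remaining eigenvalues of $\Pi M\Pi$ being zero). Since $\Pi e_i = e_i$, each $\mu_i$ equals the diagonal matrix element $m_i:=\bra{e_i}M\ket{e_i}$, so $\Tr[(\Pi M\Pi)^\gamma]=\sum_i m_i^\gamma$. Diagonalizing $M=\sum_j\lambda_j\ket{\phi_j}\bra{\phi_j}$ with $\lambda_j\ge0$, the weights $p_{ij}:=|\langle e_i,\phi_j\rangle|^2$ sum to $1$ over $j$, with $\sum_j p_{ij}\lambda_j = m_i$ and $\sum_j p_{ij}\lambda_j^\gamma=\bra{e_i}M^\gamma\ket{e_i}$; Jensen's inequality for the convex map $t\mapsto t^\gamma$ on $[0,\infty)$ then gives $m_i^\gamma\le\bra{e_i}M^\gamma\ket{e_i}$. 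Summing over $i$ and using $\sum_i\bra{e_i}M^\gamma\ket{e_i}=\Tr[\Pi M^\gamma\Pi]$ (the $S^\perp$ directions contribute nothing) closes the argument.

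The only delicate point is non-integer $\gamma$: it is exactly the $B^\dagger B\leftrightarrow BB^\dagger$ spectral correspondence, together with $0^\gamma=0$ so that the (possibly unequal) zero-eigenvalue multiplicities drop out, that licenses $\Tr[(A\Pi A)^\gamma]=\Tr[(\Pi A^2\Pi)^\gamma]$; for integer $\gamma$ one could instead expand $(A\Pi A)^\gamma$ and cycle the trace directly. I also note that this route does not actually use the hypothesis that $A^p$ has nonnegative entries (stated here because it is natural alongside Fact~\ref{fact:pos}): for integer $\gamma$ there is an alternative combinatorial proof writing both traces as sums over closed walks weighted by entries of $A^2$ — the left side being the restriction to walks supported on $\operatorname{supp}(\Pi)$ — and there the nonnegativity of the entries is precisely what makes the restricted sum the smaller one.
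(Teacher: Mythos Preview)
Your proof is correct and follows a genuinely different route from the paper's. The paper splits $\gamma=\lfloor\gamma\rfloor+(\gamma-\lfloor\gamma\rfloor)$, handles the fractional part via the operator-monotonicity of $t\mapsto t^p$ for $p\in(0,1]$ together with $A\Pi A\preceq A^2$, and then peels off the remaining $\lfloor\gamma\rfloor-1$ projectors one at a time using Fact~\ref{fact:pos}, which is precisely where the nonnegative-entries hypothesis enters. Your argument instead passes through the spectral identity $\Tr[(A\Pi A)^\gamma]=\Tr[(\Pi A^2\Pi)^\gamma]$ (from $B^\dagger B\leftrightarrow BB^\dagger$ with $B=\Pi A$) and then proves the compression inequality $\Tr[(\Pi M\Pi)^\gamma]\le\Tr[\Pi M^\gamma\Pi]$ by scalar Jensen on the diagonal of $M$ in the eigenbasis of the compressed operator. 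This is cleaner: it avoids the appeal to operator monotonicity of fractional powers, and, as you observe, it does not use the nonnegative-entries assumption at all, so it actually proves a stronger statement than the one recorded in the paper. The paper's route, by contrast, keeps Fact~\ref{fact:pos} in the loop and thereby stays closer to the entrywise-positivity theme used throughout the section.
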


\begin{lemma}[Decoupling between regions]
\label{lem:sigma0}
    Assume for all $\beta > 0$ that $e^{-\beta H}$ has only nonnegative matrix elements and that $\Pi_A, \Pi_B, \Pi_C$ are orthogonal projectors that are diagonal in the computational basis. Then for any integer $1/\Delta \geq 1$,
    \begin{equation}
        \Tr[\Pi_C \frac{e^{-\beta H/2} \Pi_A e^{-\beta H/2}}{\Tr[\Pi_A e^{-\beta H}]}] \leq \frac{1}{\Delta}\lr{\norm{e^{\Delta \beta H / 2} \Pi_A e^{-\Delta \beta H / 2} \Pi_C} + \frac{\Tr[\Pi_B e^{-\beta H}]}{\Tr[\Pi_A e^{-\beta H}]}}.
    \end{equation}
\end{lemma}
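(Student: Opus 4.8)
The plan is to recast the decoupling estimate as a union bound over $m := 1/\Delta$ imaginary-time steps, exploiting that --- by the Feynman-Kac hypothesis --- $e^{-\beta H}$, and hence $u := e^{-\beta H/(2m)}$ together with all its nonnegative powers, have entrywise nonnegative matrix elements. We may assume $\Pi_A+\Pi_B+\Pi_C = I$, since enlarging $\Pi_B$ only increases the right-hand side. As $\sigma_0 = u^m \Pi_A u^m / \Tr[\Pi_A u^{2m}]$, the claim is equivalent to
\[
    \Tr[\Pi_C\, u^m \Pi_A u^m] \;\le\; m\,\nu\,\Tr[\Pi_A u^{2m}] + m\,\Tr[\Pi_B u^{2m}], \qquad \nu := \bigl\|u^{-1}\Pi_A u\,\Pi_C\bigr\| = \bigl\|e^{\Delta\beta H/2}\Pi_A e^{-\Delta\beta H/2}\Pi_C\bigr\|.
\]

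First I would expand $\Tr[\Pi_C u^m\Pi_A u^m]$ by inserting $I = \Pi_A+\Pi_B+\Pi_C$ in each of the $m-1$ interior gaps of the left factor $u^m = u\,I\,u\cdots I\,u$, writing it as a sum of nonnegative terms indexed by ``paths'' $(y_1,\dots,y_{m-1})\in\{A,B,C\}^{m-1}$ with fixed endpoints $y_0=A$ (the middle $\Pi_A$) and $y_m=C$ (the outer $\Pi_C$). Each path has a first-exit index $k := \min\{j\ge 1 : y_j\ne A\}$ with $y_k\in\{B,C\}$; resumming the unconstrained tail $y_{k+1},\dots,y_{m-1}$ back to $u^{m-k}$ gives the exact decomposition
\[
    \Tr[\Pi_C u^m\Pi_A u^m] = \sum_{k=1}^m \Bigl( \Tr[\Pi_C\, u^{m-k}\, \Pi_B\, (u\Pi_A)^k\, u^m] + \Tr[\Pi_C\, u^{m-k}\, \Pi_C\, (u\Pi_A)^k\, u^m]\Bigr) =: \sum_{k=1}^m (T_k^B + T_k^C).
\]

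The terms $T_k^B$ are easy: since every operator surrounding the projectors is entrywise nonnegative, Fact~\ref{fact:pos} lets me delete from the trace every projector except the single $\Pi_B$, giving $T_k^B \le \Tr[u^{m-k}u^k u^m \Pi_B] = \Tr[\Pi_B u^{2m}]$, hence $\sum_k T_k^B \le m\,\Tr[\Pi_B u^{2m}]$. For $T_k^C$, the single step $y_{k-1}=A \to y_k=C$ produces the factor $\Pi_C u\Pi_A$; using the identity $\Pi_A = u\,(u^{-1}\Pi_A u)\,u^{-1}$ and idempotence of $u^{-1}\Pi_A u$, I would recast this factor so as to pull out the operator $N := u^{-1}\Pi_A u\,\Pi_C$, which has $\|N\|=\nu$, and then use cyclicity together with repeated applications of Facts~\ref{fact:pos} and~\ref{fact:pow} (to absorb the remaining $\Pi_A$'s and $\Pi_C$) to reduce the entrywise-nonnegative remainder to the positive-semidefinite operator $u^m\Pi_A u^m$, obtaining $T_k^C = \Tr[N\,P_k]$ with $P_k\succeq 0$ and $\Tr[P_k]\le \Tr[u^m\Pi_A u^m] = \Tr[\Pi_A u^{2m}]$. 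Since $|\Tr[N P_k]| \le \|N\|\,\Tr[P_k]$ for $P_k\succeq 0$, this gives $T_k^C \le \nu\,\Tr[\Pi_A u^{2m}]$ and $\sum_k T_k^C \le m\,\nu\,\Tr[\Pi_A u^{2m}]$. Adding the two bounds and dividing by $\Tr[\Pi_A u^{2m}] = \Tr[\Pi_A e^{-\beta H}]$ gives the lemma.

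The main obstacle is the $T_k^C$ bound: one must charge each $A\to C$ leakage to the \emph{single-step} amplitude $\nu$ rather than to the much weaker ``full'' amplitude $\|e^{\beta H/2}\Pi_A e^{-\beta H/2}\Pi_C\|$, while simultaneously keeping the leftover trace bounded by $\Tr[\Pi_A e^{-\beta H}]$. This is exactly where entrywise nonnegativity is indispensable --- a crude H\"older step would instead cost uncontrolled Schatten-$1$ factors --- and it is the operator-theoretic counterpart of the probabilistic statement that, over the $m$ imaginary-time steps, the path measure started in $A$ either fails to hop directly to $C$ in any single step (total ``probability'' at most $m\nu$) or it visits $B$ (total ``probability'' at most $m\,\Tr[\Pi_B e^{-\beta H}]/\Tr[\Pi_A e^{-\beta H}]$).
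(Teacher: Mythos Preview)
Your first-exit decomposition and the bound on $T_k^B$ via Fact~\ref{fact:pos} match the paper's argument exactly. The gap is in your treatment of $T_k^C$. You claim that after pulling out $N = u^{-1}\Pi_A u\,\Pi_C$ you can ``reduce the entrywise-nonnegative remainder to the positive-semidefinite operator $u^m\Pi_A u^m$,'' obtaining $T_k^C = \Tr[N\,P_k]$ with $P_k \succeq 0$ and $\Tr[P_k] \le \Tr[\Pi_A u^{2m}]$. But extracting $N^\dagger = \Pi_C u\Pi_A u^{-1}$ from $\Pi_C(u\Pi_A)^k$ leaves the remainder $u(u\Pi_A)^{k-1}u^m\Pi_C u^{m-k}$ (or, after removing interior projectors via Fact~\ref{fact:pos}, the operator $u^k\Pi_A u^{2m-k}$), which is \emph{not} positive semidefinite, and whose trace norm is not obviously bounded by $\Tr[\Pi_A u^{2m}]$. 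Facts~\ref{fact:pos} and~\ref{fact:pow} alone do not produce such a $P_k$; I do not see how to complete this step without an intermediate Schatten-norm inequality.

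In fact, the paper's proof uses precisely the H\"older step you dismiss as ``crude.'' After reducing via Fact~\ref{fact:pos} to the trace of the product $(u^k\Pi_A u^k)\cdot(u^{-1}\Pi_A u\,\Pi_C)\cdot(u^{m-k}\Pi_C u^{m-k})$, one applies H\"older with exponents $(m/k,\infty,m/(m-k))$; Fact~\ref{fact:pow} then converts $\|u^k\Pi_A u^k\|_{m/k}$ into $(\Tr[\Pi_A u^{2m}])^{k/m}$ and $\|u^{m-k}\Pi_C u^{m-k}\|_{m/(m-k)}$ into $(\Tr[\Pi_C u^{2m}])^{(m-k)/m}$. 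H\"older with these carefully chosen exponents, fed by the Araki--Lieb--Thirring type bound of Fact~\ref{fact:pow}, is exactly what controls the Schatten factors --- it is not a crude step but the heart of the $T_k^C$ estimate. Entrywise nonnegativity is what makes Facts~\ref{fact:pos} and~\ref{fact:pow} available, but those facts feed \emph{into} the H\"older bound rather than replace it.
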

\begin{proof}
    Define $T(a) = e^{-a \beta H/2}$. We will prove an inequality of the form
    \begin{equation}
        \Tr[\lr{\Pi_A T(\Delta)}^m \lr{\Pi_B + \Pi_C} T(1-m\Delta)\Pi_C T(1)] \leq f(m).
    \end{equation}
    This gives
    \begin{equation}
    \begin{split}
        \Tr[\Pi_C e^{-\beta H/2} \Pi_A e^{-\beta H/2}] &= \Tr[\Pi_A T(1) \Pi_C T(1)]\\
        &= \Tr[\Pi_A T(\Delta) \lr{\Pi_A + \Pi_B + \Pi_C} T(1-\Delta) \Pi_C T(1)]\\
        &\leq \Tr[\Pi_A T(\Delta) \Pi_A T(1-\Delta)] + f(1)\\
        &= \Tr[\lr{\Pi_A T(\Delta)}^2 (\Pi_A + \Pi_B + \Pi_C) T(1-2\Delta) \Pi_C T(1)] + f(1)\\
        &\leq \Tr[\lr{\Pi_A T(\Delta)}^2 \Pi_A T(1-2\Delta) \Pi_C T(1)] + f(2) + f(1)\\
        &\leq \sum_{m=1}^{1/\Delta} f(m),
    \end{split}
    \end{equation}
    where the last inequality is obtained by iteratively inserting the identity as in the prior lines. To show the bound $f(m)$, we separately bound the two terms $\Pi_B$ and $\Pi_C$ with $f(m) = \xi_B(m) + \xi_C(m)$. For $\Pi_B$, Fact~\ref{fact:pos} implies
    \begin{equation}
    \begin{split}
        \xi_B(m) &= \Tr[\lr{\Pi_A T(\Delta)}^m \Pi_B T(1-m\Delta)\Pi_C T(1)]\\
        &\leq \Tr[T(m\Delta) \Pi_B T(2-m\Delta)]\\
        &= \Tr[\Pi_B T(2)].
    \end{split}
    \end{equation}
    For $\Pi_C$, we have
    \begin{equation}
    \begin{split}
        \xi_C(m) &= \Tr[\lr{\Pi_A T(\Delta)}^m \Pi_C T(1-m\Delta)\Pi_C T(1)]\\
        &\leq \Tr[\Pi_A T((m-1)\Delta) \Pi_A T(\Delta) \Pi_C T(1-m\Delta) \Pi_C T(1)]\\
        &= \Tr[T(m\Delta) \Pi_A T(m\Delta) T(-\Delta) \Pi_A T(\Delta) \Pi_C T(1-m\Delta) \Pi_C T(1-m\Delta)]\\
        &\leq \norm{T(-\Delta) \Pi_A T(\Delta) \Pi_C} \Tr[\lr{T(m\Delta) \Pi_A T(m\Delta)}^{\frac{1}{m\Delta}}]^{m\Delta} \Tr[\lr{T(1-m\Delta) \Pi_C T(1-m\Delta)}^{\frac{1}{1-m\Delta}}]^{1-m\Delta}\\
        &\leq \norm{T(-\Delta) \Pi_A T(\Delta) \Pi_C} \Tr[\Pi_A T(2)]^{m\Delta} \Tr[\Pi_C T(2)]^{1-m\Delta}\\
        &= \norm{T(-\Delta) \Pi_A T(\Delta) \Pi_C} \Tr[\Pi_A T(2)],
    \end{split}
    \end{equation}
    where the first inequality holds by Fact~\ref{fact:pos}, the second inequality holds by H\"older's inequality, and the third inequality holds by Fact~\ref{fact:pow}. 
    Hence, we have
    \begin{equation}
        \Tr[\Pi_C \frac{e^{-\beta H/2} \Pi_A e^{-\beta H/2}}{\Tr[\Pi_A e^{-\beta H}]}] \leq \sum_{m=1}^{1/\Delta} \xi_B(m) + \xi_C(m) = \frac{1}{\Delta}\lr{\norm{T(-\Delta) \Pi_A T(\Delta) \Pi_C} + \frac{\Tr[\Pi_B T(2)]}{\Tr[\Pi_A e^{-\beta H}]}}.
    \end{equation}
\end{proof}

We now specialize our analysis to the 2D transverse field Ising model. The configurations $\sigma$ we wish to consider correspond to metastable configurations, which produce a bottleneck for mixing. In the case of the 2D classical Ising model, these configurations correspond to fault lines, where spins adjacent to one side of the fault line are all up and spins adjacent to the other side are all down, with few exceptions (Fig.~\ref{fig:fault}). We now specialize our analysis to the 2D transverse field Ising model to provide a concrete result using the technique of Lemma~\ref{lem:fk}. We follow the methodology outlined in \cite{randall2006slow} (see also \cite{levin2017markov}).

\begin{definition}[Fault line]
    A \emph{fault line} with at most $k$ defects is a self-avoiding\footnote{The self-avoiding property states that each new element in the sequence for the path is a unique point. Therefore, the path never returns to a previously visited point in the path. Moreover, the requirement of two adjacent faces to each edge prevents edges on the boundary from being in a fault line; this implies that any fault line only intersects the boundary of the lattice at either end.} lattice path from the left side to the right side or from the top to the bottom of a $\sqrt n \times \sqrt n$ lattice, where each edge of the path is adjacent to two faces with different spins on them; with at most $k$ exceptions, the two adjacent faces can be assigned the same spins.
\end{definition}

\begin{figure}
    \centering
    \begin{tikzpicture}[scale=1.2]

\fill[green!40] (1,0) rectangle (2,1);
\fill[green!40] (0,1) rectangle (1,2);
\fill[green!40] (1,1) rectangle (2,2);
\fill[green!40] (2,1) rectangle (3,2);
\fill[green!40] (0,2) rectangle (1,3);
\fill[green!40] (1,2) rectangle (2,3);

\fill[red!40] (0,0) rectangle (1,1);
\fill[red!40] (2,0) rectangle (3,1);
\fill[red!40] (3,0) rectangle (4,1);
\fill[red!40] (3,1) rectangle (4,2);
\fill[red!40] (2,2) rectangle (3,3);
\fill[red!40] (3,2) rectangle (4,3);
\fill[red!40] (0,3) rectangle (1,4);
\fill[red!40] (1,3) rectangle (2,4);
\fill[red!40] (2,3) rectangle (3,4);
\fill[red!40] (3,3) rectangle (4,4);

\draw[very thick] (0,0) grid (4,4);

\draw[black, line width=2 mm] (0,3) -- (2,3) -- (2,2) -- (4,2);

\end{tikzpicture}
    \caption{Fault line of length $\ell = 5$ with a single defect ($k=1$). Green squares indicate $\sigma_i = +1$ and red squares indicate $\sigma_i = -1$. All pairs of adjacent spins across the fault line have opposite signs, except at the location of the defect (at the right end of the fault line).}
    \label{fig:fault}
\end{figure}
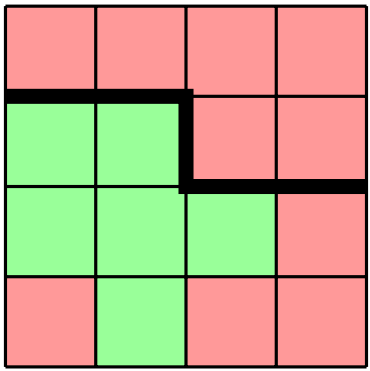

We will soon show that configurations with fault lines have small support in the Gibbs state; such configurations will form our bottleneck region $B$ and the remaining configurations will be split symmetrically across two regions $A$ and $C$. We define the regions $A, B, C$ formally here.

\begin{definition}
    Let $S_+$ be the set of configurations $\sigma \in \{\pm 1\}^n$ that have a top-to-bottom and a left-to-right crossing of pluses\footnote{A ``crossing" refers to a contiguous region of spins that spans from one border to another border of the 2D lattice.}, and similarly define $S_-$. Let $A \subset S_+$ consist of configurations $\sigma \in S_+$ such that every fault line in $\sigma$ has at least $c_0\sqrt n$ defects. Analogously define $C \subset S_-$, and let $B$ consist of all remaining configurations in $\{\pm 1\}^n \setminus (A \cup C)$. We refer to $(A, B, C)$ as the $c_0$-regions of the Ising model.
\end{definition}

We proceed to show that configurations with few-defect fault lines produce a bottleneck with inverse superpolynomial support on the Gibbs state. We will use a Peierls-like argument that constructs a one-to-one map between configurations with fault lines to configurations with exponentially larger support in the Gibbs state. Specifically, we consider a configuration with a fault line, and the same configuration but with all spins on one side of the fault line flipped. When applying the Feynman-Kac representation in this setting, the following result will be useful; it relates the two configurations for a given sample of the spin-flip process $s(t)$ defined in Lemma~\ref{lem:fk}.

\begin{lemma}\label{lem:midratio}
    Let $\sigma$ be a configuration with a fault line of length $\ell$ and at most $k$ defects. Let $S$ denote the set of spins on one side of the fault line; let $\partial S$ denote the set of spins adjacent to the fault line. Let $\sigma'$ be a configuration with all spins in $S$ flipped, i.e., $\sigma'_i = (-1)^{1_{i \in S}}\sigma_i$. For Feynman-Kac spin-flip process $s(t)$, let $m_{\partial S}(s(t))$ denote the number of spins in $\partial S$ that are flipped in any time $t \in [0, 1]$; i.e.,
    \begin{equation}
        m_{\partial S}(s(t)) = \left|\{i:i \in \partial S \; \mathrm{and} \; \exists \; t_* \in [0, 1] \; \mathrm{s.t.} \; s_i(t_*) = -1\}\right|.
    \end{equation}
    The following holds for any $s(t)$:
    \begin{equation}
        \exp\left[2\beta(\ell-2k-6m_{\partial S}(s(t)))\right] \leq \frac{\exp\left[\beta\int_0^1 dt \sum_{\langle i,j \rangle} \sigma_i' s_i(t) \sigma_j' s_j(t)\right]}{\exp\left[\beta\int_0^1 dt \sum_{\langle i,j \rangle} \sigma_i s_i(t) \sigma_j s_j(t)\right]} \leq \exp\left[2\beta(\ell-2k+6m_{\partial S}(s(t)))\right].
    \end{equation}
\end{lemma}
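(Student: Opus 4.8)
The plan is to reduce the ratio to a single exponential and then to the handful of edges that the fault line cuts. Since the numerator and denominator of the displayed quotient use the same spin-flip realization $s(t)$ and differ only in $\sigma$ versus $\sigma'$,
\[
\frac{\exp\!\big[\beta\!\int_0^1\!dt\,\sum_{\langle i,j\rangle}\sigma_i's_i(t)\sigma_j's_j(t)\big]}{\exp\!\big[\beta\!\int_0^1\!dt\,\sum_{\langle i,j\rangle}\sigma_is_i(t)\sigma_js_j(t)\big]}=\exp\!\Big[\beta\!\int_0^1\!dt\,\sum_{\langle i,j\rangle}(\sigma_i'\sigma_j'-\sigma_i\sigma_j)\,s_i(t)s_j(t)\Big].
\]
Here $\sigma_i'\sigma_j'-\sigma_i\sigma_j$ vanishes whenever $i$ and $j$ lie on the same side of the fault line (both flipped or both unflipped) and equals $-2\sigma_i\sigma_j$ whenever exactly one of $i,j$ lies in $S$. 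Since the fault line partitions the faces into $S$ and its complement, the pairs $\langle i,j\rangle$ with exactly one endpoint in $S$ are exactly the $\ell$ neighbouring pairs separated by an edge of the fault line; of these, the $\ell-k$ non-defect pairs have $\sigma_i\sigma_j=-1$ and the $k$ defect pairs have $\sigma_i\sigma_j=+1$. Hence the exponent equals $2\beta\int_0^1 dt\,\big(\sum_{\mathrm{good}}s_i(t)s_j(t)-\sum_{\mathrm{def}}s_i(t)s_j(t)\big)$, the sums running over the good and defect fault edges. (If $\sigma$ has strictly fewer than $k$ defects, one uses the true count; this only strengthens the lower bound.)

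Next I would isolate the value of this exponent at the reference realization $s\equiv 1$, which is exactly $2\beta(\ell-2k)$, and bound the deviation pathwise. Using $1-s_i(t)s_j(t)=2\,\mathbf{1}[s_i(t)\neq s_j(t)]\geq 0$, the exponent divided by $2\beta$ becomes $(\ell-2k)-2\int_0^1 dt\sum_{\mathrm{good}}\mathbf{1}[s_i\neq s_j]+2\int_0^1 dt\sum_{\mathrm{def}}\mathbf{1}[s_i\neq s_j]$, so its distance from $\ell-2k$ is at most $2\int_0^1 dt\sum_{\text{fault edges}}\mathbf{1}[s_i(t)\neq s_j(t)]$. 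I then bound $\mathbf{1}[s_i(t)\neq s_j(t)]\leq\mathbf{1}[s_i(t)=-1]+\mathbf{1}[s_j(t)=-1]$ and observe that $\int_0^1\mathbf{1}[s_i(t)=-1]\,dt\leq\mathbf{1}[i\text{ flips at some }t\in[0,1]]$, since this integral is $0$ when $i$ never flips and at most $1$ otherwise. Summing over fault edges and regrouping by endpoint gives the upper bound $\sum_{v\in\partial S}\deg(v)\,\mathbf{1}[v\text{ flips}]$, where $\deg(v)$ denotes the number of fault-line edges incident to the face $v$.

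The one genuinely combinatorial step, which I expect to be the main obstacle, is the claim that $\deg(v)\leq 3$ for every face $v\in\partial S$: a unit square has four boundary edges, and if all four belonged to the fault line they would form a $4$-cycle, which a vertex-self-avoiding path cannot contain. Consequently the deviation is at most $2\cdot 3\cdot m_{\partial S}(s(t))=6\,m_{\partial S}(s(t))$ in units of $2\beta$, i.e. $2\beta(\ell-2k-6m_{\partial S})\leq(\text{exponent})\leq 2\beta(\ell-2k+6m_{\partial S})$, which is precisely the stated two-sided inequality. Every estimate above is made realization by realization, so the bound holds for each sample of the Poisson processes $s(t)$; apart from the $\deg\leq 3$ count, the argument is routine bookkeeping.
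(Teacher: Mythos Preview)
Your proof is correct and follows essentially the same approach as the paper's: reduce the ratio to the fault-line edges via $\sigma_i'\sigma_j'-\sigma_i\sigma_j=-2\sigma_i\sigma_j\mathbf{1}[(i,j)\text{ crosses the fault line}]$, identify the baseline value $2\beta(\ell-2k)$, and bound the deviation using the fact that each face in $\partial S$ borders at most three fault-line edges (the self-avoiding constraint). The paper states the last step more tersely (``flipping a spin can change the energy by at most 6, since a spin can be surrounded by 3 edges of the fault line''), whereas your indicator-function bookkeeping makes the integration over $t$ explicit, but the arguments are the same in substance.
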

\begin{proof}
    Let $\partial S$ denote the set of spins on either side of the fault line.
    Divide the set of neighbors on the lattice into $\langle i, j \rangle = N_{\partial S} \cup N_{\partial S}^c$, where $N_{\partial S}$ is the set of $\langle i, j \rangle$ such that both $i, j \in \partial S$ and are on opposite sides of the fault line, and $N_{\partial S}^c$ is the remaining neighbors on the lattice. Note that $|N_{\partial S}| = \ell$ since each edge in the fault line has neighboring lattice sites on either side. However, the number of unique spins in $N_{\partial S}$ may be less than $2\ell$ due to bends in the fault line; we write the number of spins in $\partial S$ as $|\partial S|$, which satisfies $2\sqrt n \leq |\partial S| \leq 2\ell$. Finally, note that a spin $i$ may appear in both $N_{\partial S}$ (due to an interaction between two spins in $\partial S$) and $N_{\partial S}^c$ (due to an interaction between one spin in $\partial S$ and one spin outside $\partial S$). Since $\sigma_i' \sigma_j' = \sigma_i \sigma_j$ for $(i, j) \in N_{\partial S}^c$ and $\sigma_i' \sigma_j' = -\sigma_i \sigma_j$ for $(i,j) \in N_{\partial S}$, we can rewrite the above ratio as
    \begin{equation}
    \begin{split}
        R(s(t)) &= \frac{\exp\left[\beta \int_0^1 dt \lr{\sum_{(i, j) \in N_{\partial S}^c} \sigma_i s_i(t) \sigma_j s_j(t) - \sum_{(i, j) \in N_{\partial S}} \sigma_i s_i(t) \sigma_j s_j(t)}\right]}{\exp\left[\beta \int_0^1 dt \lr{\sum_{(i, j) \in N_{\partial S}^c} \sigma_i s_i(t) \sigma_j s_j(t) + \sum_{(i, j) \in N_{\partial S}} \sigma_i s_i(t) \sigma_j s_j(t)}\right]} \\
        &= \frac{\exp\left[\beta \int_0^1 dt \lr{-\sum_{(i, j) \in N_{\partial S}} \sigma_i s_i(t) \sigma_j s_j(t)}\right]}{\exp\left[\beta \int_0^1 dt \lr{\sum_{(i, j) \in N_{\partial S}} \sigma_i s_i(t) \sigma_j s_j(t)}\right]}.
    \end{split}
    \end{equation}
    Since $\sigma$ has a fault line of length $\ell$ with at most $k$ defects, we have that $\sum_{(i, j) \in N_{\partial S}} \sigma_i \sigma_j = -\ell + 2k$. Flipping a spin can change the energy by at most 6, since a spin can be surrounded by 3 edges of the fault line. The energy is then bounded by $-\ell+2k - 6m_{\partial S}(s(t)) \leq \sum_{(i, j) \in N_{\partial S}} \sigma_i \sigma_j \leq -\ell + 2k + 6m_{\partial S}(s(t))$. This gives bounds
    \begin{equation}
        \exp\left[2\beta(\ell-2k-6m_{\partial S}(s(t)))\right] \leq R(s(t)) \leq \exp\left[2\beta(\ell-2k+6m_{\partial S}(s(t)))\right].
    \end{equation}
\end{proof}

We use another fact, proven in Appendix~\ref{app:localtfim} but stated here; we use $\cD s(t)$ to denote the measure over Poisson processes $s_1(t), \dots, s_n(t)$.

\begin{lemma}\label{lem:fm}
    The quantity
    \begin{equation}
        f_m = \int \cD s(t)\,\delta\lr{m_{\partial S}(s(t))-m}\pr{s(t) \,|\, s(1)=1, \; m_{\partial S}(s(t))=m} \exp\left[\beta\int_0^1 dt \sum_{\langle i,j \rangle} \sigma_i s_i(t) \sigma_j s_j(t)\right]
    \end{equation}
    satisfies
    \begin{equation}
        e^{-8\beta m} f \leq f_m \leq e^{8\beta m} f
    \end{equation}
    for some $f$ that is independent of $m$.
\end{lemma}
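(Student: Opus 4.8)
The plan is to compare the energy functional $E_\sigma(s(t)) := \sum_{\langle i,j\rangle}\sigma_i s_i(t)\sigma_j s_j(t)$ evaluated along a genuine Feynman--Kac spin-flip path with the same functional evaluated along the path obtained by \emph{erasing every flip of a boundary spin}, and then to observe that this erased ensemble is statistically independent of $m_{\partial S}$. Concretely, for a realization $s(t)$ I would define the modified path $s^{(0)}(t)$ by $s^{(0)}_i(t) = 1$ for $i \in \partial S$ and $s^{(0)}_i(t) = s_i(t)$ for $i \notin \partial S$. For every lattice edge $\langle i,j\rangle$ the quantity $\sigma_i s^{(0)}_i(t)\sigma_j s^{(0)}_j(t)$ depends only on the coordinates $\{s_k(t):k\notin\partial S\}$ (if both endpoints are in $\partial S$ it is the constant $\sigma_i\sigma_j$; if exactly one is, it is $\sigma_i\sigma_j s_j(t)$ with $j\notin\partial S$), so $\int_0^1 E_\sigma(s^{(0)}(t))\,dt$ is a function of the processes on $[n]\setminus\partial S$ alone.

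The key pointwise estimate I would establish is that on the event $\{m_{\partial S}(s(t))=m\}$,
\begin{equation}
    \left|\int_0^1 \lr{E_\sigma(s(t)) - E_\sigma(s^{(0)}(t))}\,dt\right| \leq 8m .
\end{equation}
At any fixed time $t$, the configurations $s(t)$ and $s^{(0)}(t)$ differ only on those boundary spins currently equal to $-1$; each such spin has been flipped an odd --- hence positive --- number of times by time $t$, so there are at most $m_{\partial S}(s(t))=m$ of them, and flipping a single spin changes $E_\sigma$ by at most $2\cdot 4 = 8$ on the degree-$\leq 4$ square lattice. Thus $\lvert E_\sigma(s(t))-E_\sigma(s^{(0)}(t))\rvert \leq 8m$ for every $t\in[0,1]$, and integrating gives the displayed bound. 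Exponentiating yields $e^{-8\beta m} \leq e^{\beta\int_0^1 E_\sigma(s(t))\,dt}\big/ e^{\beta\int_0^1 E_\sigma(s^{(0)}(t))\,dt} \leq e^{8\beta m}$ on this event.

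Finally I would take the conditional expectation (given $s(1)=1$ and $m_{\partial S}=m$) that defines $f_m$ of all three sides of this inequality, and set
\begin{equation}
    f := \E{\exp\left[\beta\int_0^1 E_\sigma(s^{(0)}(t))\,dt\right] \;\middle|\; s(1)=1}.
\end{equation}
Since the processes $s_1,\dots,s_n$ are mutually independent and the conditioning $\{s(1)=1\}$ factorizes as $\bigcap_i\{s_i(1)=1\}$, the event $\{m_{\partial S}=m\}$ --- a function of the $\partial S$-coordinates only --- is, after conditioning, independent of the $([n]\setminus\partial S)$-coordinates, which are all that $\int_0^1 E_\sigma(s^{(0)})\,dt$ depends on. Hence $\E{\exp[\beta\int_0^1 E_\sigma(s^{(0)}(t))\,dt]\mid s(1)=1,\,m_{\partial S}=m}=f$ for every $m$, and combining with the pointwise bound gives $e^{-8\beta m}f \leq f_m \leq e^{8\beta m}f$.

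The hard part will not be any single estimate --- the energy bound is just the crude remark that one spin flip perturbs a bounded-degree Ising energy by a bounded amount --- but the careful bookkeeping: unwinding the meaning of $f_m$ as written (the $\delta(m_{\partial S}-m)$ together with the conditional density $\pr{s(t)\mid s(1)=1,m_{\partial S}=m}$ simply restricts and renormalizes the path measure to the sector with exactly $m$ flipped boundary spins), and then checking that the erased ensemble is genuinely $m$-independent after the product conditioning. Once those two points are in place the claim follows immediately.
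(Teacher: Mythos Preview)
Your proposal is correct and follows essentially the same approach as the paper: freeze every spin in $\partial S$ to $+1$, observe that the resulting energy functional depends only on the non-boundary coordinates and is therefore (after the product conditioning $s(1)=1$) independent of $m_{\partial S}$, and bound the pointwise energy discrepancy by $8m$ via the degree-$4$ lattice. The paper arrives at the same $f$ and the same $e^{\pm 8\beta m}$ factors but does so in two explicit $e^{\pm 4\beta m}$ steps (first dropping the $\leq 4m$ edges incident to the flipped set $R$, then reinserting them with frozen spins), whereas you combine these into a single direct comparison --- a cleaner presentation of the same idea.
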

\begin{proof}
    See Appendix~\ref{app:localtfim}.
\end{proof}

The proof is conceptually straightforward: the quantity $f$ corresponds to averaging over spin-flip processes that act as specified by the Feynman-Kac representation, but with the substitution of no flips occurring on any spins in $\partial S$. Since the integral conditions on $m$ spins flipping in $\partial S$, this substitution incurs errors of at most $e^{\pm O(\beta m)}$, producing the claimed result. We are now prepared to compare the support of $\sigma'$ to $\sigma$ in the Gibbs state.

\begin{lemma}[Fault line support, single configuration]
\label{lem:ratiosupport}
    Let $\sigma$ be a configuration with a fault line of length $\ell$ and at most $k$ defects. Let $S$ denote the set of spins on one side of the fault line. Let $\sigma'$ be a configuration with all spins in $S$ flipped, i.e., $\sigma'_i = (-1)^{1_{i \in S}}\sigma_i$. Then for $H$ defined in~\eqref{eq:tfim},
    \begin{equation}
        \exp\left[2\beta\lr{\ell\lr{1 + g_{\beta,h}(-14)}-2k}\right] \leq \frac{\bra{\sigma'}e^{-\beta H}\ket{\sigma'}}{\bra{\sigma}e^{-\beta H}\ket{\sigma}} \leq \exp\left[2\beta\lr{\ell\lr{1 + g_{\beta,h}(14)}-2k}\right].
    \end{equation}
    for
    \begin{equation}
        g_{\beta,h}(a) = a + \frac{1}{\beta}\log\left[1+\sech(\beta h)\lr{e^{-a \beta}-1}\right].
    \end{equation}
\end{lemma}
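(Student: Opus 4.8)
The plan is to evaluate both diagonal matrix elements with the Poisson Feynman--Kac formula of Lemma~\ref{lem:fk} and then reduce the ratio to a controlled average over the spin-flip process. Applied to the diagonal elements $\bra{\sigma}e^{-\beta H}\ket{\sigma}$ and $\bra{\sigma'}e^{-\beta H}\ket{\sigma'}$ --- for which the Hamming-distance factor $(\tanh\beta h)^{0}$ is trivial --- Lemma~\ref{lem:fk} produces a common prefactor $(\cosh\beta h)^n$ that cancels, leaving
\begin{equation}
  \frac{\bra{\sigma'}e^{-\beta H}\ket{\sigma'}}{\bra{\sigma}e^{-\beta H}\ket{\sigma}}
  = \frac{\Big\langle \exp\big[\beta\int_0^1 \sum_{\langle i,j\rangle}\sigma'_i s_i(t)\sigma'_j s_j(t)\,dt\big]\Big\rangle_{\beta h}}
         {\Big\langle \exp\big[\beta\int_0^1 \sum_{\langle i,j\rangle}\sigma_i s_i(t)\sigma_j s_j(t)\,dt\big]\Big\rangle_{\beta h}},
\end{equation}
where both averages run over the same Poisson spin-flip process $s(t)$ conditioned on $s_i(1)=1$ for all $i$. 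I would then compare numerator and denominator integrand by integrand.

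For a fixed realization $s(t)$, Lemma~\ref{lem:midratio} is precisely the pointwise comparison: the $\sigma'$-integrand equals the $\sigma$-integrand times a factor in $[\,e^{2\beta(\ell-2k-6m_{\partial S})},\,e^{2\beta(\ell-2k+6m_{\partial S})}\,]$, where $m_{\partial S}=m_{\partial S}(s(t))$ is the number of $\partial S$-spins that ever flip during $[0,1]$. Hence the ratio lies between $e^{2\beta(\ell-2k)}$ times the average of $e^{-12\beta m_{\partial S}}$ and $e^{2\beta(\ell-2k)}$ times the average of $e^{+12\beta m_{\partial S}}$, where these averages are over the (conditioned) Poisson law tilted by the $\sigma$-integrand; so all that remains is to estimate an exponential moment of $m_{\partial S}$ under this tilted law. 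Decomposing both numerator and denominator by the value $m$ of $m_{\partial S}$, the denominator equals $\sum_m f_m$ with $f_m$ the conditional contribution defined in Lemma~\ref{lem:fm}, and that lemma tells us $f_m$ depends on $m$ only through a bounded factor (within $e^{\pm 8\beta m}$ of the ``no $\partial S$-flips'' reference), so the $\sigma$-integrand only mildly reweights the law of $m_{\partial S}$ away from its free law.

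Under the free law the events ``$\partial S$-spin $i$ flips at least once in $[0,1]$'' are independent (independent Poisson clocks) and each has probability $1-\sech\beta h$ conditioned on $s_i(1)=1$, so $m_{\partial S}$ is $\mathrm{Bin}(|\partial S|,\,1-\sech\beta h)$. Since a fault line of length $\ell$ has $2\sqrt n\le|\partial S|\le 2\ell$, the binomial m.g.f.\ obeys $\mathbb{E}\big[e^{a\beta m_{\partial S}}\big]=\big(\sech\beta h+(1-\sech\beta h)e^{a\beta}\big)^{|\partial S|}\le e^{2\beta\ell\, g_{\beta,h}(a)}$ for $a\ge0$ (and $\ge e^{2\beta\ell\, g_{\beta,h}(a)}$ for $a\le0$), using exactly the identity $e^{\beta g_{\beta,h}(a)}=\sech\beta h+(1-\sech\beta h)e^{a\beta}$ together with the monotonicity of $x\mapsto c^x$ in $x$. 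Combining the $\pm6$-per-flip sensitivity of Lemma~\ref{lem:midratio} with the $\pm8$-per-flip reweighting error of Lemma~\ref{lem:fm} moves the effective m.g.f.\ argument to $\pm14$; plugging into the display yields $\exp[2\beta(\ell(1+g_{\beta,h}(14))-2k)]$ as the upper bound and $\exp[2\beta(\ell(1+g_{\beta,h}(-14))-2k)]$ as the lower bound. As a sanity check, $g_{\beta,h}(0)=0$ and $g_{\beta,h}(\pm14)\to0$ as $h\to0$, so the bound collapses to the exact classical relation $\bra{\sigma'}e^{-\beta H}\ket{\sigma'}/\bra{\sigma}e^{-\beta H}\ket{\sigma}=e^{2\beta(\ell-2k)}$ at $h=0$.

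The main obstacle is the bookkeeping in this last combining step: a crude term-by-term bound of $\sum_m e^{\Theta(\beta)m}$ over all $m\le|\partial S|$ is far too lossy to close against $g_{\beta,h}(14)$, so it is essential to retain the binomial decay $\binom{|\partial S|}{m}(1-\sech\beta h)^m$ that $f_m$ inherits from the Poisson process and to track it through the $e^{\pm8\beta m}$ distortion of Lemma~\ref{lem:fm} without giving up the $|\partial S|\le 2\ell$ control on the exponent. One must also be careful that numerator and denominator are expanded against the same conditioning so the per-$m$ comparisons are legitimate, and to run the symmetric argument for the lower bound, where the inequalities in Lemma~\ref{lem:midratio}, in the m.g.f.\ comparison, and in the $|\partial S|\le 2\ell$ step all reverse.
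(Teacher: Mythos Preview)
Your proposal is correct and follows essentially the same approach as the paper: apply Lemma~\ref{lem:fk} to both diagonal elements, bound the pointwise ratio via Lemma~\ref{lem:midratio}, decompose the resulting tilted expectation according to $m_{\partial S}=m$, use Lemma~\ref{lem:fm} to reduce the $f_m$-weighted average to the free binomial law of $m_{\partial S}$, evaluate the binomial m.g.f., and close with $|\partial S|\le 2\ell$. Your identification $e^{\beta g_{\beta,h}(a)}=\sech\beta h+(1-\sech\beta h)e^{a\beta}$ and the bookkeeping that $\pm6$ from Lemma~\ref{lem:midratio} combines with $\pm8$ from Lemma~\ref{lem:fm} to yield the $\pm14$ argument match the paper's computation exactly.
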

\begin{proof}
    We apply Lemma~\ref{lem:fk}, i.e., the identity
    \begin{equation}
        \bra{\sigma}e^{-\beta H}\ket{\sigma} = \lr{\cosh(\beta h)}^n \left\langle \exp\left[\beta\int_0^1 \sum_{\langle i,j \rangle}^n \sigma_i s_i(t) \sigma_j s_j(t) dt\right] \right\rangle_{\beta h},
    \end{equation}
    where $\langle \cdot \rangle_{\beta h}$ denotes averaging over Poisson point processes conditioned on $s_i(1) = 1$ for all $i \in [n]$. We note that since the choice of $s(t)$ is independent of whether the configuration is $\sigma$ or $\sigma'$, and thus we can expand the expectation as
    \begin{equation}
        \frac{\bra{\sigma'}e^{-\beta H}\ket{\sigma'}}{\bra{\sigma}e^{-\beta H}\ket{\sigma}} = \frac{\int_{s(t)} \cD s(t)\,\pr{s(t) \, | \, s(1) = 1} \exp\left[\beta \int_0^1 dt \sum_{\langle i, j \rangle} \sigma_i' s_i(t) \sigma_j' s_j(t)\right]}{\int_{s(t)} \cD s(t)\,\pr{s(t) \, | \, s(1) = 1} \exp\left[\beta \int_0^1 dt \sum_{\langle i, j \rangle} \sigma_i s_i(t) \sigma_j s_j(t)\right]}.
    \end{equation}
    Letting
    \begin{equation}
        R(s(t)) = \frac{\exp\left[\beta\int_0^1 dt \sum_{\langle i,j \rangle} \sigma_i' s_i(t) \sigma_j' s_j(t)\right]}{\exp\left[\beta\int_0^1 dt \sum_{\langle i,j \rangle} \sigma_i s_i(t) \sigma_j s_j(t)\right]},
    \end{equation}
    this is equivalent to
    \begin{equation}
         \frac{\bra{\sigma'}e^{-\beta H}\ket{\sigma'}}{\bra{\sigma}e^{-\beta H}\ket{\sigma}} = \frac{\int_{s(t)} \cD s(t)\,R(s(t)) \pr{s(t) \, | \, s(1) = 1} \exp\left[\beta \int_0^1 dt \sum_{\langle i, j \rangle} \sigma_i s_i(t) \sigma_j s_j(t)\right]}{\int_{s(t)} \cD s(t)\,\pr{s(t) \, | \, s(1) = 1} \exp\left[\beta \int_0^1 dt \sum_{\langle i, j \rangle} \sigma_i s_i(t) \sigma_j s_j(t)\right]}.
    \end{equation}
    Applying Lemma~\ref{lem:midratio} to bound $R(s)$, we obtain lower bound
    \begin{equation}
    \label{eq:lbmid}
    \begin{split}
        &\frac{\int_{s(t)} \cD s(t)\,\exp\left[2\beta(\ell-2k-6 m_{\partial S}(s(t)))\right] \pr{s(t)\,|\,s(1)=1} \exp\left[\beta\int_0^1 dt \sum_{\langle i,j \rangle} \sigma_i s_i(t) \sigma_j s_j(t)\right]}{\int_{s(t)} \cD s(t)\,\pr{s(t)\,|\,s(1)=1} \exp\left[\beta\int_0^1 dt \sum_{\langle i,j \rangle} \sigma_i s_i(t) \sigma_j s_j(t)\right]} \\
        &\leq \frac{\bra{\sigma'}e^{-\beta H}\ket{\sigma'}}{\bra{\sigma}e^{-\beta H}\ket{\sigma}}
    \end{split}
    \end{equation}
    and analogous upper bound
    \begin{equation}
    \begin{split}
        &\frac{\bra{\sigma'}e^{-\beta H}\ket{\sigma'}}{\bra{\sigma}e^{-\beta H}\ket{\sigma}} \\
        &\leq \frac{\int_{s(t)} \cD s(t)\,\exp\left[2\beta(\ell-2k+6 m_{\partial S}(s(t)))\right] \pr{s(t)\,|\,s(1)=1} \exp\left[\beta\int_0^1 dt \sum_{\langle i,j \rangle} \sigma_i s_i(t) \sigma_j s_j(t)\right]}{\int_{s(t)} \cD s(t)\,\pr{s(t)\,|\,s(1)=1} \exp\left[\beta\int_0^1 dt \sum_{\langle i,j \rangle} \sigma_i s_i(t) \sigma_j s_j(t)\right]}.
    \end{split}
    \end{equation}
    The remaining argument will be symmetric in upper and lower bounds, so we only explicitly compute the lower bound. Observe that we can group time-paths $s(t)$ according to $m_{\partial S}(s(t))$, rewriting the LHS of~\eqref{eq:lbmid} as
    \begin{equation}
        \frac{\sum_{m=0}^{|\partial S|} \exp\left[2\beta(\ell-2k-6 m)\right] \pr{m_{\partial S}(s(t)) = m \,|\,s(1)=1} f_m}{\sum_{m=0}^{|\partial S|} \pr{m_{\partial S}(s(t)) = m\,|\,s(1)=1} f_m} \leq \frac{\bra{\sigma'}e^{-\beta H}\ket{\sigma'}}{\bra{\sigma}e^{-\beta H}\ket{\sigma}},
    \end{equation}
    where $f_m$ is defined as
    \begin{equation}
        f_m = \int \cD s(t)\,\delta\lr{m_{\partial S}(s(t))-m}\pr{s(t) \,|\, s(1)=1, \; m_{\partial S}(s(t))=m} \exp\left[\beta\int_0^1 dt \sum_{\langle i,j \rangle} \sigma_i s_i(t) \sigma_j s_j(t)\right].
    \end{equation}
    Lemma~\ref{lem:fm} allows us to write this bound as
    \begin{equation}
        \frac{\sum_{m=0}^{|\partial S|} \exp\left[2\beta(\ell-2k-14m)\right] \pr{m_{\partial S}(s(t)) = m \,|\,s(1)=1}}{\sum_{m=0}^{|\partial S|} \pr{m_{\partial S}(s(t)) = m\,|\,s(1)=1}} \leq \frac{\bra{\sigma'}e^{-\beta H}\ket{\sigma'}}{\bra{\sigma}e^{-\beta H}\ket{\sigma}}.
    \end{equation}
    We now evaluate $\pr{m_{\partial S}(s(t)) = m \,|\,s(1)=1}$. At time $t=1$, the number of flips is Poisson distributed with rate $\beta h$; it takes value $k$ with probability $(\beta h)^k e^{-\beta h}/k!$. Hence, $\pr{s_i(1)=1} = e^{-\beta h} \sum_{k=0}^\infty \frac{(\beta h)^{2k}}{(2k)!} = e^{-\beta h} \cosh(\beta h)$ for a single spin. If $m$ spins flip at least once out of a total of $|\partial S|$ spins, we have
    \begin{equation}
    \begin{split}
        \pr{m_{\partial S}(s(t)) = m \; \mathrm{and} \; s(1)=1}
        &= \binom{|\partial S|}{m} \lr{e^{-\beta h} \sum_{k=1}^\infty \frac{(\beta h)^{2k}}{(2k)!}}^m \lr{e^{-\beta h}}^{|\partial S| - m} \\
        &= e^{-\beta h |\partial S|} \binom{|\partial S|}{m} \lr{\cosh(\beta h)-1}^m.
    \end{split}
    \end{equation}
    This gives
    \begin{equation}
        \pr{m_{\partial S}(s(t)) = m \,|\,s(1)=1} = \frac{\binom{|\partial S|}{m} \lr{\cosh(\beta h)-1}^m}{\lr{\cosh(\beta h)}^{|\partial S|}}
    \end{equation}
    and hence
    \begin{equation}
        \sum_{m=0}^{|\partial S|} \exp\left[2\beta(\ell-2k-14m)\right] \frac{\binom{|\partial S|}{m} \lr{\cosh(\beta h)-1}^m}{\lr{\cosh(\beta h)}^{|\partial S|}} \leq \frac{\bra{\sigma'}e^{-\beta H}\ket{\sigma'}}{\bra{\sigma}e^{-\beta H}\ket{\sigma}}.
    \end{equation}
    Evaluating the sum, and repeating the analysis for the upper bound, gives
    \begin{equation}
    \begin{split}
        \lr{\sech(\beta h) + e^{-a \beta}\lr{1 - \sech(\beta h)}}^{|\partial S|} &\leq \frac{\bra{\sigma'}e^{-\beta H}\ket{\sigma'}}{\bra{\sigma}e^{-\beta H}\ket{\sigma}} e^{-2\beta(\ell-2k)} \\
        &\leq \lr{\sech(\beta h) + e^{a \beta}\lr{1 - \sech(\beta h)}}^{|\partial S|}        
    \end{split}
    \end{equation}
    for $a = 14$. Finally, since $2\sqrt n \leq |\partial S| \leq 2\ell$, and since the lower bound is always less than 1 and the upper bound is always larger than 1, we can loosen the bounds to give
    \begin{equation}
        \exp\left[2\beta\lr{\ell\lr{1 + g_{\beta,h}(-14)}-2k}\right] \leq \frac{\bra{\sigma'}e^{-\beta H}\ket{\sigma'}}{\bra{\sigma}e^{-\beta H}\ket{\sigma}} \leq \exp\left[2\beta\lr{\ell\lr{1 + g_{\beta,h}(14)}-2k}\right].
    \end{equation}
    for
    \begin{equation}
        g_{\beta,h}(a) = a + \frac{1}{\beta}\log\left[1+\sech(\beta h)\lr{e^{-a \beta}-1}\right].
    \end{equation}
\end{proof}

The above lemma shows, roughly speaking, that for small enough $h$ and large enough $\beta$, every configuration containing a fault line with few defects corresponds to another configuration with $e^{\Theta(\sqrt n)}$ more support in the Gibbs state. The following lemma then sums over configurations with fault lines and few defects to show that these configurations must be exponentially rare in the Gibbs state, allowing us to form a bottleneck.

\begin{lemma}[Fault line support, all configurations]
\label{lem:faultsupport}
    Let $\Pi$ denote the projector onto all configurations $\sigma \in \{\pm 1\}^n$ such that $\sigma$ contains a fault line with at most $k$ defects. Then for thermal state $\rho_\beta$ of the Hamiltonian of~\eqref{eq:tfim} with transverse field strength $h$, if there exists $\kappa > 0$ such that the following inequalities are satisfied
    \begin{equation}
        0 \leq h < \frac{1}{\beta}\acosh\lr{\frac{e^\beta\lr{e^{a \beta} - 1}}{e^{\beta(a+1)} - \sqrt{3} e^{2\beta \kappa}}}, \quad \beta \geq \frac{\log(3)}{2(1-2\kappa)} > 0, \quad k \leq \kappa \sqrt n
    \end{equation}
    for $a=-14$, then there exists constant $c_0$ such that $\Tr(\Pi \rho_\beta) \leq e^{-c_0 \sqrt n}$.
\end{lemma}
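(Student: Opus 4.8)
The plan is a Peierls-type counting argument whose only nontrivial analytic input is the single-configuration estimate of Lemma~\ref{lem:ratiosupport}. By Lemma~\ref{lem:fk}, $e^{-\beta H}$ has only nonnegative entries in the computational basis, so $\Tr[\Pi e^{-\beta H}] = \sum_{\sigma}\bra{\sigma}e^{-\beta H}\ket{\sigma}$, where $\sigma$ ranges over all configurations in $\{\pm 1\}^n$ that contain a fault line with at most $k$ defects. For each such $\sigma$, select a canonically chosen fault line $\gamma(\sigma)$ with at most $k$ defects (e.g.\ the lexicographically first one with respect to a fixed ordering of lattice paths), together with a side $S(\gamma)$ \emph{depending only on the path $\gamma$}: a self-avoiding crossing of the $\sqrt n\times\sqrt n$ face array splits it into two nonempty components, and we take, say, the one containing the top-left face. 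Let $\Psi(\sigma)$ denote $\sigma$ with every spin in $S(\gamma(\sigma))$ flipped. The combinatorial heart of the argument is that $\sigma\mapsto(\gamma(\sigma),\Psi(\sigma))$ is injective, since $\sigma$ is recovered from $\gamma(\sigma)$ and $\Psi(\sigma)$ by re-flipping the spins in $S(\gamma(\sigma))$; equivalently, for each fixed path $\gamma$ the map $\sigma\mapsto\Psi(\sigma)$ restricted to $\{\sigma:\gamma(\sigma)=\gamma\}$ is the restriction of a global involution and hence injective on that set.

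Next I would sum and count. A crossing fault line of the lattice has length $\ell\ge\sqrt n$, and the number of self-avoiding crossing paths of length $\ell$ is at most $C\sqrt n\,3^\ell$ for an absolute constant $C$ (choose one of $O(\sqrt n)$ boundary starting vertices, then take a self-avoiding walk with at most three continuations at each step). Lemma~\ref{lem:ratiosupport} with $a=-14$ gives, for any $\sigma$ whose canonical fault line has length $\ell$,
\begin{equation}
    \bra{\sigma}e^{-\beta H}\ket{\sigma}\;\le\;\exp\!\big[-2\beta\big(\ell(1+g_{\beta,h}(-14))-2k\big)\big]\,\bra{\Psi(\sigma)}e^{-\beta H}\ket{\Psi(\sigma)}.
\end{equation}
Grouping the sum over $\sigma$ by $\gamma(\sigma)$, using injectivity of $\Psi$ on each group together with $\bra{\tau}e^{-\beta H}\ket{\tau}\ge 0$, one gets $\sum_{\sigma:\gamma(\sigma)=\gamma}\bra{\Psi(\sigma)}e^{-\beta H}\ket{\Psi(\sigma)}\le \Tr(e^{-\beta H})=Z(\beta)$ for each $\gamma$, hence
\begin{equation}
    \Tr(\Pi\rho_\beta)\;\le\;C\sqrt n\;e^{4\beta k}\sum_{\ell\ge\sqrt n}\big(3\,e^{-2\beta(1+g_{\beta,h}(-14))}\big)^\ell.
\end{equation}

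Finally I would check that this geometric series converges at the right rate. Substituting $k\le\kappa\sqrt n$ to replace $e^{4\beta k}$ by $e^{4\beta\kappa\sqrt n}$, and using the identity $e^{2\beta(1+g_{\beta,h}(-14))}=e^{-26\beta}\big(1+\sech(\beta h)(e^{14\beta}-1)\big)^2$, the two remaining hypotheses of the lemma are together exactly equivalent — after elementary rearrangement and the identity $\sech^{-1}(c)=\acosh(1/c)$ for $c\in(0,1]$ — to the single inequality $q:=3\,e^{4\beta\kappa}e^{-2\beta(1+g_{\beta,h}(-14))}<1$; in particular the constraint $\beta\ge\frac{\log 3}{2(1-2\kappa)}$ is precisely what forces the argument of the $\acosh$ appearing in the statement to be at least $1$ (so that the $h$-bound is meaningful). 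Under $q<1$ one also has $q':=3\,e^{-2\beta(1+g_{\beta,h}(-14))}=q\,e^{-4\beta\kappa}<1$, so $e^{4\beta\kappa\sqrt n}\sum_{\ell\ge\sqrt n}(q')^\ell\le e^{4\beta\kappa\sqrt n}\frac{(q')^{\sqrt n}}{1-q'}=\frac{q^{\sqrt n}}{1-q'}$, and therefore $\Tr(\Pi\rho_\beta)\le\frac{C\sqrt n}{1-q'}\,q^{\sqrt n}$. For any constant $c_0$ with $0<c_0<-\log q = 2\beta(1+g_{\beta,h}(-14))-4\beta\kappa-\log 3$, the right-hand side is at most $e^{-c_0\sqrt n}$ once $n$ is large enough, which is the claim.

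I expect the main obstacle to be the Peierls bookkeeping rather than any individual estimate: one must make the side of a fault line a function of the path alone, choose the canonical fault line consistently across all $\sigma$, and make sure that passing from the sum over $\sigma$ to a sum over $(\gamma,\Psi(\sigma))$ is an over-count (never an under-count), so that everything collapses to the single inequality $q<1$. The self-avoiding-walk count $C\sqrt n\,3^\ell$ and the algebraic identification of the stated hypotheses with $q<1$ are routine, but the constants ($a=-14$, the factor $\sqrt{3}$, and the $e^{2\beta\kappa}$ term) must line up exactly, so this step bears careful checking.
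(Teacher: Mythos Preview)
Your proposal is correct and follows essentially the same Peierls argument as the paper: apply Lemma~\ref{lem:ratiosupport} to each configuration (or equivalently to the projector onto configurations containing a given path), bound the number of length-$\ell$ crossing paths by $O(\sqrt n\cdot 3^\ell)$, and reduce the conclusion to the single inequality $q=3e^{4\beta\kappa}e^{-2\beta(1+g_{\beta,h}(-14))}<1$, which you correctly identify as equivalent to the stated hypotheses. Your bookkeeping via the injective map $\sigma\mapsto(\gamma(\sigma),\Psi(\sigma))$ is a slightly more explicit version of the paper's use of $\Tr(\Pi'_\pi\rho_\beta)\le 1$, and your choice to bound the defect count by $k$ directly (rather than summing over $j=0,\dots,k$ as the paper does) is a harmless simplification that does not affect the exponential rate.
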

\begin{proof}
Let $\pi$ be a self-avoiding path of length $\ell$ with $j$ defects. Let $\Pi_\pi$ be the projector onto all configurations $\{\sigma\}$ containing $\pi$, and let $\Pi'_\pi$ be the projector onto all configurations $\{\sigma'\}$ such that all the spins on one side of $\pi$ are flipped (as in Lemma~\ref{lem:ratiosupport}). Then Lemma~\ref{lem:ratiosupport} gives
\begin{equation}
    \frac{\Tr(\Pi'_\pi \rho_\beta)}{\Tr(\Pi_\pi \rho_\beta)} \geq e^{2\beta(\ell(1+g_{\beta,h}(a)) - 2j)}, \quad a=-14.
\end{equation}
Since $\Tr(\Pi'_\pi \rho_\beta) \leq 1$ by normalization,
\begin{equation}
    \Tr(\Pi_\pi \rho_\beta) \leq e^{-2\beta(\ell(1+g_{\beta,h}(a)) - 2j)}.
\end{equation}
Let $\alpha_\ell$ be the number of self-avoiding lattice paths starting from the origin in $\mathbb{Z}^2$ that have length $\ell$. The number of self-avoiding paths from left to right of length $\ell$ is at most $(\sqrt{n}+1)\alpha_\ell$, and similarly for top to bottom. Since each configuration is orthogonal, the support of the projector $\Pi$ onto all configurations with at most $k$ defects is
\begin{equation}
    \Tr(\Pi \rho_\beta) \leq 2\sum_{\ell \geq \sqrt n}\sum_{j=0}^k (\sqrt{n}+1)\alpha_\ell e^{-2\beta(\ell(1+g_{\beta,h}(a)) - 2j)}.
\end{equation}
We substitute $\alpha_\ell \leq 3^\ell$ for a planar square lattice to give, for $k = \kappa \sqrt n$,
\begin{equation}
    \Tr(\Pi \rho_\beta) \leq \frac{2(\sqrt{n}+1)}{\lr{1-3e^{-2\beta(1+g_{\beta,h}(a))}}\lr{1-e^{-4\beta}}} \exp\left[\sqrt n\lr{\log 3 - 2\beta (1+g_{\beta,h}(a))}\right]\lr{\exp[4\beta \kappa \sqrt n]-e^{-4\beta}}.
\end{equation}
The proof is completed by confirming $\log 3 + 2\beta(2\kappa - (1+g_{\beta,h}(a))) < 0$ for the choice of parameters in the lemma statement, i.e.,
\begin{equation}
	0 \leq h < \frac{1}{\beta}\acosh\lr{\frac{e^\beta\lr{e^{a \beta} - 1}}{e^{\beta(a+1)} - \sqrt{3} e^{2\beta \kappa}}}
\end{equation}
for $\beta \geq \log(3)/(2(1-2\kappa))$ with $a=-14$.
\end{proof}

So far, we have shown that configurations with fault lines form a bottleneck; in the language of the slow mixing result (Corollary~\ref{cor:slow}), this is the region $B$. We now must show that at constant temperature, there exist two regions $A$ and $C$ with large support in the thermal state such that $A$ and $C$ are further apart than the range of the Lindbladian. Moreover, we must show that traveling between $A$ and $C$ must pass through the bottleneck.

We begin by constructing regions $A$ and $C$ from various paths on the 2D model in the computational basis. First, we require some (slight generalization of) standard properties of these paths~\cite{randall2006slow,levin2017markov}. Recall that a lattice path refers to a collection of edges, where each edge is adjacent to a site on either side of the fault line (as in Fig.~\ref{fig:fault}).

\begin{lemma}[Properties of paths (similar to Lemma 15.17~\cite{levin2017markov})]
\label{lem:pathprops}
    For a 2D $\sqrt{n} \times \sqrt{n}$ lattice with configuration $\sigma \in \{\pm 1\}^n$, the following properties hold.
    \begin{enumerate}
        \item If in $\sigma$ there is neither an all-plus nor an all-minus crossing from the left to the right side of the lattice, then there is a fault line with no defects from the top to the bottom.
        \item Let $\Gamma_+$ be a path of lattice sites such that $\sigma_i = 1$ for all $i \in \Gamma_+$, and $\Gamma_+$ starts at site $q$ and ends at a site at the top of the lattice. Similarly define $\Gamma_-$ from $q'$ to the top of the lattice, where $\sigma_i = -1$ for all $i \in \Gamma_-$. Assume $q$ and $q'$ are on the same row of the lattice, and let $\Gamma_{qq'}$ be the horizontal path of sites directly joining $q$ and $q'$. Then there exists a lattice path $\xi$ from the boundary of $\Gamma_{qq'}$ to the top of the lattice such that every edge in $\xi$ is adjacent to two lattice sites with different labels in $\sigma$.
    \end{enumerate}
\end{lemma}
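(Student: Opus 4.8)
The plan is to prove both parts by the planar‑duality / contour arguments standard in Peierls‑type lower bounds, following the proof of Lemma 15.17 of~\cite{levin2017markov} (see also~\cite{randall2006slow}) but adapting it to free boundary conditions and to an arbitrary pair of monochromatic paths. Throughout I identify the $\sqrt n \times \sqrt n$ array of spins with the unit squares of $[0,\sqrt n]^2$, so that a ``lattice path'' is a self‑avoiding path in the grid graph whose vertices are the corners of these squares, and I call an interior grid edge \emph{active} if the two squares it borders carry opposite spins. A defect‑free fault line is then exactly a self‑avoiding path of active edges joining two opposite sides of the square, which is why boundary edges of $[0,\sqrt n]^2$ (bordering only one square) are excluded.

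For Part 1, let $R$ be the union of all monochromatic connected components of squares (under edge‑adjacency) that touch the left side of the lattice. Since $\sigma$ has neither an all‑plus nor an all‑minus left‑to‑right crossing, $R$ does not touch the right side. The key observation is that every interior grid edge separating a square of $R$ from a square not in $R$ is active: if the two bordering squares had equal spin, the one outside $R$ would lie in the same monochromatic component as its neighbour in $R$, hence in $R$ — a contradiction. Now $R$ is a region of squares containing the entire left column and disjoint from the right column; by the elementary planar‑separation fact (the boundary of such a region must contain a lattice path joining the top side to the bottom side, which can moreover be taken self‑avoiding and internally disjoint from $\partial([0,\sqrt n]^2)$) we obtain a defect‑free top‑to‑bottom fault line.

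For Part 2, first splice $\Gamma_+$, $\Gamma_{qq'}$, $\Gamma_-$ at any mutual intersections so that their concatenation is a self‑avoiding path running from a top site, down to $q$, across to $q'$, and up to another top site; together with the arc of the top boundary between its two endpoints this path encloses a region $D$ of squares. Inside $D$ there can be no monochromatic path of squares joining $\Gamma_+$ to $\Gamma_-$, since such a path would be all $+1$ or all $-1$, contradicting the labels on $\Gamma_\pm$. Repeating the argument of Part 1 inside $D$ — with $\Gamma_+$ and $\Gamma_-$ playing the role of the ``left/right sides'' and $\Gamma_{qq'}$ and the top boundary playing the role of the ``top/bottom sides'' — the interface edges separating the squares reachable from $\Gamma_+$ within $D$ from the rest of $D$ are all active and contain a self‑avoiding lattice path $\xi$ from a vertex on the boundary of $\Gamma_{qq'}$ to a vertex on the top side; by construction every edge of $\xi$ borders two squares of opposite spin, as required.

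The main obstacle is entirely combinatorial‑topological: making the ``planar‑separation fact'' precise — that the boundary of a region of squares spanning one pair of opposite sides must contain a self‑avoiding lattice path spanning the other pair — together with the bookkeeping at the four corners and along the outer boundary (where edges cannot be active), the self‑avoiding property of the extracted path, and the possibility that $R$ (or its analogue inside $D$) is disconnected or that $\Gamma_+,\Gamma_-$ sit awkwardly inside the square. Everything else (that the $R$/non‑$R$ interface edges are active, and that $D$ admits no monochromatic $\Gamma_+$–$\Gamma_-$ connection) is immediate from the definitions. One may instead carry out the separation step via the medial graph or via a winding/continuity argument along $\partial R$; I will follow whichever is cleanest and closest to the presentation of~\cite{levin2017markov,randall2006slow}.
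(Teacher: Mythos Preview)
Your proposal is correct and follows essentially the same contour/reachability argument as the paper: for Part~1 both you and the paper take the union of monochromatic components touching the left side (the paper's $A$, your $R$), fill holes if needed, and read off the fault line from its interface with the complement; for Part~2 both extract $\xi$ from the boundary of the plus-reachable set from $\Gamma_+$. The only cosmetic difference is that in Part~2 you first confine to the enclosed region $D$ and rerun the Part~1 argument there, whereas the paper works in the full lattice with the hole-filled set $A_+^*$ and uses $A_+^*\cap A_-^*=\varnothing$ to force termination on the top boundary---the underlying planar-separation step (which you correctly flag as the main obstacle) is identical and is left at the same level of detail in both.
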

\begin{proof}
    See Appendix~\ref{app:localtfim}.
\end{proof}

\begin{corollary}
\label{cor:defect3}
    Let $(A, B, C)$ be $c_0$-regions of the $\sqrt n \times \sqrt n$ Ising model. Every $\sigma \in A$ and $\sigma' \in C$ satisfies $|\sigma - \sigma'| \geq c_0\sqrt n - 3$.
\end{corollary}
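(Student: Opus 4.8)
The plan is to deduce the corollary from Lemma~\ref{lem:pathprops} by a quantitative Peierls-type argument in the spirit of~\cite{randall2006slow,levin2017markov}, using the disagreement set $D=\{i:\sigma_i\neq\sigma'_i\}$ as the pool of ``defects''. The claim is exactly $|D|\geq c_0\sqrt n-3$. If $\sigma\notin S_+$ or $\sigma'\notin S_-$ the statement is vacuous (such configurations are not in $A$ or $C$), so assume $\sigma\in S_+$ and $\sigma'\in S_-$; we may also assume $|D|<\sqrt n$ (and $c_0\leq 1$), since otherwise the bound is immediate.

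First I record the basic consequence of the crossing hypotheses: the agreement--plus region $P=\{i:\sigma_i=\sigma'_i=+1\}$ has no left-to-right or top-to-bottom crossing, because any such crossing of $P$ would be an all-plus crossing of $\sigma'$, and an all-plus and an all-minus crossing in perpendicular directions must intersect, contradicting $\sigma'\in S_-$. Symmetrically the agreement--minus region $M=\{i:\sigma_i=\sigma'_i=-1\}$ has no crossing. Equivalently, every top-to-bottom all-minus crossing of $\sigma'$ must meet $D$ (else it would be an all-minus crossing of $\sigma$, impossible since $\sigma\in S_+$), and likewise for $\sigma$'s plus crossings and $D$.

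The core step is to construct a top-to-bottom fault line $\xi$ in $\sigma$ with at most $|D|+3$ defects. Fix a top-to-bottom all-plus crossing $\Gamma$ of $\sigma$ (exists since $\sigma\in S_+$) and a top-to-bottom all-minus crossing $\Lambda$ of $\sigma'$ (exists since $\sigma'\in S_-$). Since $\Lambda$ meets all $\sqrt n$ rows and $|D\cap\Lambda|\leq|D|<\sqrt n$, there is a row $r$ on which some square $q'\in\Lambda$ satisfies $q'\notin D$; pick $q\in\Gamma$ on the same row $r$, so that $q$ is $+1$ and $q'$ is $-1$ in $\sigma$. The upper (resp.\ lower) portion of $\Gamma$ is a plus-path in $\sigma$ from $q$ to the top (resp.\ bottom), and the upper (resp.\ lower) portion of $\Lambda$ is a path from $q'$ to the top (resp.\ bottom) that is $-1$ in $\sigma$ except on $D\cap\Lambda$. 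Feeding these into Lemma~\ref{lem:pathprops}(2), in the \emph{robust} form where each $D$-square encountered along the ``minus'' path forces at most one extra defect edge, which is charged to that square, produces a dual path from the horizontal segment $\Gamma_{qq'}$ to the top whose defect edges are injectively charged to squares of $D\cap\Lambda$ above row $r$, and likewise a dual path from $\Gamma_{qq'}$ to the bottom charged to squares of $D\cap\Lambda$ below row $r$; splicing them through the plus/minus transition on the segment $\Gamma_{qq'}$ (which exists since $q$ is $+1$ and $q'$ is $-1$ in $\sigma$) yields the top-to-bottom fault line $\xi$, whose only defects not charged to $D$ lie at the two boundary ends and the one splice junction, hence at most $3$ of them. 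Therefore $\xi$ has at most $|D|+3$ defects.

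Finally, since $\sigma\in A$, every fault line of $\sigma$ has at least $c_0\sqrt n$ defects, so $c_0\sqrt n\leq|D|+3$, i.e.\ $|\sigma-\sigma'|=|D|\geq c_0\sqrt n-3$. (By symmetry one could instead have built the fault line in $\sigma'$ and invoked $\sigma'\in C$.) I expect the main obstacle to be exactly the robust version of Lemma~\ref{lem:pathprops}(2): one must run the classical planar-duality construction of the fault line in such a way that each forced defect edge is attributable to a distinct square of $D$, so that only an additive $O(1)$ slack is lost at the endpoints and the splice, rather than the crude $O(|D|)$ slack a naive bound would give.
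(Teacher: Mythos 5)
Your proof takes a genuinely different route from the paper's. The paper works along a single-spin-flip Hamming path from $\sigma$ to $\sigma'$, locates an intermediate configuration $\sigma_j \in \partial S_+$, and, because $\sigma_j$ is only one spin flip away from $S_+$, applies Lemma~\ref{lem:pathprops}(2) starting from a single critical square $q$, so that the two no-defect half-paths are spliced through at most the three edges of $q$. You instead work directly with the disagreement set $D$ and try to build a fault line for $\sigma$ itself, feeding $\sigma'$'s minus crossing $\Lambda$ (mostly minus in $\sigma$, plus on $D\cap\Lambda$) into the lemma as the ``barrier'' path. That is a cleaner framing and would give the stated constant if it worked.

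The gap is exactly where you flag it, and it is real. In the paper's proof of Lemma~\ref{lem:pathprops}(2) (Appendix~\ref{app:localtfim}), the minus path $\Gamma_-$ is used only to guarantee $A_+^* \cap A_-^* = \varnothing$, which is what forces the dual path to terminate at the correct place on the top boundary; the dual path itself, being the boundary of $A_+^*$, has \emph{no} defects at all, not ``at most one per $D$-square''. Once $\Lambda$ is allowed plus squares, the plus-reachable set $A_+$ can spill through $D\cap\Lambda$, the disjointness argument fails, and it is no longer established that the boundary of $A_+^*$ terminates where you need it, nor how any resulting defects are to be injectively charged to $D$-squares. A second issue, not flagged in your writeup: in the paper's setup $q$ and $q'$ are adjacent (a single spin flip), so the splice at $\Gamma_{qq'}$ costs only the $\leq 3$ edges of one square, whereas your $q$ and $q'$ are merely on the same row and may be far apart, so the splice along $\Gamma_{qq'}$ could itself contribute many defects unless you separately argue that the upper and lower dual paths can be made to start at nearby edges. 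Until the robust lemma and the splice are actually proved, the claimed bound of $|D|+3$ defects for a fault line of $\sigma$ is an assertion, not a proof.
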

\begin{proof}
    See Appendix~\ref{app:localtfim}.
\end{proof}

We now construct regions $A$ and $C$ such that a sufficiently local Gibbs sampler mixes slowly. The following argument is similar to that of~\cite{randall2006slow}, but we use the Feynman-Kac results above to control the effect of the transverse field. 

\begin{lemma}[Bottleneck of 2D transverse field Ising model]
\label{lem:tfimbottleneck}
    Let $\rho_\beta$ be the thermal state of the Hamiltonian of~\eqref{eq:tfim} with transverse field strength $h$. For all $R, h, \beta$ satisfying
    \begin{equation}
        1 \leq R \leq \frac{\sqrt n}{16}\lr{1 - \frac{\log(3)}{2\beta}}, \; 0 \leq h < \frac{1}{\beta}\acosh\lr{\frac{e^\beta\lr{1 - e^{-14 \beta}}}{\sqrt{3} e^{16\beta R/\sqrt n} - e^{-13 \beta}}}, \; \beta \geq \frac{\log(3)}{2(1-16\ell/\sqrt n)},
    \end{equation}
    there exists a constant $c_0$ and orthogonal projectors $\Pi_A, \Pi_C$ and $\Pi_B = I - \Pi_A - \Pi_C$ such that
    \begin{equation}
        \Tr(\Pi_A \rho_\beta) \geq \frac{1}{2} - e^{-c_0\sqrt n}, \quad \Tr(\Pi_C \rho_\beta) \geq \frac{1}{2} - e^{-c_0\sqrt n}, \quad \Tr(\Pi_B \rho_\beta) \leq e^{-c_0 \sqrt n}.
    \end{equation}
    Moreover, for any operator $O$ acting on any geometrically local $R \times R$ region of qubits,
    \begin{equation}
        \Tr(\Pi_A O \Pi_C) = 0.
    \end{equation}
\end{lemma}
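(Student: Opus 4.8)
\emph{Proof plan.} I would take $(A,B,C)$ to be the $c_0$-regions defined above, for a constant $c_0$ with $6R/\sqrt n < c_0 \le 8R/\sqrt n$ (such $c_0$ exists precisely because the hypothesis $R\le\frac{\sqrt n}{16}(1-\frac{\log 3}{2\beta})$ forces $8R/\sqrt n$ to be a small constant, and one shrinks $c_0$ slightly inside this window so that the decay rate returned by \Cref{lem:faultsupport} is itself at least $c_0$). The projectors $\Pi_A,\Pi_B,\Pi_C$ are the associated diagonal projectors, which are orthogonal and satisfy $\Pi_B = I-\Pi_A-\Pi_C$ by construction. The first task is the bottleneck estimate. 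I claim $B$ is contained in the set of configurations having a fault line with at most $c_0\sqrt n$ defects. Indeed, if $\sigma\in S_+\setminus A$ (or $S_-\setminus C$) this holds by the definition of $A$ (or $C$); and if $\sigma\notin S_+\cup S_-$, then — using that an all-plus top-to-bottom crossing and an all-minus left-to-right crossing cannot coexist on a planar square lattice — a short case check shows $\sigma$ has no monochromatic crossing in at least one of the two axis directions, so by \Cref{lem:pathprops}(1) it contains a \emph{defect-free} fault line. Applying \Cref{lem:faultsupport} with defect budget $k=\lfloor c_0\sqrt n\rfloor$ and $\kappa$ chosen so that $2\beta\kappa$ matches the exponent $16\beta R/\sqrt n$ appearing in the hypotheses (this is exactly how the $h$- and $\beta$-ranges here line up with those of \Cref{lem:faultsupport}) yields $\Tr(\Pi_B\rho_\beta)\le e^{-c_0\sqrt n}$.

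Next I would handle the large-support estimates for $A$ and $C$ by symmetry. The global spin-flip $U=\prod_{i=1}^n X_i$ commutes with $H$ (it fixes each $Z_iZ_j$ and each $X_i$), so $U\rho_\beta U^\dagger=\rho_\beta$; and $U\ket\sigma = \ket{-\sigma}$ maps $S_+\leftrightarrow S_-$ while preserving every fault line together with its defect count, so $U\Pi_AU^\dagger=\Pi_C$. Hence $\Tr(\Pi_A\rho_\beta)=\Tr(\Pi_C\rho_\beta)$, and since $\Pi_A+\Pi_B+\Pi_C=I$ we get $2\Tr(\Pi_A\rho_\beta)=1-\Tr(\Pi_B\rho_\beta)\ge 1-e^{-c_0\sqrt n}$, i.e.\ $\Tr(\Pi_A\rho_\beta),\Tr(\Pi_C\rho_\beta)\ge \tfrac12-e^{-c_0\sqrt n}$.

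The remaining — and main — task is the locality claim. Since $O$ is supported on an $R\times R$ box $Q$, the matrix element $\bra\sigma O\ket{\sigma'}$ vanishes unless $\sigma$ and $\sigma'$ agree on every qubit outside $Q$, so it suffices to show that no $\sigma\in A$ and $\sigma'\in C$ can agree outside an $R\times R$ box; this gives $\Pi_A O\Pi_C=0$ and in particular $\Tr(\Pi_A O\Pi_C)=0$. Suppose to the contrary $\sigma=\sigma'$ on $Q^c$. Since $R<\sqrt n$, $Q$ fails to touch at least one of the top/bottom sides of the lattice and at least one of the left/right sides; say it avoids the bottom and the right. Now $\sigma\in S_+$ has plus crossings in both directions and $\sigma'\in S_-$ has minus crossings in both directions; because each plus crossing of $\sigma$ must intersect each minus crossing of $\sigma'$, and these intersections can only occur inside $Q$ (outside $Q$ the two configurations agree, which would force a plus site to equal a minus site), one deduces that both a plus path and a minus path of $\sigma$ reach the boundary cycle $\partial Q$. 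Tracing the interface of $\sigma$ between its plus region and its minus region outward from $\partial Q$, extending the plus/minus paths into genuine domain walls reaching the lattice boundary via \Cref{lem:pathprops}(2), and routing across $Q$, produces a \emph{spanning} fault line of $\sigma$ whose defects are confined to a crossing of $Q$ and thus number only $O(R)$. By our choice $c_0>6R/\sqrt n$ this is fewer than $c_0\sqrt n$ defects, contradicting $\sigma\in A$. (A first-pass version of this step — sufficient when $R=O(n^{1/4})$ — already follows from the Hamming-distance bound of \Cref{cor:defect3}; the topological argument above is what extends it to $R$ up to $\Theta(\sqrt n)$.)

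\emph{Main obstacle.} The two trace estimates and the symmetry argument are essentially routine once the case analysis and the $\kappa\leftrightarrow R/\sqrt n$ bookkeeping are set up. The real work — and where I expect the effort to concentrate — is making the topological step precise: showing that confining the difference region $\{\sigma\neq\sigma'\}$ to an $R\times R$ box forces a \emph{lattice-spanning} fault line in $\sigma$ (or $\sigma'$) whose defect count is only $O(R)$, rather than the a priori $O(R^2)$ one might fear, and doing so with explicit constants compatible with the $h$ and $\beta$ windows inherited from \Cref{lem:faultsupport}. This requires combining \Cref{lem:pathprops} with careful Jordan-curve-type reasoning in the spirit of the classical fault-line analysis of~\cite{randall2006slow}, now adapted to local operators of finite spatial extent $R$ rather than single-site updates.
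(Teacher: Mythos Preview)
Your proposal is correct and follows essentially the same route as the paper: take the $c_0$-regions with $c_0\sim R/\sqrt n$, use the case analysis via \Cref{lem:pathprops}(1) plus global spin-flip symmetry to reduce the trace estimates to \Cref{lem:faultsupport}, and for the locality claim argue that if an $R\times R$ modification takes $\sigma\in S_+$ to $\sigma'\in S_-$ then \Cref{lem:pathprops}(2) produces interface paths from the box to the top and bottom boundaries which, joined across the box, give a fault line with $O(R)$ defects---contradicting membership in $A$. The paper's constants differ slightly (it takes $c_0=4R/\sqrt n$ and bounds the defect count by $2(R+1)\le 4R$ rather than your $6R$), but the structure and the key topological step you flag as the ``main obstacle'' are handled exactly as you describe.
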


\begin{proof}
Let $(A, B, C)$ be $(4R/\sqrt n)$-regions of the Ising model.
By the first property of Lemma~\ref{lem:pathprops}, all configurations outside $S_+ \cup S_-$ contain a fault line with no defects due to the lack of a monochromatic line either left-to-right or top-to-bottom. Moreover, note that $A$ and $C$ have at least $c_0\sqrt n$ defects on all fault lines and are related by a spin flip symmetry. Consequently, $\Tr[\Pi_A \rho_\beta] = \Tr[\Pi_C \rho_\beta]$ while $\Tr[\Pi_B \rho_\beta] = 1 - 2\Tr[\Pi_A \rho_\beta]$ is given directly by Lemma~\ref{lem:faultsupport} (with $\kappa=8R/\sqrt n$) to be $\Tr[\Pi_B \rho_\beta] = e^{-\Omega(\sqrt n)}$ for parameters
\begin{equation}
    0 \leq h < \frac{1}{\beta}\acosh\lr{\frac{e^\beta\lr{e^{a \beta} - 1}}{e^{\beta(a+1)} - \sqrt{3} e^{16\beta R/\sqrt n}}}, \quad \beta \geq \frac{\log(3)}{2(1-16R/\sqrt n)} > 0,
\end{equation}
i.e., the regions satisfy $\Tr(\Pi_A \rho_\beta) = \Tr(\Pi_C \rho_\beta) \geq 1/2 - e^{-\Omega(\sqrt n)}$. Here, to obtain valid bounds on $h$, we require $\kappa \leq \frac{1}{2}\lr{1 - \frac{\log(3)}{2\beta}}$, i.e.,
\begin{equation}
	R \leq \frac{\sqrt n}{16}\lr{1 - \frac{\log(3)}{2\beta}}
\end{equation}
by enforcing the inequality $h \geq 0$. We take $a=-14$ per Lemma~\ref{lem:faultsupport}.

Let $\partial_R S_+$ denote the set of configurations obtained by modifying an $R \times R$ region $\Lambda_R$ of spins of a configuration contained in $S_+$, excluding configurations that remain in $S_+$. (This set denotes the configurations that will be reachable by a local Gibbs sampler in one step.) Let $\sigma \in \partial_R S_+$. We claim that the support of $\partial_R S_+$ is superpolynomially suppressed in $n$. If $\sigma \notin S_-$, then by Lemma~\ref{lem:pathprops} it contains a fault line with no defects and has support at most $e^{-c_0\sqrt n}$ in the Gibbs state. If $\sigma \in S_-$, we wish to show that it still contains a fault line with few defects. Since $S_+ \cap S_- = \varnothing$, the region $\Lambda_R$ must both complete $S_-$ and interrupt $S_+$; it must contain the center of the minus cross, and only modifying spins in $\Lambda_R$ must produce the center of the plus cross. Hence, there exists a path $\Gamma_+$ from the top of $\Lambda_R$ to the top of the lattice such that $\sigma_i = 1$ for all $i \in \Gamma_+$, and similarly for $\Gamma_-$. Since the side length of $\Lambda_R$ is $R$, the second property of Lemma~\ref{lem:pathprops}, there exists a lattice path $\xi$ from the top of $\Lambda_R$ to the top of the lattice such that every edge in $\xi$ is adjacent to two disagreeing spins. Similarly, we can construct a path $\xi'$ from the bottom of $\Lambda_R$ to the bottom of the lattice. By adding at most $2(R+1)$ edges, we can join these paths to prepare a fault line $\pi$ with at most $2(R+1)$ defects. Since $R \geq 1$, we take an upper bound of $4R$ defects. By Lemma~\ref{lem:faultsupport} with $\kappa = 4R/\sqrt n$, this implies $e^{-c_0\sqrt n}$ support of the bottleneck. Moreover, since regions $A$ and $C$ contain no fault lines with at most $4R$ defects, this implies that any operator that only acts nontrivially on the region $\Lambda_R$ satisfies $\Tr(\Pi_C O\Pi_A\sigma\Pi_A)$ for any configuration $\sigma$, producing the last claim of the lemma.
\end{proof}

The final condition to show to obtain a mixing time lower bound from the bottleneck lemma is \eqref{eq:cacond2}. We use the discretization argument of Lemma~\ref{lem:sigma0} to obtain the necessary bound for the transverse field Ising model; i.e., it suffices to bound $\norm{e^{\Delta \beta H / 2}\Pi_A e^{-\Delta\beta H/2}\Pi_C}$.

\begin{lemma}
\label{lem:tfimnorm}
    Let $H$ be the $\sqrt n \times \sqrt n$ 2D transverse field Ising model of \eqref{eq:tfim} with $c_0$-regions $(A,B,C)$. Then for any $\Delta > 0$ such that $\Delta \beta h n \leq 1$, and for sufficiently large $n$ such that $c_0 \sqrt n \geq 6$,
    \begin{equation}
        \norm{e^{\Delta \beta H / 2}\Pi_A e^{-\Delta\beta H/2}\Pi_C} \leq 2e^{2\Delta \beta(1+h)n} \lr{\frac{\Delta \beta h n}{2}}^{c_0\sqrt n / 2}.
    \end{equation}
\end{lemma}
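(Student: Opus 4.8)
The plan is to peel off the unbounded factor by submultiplicativity, writing $\norm{e^{\Delta \beta H / 2}\Pi_A e^{-\Delta\beta H/2}\Pi_C} \leq \norm{e^{\Delta\beta H/2}}\cdot\norm{\Pi_A e^{-\Delta\beta H/2}\Pi_C}$, exactly as in the proof of Lemma~\ref{lem:lindecoup}. The first factor is handled by the crude estimate $\norm{e^{\Delta\beta H/2}} \leq e^{\Delta\beta\norm{H}/2} \leq e^{(2+h)\Delta\beta n/2}$, using that the $\sqrt n\times\sqrt n$ lattice has at most $2n$ edges, so $\norm{\sum_{\langle i,j\rangle}Z_iZ_j}\leq 2n$, together with $\norm{h\sum_i X_i} = hn$. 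All the real work is in the second factor, which I bound by a Taylor expansion engineered to isolate the power of $h$.

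Write $e^{-\Delta\beta H/2} = \sum_{m\geq 0}\frac{1}{m!}\lr{\frac{\Delta\beta}{2}}^m (V_Z + hV_X)^m$ with $V_Z = \sum_{\langle i,j\rangle}Z_iZ_j$ and $V_X = \sum_i X_i$, and expand each $m$-th power as $\sum_{b\in\{0,1\}^m} h^{|b|}\prod_{l=1}^m V_{b_l}$ where $V_0 := V_Z$ and $V_1 := V_X$. Expanding each $V_Z$ and $V_X$ further into single-edge and single-site Pauli terms, every summand of $\Pi_A e^{-\Delta\beta H/2}\Pi_C$ is a scalar times $\Pi_A(Q_1\cdots Q_m)\Pi_C$ with each $Q_l$ a single-qubit $X$ or a two-qubit $Z_iZ_j$. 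The product $Q_1\cdots Q_m$ is, up to phase, a Pauli string whose set of bit-flipped sites — the sites carrying an odd number of $X$-letters — has size at most $|b|$, the total number of transverse-field letters used. Since $\Pi_A,\Pi_C$ are diagonal in the computational basis (they are $c_0$-regions) and their supports lie at Hamming distance at least $c_0\sqrt n - 3$ by Corollary~\ref{cor:defect3}, any nonvanishing summand must have $|b|\geq d := c_0\sqrt n - 3$. Bounding operator norms by products of letter norms ($\norm{V_Z}\leq 2n$, $\norm{V_X}=n$) and $\norm{\Pi_A},\norm{\Pi_C}\leq 1$ then gives
\[
\norm{\Pi_A e^{-\Delta\beta H/2}\Pi_C} \leq \sum_{m\geq d}\frac{1}{m!}\lr{\frac{\Delta\beta}{2}}^m\sum_{k=d}^m\binom{m}{k}(hn)^k(2n)^{m-k}.
\]
Applying $\binom{m}{k}\leq\binom{m}{d}\binom{m-d}{k-d}$ for $k\geq d$, resumming the inner sum by the binomial theorem to $(hn)^d(2n+hn)^{m-d}$, using $\binom{m}{d}/m! = 1/(d!\,(m-d)!)$, and finally substituting $m = d+l$ and resumming the exponential series, the right-hand side collapses to $\frac{1}{d!}\lr{\frac{\Delta\beta hn}{2}}^d e^{(2+h)\Delta\beta n/2}$.

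Multiplying the two factors yields $\norm{e^{\Delta\beta H/2}\Pi_A e^{-\Delta\beta H/2}\Pi_C} \leq \frac{1}{d!}\lr{\frac{\Delta\beta hn}{2}}^d e^{(2+h)\Delta\beta n}$. To reach the stated form I drop $1/d!\leq 1$; I use the hypothesis $\Delta\beta hn\leq 1$, which gives $\frac{\Delta\beta hn}{2}\leq 1$, together with $c_0\sqrt n\geq 6$, which gives $d = c_0\sqrt n - 3 \geq c_0\sqrt n/2$, so $\lr{\frac{\Delta\beta hn}{2}}^d \leq \lr{\frac{\Delta\beta hn}{2}}^{c_0\sqrt n/2}$; and I bound $e^{(2+h)\Delta\beta n} = e^{2\Delta\beta n}e^{h\Delta\beta n} \leq e^{2(1+h)\Delta\beta n} \leq 2e^{2(1+h)\Delta\beta n}$. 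I do not expect a genuine obstacle here: the only step requiring care is verifying that every nonzero term of the sandwiched Taylor series carries at least $d$ factors of the transverse field, which is where the diagonality of $\Pi_A,\Pi_C$, the Hamming-distance bound of Corollary~\ref{cor:defect3}, and the elementary fact that a product of $r$ single-qubit $X$ operators flips at most $r$ bits all come together; the rest is routine resummation.
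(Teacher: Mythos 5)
Your proof is correct, but it takes a genuinely different route from the paper's. The paper proves the bound on $\norm{\Pi_A e^{-\Delta\beta H/2}\Pi_C}$ probabilistically: it first uses the Perron--Frobenius theorem (valid because $e^{-\Delta\beta H/2}$ has nonnegative entries by stoquasticity) to reduce the operator norm to a maximum row-sum $\max_{\sigma\in A}\sum_{\sigma'\in C}\bra{\sigma}e^{-\Delta\beta H/2}\ket{\sigma'}$, then bounds each matrix element by the Poisson Feynman--Kac formula of Lemma~\ref{lem:fk}, where the $\lr{\tanh(\Delta\beta h/2)}^{|\sigma-\sigma'|_H}$ prefactor directly produces the $h^{|\sigma-\sigma'|_H}$ suppression, and finally sums over Hamming distances $\geq c_0\sqrt n/2$. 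You instead bypass both Perron--Frobenius and Feynman--Kac by Taylor-expanding $e^{-\Delta\beta H/2}$, observing that any surviving term sandwiched between the diagonal projectors $\Pi_A$ and $\Pi_C$ must carry at least $d = c_0\sqrt n - 3$ transverse-field letters (a Pauli-string parity argument), and then doing a binomial resummation --- this is the same strategy the paper uses in Lemma~\ref{lem:lindecoup}, correctly adapted to the normalizations of the 2D Ising model. Your combinatorial argument is more elementary and self-contained (it does not rely on the probabilistic machinery of Lemma~\ref{lem:fk} nor on stoquasticity beyond the decomposition $H = -V_Z - hV_X$), whereas the paper's approach reuses the Feynman--Kac representation that is already central to this section of the argument; both reach the stated bound, and your intermediate constants are in fact slightly tighter before the final loosening.
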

\begin{proof}
    Since $\norm{H} \leq 2n(1+h)$, we have
    \begin{equation}
        \norm{e^{\Delta \beta H / 2}\Pi_A e^{-\Delta\beta H/2}\Pi_C} \leq \norm{e^{\Delta \beta H / 2}}\norm{\Pi_A e^{-\Delta\beta H/2}\Pi_C} \leq e^{\Delta \beta n(1+h)}\norm{\Pi_A e^{-\Delta\beta H/2}\Pi_C}.
    \end{equation}
    Since $e^{-\Delta \beta H /2}$ has only nonnegative matrix elements, the spectral radius is bounded by
    \begin{equation}
    \begin{split}
        \norm{\Pi_A e^{-\Delta\beta H/2}\Pi_C} &\leq \max_{\sigma \in \{\pm 1\}^n} \sum_{\sigma' \in \{\pm 1\}^n} \bra{\sigma}\Pi_A e^{-\Delta\beta H/2}\Pi_C\ket{\sigma'}\\
        &= \max_{\sigma \in A} \sum_{\sigma' \in C} \bra{\sigma} e^{-\Delta\beta H/2} \ket{\sigma'}\\
        &\leq \max_{\sigma \in A} \sum_{\sigma':|\sigma-\sigma'|_H \geq c_0 \sqrt n / 2} \bra{\sigma} e^{-\Delta\beta H/2} \ket{\sigma'},
    \end{split}
    \end{equation}
    where the first inequality follows from the Perron-Frobenius theorem and the second inequality follows from Lemma~\ref{cor:defect3} for all $c_0 \sqrt n \geq 6$. We apply Lemma~\ref{lem:fk} to compute this quantity. Let $\langle \cdot \rangle_{\Delta \beta h /2}$ denote an expectation over independent Poisson processes $s_i(t)$ with rate $\Delta \beta h / 2$ conditioning on $\sigma_is_i(1)=\sigma_i'$ for all $i$. 
    We obtain for $\sigma, \sigma'$ satisfying $|\sigma-\sigma'|_H \geq c_0 \sqrt n / 2$ that
    \begin{equation}
    \begin{split}
        \bra{\sigma} e^{-\Delta\beta H/2} \ket{\sigma'} &= \lr{\cosh \frac{\Delta \beta h}{2}}^n \lr{\tanh\frac{\Delta \beta h}{2}}^{|\sigma-\sigma'|_H}\left\langle \exp\left[\frac{\Delta\beta}{2} \int_0^1 \sum_{\langle i,j \rangle} \sigma_i \sigma_j s_i(t) s_j(t) \, \dd t\right] \right\rangle_{\Delta \beta h /2},\\
        &\leq \lr{\cosh \frac{\Delta \beta h}{2}}^n \lr{\tanh\frac{\Delta \beta h}{2}}^{|\sigma-\sigma'|_H} \exp\left[\Delta \beta n\right]\\
        &\leq \lr{\frac{\Delta \beta h}{2}}^{|\sigma-\sigma'|_H}\exp\left[\Delta \beta (1+h) n\right],
    \end{split}
    \end{equation}
    where the last inequality holds for all $\Delta \beta h/2 \leq 1$. Summing over all bitstrings within Hamming distance $d$, this gives
    \begin{equation}
    \begin{split}
        \norm{\Pi_A e^{-\Delta\beta H/2}\Pi_C} &\leq \sum_{d=c_0 \sqrt n / 2}^n \binom{n}{d}\lr{\frac{\Delta \beta h}{2}}^{d}\exp\left[\Delta \beta (1+h) n\right]\\
        &\leq 2\exp\left[\Delta \beta (1+h) n\right] \lr{\frac{\Delta \beta h n}{2}}^{c_0\sqrt n / 2},
    \end{split}
    \end{equation}
    where we apply $\Delta \beta h n \leq 1$ to obtain the last inequality. This gives a final bound of
    \begin{equation}
        \norm{e^{\Delta \beta H / 2}\Pi_A e^{-\Delta\beta H/2}\Pi_C} \leq 2e^{2\Delta \beta(1+h)n} \lr{\frac{\Delta \beta h n}{2}}^{c_0\sqrt n / 2}.
    \end{equation}
\end{proof}

\begin{corollary}
    Let $H$ be the $\sqrt n \times \sqrt n$ 2D transverse field Ising model of \eqref{eq:tfim} with $c_0$-regions $(A,B,C)$. Let $\sigma_0 = \lr{e^{-\beta H/2} \Pi_A e^{-\beta H/2}}/\Tr(\Pi_A e^{-\beta H})$. Then if $\Tr[\Pi_B e^{-\beta H}] = \exp\left[-\Omega(\sqrt n)\right]$, we have
    \begin{equation}
        \Tr[\Pi_C \sigma_0] \leq \exp\left[-\Omega(\sqrt n)\right].
    \end{equation}
\end{corollary}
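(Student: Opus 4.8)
The plan is to apply the decoupling bound of Lemma~\ref{lem:sigma0}. Since the Poisson Feynman--Kac representation (Lemma~\ref{lem:fk}) guarantees that $e^{-\beta H}$ has only nonnegative matrix elements in the computational basis, and $\Pi_A,\Pi_B,\Pi_C$ are diagonal projectors, Lemma~\ref{lem:sigma0} gives for every integer $1/\Delta\ge 1$
\begin{equation}
\Tr[\Pi_C\sigma_0] \;\le\; \frac{1}{\Delta}\left(\bnorm{e^{\Delta\beta H/2}\Pi_A e^{-\Delta\beta H/2}\Pi_C} + \frac{\Tr[\Pi_B e^{-\beta H}]}{\Tr[\Pi_A e^{-\beta H}]}\right).
\end{equation}
So the task reduces to choosing $\Delta$ so that, after multiplication by $1/\Delta$, both terms on the right are $\exp[-\Omega(\sqrt n)]$.

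First I would fix $\Delta$ of order $1/(\beta n)$: set $1/\Delta = \lceil c\beta n\rceil$ for a sufficiently large absolute constant $c$ (with $c>\max\{h,2\}$; recall $h$ is a constant below the threshold of Lemma~\ref{lem:tfimbottleneck}), so that $1/\Delta$ is a positive integer of size $\Theta(n)$ and $\Delta\beta h n \le h/c \le 1$. Then Lemma~\ref{lem:tfimnorm} applies for $n$ large enough, yielding
\begin{equation}
\bnorm{e^{\Delta\beta H/2}\Pi_A e^{-\Delta\beta H/2}\Pi_C} \;\le\; 2\,e^{2\Delta\beta(1+h)n}\left(\frac{\Delta\beta h n}{2}\right)^{c_0\sqrt n/2} \;\le\; 2\,e^{4(1+h)/c}\left(\frac{h}{2c}\right)^{c_0\sqrt n/2}.
\end{equation}
Here $2\Delta\beta(1+h)n \le 2(1+h)/c = O(1)$, while $h/(2c)$ is a constant strictly less than $1$, so the right-hand side is $\exp[-\Omega(\sqrt n)]$; multiplying by $1/\Delta = O(n)$ leaves it $\exp[-\Omega(\sqrt n)]$.

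For the second term I would divide numerator and denominator by $Z=\Tr[e^{-\beta H}]$, so that $\Tr[\Pi_B e^{-\beta H}]/\Tr[\Pi_A e^{-\beta H}] = \Tr[\Pi_B\rho_\beta]/\Tr[\Pi_A\rho_\beta]$. By hypothesis (equivalently, Lemma~\ref{lem:tfimbottleneck}) $\Tr[\Pi_B\rho_\beta] = \exp[-\Omega(\sqrt n)]$, while Lemma~\ref{lem:tfimbottleneck} also gives $\Tr[\Pi_A\rho_\beta] \ge \tfrac12 - e^{-c_0\sqrt n} = \Omega(1)$; hence this ratio is $\exp[-\Omega(\sqrt n)]$, and again multiplication by $1/\Delta = O(n)$ preserves this. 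Adding the two contributions yields $\Tr[\Pi_C\sigma_0] \le \exp[-\Omega(\sqrt n)]$.

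The only genuine subtlety is the joint calibration of $\Delta$: it must simultaneously be the reciprocal of an integer, keep $\Delta\beta h n$ bounded (both for the validity of Lemma~\ref{lem:tfimnorm} and so that $(\Delta\beta h n/2)^{c_0\sqrt n/2}$ decays), keep $e^{2\Delta\beta(1+h)n}$ bounded, and keep $1/\Delta$ only polynomially large so it cannot overwhelm the $\exp[-\Omega(\sqrt n)]$ gains. Scaling $\Delta\sim 1/(\beta n)$ threads all four requirements at once; everything else is routine. A minor point worth stating explicitly is that the relevant quantity is the \emph{ratio} $\Tr[\Pi_B\rho_\beta]/\Tr[\Pi_A\rho_\beta]$, where the partition function $Z$ cancels, which is why the hypothesis on the bottleneck support is used in that normalized form together with the lower bound $\Tr[\Pi_A\rho_\beta]=\Omega(1)$ from Lemma~\ref{lem:tfimbottleneck}.
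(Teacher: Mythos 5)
Your proof is correct and follows essentially the same route as the paper: invoke Lemma~\ref{lem:sigma0} for a suitable $\Delta$, bound the norm term with Lemma~\ref{lem:tfimnorm}, and normalize the ratio $\Tr[\Pi_B e^{-\beta H}]/\Tr[\Pi_A e^{-\beta H}]$ using Lemma~\ref{lem:tfimbottleneck}. The only difference is the choice of $\Delta$: the paper simply takes $\Delta=1/n^2$, which makes both $1/\Delta$ polynomial and the factor $(\Delta\beta h n/2)^{c_0\sqrt n/2}$ decay superexponentially in $\sqrt n$ with no calibration needed, whereas your $\Delta\sim 1/(\beta n)$ requires the extra bookkeeping you carry out (ceiling to get an integer, checking $\Delta\beta h n\le 1$ and that $e^{2\Delta\beta(1+h)n}$ stays bounded); both choices deliver the same final $\exp[-\Omega(\sqrt n)]$ bound since the $\Pi_B/\Pi_A$ ratio dominates in either case.
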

\begin{proof}
    This follows directly by applying Lemma~\ref{lem:tfimnorm} with $\Delta=1/n^2$ to Lemma~\ref{lem:sigma0}:
    \begin{equation}
    \begin{split}
        \Tr[\Pi_C \sigma_0] &\leq n^2 \lr{2e^{2\beta(1+h)/n} \lr{\frac{\beta h}{2n}}^{c_0\sqrt n / 2} + \frac{\Tr[\Pi_B e^{-\beta H}]}{\Tr[\Pi_A e^{-\beta H}]}}\\
        &= \exp\left[-\Omega(\sqrt n)\right],
    \end{split}
    \end{equation}
    where the second line follows from Lemma~\ref{lem:tfimbottleneck}.
\end{proof}

\begin{corollary}[Slow mixing of the 2D transverse field Ising model]
\label{cor:slowtfim}
    Let $\rho_\beta$ be the thermal state of the Hamiltonian of~\eqref{eq:tfim} with transverse field strength $h$. Let $\cL$ be a Lindbladian with fixed point $\cL(\rho_\beta) = 0$ and such that the Kraus operators of channel $e^{\cL t}$ each act nontrivially only on a geometrically local region of at most $R \times R$ qubits, where
    \begin{equation}
        1 \leq R \leq \frac{\sqrt n}{16}\lr{1 - \frac{\log(3)}{2\beta}}.
    \end{equation}
    Then for parameters
    \begin{equation}
        0 \leq h < \frac{1}{\beta}\acosh\lr{\frac{e^\beta\lr{1 - e^{-14 \beta}}}{\sqrt{3} e^{16\beta R/\sqrt n} - e^{-13 \beta}}}, \quad \beta > \frac{\log(3)}{2(1-16R/\sqrt n)},
    \end{equation}
    the Gibbs sampler mixes in time
    \begin{equation}
        \tmix(\cL) = \exp[\Omega(\sqrt n)].
    \end{equation}
    We remark that in the $\beta \to \infty$ limit, this mixing time holds for all $0 \leq h < 1 - O(R/\sqrt n)$.
\end{corollary}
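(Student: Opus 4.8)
The plan is to feed the bottleneck built in \Cref{lem:tfimbottleneck} into the continuous-time bottleneck lemma \Cref{cor:slow}. Fix $R,h,\beta$ in the stated window. The constraint $\beta>\frac{\log 3}{2(1-16R/\sqrt n)}$ is exactly $R\le\frac{\sqrt n}{16}(1-\frac{\log 3}{2\beta})$, and the stated bound on $h$ is the one appearing in \Cref{lem:tfimbottleneck} with $a=-14$; hence that lemma applies and yields a constant $c_0>0$ together with orthogonal projectors $\Pi_A,\Pi_C$ and $\Pi_B=I-\Pi_A-\Pi_C$, all diagonal in the computational basis, such that $\Tr(\Pi_A\rho_\beta),\Tr(\Pi_C\rho_\beta)\ge\frac12-e^{-c_0\sqrt n}$ and $\Tr(\Pi_B\rho_\beta)\le e^{-c_0\sqrt n}$, and such that $\Pi_C O\Pi_A=0$ (hence also $\Pi_A O\Pi_C=0$) for \emph{every} operator $O$ supported on a geometrically local $R\times R$ block of qubits — this operator-level statement being what the proof of \Cref{lem:tfimbottleneck} actually establishes.

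Next I would turn the locality hypothesis on the sampler into the support condition required by \Cref{cor:slow}. Writing the step channel as $\cT(\rho)=\sum_j K_j\rho K_j^\dagger$ with each $K_j$ acting on an $R\times R$ block, we get $\cT^\dagger(\Pi_C)=\sum_j K_j^\dagger\Pi_C K_j\succeq 0$ and $\Pi_A\cT^\dagger(\Pi_C)\Pi_A=\sum_j(\Pi_C K_j\Pi_A)^\dagger(K_j\Pi_A)=0$ by the property above; positivity of $\cT^\dagger(\Pi_C)$ then forces $\Pi_A\cT^\dagger(\Pi_C)=0$, and since $\|\cT^\dagger(\Pi_C)\|\le\|\cT^\dagger(I)\|=1$ we conclude $\cT^\dagger(\Pi_C)\preccurlyeq I-\Pi_A=\Pi_B+\Pi_C$. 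Running the same computation at the level of the generator (using that the Lindblad operators and the Hamiltonian term of $\cL$ are likewise $R\times R$-supported) gives $\widetilde\cL^\dagger(\Pi_C)=(\Pi_B+\Pi_C)\widetilde\cL^\dagger(\Pi_C)(\Pi_B+\Pi_C)$ with $\widetilde\cL=\cL$ and $\epsilon=0$; equivalently one may simply apply \Cref{prop:mix_time_discrete} to the step channel $\cT$ and note $\tmix(\cL)=\Theta(\tmix(\cT))$. If only quasi-locality with exponential tails holds, one truncates to $R\times R$ support at cost $\epsilon=e^{-\Omega(R)}$ — the route taken for the concrete sampler of~\cite{chen2023efficient} in \Cref{cor:tfimslowchen} — but exact $R\times R$-locality is assumed here, so $\epsilon=0$.

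Then I would verify the decoupling hypothesis $\Tr[\Pi_C\sigma_0]=o(1)$ for $\sigma_0=e^{-\beta H/2}\Pi_A e^{-\beta H/2}/\Tr[\Pi_A e^{-\beta H}]$. This is exactly the content of the corollary following \Cref{lem:tfimnorm}: it chains the discretization bound of \Cref{lem:sigma0} (with $\Delta=1/n^2$, so that $\Delta\beta hn=\beta h/n\le 1$ for large $n$) with the estimate $\|e^{\Delta\beta H/2}\Pi_A e^{-\Delta\beta H/2}\Pi_C\|\le 2e^{2\Delta\beta(1+h)n}(\Delta\beta hn/2)^{c_0\sqrt n/2}$ of \Cref{lem:tfimnorm}, using $\Tr[\Pi_B e^{-\beta H}]/\Tr[\Pi_A e^{-\beta H}]=\Tr[\Pi_B\rho_\beta]/\Tr[\Pi_A\rho_\beta]=e^{-\Omega(\sqrt n)}$ from the first step to get $\Tr[\Pi_C\sigma_0]=e^{-\Omega(\sqrt n)}$. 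Together with $\Tr[\Pi_C\rho_\beta]=\Omega(1)$, $\Tr[\Pi_A\rho_\beta]\Tr[\Pi_C\rho_\beta]\Tr[\Pi_B\rho_\beta]^{-1}\ge\frac14 e^{c_0\sqrt n}(1-o(1))=\exp[\Omega(\sqrt n)]$, $\Tr[\Pi_C\rho_\beta]/\epsilon=\infty$ (since $\epsilon=0$), and $\|\cL\|=O(\poly{n})$ (the samplers in question are a sum of $\poly(n)$ bounded $R\times R$-local pieces), all hypotheses of \Cref{cor:slow} hold — or one applies \Cref{lem:discrete_convert_to_continuous} directly with $\alpha=\frac12$ — giving $\tmix(\cL)=\exp[\Omega(\sqrt n)]$. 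For the final remark, as $\beta\to\infty$ the argument of the $\acosh$ in the $h$-bound behaves like $\frac{1}{\sqrt3}e^{\beta(1-16R/\sqrt n)}(1+o(1))$, and since $\acosh(x)=\log(2x)+o(1)$ as $x\to\infty$, the admissible window tends to $0\le h<1-16R/\sqrt n+o(1)=1-O(R/\sqrt n)$.

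The main obstacle does not lie in this corollary: essentially all the genuine work is upstream, in the Peierls-plus-Feynman--Kac bottleneck bound \Cref{lem:tfimbottleneck} (built on \Cref{lem:ratiosupport,lem:faultsupport}) and in the decoupling estimate \Cref{lem:sigma0,lem:tfimnorm}. The only delicate points here are (i) matching the ``acts on an $R\times R$ block'' hypothesis to the $\range_Z$/$\dist_Z$ formalism of \Cref{sec:distlocal}, i.e.\ confirming that $R\times R$-locality of the sampler together with the geometric separation of $A$ and $C$ furnished by \Cref{lem:tfimbottleneck} really delivers $\widetilde\cL^\dagger(\Pi_C)\preccurlyeq\Pi_B+\Pi_C$; and (ii) bookkeeping the $(h,\beta,R)$ windows so that the parameter ranges stated in the corollary coincide with those demanded by \Cref{lem:tfimbottleneck} while keeping $\Tr(\Pi_B\rho_\beta)$ superpolynomially small.
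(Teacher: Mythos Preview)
Your proposal is correct and follows the same route as the paper's proof, which is the single line ``Follows directly from Lemma~\ref{lem:tfimbottleneck} and Corollary~\ref{cor:slow}.'' You have simply unpacked that line: you verify that the $(h,\beta,R)$ window in the corollary coincides with the one in Lemma~\ref{lem:tfimbottleneck}, you show how the $R\times R$-locality of the Kraus operators forces $\Pi_C K_j\Pi_A=0$ and hence $\cT^\dagger(\Pi_C)\preccurlyeq\Pi_B+\Pi_C$ (correctly noting that the proof of Lemma~\ref{lem:tfimbottleneck} actually yields the operator identity, not merely the trace condition stated there), you invoke the corollary after Lemma~\ref{lem:tfimnorm} for the decoupling hypothesis, and you feed everything into Corollary~\ref{cor:slow}. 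Your computation of the $\beta\to\infty$ limit via $\acosh(x)=\log(2x)+o(1)$ is also the intended one.
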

\begin{proof}
    Follows directly from Lemma~\ref{lem:tfimbottleneck} and Corollary~\ref{cor:slow}.
\end{proof}

Finally, we use Lieb-Robinson bounds to show that the locality assumption on $R$ is satisfied by Gibbs sampling algorithms such as that of~\cite{chen2023efficient}. As shown in Appendix~\ref{app:localtfim}, Lieb-Robinson bounds give the following property for the Gibbs sampler.

\begin{lemma}[Locality of Gibbs sampler of~\cite{chen2023efficient}]
\label{lem:chen-local}
Let $\cL$ denote the Lindbladian Gibbs sampler of~\cite{chen2023efficient} with $J$ constant-local jump operators and defined at inverse temperature $\beta$ with respect to a constant-local Hamiltonian on an $r$-dimensional lattice. There exists a channel $\cE$ that satisfies $\norm{\cL - \cE}_\diamond \leq \epsilon$ such that each Kraus operator of $\cE$ acts nontrivially only on a neighborhood of radius $R$ on the lattice, where
\begin{equation}
    R = O\lr{(\beta+1) \log \frac{J}{\epsilon}}.
\end{equation}
\end{lemma}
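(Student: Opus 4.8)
The plan is to localize every ingredient of the Chen et al.\ Lindbladian by applying a Lieb--Robinson bound to the Heisenberg evolution out of which its Lindblad operators are built. Recall that in~\cite{chen2023efficient} both the dissipative part and the coherent (energy-correction) part of $\cL$ are assembled from operator Fourier transforms $\hat A^a(\omega) \propto \int_{-\infty}^{\infty} f_\beta(t)\, e^{-i\omega t}\, e^{iHt} A^a e^{-iHt}\, \dd t$ of the constant-local jump operators $A^a$, $a \in [J]$, where the filter $f_\beta$ is a fixed smooth function (up to logarithmic corrections) whose effective support in $t$ has width $\Theta(\beta)$ and, crucially, depends on neither $n$ nor $\omega$. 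The two quantitative facts from~\cite{chen2023efficient} I would use are: (i) $\int_{|t| \ge T} |f_\beta(t)|\, \dd t \le \delta$ once $T \ge C_\beta \log(1/\delta)$ with $C_\beta = O(\beta+1)$; and (ii) the operator Fourier transform is an $\ell^2$-contraction, so $\norm{\sum_\omega \hat A^a(\omega)^\dagger \hat A^a(\omega)} = O(1)$ for each $a$.

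First I would truncate in time and then in space. Fix $T_0 = \Theta((\beta+1)\log(J/\epsilon))$ and restrict the integral defining each $\hat A^a(\omega)$ to $|t| \le T_0$, which by (i) perturbs each operator in norm by at most $\epsilon' := \Theta(\epsilon/J)$. For every $|t| \le T_0$, the Lieb--Robinson bound for the constant-degree lattice Hamiltonian $H$ gives $\norm{e^{iHt} A^a e^{-iHt} - \Lambda_R(e^{iHt} A^a e^{-iHt})} \le c_1 e^{-\mu(R - v|t|)}$, where $\Lambda_R$ is the standard quasi-local restriction to the radius-$R$ ball around $\operatorname{supp}(A^a)$ (a partial trace against the maximally mixed state on the complementary sites, equivalently a Haar twirl there). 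Choosing $R = v T_0 + \mu^{-1}\log(1/\epsilon') + O(1) = \Theta((\beta+1)\log(J/\epsilon))$ makes this error at most $\epsilon'$ uniformly in $t$ and $\omega$. The truncated, localized operators $\tilde A^a(\omega)$ --- together with the analogously truncated coherent term $\tilde G$, which in~\cite{chen2023efficient} is a sum over $a$ of products of two Heisenberg-evolved copies of $A^a$ and so is handled the same way --- all act nontrivially only on a radius-$R$ ball, so the resulting Lindbladian-form superoperator $\cE$, written in its natural representation $\cE(\rho) = \sum_j K_j \rho K_j'$ with each $K_j, K_j' \in \{\tilde G, \tilde A^a(\omega), \tilde A^a(\omega)^\dagger, I\}$, has all its Kraus factors supported on such a ball.

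It remains to bound $\norm{\cL - \cE}_\diamond$, which I would split into dissipative and coherent parts. For the dissipator, the telescoping identities $L\rho L^\dagger - \tilde L\rho\tilde L^\dagger = (L - \tilde L)\rho L^\dagger + \tilde L\rho(L - \tilde L)^\dagger$ and $\{L^\dagger L - \tilde L^\dagger\tilde L, \rho\} = \{(L - \tilde L)^\dagger L, \rho\} + \{\tilde L^\dagger(L - \tilde L), \rho\}$ bound the error from a single $(a,\omega)$ by $O(\norm{\hat A^a(\omega) - \tilde A^a(\omega)}\cdot\max(\norm{\hat A^a(\omega)}, \norm{\tilde A^a(\omega)}))$; applying Cauchy--Schwarz to the sum over $\omega$ (the $\ell^2$-norm of the errors is $\le \epsilon'$, the $\ell^2$-norm of the $\hat A^a(\omega)$ is $O(1)$ by (ii)) gives $O(\epsilon')$ per $a$, hence $O(J\epsilon') = O(\epsilon)$ after summing over $a \in [J]$. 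The coherent part contributes $\norm{[\tilde G - G, \cdot\,]}_\diamond \le 2\norm{G - \tilde G} = O(\epsilon)$ by the same truncation estimate applied termwise. Tuning the implied constants in $\epsilon'$ and $R$ gives $\norm{\cL - \cE}_\diamond \le \epsilon$ with $R = O((\beta+1)\log(J/\epsilon))$.

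The main obstacle I anticipate is the bookkeeping in the frequency variable: the operator Fourier transform is discretized over $\mathrm{poly}(n)$ frequency bins, so a naive per-bin union bound would insert an extra $\mathrm{poly}(n)$ factor (and hence a stray $\log n$ inside $R$). The way around this is that the time truncation and the Lieb--Robinson localization act on the $\omega$-\emph{independent} object $e^{iHt} A^a e^{-iHt}$, so the per-operator error $\epsilon'$ is uniform over $\omega$, and combined with the $\ell^2$-contraction property (ii) the entire dissipator for a fixed $a$ is controlled at once rather than bin by bin; even a lossy version of this step costs only an additive $\log(\mathrm{poly}(n)\,\beta) = O(\log J + \log\beta)$, absorbed into $O((\beta+1)\log(J/\epsilon))$ since $J = \Theta(n)$ for these samplers. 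A secondary point is to confirm that the Kraus factors of the coherent term also live on a single radius-$R$ ball, which uses the specific single-jump structure of $G$ in~\cite{chen2023efficient}.
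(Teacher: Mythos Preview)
Your proposal is correct and follows essentially the same approach as the paper: first truncate the time integrals defining $L^a$ and $G$ using the Gaussian/exponential decay of the filter (the paper's Lemma~\ref{lem:chenops} and Theorem~\ref{thm:chen-bounds}), then localize the remaining finite-time Heisenberg evolutions via Lieb--Robinson (the paper's Corollary~\ref{cor:lieb-time}), and finally bound $\norm{\cL-\cE}_\diamond$ by telescoping on the Lindblad operators (the paper's Lemma~\ref{lem:lindblad-dist}). Your handling of the $\omega$-sum via the $\ell^2$-contraction property is a slight repackaging of what the paper does by directly integrating the explicit Gaussian factor $e^{-(\beta\omega+1)^2/2}$, but the argument and the resulting bound are the same.
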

\begin{proof}
    See Appendix~\ref{app:localtfim}.
\end{proof}

The proof is omitted from the main text as it consists of standard arguments concerning properties of the CPTP map $e^{\cL t}$ constructed from Lindbladian map $\cL$ (see Appendix~\ref{app:lindblad}) and the following well-known Lieb-Robinson bound (see, e.g.,~\cite{nachtergaele2006lieb,hastings2006spectral,lieb1972finite,hastings2004lieb,haah2021quantum}).

\begin{lemma}[Lieb-Robinson bound]
Let $H$ be a $k$-local Hamiltonian $H = \sum_Z h_Z$. For any operator $A_X$ acting nontrivially only on region $X$, and for any time $t$ and distance $\ell$, there exists constants $v, \mu > 0$ such that
\begin{equation}
    \norm{A_X(t) - A_{X(\ell)}(t)} \leq |X| \norm{A_X} e^{-\mu \ell} \quad \text{for} \quad X(\ell) = \left\{i : d(i, X) \leq v |t| + \ell \right\}.
\end{equation}
\end{lemma}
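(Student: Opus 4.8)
The plan is to prove this standard estimate in two stages: first the commutator form of the Lieb--Robinson bound, then deduce the operator-approximation (quasi-locality) statement by a Haar-twirling argument.

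\textbf{Stage 1 (commutator bound).} First I would fix operators $A_X, B_Y$ supported on disjoint regions with $d(X,Y)=D$, write $A_X(s)=e^{iHs}A_Xe^{-iHs}$, and for an arbitrary region $R$ introduce
\begin{equation}
    C_{B_Y}(R;s):=\sup\{\norm{[A(s),B_Y]}:A\in\mathcal{A}_R,\ \norm{A}\le1\},
\end{equation}
where $\mathcal{A}_R$ denotes operators supported on $R$. Differentiating $[A(s),B_Y]$ in $s$ and using that $[H,A(s)]=\sum_{Z:\,Z\cap R\neq\varnothing}[h_Z,A(s)]$, since only terms meeting the support of $A$ contribute at first order, gives the standard integral inequality
\begin{equation}
    C_{B_Y}(R;t)\le C_{B_Y}(R;0)+2\sum_{Z:\,Z\cap R\neq\varnothing}\norm{h_Z}\int_0^{\abs{t}}C_{B_Y}(Z;s)\,ds .
\end{equation}
Iterating this, starting from $R=X$ and using $C_{B_Y}(X;0)=0$, expresses $C_{B_Y}(X;t)$ as a series whose $m$-th term is a sum over chains $X,Z_1,\dots,Z_m$ of Hamiltonian-term supports with consecutive overlaps and $Z_m\cap Y\neq\varnothing$, weighted by $\prod_i 2\norm{h_{Z_i}}$ and by the iterated time integral $\abs{t}^m/m!$.

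\textbf{Stage 1, combinatorial core.} Because each $h_Z$ has finite range and $H$ has bounded geometry, a chain of $m$ overlapping term-supports reaches only sites within distance $O(km)$ of $X$, so only $m\gtrsim D/k$ contribute, and the weighted number of length-$m$ chains starting at a term meeting $X$ is at most $\abs{X}\lambda^m$ for a constant $\lambda$ built from the interaction strength and degree of $H$. Summing, one obtains
\begin{equation}
    \norm{[A_X(t),B_Y]}\le c\,\abs{X}\,\norm{A_X}\,\norm{B_Y}\sum_{m\ge D/k}\frac{(\lambda\abs{t})^m}{m!}\le c'\,\abs{X}\,\norm{A_X}\,\norm{B_Y}\,e^{-\mu(D-v\abs{t})},
\end{equation}
where the last step converts the exponential-series tail into exponential decay via $m!\ge(m/e)^m$, producing constants $\mu,v>0$ that depend only on $k$ and $\lambda$ (the prefactor $c'$ can then be absorbed into a slightly smaller $\mu$ because for small $D-v\abs{t}$ the estimate already follows from $\norm{[A_X(t),B_Y]}\le2\norm{A_X}\norm{B_Y}$). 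The hard part will be exactly this combinatorial bookkeeping: the chain count must grow geometrically rather than factorially so that the $1/m!$ from the nested time integrals survives, and a length-$m$ chain must be unable to bridge more than $O(km)$ sites --- both facts rely on the bounded-geometry/finite-range hypothesis on $H$.

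\textbf{Stage 2 (quasi-locality).} I would then set $A_{X(\ell)}(t):=\int U\,A_X(t)\,U^\dagger\,\dd U$, the Haar average over unitaries $U$ supported on the complement $X(\ell)^c=\{i:d(i,X)>v\abs{t}+\ell\}$; by invariance of the Haar measure this operator commutes with every unitary supported on $X(\ell)^c$ and is hence supported on $X(\ell)$. Since $A_X(t)-A_{X(\ell)}(t)=\int[A_X(t),U]\,U^\dagger\,\dd U$, applying the Stage 1 bound to each such $U$, for which $d(X,\operatorname{supp}U)-v\abs{t}>\ell$, yields
\begin{equation}
    \norm{A_X(t)-A_{X(\ell)}(t)}\le\sup_U\norm{[A_X(t),U]}\le c'\,\abs{X}\,\norm{A_X}\,e^{-\mu\ell},
\end{equation}
and rescaling $\mu$ (again using the trivial bound $\norm{A_X(t)-A_{X(\ell)}(t)}\le2\norm{A_X}$ in the short-distance regime) gives the stated inequality. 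As the paper notes, this lemma is classical, so the alternative to reproducing this argument is simply to invoke it from the cited references.
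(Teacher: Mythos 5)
The paper does not actually prove this lemma: it states it and cites the standard references (\cite{nachtergaele2006lieb,hastings2006spectral,lieb1972finite,hastings2004lieb,haah2021quantum}), using it as a black box in the proof of Lemma~\ref{lem:chen-local}. Your two-stage sketch is a correct reproduction of the textbook derivation: Stage~1 is the Nachtergaele--Sims iterated-commutator estimate obtained from the Gronwall-type integral inequality, and Stage~2 is the usual Haar-twirl upgrade that passes from the commutator bound to the operator-approximation (quasi-locality) form via $A_{X(\ell)}(t)=\int U A_X(t)U^\dagger\,\dd U$ and the identity $A_X(t)-A_{X(\ell)}(t)=\int[A_X(t),U]U^\dagger\,\dd U$. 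The two points you flag as delicate are exactly where a complete write-up must work: the chain counting requires bounded interaction degree (a lattice or bounded-degree interaction hypergraph, not merely $k$-locality of each term) so that the number of length-$m$ chains grows geometrically and the $1/m!$ from the nested time integrals dominates; and the prefactor accounting must ultimately be absorbed by rescaling $\mu$ (with the trivial $2\norm{A_X}$ bound handling short distances) to land on the stated $|X|\norm{A_X}e^{-\mu\ell}$ rather than $C|X|\norm{A_X}e^{-\mu\ell}$. The paper's own statement is loose on this last point, and downstream (Corollary~\ref{cor:lieb-time}) it only uses the estimate up to a rescaling of $\mu$, so the form is compatible with either reading. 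You are also right that the alternative, intended route is simply to invoke the cited literature.
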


The algorithm of~\cite{chen2023efficient} uses linear combinations of time-evolved jump operators (much like the $A_X$ of the Lieb-Robinson bound) convolved with a Gaussian filter whose variance scales with the inverse temperature. At late times (much larger than $\beta$), the decay of the Gaussian filter suppresses the contributions of distant sites, allowing the Lieb-Robinson light cone to be truncated at time $t\sim \beta$ with small error. This gives the $\beta$-dependence in Lemma~\ref{lem:chen-local}. Although we use~\cite{chen2023efficient} as a concrete example in Lemma~\ref{lem:chen-local}, similar algorithms based on a linear combination of Hamiltonian-simulated terms $e^{-iHt}Ae^{iHt}$ produce similarly quasi-local Lindblad operators when placed on a lattice~\cite{haah2021quantum,rouze2024efficient,gilyen2024quantum}.

Finally, we combine Corollary~\ref{cor:slowtfim} and Lemma~\ref{lem:chen-local} to show that this Gibbs sampler mixes slowly for the 2D transverse field Ising model.

\begin{corollary}[Slow mixing of~\cite{chen2023efficient} for 2D transverse field Ising model]
    There exist constants $\beta_*, h_*$ such that the mixing time of the Gibbs sampler $\cL$ of~\cite{chen2023efficient} to prepare thermal state $\rho_\beta$ of the 2D transverse field Ising model with constant inverse temperature $\beta > \beta_*$ and constant field strength $0 \leq h < h_*$ satisfies
    \begin{equation}
        \tmix(\cL) = \exp[n^{1/2-o(1)}].
    \end{equation}
    In particular, as $\beta \to \infty$, $\tmix(\cL) = \exp[n^{1/2-o(1)}]$ for all $0 \leq h < 1 + o(1)$.
\end{corollary}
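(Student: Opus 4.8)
The plan is to feed the specific Lindbladian $\cL$ of~\cite{chen2023efficient} into the model-specific bottleneck statement Corollary~\ref{cor:slowtfim} (equivalently, into Corollary~\ref{cor:slow} with the bottleneck of Lemma~\ref{lem:tfimbottleneck}), after using the Lieb--Robinson estimate Lemma~\ref{lem:chen-local} to replace $\cL$ by a genuinely geometrically local channel. The only real freedom is the truncation accuracy $\epsilon_n$: it sets both the locality radius $R = O\!\lr{(\beta+1)\log(J/\epsilon_n)}$ of the approximating channel and the additive error term that enters the mixing-time bound, and it must be tuned so that $R = o(\sqrt n)$ (so that Lemma~\ref{lem:tfimbottleneck} still applies) while $1/\epsilon_n$ remains superpolynomial.

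Concretely, I would first record that for the 2D transverse field Ising model the sampler of~\cite{chen2023efficient} has $\norm{\cL} = O(\poly n)$ and is built from $J = O(n)$ constant-local jump operators, so Corollary~\ref{cor:slow} applies and $\log J = O(\log n)$. Fix a constant $\beta$, set $\epsilon_n := \exp(-\sqrt n/\log n)$, and invoke Lemma~\ref{lem:chen-local} to obtain a channel $\widetilde\cL$ with $\norm{\cL - \widetilde\cL}_\diamond \le \epsilon_n$ whose Kraus operators each act only on an $R \times R$ block, with $R = O\!\lr{(\beta+1)\log(J/\epsilon_n)} = O\!\lr{(\beta+1)\sqrt n/\log n} = o(\sqrt n)$. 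For all large $n$ this gives $1 \le R \le \frac{\sqrt n}{16}\lr{1 - \frac{\log 3}{2\beta}}$ and $16\beta R/\sqrt n \to 0$, so the admissible window in Corollary~\ref{cor:slowtfim},
\[
    0 \le h < \frac{1}{\beta}\acosh\!\lr{\frac{e^\beta(1-e^{-14\beta})}{\sqrt 3\, e^{16\beta R/\sqrt n} - e^{-13\beta}}},
\]
converges as $n\to\infty$ to $\frac{1}{\beta}\acosh\!\lr{\frac{e^\beta - e^{-13\beta}}{\sqrt 3 - e^{-13\beta}}}$, which is a strictly positive constant exactly when $e^\beta > \sqrt 3$, i.e.\ $\beta > \tfrac12\log 3$. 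Taking $\beta_*$ to be any constant with $\beta_* > \tfrac12\log 3$ and $h_*(\beta)$ to be, say, half of the above limit, all hypotheses of Corollary~\ref{cor:slowtfim} (the constraints on $R$ and $h$, and $\beta > \frac{\log 3}{2(1-16R/\sqrt n)}$) hold for $\beta > \beta_*$, $0 \le h < h_*(\beta)$, and $n$ large.

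Now apply Corollary~\ref{cor:slowtfim}, whose proof combines Lemma~\ref{lem:tfimbottleneck} with Corollary~\ref{cor:slow} and applies verbatim to $\widetilde\cL$ in place of an exactly $R\times R$-local Lindbladian (legitimate because each Kraus operator $O$ of $\widetilde\cL$ is $R\times R$-local, so $\Pi_C O \Pi_A = 0$ by Lemma~\ref{lem:tfimbottleneck}, giving $\widetilde\cL^\dagger(\Pi_C) = (\Pi_B+\Pi_C)\widetilde\cL^\dagger(\Pi_C)(\Pi_B+\Pi_C)$), except that the additive term $\Tr[\Pi_C\rho_\beta]/\epsilon_n$ is retained. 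Using $\Tr[\Pi_B\rho_\beta] \le e^{-c_0\sqrt n}$ and $\Tr[\Pi_A\rho_\beta], \Tr[\Pi_C\rho_\beta] = \Omega(1)$ from Lemma~\ref{lem:tfimbottleneck}, this yields
\[
    \tmix(\cL) = \Omega\!\lr{\min\left\{\frac{\Tr[\Pi_A\rho_\beta]\Tr[\Pi_C\rho_\beta]}{\Tr[\Pi_B\rho_\beta]},\ \frac{\Tr[\Pi_C\rho_\beta]}{\epsilon_n}\right\}} = \Omega\!\lr{\min\left\{e^{\Omega(\sqrt n)},\ e^{\sqrt n/\log n}\right\}} = \exp[n^{1/2-o(1)}].
\]
For the $\beta\to\infty$ refinement, note that $\beta_*$ may be pushed to $\tfrac12\log 3$ and that the limiting field window $\frac1\beta\acosh\!\lr{\frac{e^\beta - e^{-13\beta}}{\sqrt 3 - e^{-13\beta}}}$ tends to $1$ (from above, since $\acosh(e^\beta/\sqrt 3) \sim \beta + \log\tfrac{2}{\sqrt 3}$), so the bound holds for all $0 \le h < 1 + o_\beta(1)$, which lies in the ferromagnetic phase.

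The main (and essentially only nonroutine) obstacle is this $\epsilon_n$ balancing: Lemma~\ref{lem:tfimbottleneck} only delivers a superpolynomially suppressed dressed-fault-line bottleneck when the sampler's locality radius stays below $\lambda_*(\beta)\sqrt n$, but approximating the quasi-local Lindbladian of~\cite{chen2023efficient} to that radius costs $\epsilon_n = e^{-\Theta(\sqrt n/(\beta+1))}$ at best, and $\epsilon_n$ in turn caps the mixing-time bound through the $\Tr[\Pi_C\rho_\beta]/\epsilon_n$ term. Chasing the Lieb--Robinson constants would give $\tmix(\cL) = e^{\Omega(\sqrt n)}$; the choice $\epsilon_n = e^{-\sqrt n/\log n}$ trades the sharp exponent for a clean, constant-free statement. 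The one remaining check — that the $n\to\infty$ limits of the parameter windows in Corollary~\ref{cor:slowtfim} leave a nonempty set of admissible $(\beta,h)$ — is precisely the inequality $e^\beta > \sqrt 3$ established above.
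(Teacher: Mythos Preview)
Your proposal is correct and follows essentially the same approach as the paper: approximate $\cL$ via Lemma~\ref{lem:chen-local} by an $R\times R$-local channel with $R = o(\sqrt n)$, then feed this into the bottleneck machinery (Lemma~\ref{lem:tfimbottleneck} and Corollary~\ref{cor:slow}/\ref{cor:slowtfim}), balancing the truncation error $\epsilon_n$ against the locality radius so that both the bottleneck term and the $\Tr[\Pi_C\rho_\beta]/\epsilon_n$ term yield $\exp[n^{1/2-o(1)}]$. Your explicit choice $\epsilon_n = e^{-\sqrt n/\log n}$ is a particular instance of the paper's $\epsilon = 2^{-n^{1/2-o(1)}}$, and your verification of the parameter windows (the $\beta > \tfrac12\log 3$ threshold and the asymptotic $h$-window $\frac1\beta\acosh(e^\beta/\sqrt 3)\to 1^+$) fills in details the paper leaves implicit.
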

\begin{proof}
At constant temperature, Lemma~\ref{lem:chen-local} implies that up to diamond error $\epsilon$, the Lindbladian $\cL$ is approximated by a channel with Kraus operators of radius $R$ on the lattice for
\begin{equation}
    R = O\lr{\log \frac{J}{\epsilon}}.
\end{equation}
For $J$ polynomial in system size, we are able to choose $\epsilon = 2^{-n^{1/2-o(1)}}$ such that $R = o(\sqrt{n})$. Applying the triangle inequality over each time step in the proof of Corollary~\ref{cor:slow}, we obtain a mixing time lower bound of $\tmix(\cL) = \exp[n^{1/2-o(1)}]$ by Lemma~\ref{lem:tfimbottleneck}.
\end{proof}

\section*{Note}
The recent work of~\cite{rakovszky2024bottlenecks}, which appears simultaneously with the v2 arXiv posting of our paper, provides a quantum bottleneck lemma that shares similarities with our results. In particular, they provide a bottleneck lemma that shows that Gibbs samplers with sufficiently local Kraus operators (in the Pauli sense of Definition~\ref{def:local}) mix slowly for Hamiltonians with a linear free energy barrier defined by commuting projectors perturbed by a local non-commuting Hamiltonian.

\section*{Acknowledgments}
The authors thank Eric Anschuetz, Shankar Balasubramanian, Chi-Fang Chen, David Gosset, Aram Harrow and Seth Lloyd for enlightening discussions. We especially thank Tibor Rakovszky for pointing out an error in an earlier version of this manuscript.

\clearpage
\bibliography{References.bib}

\newcommand{\etalchar}[1]{$^{#1}$}
\begin{thebibliography}{VAGGdW17}

\bibitem[ABN23]{anshu2023nlts}
Anurag Anshu, Nikolas~P Breuckmann, and Chinmay Nirkhe.
\newblock Nlts hamiltonians from good quantum codes.
\newblock In {\em Proceedings of the 55th Annual ACM Symposium on Theory of
  Computing}, pages 1090--1096, 2023.

\bibitem[AFH09]{alicki2009thermalization}
Robert Alicki, Mark Fannes, and Michal Horodecki.
\newblock On thermalization in kitaev's 2d model.
\newblock {\em Journal of Physics A: Mathematical and Theoretical},
  42(6):065303, 2009.

\bibitem[AGK24]{anschuetz2024bounds}
Eric~R Anschuetz, David Gamarnik, and Bobak~T Kiani.
\newblock Bounds on the ground state energy of quantum $ p $-spin hamiltonians.
\newblock {\em arXiv preprint arXiv:2404.07231}, 2024.

\bibitem[AHHH10]{alicki2010thermal}
Robert Alicki, Michal Horodecki, Pawel Horodecki, and Ryszard Horodecki.
\newblock On thermal stability of topological qubit in kitaev's 4d model.
\newblock {\em Open Systems \& Information Dynamics}, 17(01):1--20, 2010.

\bibitem[Alh23]{alhambra2023quantum}
{\'A}lvaro~M Alhambra.
\newblock Quantum many-body systems in thermal equilibrium.
\newblock {\em PRX Quantum}, 4(4):040201, 2023.

\bibitem[ART06]{achlioptas2006solution}
Dimitris Achlioptas and Federico Ricci-Tersenghi.
\newblock On the solution-space geometry of random constraint satisfaction
  problems.
\newblock In {\em Proceedings of the thirty-eighth annual ACM symposium on
  Theory of computing}, pages 130--139, 2006.

\bibitem[AWZ20]{arous2020free}
G{\'e}rard~Ben Arous, Alexander~S Wein, and Ilias Zadik.
\newblock Free energy wells and overlap gap property in sparse pca.
\newblock In {\em Conference on Learning Theory}, pages 479--482. PMLR, 2020.

\bibitem[BAJ18]{ben2018spectral}
G{\'e}rard Ben~Arous and Aukosh Jagannath.
\newblock Spectral gap estimates in mean field spin glasses.
\newblock {\em Communications in Mathematical Physics}, 361:1--52, 2018.

\bibitem[Bak61]{PhysRev.124.768}
George~A. Baker.
\newblock Application of the pad\'e approximant method to the investigation of
  some magnetic properties of the ising model.
\newblock {\em Phys. Rev.}, 124:768--774, Nov 1961.

\bibitem[BCG{\etalchar{+}}23]{bardet2023rapid}
Ivan Bardet, {\'A}ngela Capel, Li~Gao, Angelo Lucia, David
  P{\'e}rez-Garc{\'\i}a, and Cambyse Rouz{\'e}.
\newblock Rapid thermalization of spin chain commuting hamiltonians.
\newblock {\em Physical Review Letters}, 130(6):060401, 2023.

\bibitem[BCG{\etalchar{+}}24]{bardet2024entropy}
Ivan Bardet, {\'A}ngela Capel, Li~Gao, Angelo Lucia, David
  P{\'e}rez-Garc{\'\i}a, and Cambyse Rouz{\'e}.
\newblock Entropy decay for davies semigroups of a one dimensional quantum
  lattice.
\newblock {\em Communications in Mathematical Physics}, 405(2):42, 2024.

\bibitem[BCL{\etalchar{+}}21]{bardet2021modified}
Ivan Bardet, Angela Capel, Angelo Lucia, David P{\'e}rez-Garc{\'\i}a, and
  Cambyse Rouz{\'e}.
\newblock On the modified logarithmic sobolev inequality for the heat-bath
  dynamics for 1d systems.
\newblock {\em Journal of Mathematical Physics}, 62(6), 2021.

\bibitem[BCL24]{bergamaschi2024quantum}
Thiago Bergamaschi, Chi-Fang Chen, and Yunchao Liu.
\newblock Quantum computational advantage with constant-temperature gibbs
  sampling.
\newblock {\em arXiv preprint arXiv:2404.14639}, 2024.

\bibitem[BCR22]{bardet2022approximate}
Ivan Bardet, Angela Capel, and Cambyse Rouz{\'e}.
\newblock Approximate tensorization of the relative entropy for noncommuting
  conditional expectations.
\newblock In {\em Annales Henri Poincar{\'e}}, volume~23, pages 101--140.
  Springer, 2022.

\bibitem[BDOT08]{bravyi2006complexity}
Sergey Bravyi, David~P. Divincenzo, Roberto Oliveira, and Barbara~M. Terhal.
\newblock The complexity of stoquastic local hamiltonian problems.
\newblock {\em Quantum Info. Comput.}, 8(5):361–385, May 2008.

\bibitem[BG17]{bravyi2017polynomial}
Sergey Bravyi and David Gosset.
\newblock Polynomial-time classical simulation of quantum ferromagnets.
\newblock {\em Physical review letters}, 119(10):100503, 2017.

\bibitem[BH13]{bravyi2013quantum}
Sergey Bravyi and Jeongwan Haah.
\newblock Quantum self-correction in the 3d cubic code model.
\newblock {\em Physical review letters}, 111(20):200501, 2013.

\bibitem[Bha13]{bhatia2013matrix}
Rajendra Bhatia.
\newblock {\em Matrix analysis}, volume 169.
\newblock Springer Science \& Business Media, 2013.

\bibitem[BKL{\etalchar{+}}19]{brandao2019quantum}
Fernando~GSL Brand{\~a}o, Amir Kalev, Tongyang Li, Cedric Yen-Yu Lin, Krysta~M
  Svore, and Xiaodi Wu.
\newblock Quantum sdp solvers: Large speed-ups, optimality, and applications to
  quantum learning.
\newblock In {\em 46th International Colloquium on Automata, Languages, and
  Programming (ICALP 2019)}. Schloss-Dagstuhl-Leibniz Zentrum f{\"u}r
  Informatik, 2019.

\bibitem[BLP{\etalchar{+}}16]{brown2016quantum}
Benjamin~J Brown, Daniel Loss, Jiannis~K Pachos, Chris~N Self, and James~R
  Wootton.
\newblock Quantum memories at finite temperature.
\newblock {\em Reviews of Modern Physics}, 88(4):045005, 2016.

\bibitem[BR16]{blass2016test}
Benjamin Bla{\ss} and Heiko Rieger.
\newblock Test of quantum thermalization in the two-dimensional
  transverse-field ising model.
\newblock {\em Scientific reports}, 6(1):38185, 2016.

\bibitem[Bra14]{bravyi2014monte}
Sergey Bravyi.
\newblock Monte carlo simulation of stoquastic hamiltonians.
\newblock {\em arXiv preprint arXiv:1402.2295}, 2014.

\bibitem[BS17]{brandao2017quantum}
Fernando~GSL Brandao and Krysta~M Svore.
\newblock Quantum speed-ups for solving semidefinite programs.
\newblock In {\em 2017 IEEE 58th Annual Symposium on Foundations of Computer
  Science (FOCS)}, pages 415--426. IEEE, 2017.

\bibitem[BT09]{bravyi2009no}
Sergey Bravyi and Barbara Terhal.
\newblock A no-go theorem for a two-dimensional self-correcting quantum memory
  based on stabilizer codes.
\newblock {\em New Journal of Physics}, 11(4):043029, 2009.

\bibitem[CB21]{chen2021fast}
Chi-Fang Chen and Fernando~GSL Brand{\~a}o.
\newblock Fast thermalization from the eigenstate thermalization hypothesis.
\newblock {\em arXiv preprint arXiv:2112.07646}, 2021.

\bibitem[CKBG23]{chen2023quantum}
Chi-Fang Chen, Michael~J Kastoryano, Fernando~GSL Brand{\~a}o, and Andr{\'a}s
  Gily{\'e}n.
\newblock Quantum thermal state preparation.
\newblock {\em arXiv preprint arXiv:2303.18224}, 2023.

\bibitem[CKG23]{chen2023efficient}
Chi-Fang Chen, Michael~J Kastoryano, and Andr{\'a}s Gily{\'e}n.
\newblock An efficient and exact noncommutative quantum gibbs sampler.
\newblock {\em arXiv preprint arXiv:2311.09207}, 2023.

\bibitem[CLBT10]{chesi2010thermodynamic}
Stefano Chesi, Daniel Loss, Sergey Bravyi, and Barbara~M Terhal.
\newblock Thermodynamic stability criteria for a quantum memory based on
  stabilizer and subsystem codes.
\newblock {\em New Journal of Physics}, 12(2):025013, 2010.

\bibitem[Cra07]{crawford2007thermodynamics}
Nicholas Crawford.
\newblock Thermodynamics and universality for mean field quantum spin glasses.
\newblock {\em Communications in mathematical physics}, 274:821--839, 2007.

\bibitem[CRF20]{capel2020modified}
{\'A}ngela Capel, Cambyse Rouz{\'e}, and Daniel~Stilck Fran{\c{c}}a.
\newblock The modified logarithmic sobolev inequality for quantum spin systems:
  classical and commuting nearest neighbour interactions.
\newblock {\em arXiv preprint arXiv:2009.11817}, 2020.

\bibitem[CS16]{chowdhury2016quantum}
Anirban~Narayan Chowdhury and Rolando~D Somma.
\newblock Quantum algorithms for gibbs sampling and hitting-time estimation.
\newblock {\em arXiv preprint arXiv:1603.02940}, 2016.

\bibitem[Dav74]{davies1974markovian}
E~Brian Davies.
\newblock Markovian master equations.
\newblock {\em Communications in mathematical Physics}, 39:91--110, 1974.

\bibitem[Dav76]{davies1976quantum}
Edward~Brian Davies.
\newblock Quantum theory of open systems.
\newblock 1976.

\bibitem[Dav79]{davies1979generators}
E~Brian Davies.
\newblock Generators of dynamical semigroups.
\newblock {\em Journal of Functional Analysis}, 34(3):421--432, 1979.

\bibitem[DCL24]{ding2024single}
Zhiyan Ding, Chi-Fang Chen, and Lin Lin.
\newblock Single-ancilla ground state preparation via lindbladians.
\newblock {\em Physical Review Research}, 6(3):033147, 2024.

\bibitem[DKLP02]{dennis2002topological}
Eric Dennis, Alexei Kitaev, Andrew Landahl, and John Preskill.
\newblock Topological quantum memory.
\newblock {\em Journal of Mathematical Physics}, 43(9):4452--4505, 2002.

\bibitem[DLL24]{ding2024efficient}
Zhiyan Ding, Bowen Li, and Lin Lin.
\newblock Efficient quantum gibbs samplers with kubo--martin--schwinger
  detailed balance condition.
\newblock {\em arXiv preprint arXiv:2404.05998}, 2024.

\bibitem[ES14]{erdHos2014phase}
L{\'a}szl{\'o} Erd{\H{o}}s and Dominik Schr{\"o}der.
\newblock Phase transition in the density of states of quantum spin glasses.
\newblock {\em Mathematical Physics, Analysis and Geometry}, 17(3):441--464,
  2014.

\bibitem[FLT24]{fang2024mixing}
Di~Fang, Jianfeng Lu, and Yu~Tong.
\newblock Mixing time of open quantum systems via hypocoercivity.
\newblock {\em arXiv preprint arXiv:2404.11503}, 2024.

\bibitem[Gam21]{gamarnik2021overlap}
David Gamarnik.
\newblock The overlap gap property: A topological barrier to optimizing over
  random structures.
\newblock {\em Proceedings of the National Academy of Sciences},
  118(41):e2108492118, 2021.

\bibitem[GCDK24]{gilyen2024quantum}
Andr{\'a}s Gily{\'e}n, Chi-Fang Chen, Joao~F Doriguello, and Michael~J
  Kastoryano.
\newblock Quantum generalizations of glauber and metropolis dynamics.
\newblock {\em arXiv preprint arXiv:2405.20322}, 2024.

\bibitem[GJK23]{gamarnik2023shattering}
David Gamarnik, Aukosh Jagannath, and Eren~C K{\i}z{\i}lda{\u{g}}.
\newblock Shattering in the ising pure $ p $-spin model.
\newblock {\em arXiv preprint arXiv:2307.07461}, 2023.

\bibitem[GJW24]{gamarnik2024hardness}
David Gamarnik, Aukosh Jagannath, and Alexander~S Wein.
\newblock Hardness of random optimization problems for boolean circuits,
  low-degree polynomials, and langevin dynamics.
\newblock {\em SIAM Journal on Computing}, 53(1):1--46, 2024.

\bibitem[GMZ22]{gamarnik2022disordered}
David Gamarnik, Cristopher Moore, and Lenka Zdeborov{\'a}.
\newblock Disordered systems insights on computational hardness.
\newblock {\em Journal of Statistical Mechanics: Theory and Experiment},
  2022(11):114015, 2022.

\bibitem[GMZ23]{gamarnik2023disordered}
David Gamarnik, Cris Moore, and Lenka Zdeborov{\'a}.
\newblock Disordered systems insights on computational hardness.
\newblock In {\em Spin Glass Theory and Far Beyond: Replica Symmetry Breaking
  After 40 Years}, pages 453--475. World Scientific, 2023.

\bibitem[GR22]{gao2022complete}
Li~Gao and Cambyse Rouz{\'e}.
\newblock Complete entropic inequalities for quantum markov chains.
\newblock {\em Archive for Rational Mechanics and Analysis}, 245(1):183--238,
  2022.

\bibitem[GSLW19]{gilyen2019quantum}
Andr{\'a}s Gily{\'e}n, Yuan Su, Guang~Hao Low, and Nathan Wiebe.
\newblock Quantum singular value transformation and beyond: exponential
  improvements for quantum matrix arithmetics.
\newblock In {\em Proceedings of the 51st Annual ACM SIGACT Symposium on Theory
  of Computing}, pages 193--204, 2019.

\bibitem[Has04]{hastings2004lieb}
Matthew~B Hastings.
\newblock Lieb-schultz-mattis in higher dimensions.
\newblock {\em Physical review b}, 69(10):104431, 2004.

\bibitem[Has11]{hastings2011topological}
Matthew~B Hastings.
\newblock Topological order at nonzero temperature.
\newblock {\em Physical review letters}, 107(21):210501, 2011.

\bibitem[HGL24]{hong2024quantum}
Yifan Hong, Jinkang Guo, and Andrew Lucas.
\newblock Quantum memory at nonzero temperature in a thermodynamically trivial
  system.
\newblock {\em arXiv preprint arXiv:2403.10599}, 2024.

\bibitem[HHKL21]{haah2021quantum}
Jeongwan Haah, Matthew~B Hastings, Robin Kothari, and Guang~Hao Low.
\newblock Quantum algorithm for simulating real time evolution of lattice
  hamiltonians.
\newblock {\em SIAM Journal on Computing}, 52(6):FOCS18--250, 2021.

\bibitem[HK06]{hastings2006spectral}
Matthew~B Hastings and Tohru Koma.
\newblock Spectral gap and exponential decay of correlations.
\newblock {\em Communications in mathematical physics}, 265:781--804, 2006.

\bibitem[HS22]{huang2022tight}
Brice Huang and Mark Sellke.
\newblock Tight lipschitz hardness for optimizing mean field spin glasses.
\newblock In {\em 2022 IEEE 63rd Annual Symposium on Foundations of Computer
  Science (FOCS)}, pages 312--322. IEEE, 2022.

\bibitem[JI24]{jiang2024quantum}
Jiaqing Jiang and Sandy Irani.
\newblock Quantum metropolis sampling via weak measurement.
\newblock {\em arXiv preprint arXiv:2406.16023}, 2024.

\bibitem[JS93]{jerrum1993polynomial}
Mark Jerrum and Alistair Sinclair.
\newblock Polynomial-time approximation algorithms for the ising model.
\newblock {\em SIAM Journal on computing}, 22(5):1087--1116, 1993.

\bibitem[JS96]{jerrum1996markov}
Mark Jerrum and Alistair Sinclair.
\newblock The markov chain monte carlo method: an approach to approximate
  counting and integration.
\newblock {\em Approximation algorithms for NP-hard problems}, pages 482--520,
  1996.

\bibitem[Kac74]{kac1974stochastic}
Mark Kac.
\newblock A stochastic model related to the telegrapher's equation.
\newblock {\em The Rocky Mountain Journal of Mathematics}, 4(3):497--509, 1974.

\bibitem[KACR24]{kochanowski2024rapid}
Jan Kochanowski, Alvaro~M Alhambra, Angela Capel, and Cambyse Rouz{\'e}.
\newblock Rapid thermalization of dissipative many-body dynamics of commuting
  hamiltonians.
\newblock {\em arXiv preprint arXiv:2404.16780}, 2024.

\bibitem[KB16]{kastoryano2016quantum}
Michael~J Kastoryano and Fernando~GSL Brandao.
\newblock Quantum gibbs samplers: The commuting case.
\newblock {\em Communications in Mathematical Physics}, 344:915--957, 2016.

\bibitem[KLCT16]{komar2016necessity}
Anna K{\'o}m{\'a}r, Olivier Landon-Cardinal, and Kristan Temme.
\newblock Necessity of an energy barrier for self-correction of abelian quantum
  doubles.
\newblock {\em Physical Review A}, 93(5):052337, 2016.

\bibitem[KT13]{kastoryano2013quantum}
Michael~J Kastoryano and Kristan Temme.
\newblock Quantum logarithmic sobolev inequalities and rapid mixing.
\newblock {\em Journal of Mathematical Physics}, 54(5), 2013.

\bibitem[LL24]{li2024quantum}
Bowen Li and Jianfeng Lu.
\newblock Quantum space-time poincar$\backslash$'e inequality for lindblad
  dynamics.
\newblock {\em arXiv preprint arXiv:2406.09115}, 2024.

\bibitem[LP17]{levin2017markov}
David~A Levin and Yuval Peres.
\newblock {\em Markov chains and mixing times}, volume 107.
\newblock American Mathematical Soc., 2017.

\bibitem[LPGPH23]{lucia2023thermalization}
Angelo Lucia, David P{\'e}rez-Garc{\'\i}a, and Antonio P{\'e}rez-Hern{\'a}ndez.
\newblock Thermalization in kitaev’s quantum double models via tensor network
  techniques.
\newblock In {\em Forum of Mathematics, Sigma}, volume~11, page e107. Cambridge
  University Press, 2023.

\bibitem[LR72]{lieb1972finite}
Elliott~H Lieb and Derek~W Robinson.
\newblock The finite group velocity of quantum spin systems.
\newblock {\em Communications in mathematical physics}, 28(3):251--257, 1972.

\bibitem[LRRS21]{leschke2021free}
Hajo Leschke, Sebastian Rothlauf, Rainer Ruder, and Wolfgang Spitzer.
\newblock The free energy of a quantum sherrington--kirkpatrick spin-glass
  model for weak disorder.
\newblock {\em Journal of Statistical Physics}, 182:1--41, 2021.

\bibitem[LZ22]{leverrier2022quantum}
Anthony Leverrier and Gilles Z{\'e}mor.
\newblock Quantum tanner codes.
\newblock In {\em 2022 IEEE 63rd Annual Symposium on Foundations of Computer
  Science (FOCS)}, pages 872--883. IEEE, 2022.

\bibitem[Mic14]{michnicki20143d}
Kamil~P Michnicki.
\newblock 3d topological quantum memory with a power-law energy barrier.
\newblock {\em Physical review letters}, 113(13):130501, 2014.

\bibitem[MM09]{mezard2009information}
Marc Mezard and Andrea Montanari.
\newblock {\em Information, physics, and computation}.
\newblock Oxford University Press, 2009.

\bibitem[MMZ05]{mezard2005clustering}
Marc M{\'e}zard, Thierry Mora, and Riccardo Zecchina.
\newblock Clustering of solutions in the random satisfiability problem.
\newblock {\em Physical Review Letters}, 94(19):197205, 2005.

\bibitem[MRR{\etalchar{+}}53]{metropolis1953equation}
Nicholas Metropolis, Arianna~W Rosenbluth, Marshall~N Rosenbluth, Augusta~H
  Teller, and Edward Teller.
\newblock Equation of state calculations by fast computing machines.
\newblock {\em The journal of chemical physics}, 21(6):1087--1092, 1953.

\bibitem[NS06]{nachtergaele2006lieb}
Bruno Nachtergaele and Robert Sims.
\newblock Lieb-robinson bounds and the exponential clustering theorem.
\newblock {\em Communications in mathematical physics}, 265:119--130, 2006.

\bibitem[PE71]{pfeuty1971ising}
P~Pfeuty and RJ~Elliott.
\newblock The ising model with a transverse field. ii. ground state properties.
\newblock {\em Journal of Physics C: Solid State Physics}, 4(15):2370, 1971.

\bibitem[PK22]{panteleev2022asymptotically}
Pavel Panteleev and Gleb Kalachev.
\newblock Asymptotically good quantum and locally testable classical ldpc
  codes.
\newblock In {\em Proceedings of the 54th Annual ACM SIGACT Symposium on Theory
  of Computing}, pages 375--388, 2022.

\bibitem[Ran06]{randall2006slow}
Dana Randall.
\newblock Slow mixing of glauber dynamics via topological obstructions.
\newblock In {\em Symposium on Discrete Algorithms: Proceedings of the
  seventeenth annual ACM-SIAM symposium on Discrete algorithm}, volume~22,
  pages 870--879. Citeseer, 2006.

\bibitem[RFA24]{rouze2024efficient}
Cambyse Rouz{\'e}, Daniel~Stilck Fran{\c{c}}a, and {\'A}lvaro~M Alhambra.
\newblock Efficient thermalization and universal quantum computing with quantum
  gibbs samplers.
\newblock {\em arXiv preprint arXiv:2403.12691}, 2024.

\bibitem[RPBK24]{rakovszky2024bottlenecks}
Tibor Rakovszky, Benedikt Placke, Nikolas~P. Breuckmann, and Vedika Khemani.
\newblock Bottlenecks in quantum channels and finite temperature phases of
  matter.
\newblock Upcoming work, 2024.

\bibitem[RW24]{rajakumar2024gibbs}
Joel Rajakumar and James~D Watson.
\newblock Gibbs sampling gives quantum advantage at constant temperatures with
  $ o (1) $-local hamiltonians.
\newblock {\em arXiv preprint arXiv:2408.01516}, 2024.

\bibitem[RWW23]{rall2023thermal}
Patrick Rall, Chunhao Wang, and Pawel Wocjan.
\newblock Thermal state preparation via rounding promises.
\newblock {\em Quantum}, 7:1132, 2023.

\bibitem[Sel24]{sellke2024threshold}
Mark Sellke.
\newblock The threshold energy of low temperature langevin dynamics for pure
  spherical spin glasses.
\newblock {\em Communications on Pure and Applied Mathematics},
  77(11):4065--4099, 2024.

\bibitem[SMK77]{suzuki}
Masuo Suzuki, Seiji Miyashita, and Akira Kuroda.
\newblock Monte carlo simulation of quantum spin systems. i.
\newblock {\em Progress of Theoretical Physics}, 58(5):1377--1387, 11 1977.

\bibitem[Tal00]{talagrand2000rigorous}
Michel Talagrand.
\newblock Rigorous low-temperature results for the mean field p-spins
  interaction model.
\newblock {\em Probability theory and related fields}, 117:303--360, 2000.

\bibitem[Tal10]{talagrand2010mean}
Michel Talagrand.
\newblock {\em Mean field models for spin glasses: Volume I: Basic examples},
  volume~54.
\newblock Springer Science \& Business Media, 2010.

\bibitem[Tem17]{temme2017thermalization}
Kristan Temme.
\newblock Thermalization time bounds for pauli stabilizer hamiltonians.
\newblock {\em Communications in Mathematical Physics}, 350:603--637, 2017.

\bibitem[Tho89]{thomas1989bound}
Lawrence~E Thomas.
\newblock Bound on the mass gap for finite volume stochastic ising models at
  low temperature.
\newblock {\em Communications in Mathematical Physics}, 126:1--11, 1989.

\bibitem[TK15]{temme2015fast}
Kristan Temme and Michael~J Kastoryano.
\newblock How fast do stabilizer hamiltonians thermalize?
\newblock {\em arXiv preprint arXiv:1505.07811}, 2015.

\bibitem[TOV{\etalchar{+}}11]{temme2011quantum}
Kristan Temme, Tobias~J Osborne, Karl~G Vollbrecht, David Poulin, and Frank
  Verstraete.
\newblock Quantum metropolis sampling.
\newblock {\em Nature}, 471(7336):87--90, 2011.

\bibitem[Tro59]{trotter1959product}
Hale~F Trotter.
\newblock On the product of semi-groups of operators.
\newblock {\em Proceedings of the American Mathematical Society},
  10(4):545--551, 1959.

\bibitem[VAGGdW17]{van2017quantum}
Joran Van~Apeldoorn, Andr{\'a}s Gily{\'e}n, Sander Gribling, and Ronald
  de~Wolf.
\newblock Quantum sdp-solvers: Better upper and lower bounds.
\newblock In {\em 2017 IEEE 58th Annual Symposium on Foundations of Computer
  Science (FOCS)}, pages 403--414. IEEE, 2017.

\bibitem[WT23]{wocjan2023szegedy}
Pawel Wocjan and Kristan Temme.
\newblock Szegedy walk unitaries for quantum maps.
\newblock {\em Communications in Mathematical Physics}, 402(3):3201--3231,
  2023.

\bibitem[WZ23]{wang2023quantum}
Qisheng Wang and Zhicheng Zhang.
\newblock Quantum lower bounds by sample-to-query lifting.
\newblock {\em arXiv preprint arXiv:2308.01794}, 2023.

\bibitem[Yos14]{yoshida2014violation}
Beni Yoshida.
\newblock Violation of the arrhenius law below the transition temperature.
\newblock {\em arXiv preprint arXiv:1404.0457}, 2014.

\end{thebibliography}
\bibliographystyle{alpha}

\clearpage

\appendix

\section{Properties of Lindbladian evolution}

\subsection{General Lindbladians}
\label{app:lindblad}
We summarize basic properties of CPTP maps from Lindbladian evolution and provide accompanying proofs. Any Lindbladian can be written in the form
\begin{equation}\label{eq:lb}
    \cL(\rho) = -i\Big[\sum_j B_j, \rho\Big] + \sum_k L_k \rho L_k^\dagger - \frac{1}{2}\{L_k^\dagger L_k, \rho\}
\end{equation}
for Hermitian operators $B_j$. Time evolution $e^{t \cL}$ under the Lindbladian corresponds to $t/\delta t$ applications of the channel $\udt = 1 + \delta t \cL$ in the limit as $\delta t \to 0$. Formally, we see that $\udt$ is a channel by introducing Kraus decomposition
\begin{equation}
    \udt(\rho) = \sum_{k \geq 0} M_l \rho M_l^\dagger, \quad M_0 = 1 + \delta t \lr{K-i\sum_j B_j}, \quad K = -\frac{1}{2}\sum_k L_k^\dagger L_k, \quad M_{k \geq 1} = \sqrt{\delta t} L_k
\end{equation}
and verifying that
\begin{equation}
    \sum_{k \geq 0} M_k^\dagger M_k = 1.
\end{equation}
This channel representation is used to show the following fact.
\begin{lemma}\label{lem:evolve-lindblad-dist}
Let $\cL$ and $\tilde \cL$ be any two Lindbladians. Then
\begin{equation}
    \norm{e^{t \cL} - e^{t \tilde \cL}}_\diamond \leq t \norm{\cL - \tilde \cL}_\diamond.
\end{equation}
\end{lemma}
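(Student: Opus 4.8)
The plan is to derive the bound from Duhamel's formula (the fundamental theorem of calculus applied to an interpolating family of superoperators) together with two standard properties of the diamond norm. Fix $t > 0$ and define, for $s \in [0,t]$, the one-parameter family of superoperators $\Phi(s) = e^{s\cL} \circ e^{(t-s)\tilde \cL}$, so that $\Phi(0) = e^{t\tilde \cL}$ and $\Phi(t) = e^{t\cL}$. Since all maps here act on the finite-dimensional space $L(\bH)$, $\Phi$ is differentiable, and using $\tfrac{\dd}{\dd s}e^{s\cL} = e^{s\cL}\circ\cL$ and $\tfrac{\dd}{\dd s}e^{(t-s)\tilde \cL} = -\tilde \cL\circ e^{(t-s)\tilde \cL}$ one gets $\Phi'(s) = e^{s\cL} \circ (\cL - \tilde \cL) \circ e^{(t-s)\tilde \cL}$. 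Integrating,
\begin{equation}
    e^{t\cL} - e^{t\tilde \cL} \;=\; \Phi(t) - \Phi(0) \;=\; \int_0^t e^{s\cL} \circ (\cL - \tilde \cL) \circ e^{(t-s)\tilde \cL}\, \dd s.
\end{equation}

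Next I would take the diamond norm of both sides and move it inside the integral via the triangle inequality for (Bochner) integrals:
\begin{equation}
    \norm{e^{t\cL} - e^{t\tilde \cL}}_\diamond \;\leq\; \int_0^t \norm{e^{s\cL} \circ (\cL - \tilde \cL) \circ e^{(t-s)\tilde \cL}}_\diamond\, \dd s.
\end{equation}
Then I invoke two facts about the diamond norm: (i) it is submultiplicative under composition, $\norm{\mathcal{A}\circ\mathcal{B}\circ\mathcal{C}}_\diamond \leq \norm{\mathcal{A}}_\diamond\norm{\mathcal{B}}_\diamond\norm{\mathcal{C}}_\diamond$ (which follows since $\norm{\cdot}_\diamond$ is a supremum over ampliations $\mathcal{M}\mapsto\mathcal{M}\otimes\mathrm{id}$ of the operator norm induced by the trace norm, and that operator norm is itself submultiplicative while ampliation respects composition); and (ii) every CPTP map has diamond norm exactly $1$. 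Because $\cL$ and $\tilde \cL$ are Lindblad generators, each $e^{s\cL}$ and $e^{(t-s)\tilde \cL}$ is a quantum channel, so both outer factors contribute $1$, leaving
\begin{equation}
    \norm{e^{t\cL} - e^{t\tilde \cL}}_\diamond \;\leq\; \int_0^t \norm{\cL - \tilde \cL}_\diamond\, \dd s \;=\; t\,\norm{\cL - \tilde \cL}_\diamond.
\end{equation}

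An equivalent route, closer to the discretization $\udt = 1 + \delta t\,\cL$ used in this appendix, is to telescope
\begin{equation}
    \bigl(1 + \tfrac{t}{N}\cL\bigr)^N - \bigl(1 + \tfrac{t}{N}\tilde \cL\bigr)^N \;=\; \sum_{k=0}^{N-1}\bigl(1+\tfrac{t}{N}\cL\bigr)^k \circ \tfrac{t}{N}(\cL - \tilde \cL) \circ \bigl(1+\tfrac{t}{N}\tilde \cL\bigr)^{N-1-k},
\end{equation}
bound each of the $N$ summands by $\tfrac{t}{N}\norm{\cL-\tilde \cL}_\diamond$ up to multiplicative factors that tend to $1$ as $N\to\infty$ (using $\norm{1 + \tfrac{t}{N}\cL}_\diamond = 1 + O(1/N^2)$ for the completely positive map $1 + \tfrac{t}{N}\cL$), and pass to the limit $N\to\infty$ using $e^{t\cL} = \lim_N (1+\tfrac tN\cL)^N$; this recovers the same inequality. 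I would present the Duhamel version as the main argument and leave the discretized version as a remark. The only genuinely substantive ingredients are the two diamond-norm facts (i) and (ii): they are classical, and they are the sole places where the structure of $\cL$ as a legitimate Lindblad generator (rather than an arbitrary linear map) enters — worth noting explicitly, since otherwise the inequality would fail. Everything else is routine finite-dimensional calculus, so I expect no real obstacle beyond bookkeeping.
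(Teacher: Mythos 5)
Your proof is correct, and the main argument takes a genuinely different route from the paper's. The paper discretizes time: it introduces the channel $\mathcal{U}_{\delta t} = 1 + \delta t\,\cL$ (with an explicit Kraus decomposition showing it is CPTP), telescopes the difference $\mathcal{U}_{\delta t}^j - \widetilde{\mathcal{U}}_{\delta t}^j$, bounds each summand using submultiplicativity and the unit diamond norm of CPTP maps, and then takes $\delta t \to 0$ with $j = t/\delta t$. Your Duhamel argument replaces the discretization and limit with a single integral identity $e^{t\cL} - e^{t\tilde\cL} = \int_0^t e^{s\cL}\circ(\cL-\tilde\cL)\circ e^{(t-s)\tilde\cL}\,\dd s$, so there is no limiting procedure to control; the only remaining ingredients are the same two diamond-norm facts used by the paper. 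Your approach is shorter and arguably cleaner; the paper's discretization has the ancillary benefit of explicitly exhibiting the Kraus form of $\mathcal{U}_{\delta t}$, which is reused elsewhere in the appendix and in the main text where discrete-time samplers appear. (Your ``equivalent discretized route'' remark is essentially the paper's proof.) One place you are slightly terse: you assert without proof that $e^{s\cL}$ is a channel because $\cL$ is a Lindbladian; this is true and standard, but the paper takes care to verify it via the Kraus decomposition of $\mathcal{U}_{\delta t}$, which is worth at least a citation if you keep the Duhamel version.
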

\begin{proof}
Consider $j$ applications of corresponding channels $\udt$ and $\tudt$. By the triangle inequality, these satisfy
\begin{equation}
\begin{split}
    \norm{\udt^j - \tudt^j}_\diamond 
    &= \norm{\udt \cU^{j-1} - \tudt \udt^{j-1} + \tudt \udt^{j-1} - \tudt \tudt^{j-1}}_\diamond \\
    &\leq \norm{(\udt - \tudt)\udt^{j-1}}_\diamond + \norm{\tudt(\udt^{j-1}-\tudt^{j-1})}_\diamond.
\end{split}
\end{equation}
By submultiplicativity under the diamond norm, we have
\begin{equation}
\begin{split}
    \norm{\udt^j - \tudt^j}_\diamond 
    &\leq \norm{\udt - \tudt}_\diamond \norm{\udt^{j-1}}_\diamond + \norm{\tudt}_\diamond \norm{\udt^{j-1}-\tudt^{j-1}}_\diamond \\
    &\leq \norm{\udt - \tudt}_\diamond + \norm{\udt^{j-1}-\tudt^{j-1}}_\diamond.    
\end{split}
\end{equation}
where we used that $\udt$ and $\tudt$ (and their repeated applications) are CPTP maps and hence have unit diamond norm. Repeating the above procedure $j$ times, we obtain
\begin{equation}
    \norm{\udt^j - \tudt^j}_\diamond \leq j \norm{\udt - \tudt}_\diamond.
\end{equation}
To obtain the claimed statement, we simply choose $j=t/\delta t$ and use the definitions $\udt = 1 + \delta t \cL$ and $\tudt = 1 + \delta t \tilde \cL$:
\begin{equation}
    \norm{e^{t \cL} - e^{t \tilde \cL}}_\diamond = \lim_{\delta t \to 0} \norm{\udt^{t/\delta t} - \tudt^{t/\delta t}}_\diamond \leq \frac{t}{\delta t}\norm{\udt - \tudt}_\diamond \leq t \norm{\cL - \tilde \cL}_\diamond.
\end{equation}
\end{proof}
In this work, we are interested in the setting where $\cL$ is a Lindbladian and $\tilde \cL$ is a local approximation of it, i.e., one where $L_k$ and $B_j$ are replaced by local (or quasi-local) operators. In such a setting, $\cL$ and $\tilde \cL$ have the same number of terms $B_j$ and $L_k$. For convenience, we write the next property under this assumption.

\begin{lemma}\label{lem:lindblad-dist}
Let $\cL$ and $\tilde \cL$ be any two Lindbladians of the form in \eqref{eq:lb}, where both Lindbladians have $m$ operators $B_j$ and $\tilde B_j$, and $m'$ operators $L_k$ and $\tilde L_k$. Then
\begin{equation}
    \norm{\cL - \tilde \cL}_\diamond \leq 2 \sum_{j=1}^m \norm{B_j - \tilde B_j} + 5\sum_{k=1}^{m'} \lr{\bnorm{L_k}+\norm{\tilde L_k}}\norm{L_k-\tilde L_k}.
\end{equation}
\end{lemma}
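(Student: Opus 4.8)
The plan is to expand $\cL-\tilde\cL$ into a small number of elementary superoperators and bound each in diamond norm using nothing beyond the triangle inequality and Hölder's inequality for Schatten norms, $\norm{AXB}_1\le\norm{A}_\infty\norm{X}_1\norm{B}_\infty$. Writing
\begin{equation}
(\cL - \tilde\cL)(\rho) = -i\Bigl[\sum_{j=1}^m (B_j - \tilde B_j),\, \rho\Bigr] + \sum_{k=1}^{m'}\bigl( L_k \rho L_k^\dagger - \tilde L_k \rho \tilde L_k^\dagger\bigr) - \frac{1}{2}\sum_{k=1}^{m'}\bigl\{ L_k^\dagger L_k - \tilde L_k^\dagger \tilde L_k,\, \rho\bigr\},
\end{equation}
the triangle inequality for $\norm{\cdot}_\diamond$ reduces the claim to three building blocks: (i) a single-commutator superoperator $\rho\mapsto[B,\rho]$, (ii) a ``sandwich-difference'' superoperator $\rho\mapsto L\rho L^\dagger - \tilde L\rho\tilde L^\dagger$, and (iii) a single-anticommutator superoperator $\rho\mapsto\{M,\rho\}$.

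For blocks (i) and (iii) I would use the definition $\norm{\Phi}_\diamond=\sup\{\norm{(\Phi\otimes I)(X)}_1:\norm{X}_1\le1\}$ together with the fact that tensoring with the identity on an ancilla does not change the operator norm, $\norm{B\otimes I}_\infty=\norm{B}$. Then Hölder plus the triangle inequality give $\norm{(B\otimes I)X - X(B\otimes I)}_1 \le 2\norm{B}$, i.e.\ $\norm{[B,\cdot]}_\diamond\le 2\norm{B}$, and identically $\norm{\{M,\cdot\}}_\diamond\le 2\norm{M}$. Applying the former with $B=\sum_j(B_j-\tilde B_j)$ and one more triangle inequality produces the $2\sum_j\norm{B_j-\tilde B_j}$ contribution. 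For block (ii) the key step is the pair of telescoping identities
\begin{equation}
L\rho L^\dagger - \tilde L\rho\tilde L^\dagger = L\rho(L - \tilde L)^\dagger + (L - \tilde L)\rho\tilde L^\dagger, \qquad L^\dagger L - \tilde L^\dagger\tilde L = L^\dagger(L - \tilde L) + (L - \tilde L)^\dagger\tilde L .
\end{equation}
The first identity with Hölder bounds the sandwich-difference superoperator in diamond norm by $\lr{\norm{L}+\norm{\tilde L}}\norm{L-\tilde L}$; the second bounds $\norm{L^\dagger L - \tilde L^\dagger\tilde L}$ by the same quantity, so via (iii) the $k$-th anticommutator term contributes at most $\lr{\norm{L}+\norm{\tilde L}}\norm{L-\tilde L}$ as well. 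Summing the two pieces for each $k$, absorbing the resulting constant into the stated $5$ (which leaves room to spare), and summing over $k$ yields the claimed inequality.

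I do not expect a genuine obstacle here; the only point requiring care is that the diamond norm is defined through an ancilla, which is harmless precisely because $\norm{A\otimes I}_\infty=\norm{A}$ and $\norm{X}_1\le1$ are all that the Schatten–Hölder estimates ever use, and the rest is bookkeeping of telescoped terms. If anything, the ``hard'' part is purely cosmetic: choosing how to split the sandwich and anticommutator differences so the final constants come out at most $2$ and $5$ respectively rather than something larger.
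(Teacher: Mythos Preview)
Your proposal is correct and follows essentially the same route as the paper: split $\cL-\tilde\cL$ into commutator, sandwich, and anticommutator pieces, then bound each via Hölder and a telescoping identity. Your two-term telescope $L\rho L^\dagger-\tilde L\rho\tilde L^\dagger=L\rho(L-\tilde L)^\dagger+(L-\tilde L)\rho\tilde L^\dagger$ is in fact slightly cleaner than the paper's three-term version (which picks up an extra $\norm{L_k-\tilde L_k}^2$ that is then reabsorbed), so you end up with a constant $2$ in front of the dissipative sum rather than the paper's effective $3$---both comfortably under the stated $5$.
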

\begin{proof}
Applying the triangle inequality to the definition of diamond norm (and simplifying by cyclicity of trace), we have
\begin{equation}
\begin{split}
    \norm{\cL - \tilde \cL}_\diamond &\leq \max_\psi \bigg[2\sum_j \norm{(B_j \otimes I) \psi - (\tilde B_j \otimes I) \psi}_1 + \sum_k \norm{(L_k \otimes I) \psi (L_k^\dagger\otimes I) - (\tilde L_k \otimes I) \psi (\tilde L_k^\dagger\otimes I)}_1 \\
    &\qquad\qquad + \sum_k \norm{(L_k^\dagger L_k \otimes I)\psi - (\tilde L_k^\dagger \tilde L_k \otimes I)\psi}\bigg].
\end{split}
\end{equation}
We note two inequalities that account for the two relevant forms of norms in the above expression. For any $V$ and $\tilde V$, letting $D = V - \tilde V$, we observe that
\begin{equation}
    \max_\psi \norm{(V \otimes I) \psi - (\tilde V \otimes I)\psi}_1 = \norm{(D\otimes I)\psi}_1 \leq \norm{D \otimes I} \leq \norm{D}
\end{equation}
by H\"older's inequality and the fact that $\psi$ is a density matrix and hence $\norm{\psi}_1 = 1$. Similarly, we have by the triangle inequality and cyclicity of trace,
\begin{equation}
\begin{split}
    \max_\psi \norm{(V\otimes I)\psi (V^\dagger \otimes I) - (\tilde V\otimes I)\psi (\tilde V^\dagger \otimes I)}_1 &\leq \norm{(V\otimes I)\psi(D\otimes I)}_1 + \norm{(\tilde V\otimes I)\psi(D\otimes I)}_1 \\
    &\qquad + \norm{(D\otimes I)\psi(D\otimes I)}_1.
\end{split}
\end{equation}
Again applying H\"older's inequality, we obtain
\begin{equation}
    \max_\psi \norm{(V\otimes I)\psi (V^\dagger \otimes I) - (\tilde V\otimes I)\psi (\tilde V^\dagger \otimes I)}_1 \leq \lr{\bnorm{V} + \norm{\tilde V}}\norm{D} + \norm{D}^2.
\end{equation}
Using appropriate choices of $V$ as $B_j$, $L_k$, or $L_k^\dagger L_k$, and the corresponding operator for $\tilde V$, these two inequalities give
\begin{equation}
    \norm{\cL - \tilde \cL}_\diamond \leq 2 \sum_{j=1}^m \norm{B_j - \tilde B_j} + \sum_{k=1}^{m'} \lr{\lr{\bnorm{L_k}+\norm{\tilde L_k}}\norm{L_k-\tilde L_k} + \norm{L_k - \tilde L_k}^2 + \norm{L_k^\dagger L_k - \tilde L_k^\dagger \tilde L_k}}.
\end{equation}
Finally, we apply the triangle inequality and submultiplicativity to loosen the right hand side. Specifically, we take
\begin{equation}
    \norm{L_k - \tilde L_k}^2 \leq \lr{\bnorm{L_k} + \norm{\tilde L_k}}\norm{L_k - \tilde L_k}
\end{equation}
and
\begin{equation}
    \norm{L_k^\dagger L_k - \tilde L_k^\dagger \tilde L_k} \leq \norm{L_k^\dagger (L_k - \tilde L_k)} + \norm{(L_k^\dagger - \tilde L_k^\dagger) \tilde L_k} \leq \lr{\bnorm{L_k} + \norm{\tilde L_k}} \norm{L_k - \tilde L_k}
\end{equation}
to recover the claimed result.
\end{proof}

\subsection{Lindbladian of~\cite{chen2023efficient}}
\label{app:chenlindblad}
We show here useful properties of the Lindbladian of~\cite{chen2023efficient}, which we use as an example for our techniques. The algorithm uses a Lindbladian of the form
\begin{equation}
\label{eq:chenlind}
\begin{split}
    \cL(\rho) =& -i\Big[\int_{-\infty}^\infty \int_{-\infty}^\infty dt\;dt' \sum_{a=1}^J B_a(t, t'), \rho\Big]\\
    &\quad + \int_{-\infty}^\infty \int_{-\infty}^\infty d\omega \; dt \sum_{a=1}^J \lr{L_a(\omega, t) \rho L_a(\omega, t)^\dagger - \frac{1}{2}\{L_a(\omega, t)^\dagger L_a(\omega, t), \rho\}},    
\end{split}
\end{equation}
where the operators $B_a(t, t')$ and $L_a(\omega, t)$ are defined in terms of time evolutions of $J$ constant-local jump operators $A_a$:
\begin{equation}
\label{eq:chendefs}
\begin{split}
    &B_a(t, t') = A_a'(\beta t) b(t, t'), \\ 
    &A_a' = A_a^\dagger(-\beta t') A_a(\beta t'), \\
    & b(t, t') = \frac{e^{1/8}}{\pi} \exp\left[-4t'^2 - 2it'\right] \int_{-\infty}^\infty ds \; \frac{\sin(s-t)e^{-2(s-t)^2}}{\cosh(2\pi s)}\\
    &L_a(\omega, t) = A_a(-t) l(\omega, t),\\
    &l(\omega, t) = \frac{1}{\sqrt{\beta\sqrt{\pi/2}}}e^{-(\beta \omega+1)^2/2} e^{-t^2/\beta^2}.
\end{split}
\end{equation}
We will prove approximations to $\cL$ that replace operators $L_a$ and $B_a$ with $\tilde L_a$ and $\tilde A_a$, which correspond to truncating the integrals with respect to $t$ at time $t = R\beta$, i.e.,
\begin{equation}
\label{eq:lops}
\begin{split}
    &\tilde L_a(\omega, t) = \begin{cases}
        L_a(\omega, t) & |t| \leq R\beta\\
        0 & |t| > R\beta
    \end{cases} \\
    &\tilde B_a(t, t') = \begin{cases}
        \tilde A_a'(\beta t) & |t| \leq R\\
        0 & |t| > R
    \end{cases} \\
    &\tilde A_a'(0) = \begin{cases}
        A_a^\dagger(-\beta t') A_a(\beta t') & |t'| \leq R\\
        0 & |t'| > R
    \end{cases}    
\end{split}
\end{equation}

\begin{lemma}
\label{lem:chenops}
The Lindblad operators of \eqref{eq:lops} satisfy
\begin{equation}
\begin{split}
    \norm{\int_{-\infty}^\infty dt \; \lr{L_a(\omega, t) - \tilde L_a(\omega, t)}} &= O\lr{e^{-R^2 -(\beta \omega + 1)^2/2} \beta^{1/2}}\\
    \norm{\int_{-\infty}^\infty dt \; \tilde L_a(\omega, t)} , \; \norm{\int_{-\infty}^\infty dt \; L_a(\omega, t)} &= O\lr{e^{-(\beta \omega + 1)^2/2}\beta^{1/2}}\\
    \norm{\int_{-\infty}^\infty \int_{-\infty}^\infty dt \; dt' \; \lr{B_a(t, t') - \tilde B_a(t, t')}} &= O\lr{e^{-2\pi R}}.
\end{split}
\end{equation}
\end{lemma}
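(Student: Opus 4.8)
The plan is to reduce all three estimates to elementary tail bounds on the scalar kernels $l(\omega,t)$ and $b(t,t')$. In each case the integrand is a scalar weight times a Heisenberg-evolved jump operator, and the scalar weight is absolutely integrable, so the operator-valued integral is a genuine Bochner integral with $\norm{\int(\cdot)\,dt}\le\int\norm{\cdot}\,dt$. The only operator facts needed are that the jump operators are normalized, $\norm{A_a}\le 1$, and that Heisenberg evolution $A\mapsto e^{iHs}Ae^{-iHs}$ is conjugation by a unitary, so $\norm{A_a(s)}=\norm{A_a}$ for every $s$, and therefore $\norm{A_a'(\beta t)}=\norm{A_a^\dagger(-\beta t')A_a(\beta t')}\le\norm{A_a}^2\le 1$ for all $t,t'$.

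For the two $L_a$ bounds I would use $L_a(\omega,t)=A_a(-t)\,l(\omega,t)$ to get $\norm{\int dt\,L_a(\omega,t)}\le\norm{A_a}\int|l(\omega,t)|\,dt$, after which $\int_{-\infty}^\infty e^{-t^2/\beta^2}\,dt=\beta\sqrt\pi$ produces the factor $\beta^{1/2}e^{-(\beta\omega+1)^2/2}$ (the truncated $\tilde L_a$ obeys the same bound over a subinterval). For the truncation error, $L_a-\tilde L_a$ is supported on $|t|>R\beta$, so after rescaling $t=\beta u$ the residual weight integrates to $\beta\sqrt\pi\,\erfc(R)\le\beta\sqrt\pi\,e^{-R^2}$, giving $O\lr{\beta^{1/2}e^{-R^2-(\beta\omega+1)^2/2}}$.

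The $B_a$ bound is the substantive one, and its crux is a pointwise estimate on $b(t,t')$. I would bound the inner integral using $\cosh(2\pi s)\ge\tfrac12 e^{2\pi|s|}$ together with $|s|\ge|t|-|s-t|$, which factors out $e^{-2\pi|t|}$ and leaves a convergent integral $\int e^{-2u^2+2\pi|u|}\,du$; this yields $|b(t,t')|\le C_1 e^{-4t'^2-2\pi|t|}$ for an absolute constant $C_1$. Next I would observe that $B_a-\tilde B_a$ equals $A_a'(\beta t)\,b(t,t')$ times the indicator of $\{|t|>R\}\cup\{|t'|>R\}$ — on the complementary rectangle $[-R,R]^2$ the two versions agree, and it is the vanishing of $\tilde A_a'$ for $|t'|>R$ that makes the support a union rather than a product. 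Bounding the operator norm by $\norm{A_a}^2$ and the union indicator by the sum of the two single-variable indicators then splits the remaining scalar double integral into two factorized pieces, $\lr{\int e^{-4t'^2}dt'}\lr{\int_{|t|>R}e^{-2\pi|t|}dt}=O(e^{-2\pi R})$ and $\lr{\int e^{-2\pi|t|}dt}\lr{\int_{|t'|>R}e^{-4t'^2}dt'}=O(e^{-4R^2})$; since $e^{-4R^2}=O(e^{-2\pi R})$ uniformly in $R\ge0$, both collapse to the claimed $O(e^{-2\pi R})$.

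I expect the main obstacle to be the pointwise estimate on $b(t,t')$: one has to extract the $e^{-2\pi|t|}$ decay from the competition between the slowly decaying factor $1/\cosh(2\pi s)$ and the Gaussian $e^{-2(s-t)^2}$ concentrated near $s=t$, and the shift trick above (replacing $1/\cosh$ by its exponential envelope after translating by $t$) is what makes this clean at the cost of a lossy constant. A secondary point of care is the two-variable truncation of $\tilde B_a$, whose support is a union of two half-strips rather than a rectangle; this is exactly why the final rate is the slower $e^{-2\pi R}$ coming from the $t$-truncation rather than the $e^{-4R^2}$ one might hope for from the Gaussian in $t'$.
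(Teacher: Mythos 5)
Your argument is correct and follows the same route as the paper: reduce to $L^1$ bounds on the scalar kernels $l(\omega,t)$ and $b(t,t')$ using $\norm{A_a(s)}\le 1$ for unitarily conjugated jump operators, with the exponential envelope of $1/\cosh(2\pi s)$ (after shifting by $t$) supplying the $e^{-2\pi|t|}$ decay of $b_1$. One place where you are more explicit than the paper: you split the support of $B_a-\tilde B_a$ as $\{|t|>R\}\cup\{|t'|>R\}$ and verify the second piece contributes $O(e^{-4R^2})=O(e^{-2\pi R})$ uniformly in $R\ge 0$, whereas the paper's final display only writes the integral over $|t|>R$; both reach the same conclusion.
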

\begin{proof}
The first line follows from
\begin{equation}
\begin{split}
    \norm{\int_{-\infty}^\infty dt \; l(\omega, t) A_a(-t) - \int_{-R\beta}^{R\beta} dt \; l(\omega, t) A_a(-t)} 
    &\leq \frac{4e^{-(\beta\omega+1)^2/2}}{\sqrt{\beta\sqrt{\pi/2}}}\norm{\int_{R\beta}^\infty dt \; e^{-t^2/\beta^2}} \\
    &= O\lr{e^{-R^2-(\beta\omega+1)^2/2}\beta^{1/2}}.    
\end{split}
\end{equation}
The second line is similarly given by
\begin{equation}
    \norm{\int_{-\infty}^\infty dt \; L_a(\omega, t)} \leq \int_{-\infty}^\infty dt \; l(\omega, t) \norm{A_a(-t)} \leq \frac{e^{-(\beta\omega+1)^2/2}}{\sqrt{\beta\sqrt{\pi/2}}}\int_{-\infty}^\infty dt \; e^{-t^2/\beta^2} = O\lr{e^{-(\beta\omega+1)^2/2}\beta^{1/2}},
\end{equation}
with an almost identical computation for $\tilde L_a$. For the final expression, we rewrite
\begin{equation}
    \int_{-\infty}^\infty \int_{-\infty}^\infty dt\; dt'\; B_a(t, t') = \int_{-\infty}^\infty \int_{-\infty}^\infty dt\; dt'\; A_a'(\beta t) b_1(t) b_2(t')
\end{equation}
for
\begin{equation}
    b_1(t) = \int_{-\infty}^\infty ds \frac{\sin(s-t)e^{-2(s-t)^2}}{\cosh(2\pi s)}, \quad b_2(t') = \frac{e^{1/8}}{\pi} e^{-4t'^2 - 2it'}.
\end{equation}
Note that
\begin{equation}
    \left|\frac{\sin(s-t)}{\cosh(2\pi s)}\right| \leq 2e^{-2\pi|s|}
\end{equation}
and thus
\begin{equation}
\begin{split}
    |b_1(t)| &\leq 2\int_{-\infty}^\infty ds\; \exp\left[-2(s-t)^2 - 2\pi|s|\right] \\
    &\leq 2\int_{0}^\infty ds\; \exp\left[-2\lr{s-t+\frac{\pi}{2}}^2 + \frac{\pi(\pi-4t)}{2}\right] + 2\int_{-\infty}^0 ds\; \exp\left[-2\lr{s-t-\frac{\pi}{2}}^2 + \frac{\pi(\pi+4t)}{2}\right]\\
    &\leq \sqrt{\frac{\pi}{2}}e^{\pi^2/2}\left[e^{-2\pi t} \erfc\lr{\frac{\pi - 2t}{\sqrt{2}}} + e^{2\pi t} \erfc\lr{\frac{\pi + 2t}{\sqrt{2}}}\right]\\
    &\leq \sqrt{2\pi} e^{\pi^2/2} e^{-2\pi |t|}.
\end{split}
\end{equation}
Taking $|b_2(t')| \leq e^{-4t'^2}$ and $\norm{A_a'(\beta t)} = 1$, we bound
\begin{equation}
    \norm{\int_{-\infty}^\infty \int_{-\infty}^\infty dt\; dt' \; B_a(t, t') - B_a(t, t')} \leq 4 \sqrt{2\pi} e^{\pi^2/2} \int_R^\infty \int_{-\infty}^\infty dt \; dt' \; e^{-2\pi t} e^{-4t'^2} = O\lr{e^{-2\pi R}}.
\end{equation}
\end{proof}
We combine the above results with Lemma~\ref{lem:lindblad-dist} to obtain the following useful result.
\begin{theorem}\label{thm:chen-bounds}
Let $\cL$ be the Lindbladian of~\cite{chen2023efficient} shown in~\eqref{eq:chenlind}, and let $\tilde \cL$ be the Lindbladian with operators $B_a, L_a$ replaced with operators $\tilde B_a, \tilde L_a$ of~\eqref{eq:lops} truncated at time $R\beta$. Then there exists a constant $c$ such that for $R \geq c \log J/\delta$,
\begin{equation}
    \norm{\cL - \tilde \cL}_\diamond \leq \delta.
\end{equation}
\end{theorem}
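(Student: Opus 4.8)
The plan is to combine the general Lindbladian perturbation estimate of Lemma~\ref{lem:lindblad-dist} with the operator-norm truncation bounds of Lemma~\ref{lem:chenops}. First I would recast $\cL$ and $\tilde\cL$ in the standard form \eqref{eq:lb}: the coherent part of $\cL$ is generated by the single Hermitian operator $G := \int_{-\infty}^\infty\int_{-\infty}^\infty dt\,dt'\sum_{a=1}^J B_a(t,t')$, with $\tilde G$ defined analogously from the truncated $\tilde B_a$ (Hermiticity of $\tilde G$ is preserved because the truncation windows $|t|,|t'|\le R$ are symmetric), so on the coherent side there is effectively one term. The dissipative part is generated by a continuous family of jump operators indexed by $(a,\omega,t)$; since Lemma~\ref{lem:lindblad-dist} is stated for finite families, the one point requiring care is to note that its proof goes through verbatim with the finite sums over jump operators replaced by the integrals $\int d\omega\,dt$, all of which converge by the Gaussian decay of $l(\omega,t)$ (equivalently, discretize the integrals, apply Lemma~\ref{lem:lindblad-dist} to the resulting Riemann sums, and pass to the limit). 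This yields
\[
    \norm{\cL-\tilde\cL}_\diamond \le 2\norm{G-\tilde G} + 5\sum_{a=1}^J\int_{-\infty}^\infty\int_{-\infty}^\infty d\omega\,dt\,\lr{\norm{L_a(\omega,t)}+\norm{\tilde L_a(\omega,t)}}\norm{L_a(\omega,t)-\tilde L_a(\omega,t)}.
\]

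Next I would bound the two pieces. For the coherent piece, $\norm{G-\tilde G}\le\sum_{a=1}^J\norm{\int\int dt\,dt'\,(B_a(t,t')-\tilde B_a(t,t'))}=O(J e^{-2\pi R})$ directly from the third estimate of Lemma~\ref{lem:chenops}. For the dissipative piece, the key observation is that each jump operator factorizes as $L_a(\omega,t)=l(\omega,t)A_a(-t)$ with a positive scalar prefactor and $\norm{A_a(-t)}=\norm{A_a}=O(1)$ (unitary conjugation), so $\norm{L_a(\omega,t)},\norm{\tilde L_a(\omega,t)}\le |l(\omega,t)|\,O(1)$ while $\norm{L_a(\omega,t)-\tilde L_a(\omega,t)}=|l(\omega,t)|\,\mathbf 1[|t|>R\beta]\,O(1)$. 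Hence the integrand above is pointwise $O(l(\omega,t)^2\,\mathbf 1[|t|>R\beta])$, and plugging in $l(\omega,t)^2\propto\beta^{-1}e^{-(\beta\omega+1)^2}e^{-2t^2/\beta^2}$ and carrying out the elementary $\omega$-integral and the Gaussian tail $t$-integral over $|t|>R\beta$ gives a dissipative contribution of $O(J\,\poly{\beta}\,e^{-2R^2})$; this is the same Gaussian computation already done inside the proof of Lemma~\ref{lem:chenops}, now applied to the square of the integrand.

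Combining, $\norm{\cL-\tilde\cL}_\diamond=O\lr{J\,\poly{\beta}\lr{e^{-2\pi R}+e^{-2R^2}}}=O\lr{J\,\poly{\beta}\,e^{-2\pi R}}$ once $R$ exceeds an absolute constant (for $R\ge\pi$ the term $e^{-2R^2}$ is dominated by $e^{-2\pi R}$). Demanding that this be at most $\delta$ reduces to $2\pi R\ge\log(J/\delta)+O(\log\beta)+O(1)$, which holds whenever $R\ge c\log(J/\delta)$ for a suitable constant $c$ (absorbing the $O(1)$ and the $O(\log\beta)$ term, negligible since $\beta=O(1)$, into $c$), giving the claimed bound. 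The only genuinely delicate step is the continuum extension of Lemma~\ref{lem:lindblad-dist}; everything else is assembling the estimates of Lemma~\ref{lem:chenops} and bounding one-dimensional Gaussian integrals, so I do not anticipate a substantive obstacle.
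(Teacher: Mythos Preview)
Your proposal is correct and follows essentially the same route as the paper: combine the Lindbladian perturbation bound of Lemma~\ref{lem:lindblad-dist} with the truncation estimates of Lemma~\ref{lem:chenops}, getting $O(Je^{-2\pi R})$ for the coherent part and a Gaussian tail bound for the dissipative part. The only cosmetic difference is that the paper applies Lemma~\ref{lem:lindblad-dist} at the level of the $t$-integrated operators $\int dt\,L_a(\omega,t)$ (which is exactly what the first two lines of Lemma~\ref{lem:chenops} compute), whereas you apply it pointwise in $(a,\omega,t)$ and then integrate $l(\omega,t)^2$; both give the same $R\ge c\log(J/\delta)$ conclusion.
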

\begin{proof}
From the triangle inequality and integration over $\omega$,
\begin{equation}
\begin{split}
    \sum_{a=1}^J \norm{\int_{-\infty}^\infty \int_{-\infty}^\infty dt\;dt' \; \lr{B_a(t, t') - \tilde B_a(t, t')}} &= O\lr{J e^{-2\pi R}}\\
    \sum_{a=1}^J \int_{-\infty}^\infty d\omega \; \lr{\bnorm{L_a(\omega, t)} + \norm{\tilde L_a(\omega, t)}}\norm{L_a(\omega, t) - \tilde L_a(\omega, t)} &= O\lr{J e^{-R^2}}.
\end{split}
\end{equation}
By Lemma~\ref{lem:lindblad-dist}, there exists a constant $c$ such that for $R \geq c \log J/\delta$,
\begin{equation}
    \norm{\cL - \tilde \cL}_\diamond = O(\delta).
\end{equation}
\end{proof}

\section{Example bound on operator range of Gibbs sampler}
\label{app:ex_operator_range_bound}

The sampler of \cite{jiang2024quantum} is a discrete sampler which follows in principle the reject-accept procedure that is standard in Metropolis like procedures. Though we will not detail the algorithm in full, we will note the basic properties that allow us to prove it is a bounded degree operation in the jump operators.

Throughout this section, we assume the Hamiltonian $H$ from which we would like to sample has eigenbasis
\begin{equation}
    H = \sum_i E_i \ket{\psi_i}\bra{\psi_i}.
\end{equation}

The sampler acts on a quantum state with $4$ registers. Let us assume the first three registers have $n, n_2, n_3,$ qubits. The last register will have $1$ qubit. An initial state $\ket{\phi}$ is inputted into the sampler into register $1$ and the remaining three registers are initialized as $\ket{0^{n_2}}\ket{0^{n_3}}\ket{0^{n_4}}$. The algorithm follows the steps below. 
\begin{itemize}
    \item Apply quantum phase estimation $\operatorname{QPE}_{1,2}$ to the state in register $1$ outputting the value in register $2$. See below for description of $\operatorname{QPE}$. Then, apply a jump operator $A^a$ followed again by quantum phase estimation $\operatorname{QPE}_{1,3}$ outputting the value now in register $3$. We denote this sequence of unitary operations as $U_C=\operatorname{QPE}_{1,3}A^a\operatorname{QPE}_{1,2}$.
    \item Let $W$ be a unitary acting on registers $2,3,4$. Measure register $4$ which outputs a binary value. The form of this unitary will not matter for our purposes as all that matters for our purposes is that it does not act on register $1$.
    \item If the outcome of the measurement above is $0$, then apply $W^\dagger$ and measure again the last qubit.
    \item If this outcome of this measurement is also equal to $0$, then apply $U_C^\dagger$ to the first three registers.
\end{itemize}

Let $\operatorname{QPE}_{a,b}$ denote quantum phase estimation on a state in register $a$ and output of the phase estimation procedure in state $b$. The particular form of phase estimation used is stated in \cite{jiang2024quantum}. What is important for us is to note the general property that for a given Hamiltonian $H$, if the state in register $a$ is an eigenstate $\ket{\psi_i}$ of the Hamiltonian, then this procedure does not change the state in the first register:
\begin{equation}
    \operatorname{QPE}_{a,b} \ket{\psi_i}\ket{A} = \ket{\psi_i} U(E_i) \ket{A}.
\end{equation}
We use the notation $U(E_i)$ to stress that the form of the unitary only depends on the energy of the eigenstate.

Now, let us show that the action of the sampler $\cM$ in \cite{jiang2024quantum} has Kraus operators that are degree $2$ in the jump operators $A_{j \to i}^a$. Assume w.l.o.g. we have a starting state $\ket{\phi}\bra{\phi}$ which is pure. Any mixed state will be a mixture of such pure states so we only need to consider the action of $\cM$ on one of these pure states. We write $\ket{\phi}$ in the eigenbasis of $H$ as
\begin{equation}
    \ket{\phi}= \sum_i c_i \ket{\psi_i}
\end{equation}
for coefficients $c_i$. The action of $U_C$ is then
\begin{equation}
    U_C \ket{\phi} = \sum_i \sum_j  c_i \bra{\psi_j} A^a \ket{\psi_i} \ket{\psi_j}  \otimes U(E_i)\ket{0^{n_2}} \otimes U(E_j)\ket{0^{n_3}} \otimes \ket{0}
\end{equation}

Upon applying $W$ and measuring to obtain an outcome $O_W \in \{0,1\}$, we get one of two states depending on this value. Let us denote this operation as $M_W$
\begin{equation}
    M_W U_C \ket{\phi} = \sum_i \sum_j  c_i c_W(E_i,E_j,O_W) \bra{\psi_j} A^a \ket{\psi_i} \ket{\psi_j}  \otimes \ket{\phi_2(O_W,i)} \otimes \ket{\phi_3(O_W,j)} \otimes \ket{O_W},
\end{equation}
where $c_W(E_i,E_j,O_W)$ are coefficients depending on $i,j,$ and the output measurement and $\ket{\phi_2(O_W,i)}$, $\ket{\phi_3(O_W,j)}$ denote the states upon collapse after measurement. This process may be repeated again upon application of $W^\dagger$ and a measurement of the last qubit which we will denote as $O_W'\in \{0,1\}$. Note that the above is still linear in the transitions $\bra{\psi_j} A^a \ket{\psi_i}$ even after this second measurement. Finally if one applies $U_C^\dagger$, we obtain a (potentially mixed) state where the first register spans over states of the form
\begin{equation}
   \sum_{i,j,k} c_k \bra{\psi_j} A^a \ket{\psi_i} \bra{\psi_k} (A^a)^\dagger \ket{\psi_j} \ket{\psi_k}
\end{equation}
for coefficients $c_k$ which depend on the outcomes of the measurements. Therefore, it is clear to see that the Kraus operators of the sampler $\cM$ can only induce up to two jumps in the jump operators $A^a$.

\section{Omitted proofs for Sec.~\ref{sec:localtfim}}
\label{app:localtfim}

Here, we provide the proofs omitted in Sec.~\ref{sec:localtfim} of the main text. We begin by showing the linear algebra Facts~\ref{fact:pos} and~\ref{fact:pow}, reproduced here.

\begin{fact}
    For arbitrary matrices $A, B$ with only nonnegative entries and for diagonal projector $\Pi$,
    \begin{equation}
        \tr(A \Pi B) \leq \tr(A B).
    \end{equation}
\end{fact}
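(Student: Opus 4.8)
The plan is to expand both traces in components and observe that the left-hand side is obtained from the right-hand side by discarding a subset of manifestly nonnegative terms. First I would write $\Pi = \sum_{z \in S} \ketbra{z}{z}$ for some subset $S$ of computational basis states, so that in the computational basis $\Pi_{jk} = 1_{j \in S}\,\delta_{jk}$. Then
\begin{equation}
    \tr(A\Pi B) = \sum_{i,j,k} A_{ij}\Pi_{jk}B_{ki} = \sum_{i}\sum_{j \in S} A_{ij}B_{ji},
    \qquad
    \tr(AB) = \sum_{i,j} A_{ij}B_{ji}.
\end{equation}

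Next I would use the hypothesis that $A$ and $B$ have only nonnegative entries: each summand $A_{ij}B_{ji}$ is a product of two nonnegative reals and hence nonnegative. Therefore the sum defining $\tr(A\Pi B)$ is a sub-sum (over $j \in S$ rather than over all $j$) of the sum defining $\tr(AB)$, and dropping the remaining nonnegative terms $\sum_i \sum_{j \notin S} A_{ij}B_{ji} \geq 0$ can only decrease the value. This gives $\tr(A\Pi B) \le \tr(AB)$, as claimed.

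There is essentially no obstacle here; the only point requiring a word of care is that $\Pi$ is \emph{diagonal} in the same basis in which $A$ and $B$ are assumed to have nonnegative entries, which is exactly the hypothesis (the projector $\Pi$ being diagonal in the computational basis, and the matrix elements being taken in that basis). If one wished to avoid index manipulations entirely, an alternative one-line argument is to write $\tr(AB) - \tr(A\Pi B) = \tr\big(A(I-\Pi)B\big)$ and note that $I-\Pi$ is again a diagonal $0/1$ matrix, so $A(I-\Pi)B$ is a product of entrywise-nonnegative matrices and thus has nonnegative diagonal, whence its trace is nonnegative.
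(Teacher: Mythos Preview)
Your proof is correct and is essentially identical to the paper's own argument, which also just expands $\tr(A\Pi B) = \sum_{i,j} A_{ij}\Pi_{jj}B_{ji} \leq \sum_{i,j} A_{ij}B_{ji}$ using nonnegativity of the entries. Your alternative via $\tr(A(I-\Pi)B) \geq 0$ is a nice repackaging of the same idea.
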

\begin{proof}
    \begin{equation}
        \tr(A \Pi B) = \sum_{i,j} A_{ij} \Pi_{jj} B_{ji} \leq \sum_{i,j} A_{ij} B_{ji}.
    \end{equation}
\end{proof}

\begin{fact}
    Let $A$ be a PSD matrix such that $A^p$ only has nonnegative entries for all $p \geq 0$, and let $\Pi$ be a diagonal projector. Then, for any $\gamma \geq 1$,
    \begin{equation}
    \lr{\Tr\lr{(A \Pi A)^\gamma}}^{1/\gamma} \leq \lr{\Tr(\Pi A^{2\gamma})}^{1/\gamma}.
    \end{equation}
\end{fact}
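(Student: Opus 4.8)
The plan is to reduce the statement to a standard trace/operator-norm inequality by exploiting the hypothesis that every power $A^p$ has nonnegative entries. First I would observe that it suffices to prove the case where $\gamma$ is a positive integer; the general case $\gamma \geq 1$ follows because $\Tr((A\Pi A)^\gamma)^{1/\gamma}$ and $\Tr(\Pi A^{2\gamma})^{1/\gamma}$ are both continuous in $\gamma$, and (after a short argument) log-convexity-type monotonicity lets one interpolate, or alternatively one can just run the entrywise bound below for real $\gamma$ directly since it never uses integrality. So let $\gamma = m \in \mathbb{Z}_{\geq 1}$.

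The key step is an entrywise domination. Write $M = A\Pi A$. Since $A$ has nonnegative entries and $\Pi = \sum_{i \in S}\ketbra{i}{i}$ is a coordinate projector, $M_{ij} = \sum_{k \in S} A_{ik}A_{kj} \leq \sum_{k} A_{ik}A_{kj} = (A^2)_{ij}$, i.e. $0 \leq M_{ij} \leq (A^2)_{ij}$ entrywise. Iterating, $(M^m)_{ij} = \sum_{k_1,\dots,k_{m-1}} M_{ik_1}M_{k_1 k_2}\cdots M_{k_{m-1}j} \leq \sum (A^2)_{ik_1}\cdots (A^2)_{k_{m-1}j} = (A^{2m})_{ij}$, using nonnegativity of all entries to justify term-by-term comparison. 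Hence $\Tr(M^m) = \sum_i (M^m)_{ii} \leq \sum_i (A^{2m})_{ii} = \Tr(A^{2m})$. It remains to relate $\Tr(A^{2m})$ to $\Tr(\Pi A^{2m})$; but this direction is the wrong way, so instead I refine the last step: keep one projector. Specifically, repeat the expansion of $\Tr(M^m) = \Tr\big((A\Pi A)^m\big)$ but only bound the $m-1$ ``interior'' copies of $\Pi$ by the identity via the entrywise argument, leaving one copy; by cyclicity $\Tr((A\Pi A)^m) = \Tr(\Pi A (A\Pi A)^{m-1} A)$, and $A(A\Pi A)^{m-1}A$ is entrywise dominated by $A \cdot A^{2(m-1)} \cdot A = A^{2m}$ while $\Pi$ picks out diagonal entries indexed by $S$, so $\Tr((A\Pi A)^m) \leq \sum_{i \in S}(A^{2m})_{ii} = \Tr(\Pi A^{2m})$. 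Taking $m$-th roots gives the claim for integer $\gamma$.

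The main obstacle is the non-integer case: for real $\gamma \geq 1$ one cannot literally expand $(A\Pi A)^\gamma$ as a finite product. I would handle this either (i) by the Araki–Lieb–Thirring inequality, which gives $\Tr((A\Pi A)^\gamma) \leq \Tr((A^{2}\Pi)^{\gamma}) = \Tr((\Pi A^2 \Pi)^\gamma)$ — wait, this needs care since $A\Pi A$ and $A^2\Pi$ are related by a cyclic-type manipulation but ALT applies to $\Tr((B^{1/2}CB^{1/2})^\gamma) \le \Tr((C^{1/2}BC^{1/2})^\gamma)$ forms — applying it with $B = A^2$, $C = \Pi$ yields $\Tr((A\Pi A)^\gamma)\le \Tr((\Pi A^2 \Pi)^\gamma)$, and then one needs $\Tr((\Pi A^2\Pi)^\gamma) \le \Tr(\Pi A^{2\gamma})$, which is itself a Golden–Thompson/operator-monotonicity-type statement requiring the nonnegativity hypothesis; or (ii) simply proving the inequality first for $\gamma \in \mathbb{Z}_{\geq 1}$ as above and then extending to all $\gamma \ge 1$ by a limiting/convexity argument on $g(\gamma) := \gamma^{-1}\log\Tr\big((A\Pi A)^\gamma\big)$ versus $h(\gamma) := \gamma^{-1}\log\Tr\big(\Pi A^{2\gamma}\big)$, both of which are nondecreasing in $\gamma$ with matching behavior. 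I expect route (ii) combined with the clean entrywise argument to be the most transparent, and I would present the integer case in full and remark that the real case follows by the same entrywise reasoning applied after a dyadic approximation of $\gamma$, or by ALT as in route (i).
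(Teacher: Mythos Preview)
Your integer-case argument is correct and coincides with the paper's handling of the integer part (the paper phrases it as repeated application of Fact~\ref{fact:pos}). The gap is in the non-integer case.

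Route (ii) does not work as written: knowing that $g$ and $h$ are each monotone in $\gamma$ tells you nothing about $g(\gamma)\le h(\gamma)$ at a non-integer $\gamma$ from the inequality at integers, and ``dyadic approximation'' still leaves you with a genuinely fractional power to control. Route (i) is the right instinct, but you have misstated Araki--Lieb--Thirring: the displayed inequality $\Tr\bigl((B^{1/2}CB^{1/2})^\gamma\bigr)\le\Tr\bigl((C^{1/2}BC^{1/2})^\gamma\bigr)$ is in fact an \emph{equality} (the two matrices share the same nonzero spectrum), not ALT. The correct ALT, applied with exponent $q=1/\gamma\le 1$ and using $\Pi^{1/q}=\Pi$, gives $\Tr\bigl[(A\Pi A)^\gamma\bigr]\le\Tr\bigl[A^\gamma\Pi A^\gamma\bigr]=\Tr\bigl[\Pi A^{2\gamma}\bigr]$ in one line---and, incidentally, without the nonnegativity hypothesis at all.

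The paper takes neither of your routes. It splits $\gamma=\lfloor\gamma\rfloor+\{\gamma\}$ and handles the fractional exponent via L\"owner--Heinz: since $\Pi\preccurlyeq I$ gives $A\Pi A\preccurlyeq A^2$, and $x\mapsto x^p$ is operator monotone for $p\in(0,1]$, one obtains $(A\Pi A)^{\{\gamma\}}\preccurlyeq (A^2)^{\{\gamma\}}$; the remaining integer power $\lfloor\gamma\rfloor$ is then dispatched by exactly your entrywise argument. This operator-monotonicity step is the tool your write-up is missing.
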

\begin{proof}
    \begin{equation}
    \begin{split}
        \left[\Tr\lr{(A \Pi A)^\gamma}\right]^{1/\gamma} &= \left[\Tr\lr{(A \Pi A)^{\lfloor\gamma\rfloor} \lr{A \Pi A}^{\gamma - \lfloor \gamma \rfloor}}\right]^{1/\gamma}\\
        &\leq \left[\Tr\lr{(A \Pi A)^{\lfloor\gamma\rfloor} \lr{A^2}^{\gamma - \lfloor \gamma \rfloor}}\right]^{1/\gamma}\\
        &= \left[\Tr\lr{(A \Pi A)(A \Pi A)^{\lfloor\gamma\rfloor-1} \lr{A^2}^{\gamma - \lfloor \gamma \rfloor}}\right]^{1/\gamma}\\
        &\leq \left[\Tr((A \Pi A) (A^2)^{\lfloor \gamma \rfloor - 1} A^{2(\gamma - \lfloor \gamma \rfloor)})\right]^{1/\gamma}\\
        &= \left[\Tr((A\Pi A) A^{2(\gamma-1)})\right]^{1/\gamma}\\
        &= \left[\Tr(\Pi A^{2\gamma})\right]^{1/\gamma},
    \end{split}
    \end{equation}
    where the first inequality uses the following facts: (1) $\Tr(M_1 M_2) \leq \Tr(M_1 M_2')$ if $M_1, M_2, M_2'$ are PSD with $M_2 \leq M_2'$, (2) $M_1 \geq M_2$ implies $M_1^p \geq M_2^p$ for $p \in (0, 1]$~\cite{bhatia2013matrix}, and (3) $A\Pi A \leq A$. To prove (1), observe $\Tr((M_2'-M_2)M_1) = \Tr(\sqrt{M_1}(M_2'-M_1)\sqrt{M_1}) \geq 0$. The second inequality follows by repeatedly applying Fact~\ref{fact:pos} to remove projectors one by one from the integer power $\lfloor \gamma \rfloor - 1$.
\end{proof}

We review the notation used in the main text for the $\sqrt n \times \sqrt n$ transverse field Ising model $H$ defined in~\eqref{eq:tfim}. Let $\sigma \in \{\pm 1\}^n$ be a configuration with a fault line of length $\ell$ and at most $k$ defects. Let $S$ denote the set of spins on one side of the fault line, and let $\partial S$ denote the set of spins adjacent to the fault line. Let $\sigma'$ be a configuration with all spins in $S$ flipped, i.e., $\sigma'_i = (-1)^{1_{i \in S}}\sigma_i$.

For all $i \in [n]$, define $s_i(t) \in \{\pm1\}$ by the spin-flip process
\begin{equation}
    s_i(t) = (-1)^{\cN_i(t)}, \quad \cN_i(t) \sim \mathrm{Pois}(\beta h t),
\end{equation}
where $\cN_i(t)$ are independent $\mathbb{N}$-valued Poisson processes on the positive half-line with mean $\beta h t$ and rate $\beta h$. Let $s(t)$ denote the full sequence $s_1(t), \dots, s_n(t)$ defined for $t\in[0,1]$. We use $\cD s(t)$ to denote the measure over Poisson processes $s_1(t), \dots, s_n(t)$. Finally, for a given $s(t)$, let $m_{\partial S}(s(t))$ denote the number of spins in $\partial S$ that are flipped in any time $t \in [0, 1]$; i.e.,
\begin{equation}
    m_{\partial S}(s(t)) = \left|\{i:i \in \partial S \; \mathrm{and} \; \exists \; t_* \in [0, 1] \; \mathrm{s.t.} \; s_i(t_*) = -1\}\right|.
\end{equation}
We show here the proof of Lemma~\ref{lem:fm}, whose statement is reproduced as follows.
\begin{lemma}
    The quantity
    \begin{equation}
        f_m = \int \cD s(t)\, \delta\lr{m_{\partial S}s(t) - m} \pr{s(t) \,|\, s(1)=1, \; m_{\partial S}(s(t))=m} \exp\left[\beta\int_0^1 dt \sum_{\langle i,j \rangle} \sigma_i s_i(t) \sigma_j s_j(t)\right]
    \end{equation}
    satisfies
    \begin{equation}
        e^{-8\beta m} f \leq f_m \leq e^{8\beta m} f
    \end{equation}
    for some $f$ that is independent of $m$.
\end{lemma}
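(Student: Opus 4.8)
The plan is to define the reference quantity $f$ by the same path integral but with the spin-flip process \emph{suppressed on all of $\partial S$}, i.e., conditioning on $s_i(t) \equiv 1$ for every $i \in \partial S$ (and running the unconditioned Poisson dynamics elsewhere, still conditioned on $s(1)=1$ for spins outside $\partial S$). Concretely,
\begin{equation}
    f = \int \cD s(t)\, \mathbb{1}\big[s_i(t) \equiv 1 \ \forall i \in \partial S\big]\, \pr{s(t)\,|\,s(1)=1}\, \exp\left[\beta\int_0^1 dt \sum_{\langle i,j \rangle} \sigma_i s_i(t) \sigma_j s_j(t)\right],
\end{equation}
which is manifestly independent of $m$. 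The task is then to compare the integrand of $f_m$ (which has $m$ specified spins of $\partial S$ flipping at least once) with that of $f$.

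The key observation is that changing the trajectory of a single spin $i \in \partial S$ from ``identically $+1$'' to ``flips some positive number of times but returns to $+1$ at $t=1$'' changes the exponent $\beta \int_0^1 dt \sum_{\langle i,j\rangle} \sigma_i s_i(t)\sigma_j s_j(t)$ by a bounded amount. Indeed, spin $i$ participates in at most $4$ nearest-neighbor bonds on the 2D lattice, and for each such bond the term $\sigma_i s_i(t) \sigma_j s_j(t) \in \{\pm 1\}$, so flipping $s_i$ on a subset of $[0,1]$ changes $\int_0^1 \sum_{\langle i,j\rangle:\, j} \sigma_i s_i(t)\sigma_j s_j(t)\, dt$ by at most $2 \cdot 4 = 8$ in absolute value — hence the exponent changes by at most $8\beta$ per flipped spin. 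Summing over the $m$ spins of $\partial S$ that are flipped gives a multiplicative factor between $e^{-8\beta m}$ and $e^{8\beta m}$ relating the exponential weight in $f_m$ to that in $f$. The step I would be most careful about is making the comparison of \emph{measures} precise: one wants a measure-preserving bijection between trajectories counted by $f_m$ and trajectories counted by $f$. This is obtained by fixing which $m$ spins of $\partial S$ are ``active,'' and for those spins mapping a trajectory to the constant-$+1$ trajectory while recording nothing about the discarded flip data — so more cleanly, one writes $f_m = \sum_{T \subseteq \partial S, |T|=m} (\text{weight of configs with exactly the spins in } T \text{ flipped})$ and $f$ as the $T=\varnothing$ term, and observes that the Poisson law for the spins in $\partial S$ conditioned on returning to $+1$ assigns each such event a probability that, once we factor out the contribution of the spins in $\partial S$, leaves the \emph{remaining} integrand (over spins outside $\partial S$, conditioned on $s(1)=1$) \emph{identical} to that in $f$; only the deterministic exponential prefactor differs, and that is controlled by the $\pm 8\beta m$ bound above.

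More carefully, I would proceed in these steps. (1) Decompose $f_m$ over the choice of active set $T \subseteq \partial S$ with $|T| = m$, and inside each term split the trajectory $s(t)$ into its restriction to $T$, to $\partial S \setminus T$ (which is frozen at $+1$), and to $[n]\setminus \partial S$; by independence of the Poisson processes the measure factorizes. (2) For the $\partial S\setminus T$ and $[n]\setminus\partial S$ pieces, the conditional law is exactly the one appearing in $f$ (with $T=\varnothing$), so those integrate to the same thing. (3) For the $T$-piece, bound the ratio of the exponential weight $\exp[\beta\int_0^1 dt \sum_{\langle i,j\rangle}\sigma_i s_i(t)\sigma_j s_j(t)]$ evaluated on an arbitrary admissible $T$-trajectory versus the all-$+1$ trajectory; by the per-spin argument above this ratio lies in $[e^{-8\beta m}, e^{8\beta m}]$ uniformly, so the $T$-piece integral is within that factor of its value when $T$-spins are also frozen. (4) But when \emph{all} spins of $\partial S$ are frozen, the sum over $T$ is replaced by a single term, which is precisely $f$ — more precisely, summing the resulting bound over all $\binom{|\partial S|}{m}$ choices of $T$ and using that these choices are already accounted for in the definition of $f_m$ via $\delta(m_{\partial S}(s(t)) - m)$ together with the conditional probability normalization, the combinatorial factors cancel and one is left with $e^{-8\beta m} f \leq f_m \leq e^{8\beta m} f$.

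The main obstacle, as flagged, is bookkeeping: ensuring the $\delta(m_{\partial S}(s(t)) - m)$ and the conditional probability $\pr{s(t)\,|\,s(1)=1,\, m_{\partial S}(s(t))=m}$ are handled so that the combinatorial normalization is consistent, and that ``freezing'' a spin of $\partial S$ genuinely produces the same object $f$ regardless of $m$ (it does, because $f$ as defined above has \emph{no} spins of $\partial S$ active and hence no $m$-dependence). Once the measures are correctly matched, the estimate itself is the elementary $\pm 8\beta$ per flipped boundary spin bound, using only that the 2D lattice has degree $4$ and each bond term is $\pm 1$. I would also note that the constant $8$ here is consistent with (indeed slightly looser than would be optimal for) the appearance of the $14 = 2 + 12$ combination in Lemma~\ref{lem:ratiosupport}, where this lemma is combined with the $6 m_{\partial S}$ bound of Lemma~\ref{lem:midratio}.
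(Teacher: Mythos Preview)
Your approach is essentially the paper's: decompose over the active set $T \subseteq \partial S$ of boundary spins that flip, freeze those spins, and control the change in the exponential by the at-most-$4m$ bonds touching $T$, each contributing a change of at most $2\beta$, for a total factor of $e^{\pm 8\beta m}$; the paper does the same thing in two $e^{\pm 4\beta m}$ steps (first remove the $M_{\partial S}(T)$ terms entirely, then reinsert them with frozen spins), but the arithmetic and the resulting $f$ are identical.

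One correction to your explicit formula for $f$: as written, with the indicator $\mathbb{1}[s_i(t)\equiv 1\ \forall i\in\partial S]$ weighted by the measure $\pr{s(t)\,|\,s(1)=1}$ rather than by $\pr{s(t)\,|\,s(1)=1,\ m_{\partial S}=0}$, your $f$ equals $\pr{m_{\partial S}=0\,|\,s(1)=1}\cdot f_{\mathrm{paper}}$, a strictly smaller quantity, and then the upper bound $f_m \leq e^{8\beta m} f$ would fail. Your verbal steps (1)--(4), however, implicitly use the correctly normalized conditional expectation --- the paper's $f$, which is just $f_0$ --- and with that fix the argument goes through; this is precisely the ``bookkeeping'' issue you already flagged.
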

\begin{proof}
Let $M_{\partial S}$ contain all lattice neighbors $\langle i, j\rangle$ such that $i \in \partial S$ and/or $j \in \partial S$, and let $M_{\partial S}^c$ contain all remaining lattice neighbors. For any $R \subseteq \partial S$, let $M_{\partial S} = M_{\partial S}(R) \cup M_{\partial S}(R)^c$ where $M_{\partial S}(R)$ is the subset of $M_{\partial S}$ with at least one site in $R$, and $M_{\partial s}(R)^c$ is the subset with neither site in $R$.

We use $p_{s_i(t)}$ to denote the probability of a particular time-path $s_i(t)$ defined over $t \in [0, 1]$. For simplicity of notation, we will also write the integral over the measure of paths as a sum.

For normalization factor
\begin{equation}
\label{eq:am}
    A_m = \sum_{\substack{s_i(t):i \notin \partial S\\ s_i(1)=1}} \sum_{\substack{R \subseteq \partial S \\|R| = m}} \sum_{\substack{s_i(t):i\in\partial S, \, i\notin R\\ s_i(1)=1, \, s_i(t)=1}} \sum_{\substack{s_i(t):i \in R\\ s_i(1)=1, \, \exists t_* \in [0,1] \;\text{st}\; s_i(t_*)=-1}} \lr{\prod_{i \notin \partial S} p_{s_i(t)}}\lr{\prod_{i \in R} p_{s_i(t)}} \lr{\prod_{\substack{i \in \partial S\\i \notin R}} p_{s_i(t)}},
\end{equation}
we have
\begin{equation}
\begin{split}
    f_m &= \frac{1}{A_m} \sum_{\substack{s_i(t):i \notin \partial S\\ s_i(1)=1}} \lr{\prod_{i \notin \partial S} p_{s_i(t)}} \exp\left[\beta \int_0^1 dt \sum_{\langle i, j \rangle \in M_{\partial S}^c} \sigma_i s_i(t) \sigma_j s_j(t)\right]\\
    &\quad \times \sum_{\substack{R \subseteq \partial S \\|R| = m}} \sum_{\substack{s_i(t):i\in\partial S, \, i\notin R\\ s_i(1)=1, \, s_i(t)=1}} \lr{\prod_{i \in \partial S, \, i \notin R} p_{s_i(t)}} \exp\left[\beta \int_0^1 dt \sum_{\langle i, j \rangle \in M_{\partial S}(R)^c} \sigma_i s_i(t) \sigma_j s_j(t)\right]\\
    &\quad \times \sum_{\substack{s_i(t):i \in R\\ s_i(1)=1, \, \exists t_* \in [0,1] \;\text{st}\; s_i(t_*)=-1}} \lr{\prod_{i \in R} p_{s_i(t)}} \exp\left[\beta \int_0^1 dt \sum_{\langle i, j \rangle \in M_{\partial S}(R)} \sigma_i s_i(t) \sigma_j s_j(t)\right].
\end{split}
\end{equation}
Since there are at most $m$ sites in $R$ and each site can have at most 4 neighbors, $|M_{\partial S}(R)| \leq 4m$ and thus for any $s(t)$,
\begin{equation}
\label{eq:4mrep}
    \exp[-4\beta m] \leq \exp\left[\beta \int_0^1 dt \sum_{\langle i, j \rangle \in M_{\partial S}(R)} \sigma_i s_i(t) \sigma_j s_j(t)\right] \leq \exp[4\beta m].
\end{equation}
Since the steps for the upper and lower bounds are identical up to the sign of factors of $\beta m$ in the exponential, we only write the upper bound here. We replace the spins in the last exponential with the bounds of~\eqref{eq:4mrep}, and we explicitly write the conditioned choice of $s_i(t)=1$ for sites satisfying both $i \in \partial S, i \notin R$ to obtain
\begin{equation}
\begin{split}
    f_m &\leq \frac{\exp[4\beta m]}{A_m} \sum_{\substack{s_i(t):i \notin \partial S\\ s_i(1)=1}} \lr{\prod_{i \notin \partial S} p_{s_i(t)}} \exp\left[\beta \int_0^1 dt \sum_{\langle i, j \rangle \in M_{\partial S}^c} \sigma_i s_i(t) \sigma_j s_j(t)\right]\\
    &\quad \times \sum_{\substack{R \subseteq \partial S \\|R| = m}} \exp\left[\beta \int_0^1 dt \sum_{\langle i, j \rangle \in M_{\partial S}(R)^c} \sigma_i \sigma_j \lr{\delta_{i \in \partial S} + \delta_{i \notin \partial S} s_i(t)} \lr{\delta_{j \in \partial S} + \delta_{j \notin \partial S} s_j(t)}\right]\\
    &\quad \times \sum_{\substack{s_i(t):i\in\partial S, \, i\notin R\\ s_i(1)=1, \, s_i(t)=1}} \lr{\prod_{i \in \partial S, \, i \notin R} p_{s_i(t)}} \sum_{\substack{s_i(t):i \in R\\ s_i(1)=1, \, \exists t_* \in [0,1] \;\text{st}\; s_i(t_*)=-1}} \lr{\prod_{i \in R} p_{s_i(t)}},
\end{split}
\end{equation}
where the membership indicator functions $\delta_{i \in \partial S}$ set $s_i(t)=1$ for all $t$ if $i \in \partial S$, and otherwise use the $s_i(t)$ defined in the first sum. Note that in the second line, the condition $i\in \partial S$ implies that $i \notin R$ due to the sum over $M_{\partial S}(R)^c$. We now substitute summing over $M_{\partial S}(R)^c$ with summing over $M_{\partial S}$ in the second line, incurring an additional $\exp[\pm 4\beta m]$ in the upper and lower bounds due to the inequality of~\eqref{eq:4mrep}:
\begin{equation}
\begin{split}
    f_m &\leq \frac{\exp[8\beta m]}{A_m} \sum_{\substack{s_i(t):i \notin \partial S\\ s_i(1)=1}} \lr{\prod_{i \notin \partial S} p_{s_i(t)}} \exp\left[\beta \int_0^1 dt \sum_{\langle i, j \rangle \in M_{\partial S}^c} \sigma_i s_i(t) \sigma_j s_j(t)\right]\\
    &\quad \times \exp\left[\beta \int_0^1 dt \sum_{\langle i, j \rangle \in M_{\partial S}} \sigma_i \sigma_j \lr{\delta_{i \in \partial S} + \delta_{i \notin \partial S} s_i(t)} \lr{\delta_{j \in \partial S} + \delta_{j \notin \partial S} s_j(t)}\right]\\
    &\quad \times \sum_{\substack{R \subseteq \partial S \\|R| = m}} \sum_{\substack{s_i(t):i\in\partial S, \, i\notin R\\ s_i(1)=1, \, s_i(t)=1}} \lr{\prod_{i \in \partial S, \, i \notin R} p_{s_i(t)}} \sum_{\substack{s_i(t):i \in R\\ s_i(1)=1, \, \exists t_* \in [0,1] \;\text{st}\; s_i(t_*)=-1}} \lr{\prod_{i \in R} p_{s_i(t)}}.
\end{split}
\end{equation}
Operationally, this is equivalent to adding the interactions in $M_{\partial S}(R)$, but specifying that the spins in $R$ never flip. Inserting the quantity $A_m$ from~\eqref{eq:am} and performing the same argument for lower bounds, we obtain
\begin{equation}
    \exp[-8\beta m] f \leq f_m \leq \exp[8\beta m] f
\end{equation}
for $m$-independent quantity
\begin{equation}
\begin{split}
    f &= \lr{{\sum_{\substack{s_i(t):i \notin \partial S\\ s_i(1)=1}} \prod_{i \notin \partial S} p_{s_i(t)}}}^{-1}\sum_{\substack{s_i(t):i \notin \partial S\\ s_i(1)=1}} \lr{\prod_{i \notin \partial S} p_{s_i(t)}} \exp\left[\beta \int_0^1 dt \sum_{\langle i, j \rangle \in M_{\partial S}^c} \sigma_i s_i(t) \sigma_j s_j(t)\right]\\
    &\quad \times \exp\left[\beta \int_0^1 dt \sum_{\langle i, j \rangle \in M_{\partial S}} \sigma_i \sigma_j \lr{\delta_{i \in \partial S} + \delta_{i \notin \partial S} s_i(t)} \lr{\delta_{j \in \partial S} + \delta_{j \notin \partial S} s_j(t)}\right].
\end{split}
\end{equation}
\end{proof}

The remaining proofs are similar to known results, but we include them here for completeness.
The following lemma and corollary are reproduced from Lemma~\ref{lem:pathprops} and Corollary~\ref{cor:defect3}.
\begin{lemma}[Properties of paths (similar to Lemma 15.17~\cite{levin2017markov})]
    For a 2D $\sqrt{n} \times \sqrt{n}$ lattice with configuration $\sigma \in \{\pm 1\}^n$, the following properties hold.
    \begin{enumerate}
        \item If in $\sigma$ there is neither an all-plus nor an all-minus crossing from the left to the right side of the lattice, then there is a fault line with no defects from the top to the bottom.
        \item Let $\Gamma_+$ be a path of lattice sites such that $\sigma_i = 1$ for all $i \in \Gamma_+$, and $\Gamma_+$ starts at site $q$ and ends at a site at the top of the lattice. Similarly define $\Gamma_-$ from $q'$ to the top of the lattice, where $\sigma_i = -1$ for all $i \in \Gamma_-$. Assume $q$ and $q'$ are on the same row of the lattice, and let $\Gamma_{qq'}$ be the horizontal path of sites directly joining $q$ and $q'$. Then there exists a lattice path $\xi$ from the boundary of $\Gamma_{qq'}$ to the top of the lattice such that every edge in $\xi$ is adjacent to two lattice sites with different labels in $\sigma$.
    \end{enumerate}
\end{lemma}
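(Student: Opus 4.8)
The plan is to prove both parts by the classical planar-duality argument relating monochromatic paths of faces to ``interface'' paths of edges, following and adapting the proof of Lemma~15.17 in~\cite{levin2017markov}. To a configuration $\sigma$ associate its \emph{interface} $\partial(\sigma)$: the set of interior grid edges whose two adjacent faces carry opposite spins. The key elementary fact is that every interior vertex of the $\sqrt n\times\sqrt n$ grid has even degree in $\partial(\sigma)$ — going around the four faces incident to a vertex, the spin changes sign an even number of times — while a non-corner boundary vertex, being incident to a single interior edge, has degree $0$ or $1$. Hence $\partial(\sigma)$ decomposes into vertex-disjoint simple paths whose endpoints lie on the outer boundary of the grid, together with simple cycles. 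Since by construction every edge of $\partial(\sigma)$ separates two faces of opposite spin, a simple path of $\partial(\sigma)$ running from the top side of the grid to the bottom side is exactly a fault line with no defects. Both parts thus reduce to topological statements about $\partial(\sigma)$.

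For Part~1, I would prove the contrapositive. Suppose no component of $\partial(\sigma)$ joins the top side to the bottom side of the grid. Deleting all edges of $\partial(\sigma)$ from the rectangle, this hypothesis means there is no ``vertical cut''; a finite planar separation (discrete Jordan-curve / Menger-type) argument then produces a sequence of faces from the left side to the right side in which consecutive faces share an edge \emph{not} in $\partial(\sigma)$, hence carry equal spin. That sequence is a monochromatic left-to-right crossing of one of the two colors. This is precisely the primal-versus-dual crossing dichotomy of~\cite{levin2017markov,randall2006slow}, used here with face-adjacency taken to be edge-adjacency; the two colors enter symmetrically through $\partial(\sigma)$, so ``neither an all-plus nor an all-minus left-right crossing'' is exactly what forces the vertical interface path.

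For Part~2, assume without loss of generality that $q$ is to the left of $q'$ on their common row. The paths $\Gamma_+$ and $\Gamma_-$ are face-disjoint (no face is both $+$ and $-$), and each runs from that row up to the top boundary; consequently each of them, joined to a suitable arc of the top boundary, acts as a barrier that no curve confined to the lattice can cross, because a lattice curve can pass through a monochromatic chain of faces only along an edge shared by two equal-spin faces, which never lies in $\partial(\sigma)$. Walking along $\Gamma_{qq'}$ from $q$ (a $+$ face, as $q\in\Gamma_+$) to $q'$ (a $-$ face, as $q'\in\Gamma_-$), the spin changes at least once, so there is a grid edge $e$ incident to a face of $\Gamma_{qq'}$ and lying in $\partial(\sigma)$, situated in the region $\mathcal R$ bounded by $\Gamma_+$, the top boundary, $\Gamma_-$, and $\Gamma_{qq'}$. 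Tracing the component of $\partial(\sigma)$ through $e$, one obtains a simple path that terminates on the outer boundary of the grid; using the two barriers $\Gamma_+$ and $\Gamma_-$ (and choosing $e$ as the appropriate sign-change so the component cannot escape $\mathcal R$ downward through $\Gamma_{qq'}$), I would show this component is confined to $\mathcal R$ and hence exits through the only remaining free part of $\partial\mathcal R$, the top of the lattice. Its portion from $e$ to that exit is the desired path $\xi$.

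The step I expect to be the main obstacle is the finite, boundary-sensitive planar topology common to both parts: making rigorous both ``$\partial(\sigma)$ has even interior degree, hence splits into boundary-ending paths and cycles'' together with ``no vertical interface path $\Rightarrow$ a horizontal monochromatic crossing'', and, in Part~2, pinning down which boundary arc the traced interface component is forced to reach — in particular ruling out that it loops back through $\Gamma_{qq'}$ or escapes through a side of the lattice. Since this is exactly the technical heart of the classical fault-line arguments, the cleanest route is to transcribe the crossing dichotomy from~\cite[Chapter~15]{levin2017markov} and~\cite{randall2006slow} and then read Parts~1 and~2 off as special cases, rather than rebuilding the planar topology from scratch.
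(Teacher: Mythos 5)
Your proposal works in the dual picture — the interface graph $\partial(\sigma)$ of grid edges separating opposite spins, decomposed into boundary-anchored paths and cycles via the even-interior-degree parity — whereas the paper's proof works in the primal picture: it forms the reachability set $A$ (resp.\ $A_+$) of lattice faces attainable from the left side (resp.\ from $\Gamma_+$) by a monochromatic path, fills in the faces it encloses to obtain $A^*$ (resp.\ $A_+^*$), and reads the fault line off directly as the boundary of that filled set. These are two genuine routes to the same planar-duality fact; in Part~1 yours is additionally the contrapositive of the paper's direct construction, and the even-degree decomposition of $\partial(\sigma)$ is a nice structural observation the paper never needs to state.

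The primal route, however, has a concrete technical advantage that your sketch leaves open, and it is exactly at the spot you flag as the main obstacle. The boundary of the filled connected set $A_+^*$ is automatically a single curve running from the chosen edge on $\Gamma_{qq'}$ to the lattice boundary, with the local rule ``next edge separates $A_+^*$ from its complement'' resolving the turning choice at interface vertices of degree $4$. Your plan to ``trace the component of $\partial(\sigma)$ through $e$'' has two gaps as written: (i) at a degree-$4$ vertex of $\partial(\sigma)$ the local picture around the vertex is $(+,-,+,-)$, so ``the component through $e$'' is not a well-defined curve without a turning rule; and (ii) that component could be a \emph{cycle} rather than a boundary-to-boundary path, and ruling this out requires knowing that \emph{both} faces adjacent to $e$ are un-enclosed by any interface cycle. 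Choosing $e$ as the first sign change from $q$ ties the plus face to $\Gamma_+$ and hence out of any enclosure, but the adjacent minus face has no such anchor to $\Gamma_-$, so a single judicious choice of $e$ does not settle it. Replacing ``the component of $\partial(\sigma)$ through $e$'' by ``the boundary of $A_+^*$ starting at $e$'' — i.e., importing the paper's $A_+^*$ into your interface picture — removes both issues and makes your argument close.
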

\begin{proof}
We show the first property (shown in Lemma 2.2 of~\cite{randall2006slow} and Lemma 15.17 of~\cite{levin2017markov}). Let $A$ be the collection of lattice sites that can be reached from the left side by a path of lattice sites of the same label in $\sigma$. Let $A^*$ equal $A$ together with the set of sites that are separated from the right side by $A$. Then the boundary of $A^*$ consists of part of the boundary of the lattice and a fault line.

We show the second property (which is a slightly more general version of Lemma 15.17 of~\cite{levin2017markov}). Let the path $\Gamma_+$ start at $q$ and end at $q_+$, where $\sigma_{q_+} = 1$; similarly let $\Gamma_-$ start at $q'$ and end at $q'_-$, where $\sigma_{q_-}=-1$. Let $A_+$ denote the collection of lattice sites that can be reached from $\Gamma_+$ by a path of lattice sites labeled plus in $\sigma$, and let $A_+^*$ denote the union of $A_+$ with the set of sites separated from the lattice boundary by $A_+$. Moreover, there exists an edge $\xi_1$ adjacent to two sites in $\Gamma_{qq'}$ such that one site is in $A_+^*$ the other site (necessarily minus) is not in $A_+^*$. We proceed inductively from $\xi_1$, choosing the next edge $\xi_j$ to have a lattice site in $A_+^*$ on one side and a lattice site not in $A_+^*$ on the other, until we have reached the lattice boundary. Since the analogous region $A_-^*$ satisfies $q' \in A_-^*$ and $A_+^* \cap A_-^* = \varnothing$, the resulting lattice path $\xi$ must terminate between $q_+$ and $q_-$ on the top boundary.
\end{proof}
\begin{corollary}
    Let $(A, B, C)$ be $c_0$-regions of the $\sqrt n \times \sqrt n$ Ising model. Every $\sigma \in A$ and $\sigma' \in C$ satisfies $|\sigma - \sigma'| \geq c_0\sqrt n - 3$.
\end{corollary}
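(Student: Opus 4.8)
The plan is to show directly that $\sigma$ contains a fault line with at most $|\sigma-\sigma'|_H+3$ defects; since $\sigma\in A$ forces every fault line of $\sigma$ to have at least $c_0\sqrt n$ defects, this immediately yields $|\sigma-\sigma'|_H\ge c_0\sqrt n-3$. Throughout, let $D$ denote the set of (at most $d:=|\sigma-\sigma'|_H$) lattice sites on which $\sigma$ and $\sigma'$ disagree.

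First I would record where $D$ must sit. Since $\sigma'\in C\subseteq S_-$ it has an all-minus top-to-bottom crossing $P$ (a simple lattice path from the top edge to the bottom edge with $\sigma'_i=-1$ on $P$), and since $\sigma\in A\subseteq S_+$ it has an all-plus left-to-right crossing $L$ and an all-plus top-to-bottom crossing $T$. As $L$ (left-to-right) and $P$ (top-to-bottom) must share a site $x$ of the square, and $x\in L\cap P$ forces $\sigma_x=+1$ while $\sigma'_x=-1$, we get $\varnothing\neq L\cap P\subseteq D$; in particular $\sigma$ has no all-minus top-to-bottom crossing (it would meet $L$), so $P$ contains at least one site of $D$. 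Tracing $P$ from the top edge, let $v$ be the first site of $P$ in $D$; tracing from the bottom edge, let $w$ be the last. The portions $P^{\mathrm{top}}$ of $P$ strictly above $v$ and $P^{\mathrm{bot}}$ strictly below $w$ avoid $D$, hence remain all-minus in $\sigma$ as well.

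Next I would grow a top-to-bottom fault line of $\sigma$ using Lemma~\ref{lem:pathprops}(2). Let $r$ be the row of the predecessor of $v$ on $P$: then $\sigma$ has a minus path ($P^{\mathrm{top}}$) from that site to the top edge, and via $T$ a plus path from the site where $T$ meets row $r$ to the top edge, the two starting sites lying on the common row $r$. Lemma~\ref{lem:pathprops}(2) then produces a lattice path from row $r$ to the top edge every edge of which is adjacent to two disagreeing spins of $\sigma$ — a fault-line segment with no defects. The mirror construction near $w$ yields a defect-free fault-line segment from some row $r'$ down to the bottom edge. Finally I would connect the two segments by a path routed along the interface, between rows $r$ and $r'$, of the all-plus path $T$ and of $P$ (which is minus in $\sigma$ except on the $\le d$ sites of $D\cap P$ lying between $v$ and $w$): this connecting path can be taken self-avoiding, picking up a defect only where it abuts a site of $D\cap P$ and at the two $O(1)$-size stitchings with the top and bottom segments, for a total of at most $d+3$ defects. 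Splicing the three pieces produces a genuine top-to-bottom fault line of $\sigma$ with at most $d+3$ defects, as required. The degenerate situations (e.g. when $v$ or $w$ lies on the lattice boundary, when $T$ and $P$ partially coincide, or when one instead builds a left-to-right fault line) are handled by interchanging rows with columns and, where convenient, the roles of $\sigma$ and $\sigma'$.

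The delicate step is the routing of the connecting path: one must check that hugging the plus side of the interface keeps the defect count at $|D\cap P|+O(1)$ rather than $O(|D|)$, exploiting the self-avoiding structure of lattice paths as in \cite{randall2006slow}; the additive constant $3$ is precisely the bookkeeping cost of the two stitchings together with the single forced crossing of $L$. Everything else is a routine application of the path-structure facts already established in Lemma~\ref{lem:pathprops}.
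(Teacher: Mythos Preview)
Your strategy --- build a fault line directly in $\sigma$ with at most $|\sigma-\sigma'|_H+O(1)$ defects and then invoke the definition of $A$ --- is different from the paper's. The paper instead walks along a single-flip geodesic $(\sigma_1,\dots,\sigma_M)$ from $\sigma$ to $\sigma'$, locates the first configuration $\sigma_j$ that has left $S_+$, and uses Lemma~\ref{lem:pathprops} to show that $\sigma_j$ already carries a fault line with at most three defects; the distance bound is then read off by comparing that fault line in $\sigma_j$ to the same lattice path viewed in $\sigma=\sigma_1$. Your direct approach is appealing because it aims to pin down the additive constant in one construction, whereas the paper works through an intermediate configuration.

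That said, your connecting-path step is a genuine gap, not missing routine detail. You assert that the middle segment between rows $r$ and $r'$ ``can be taken self-avoiding, picking up a defect only where it abuts a site of $D\cap P$,'' but this is exactly what must be proved and you do not supply it. Concretely: (i) Lemma~\ref{lem:pathprops}(2) only guarantees that $\xi_1$ starts \emph{somewhere} on the boundary of $\Gamma_{qq'}$, so you have no control over where to attach a connector; (ii) ``the interface between $T$ and $P$'' is not a canonical dual-lattice path, since $T$ and $P$ are site-paths that may be far apart, cross, or intertwine arbitrarily between rows $r$ and $r'$, and the boundary of the plus-cluster containing $T$ need not track $P$ at all; (iii) even granting a connector that hugs $P$, a single site of $D\cap P$ can abut up to three edges of a self-avoiding dual path, so honest bookkeeping yields $O(|D\cap P|)$ defects rather than $|D\cap P|$, already spoiling the constant you claim; and (iv) splicing $\xi_1$, a connector, and $\xi_2$ into one self-avoiding path is not automatic, since the lemma gives no control preventing $\xi_1,\xi_2$ from winding back into the strip between rows $r$ and $r'$ and intersecting the connector. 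Your own remark that the routing is ``delicate'' and the bare pointer to \cite{randall2006slow} do not discharge any of these obligations.
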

\begin{proof}
    Since all configurations in $A$ and $C$ have at least $c_0\sqrt n$ defects, it suffices to show that for every path $(\sigma_1, \sigma_2, \dots, \sigma_M)$ from $\sigma_1 \in A$ to $\sigma_M \in C$ such that $|\sigma_i-\sigma_{i+1}|_H = 1$ for all $i$, there exists a configuration $\sigma_j \in B$ containing a fault line with at most 3 defects.
    Let $\partial S_+$ denote the configurations $\sigma$ such that a single spin flip suffices to prepare configuration $\sigma' \in S_+$. Since $A \subset S_+$ and $C \subset S_-$, there exists some $j$ such that $\sigma_j \in \partial S_+$. We show that $\sigma_j$ has a fault line with at most 3 defects. If $\sigma_j \notin S_-$, then it has either no left-right or top-bottom monochromatic crossing, and Lemma~\ref{lem:pathprops} implies that it has a fault line with no defects. We now address the case where $\sigma_j \in S_-$. Let $q$ be the lattice square such that flipping $\sigma_j(q) \to -\sigma_j(q)$ produces a configuration in $S_+$. By Lemma~\ref{lem:pathprops}, there exists a lattice path $\xi$ from the boundary of $q$ to the top of the lattice such that every edge in $\xi$ is adjacent to two lattice squares with different labels in $\sigma_j$; by symmetry, there exists a similar path $\xi'$ from the boundary of $q$ to the bottom of the lattice. By adding at most the three edges of $q$, these two paths can be concatenated to obtain a fault line with at most three defects.
\end{proof}

We now bound the locality of the Gibbs sampler via Lieb-Robinson bounds.
Let $H = \sum_Z h_Z$ be a $k$-local Hamiltonian acting on $n$-qubits. Here, we do not require locality in the geometric sense; each $h_Z$ simply acts on at most $k$ qubits defined by the set $Z$. Between any two sets of qubits $X$ and $Y$ (referred to as regions), define measure of distance
\begin{equation}
    d(X, Y) = \min \left| \left\{ Z_i : X \cap Z_1 \neq \phi, \text{ and } Z_i \cap Z_{i+1} \neq \phi, \text{ and } Z_m \cap Y \neq \phi \right\}_{i=1,\dots,m} \right|.
\end{equation}
That is, $d$ measures the minimum number of sets of qubits, defined by the interaction sets of the Hamiltonian terms, to go from a qubit in $X$ to a qubit in $Y$. We recall the following standard Lieb-Robinson bound~\cite{nachtergaele2006lieb,hastings2006spectral,lieb1972finite,hastings2004lieb,haah2021quantum}.

\begin{lemma}[Lieb-Robinson bounds]\label{lem:lieb-robinson}
Let $H$ be a $k$-local Hamiltonian $H = \sum_Z h_Z$. For any operator $A_X$ acting nontrivially only on region $X$, and for any time $t$ and distance $\ell$, there exists constants $v, \mu > 0$ such that
\begin{equation}
    \norm{A_X(t) - A_{X(\ell)}(t)} \leq |X| \norm{A_X} e^{-\mu \ell},
\end{equation}
where
\begin{equation}
    X(\ell) = \left\{i : d(i, X) \leq v |t| + \ell \right\}.
\end{equation}
\end{lemma}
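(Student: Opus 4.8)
The plan is to prove this in two stages: first establish an exponential Lieb--Robinson bound on the commutator $\norm{[A_X(t),B]}$ for any operator $B$ supported far from $X$, and then convert that commutator bound into the stated operator approximation through a twirling argument.

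For the commutator bound I would fix $B$ supported on a region $Y$ with $Y\cap X=\varnothing$ and write $\norm{[A_X(t),B]}=\norm{[A_X,B(-t)]}$ with $B(-t)=e^{-iHt}Be^{iHt}$, then study $g(t)=[A_X,B(-t)]$. Differentiating and using the Jacobi identity gives $\dot g(t)=-i[H,g(t)]-i[[A_X,H_X],B(-t)]$, where $H_X=\sum_{Z:\,Z\cap X\neq\varnothing}h_Z$ collects only the Hamiltonian terms meeting $X$ (the rest commute with $A_X$). Conjugating by $e^{iHt}$ removes the norm-preserving piece $-i[H,g(t)]$; integrating in time and taking norms gives an inequality of the schematic form $\norm{[A_X,B(-t)]}\le\norm{[A_X,B]}+2\sum_{Z:\,Z\cap X\neq\varnothing}\norm{h_Z}\int_0^t\norm{[[A_X,h_Z],B(-s)]}\,\dd s$. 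Introducing $C_W(Y,t)=\sup\{\norm{[O,B(-t)]}/(\norm{O}\norm{B}):\,\mathrm{supp}(O)\subseteq W,\ \mathrm{supp}(B)\subseteq Y\}$ turns this into a recursion in which the region $W$ grows outward from $X$, with the time-zero term $C_W(Y,0)$ vanishing until $W$ first meets $Y$. Iterating, $\norm{[A_X(t),B]}$ is bounded by a sum over chains of interaction terms $(Z_1,\dots,Z_m)$ that connect $X$ to $Y$: each chain contributes at most $(2\norm{h}t)^m/m!$ from the $m$ nested time integrals (with $\norm{h}=\sup_Z\norm{h_Z}$), every chain has length $m\ge d(X,Y)$ by the definition of $d$, and $k$-locality with bounded interaction degree bounds the number of length-$m$ chains rooted at $X$ by $|X|$ times a constant raised to the $m$. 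Summing this series yields constants $v,\mu>0$ (depending only on $k$, $\norm{h}$, and the degree) with $\norm{[A_X(t),B]}\le|X|\,\norm{A_X}\,\norm{B}\,e^{-\mu(d(X,\mathrm{supp}(B))-v|t|)}$.

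For the second stage I would take $A_{X(\ell)}(t):=\Phi(A_X(t))$, where $\Phi$ is the normalized partial trace onto $X(\ell)$, i.e.\ the Haar average $\Phi(O)=\int\dd U\,UOU^\dagger$ over unitaries $U$ supported on the complement $X(\ell)^c$; this makes $A_{X(\ell)}(t)$ supported on $X(\ell)$. Using the standard estimate $\norm{O-\Phi(O)}\le\sup\{\norm{[O,B]}:\,\mathrm{supp}(B)\subseteq X(\ell)^c,\ \norm{B}\le1\}$ together with the fact that every such $B$ satisfies $d(X,\mathrm{supp}(B))>v|t|+\ell$ by the definition of $X(\ell)$, the commutator bound gives $\norm{A_X(t)-A_{X(\ell)}(t)}\le|X|\,\norm{A_X}\,e^{-\mu\ell}$ after absorbing a constant into $\mu$, which is the claim. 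I expect the main obstacle to be the combinatorial bookkeeping of the iterated commutator expansion --- organizing the sum over chains of interaction terms so that it converges and collapses into a clean exponential in $d(X,Y)-v|t|$ --- since that is exactly where $k$-locality, bounded interaction degree, and the interaction-support definition of $d$ get used; the differential inequality, the interaction-picture cancellation, and the twirling estimate are all routine.
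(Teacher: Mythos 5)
The paper does not prove this lemma; it states it as a standard background fact and cites~\cite{nachtergaele2006lieb,hastings2006spectral,lieb1972finite,hastings2004lieb,haah2021quantum}. Your sketch reproduces the canonical Hastings--Koma / Nachtergaele--Sims argument correctly: the interaction-picture differential inequality and iterated (Dyson-series) expansion give $\norm{[A_X(t),B]}\le |X|\,\norm{A_X}\,\norm{B}\,e^{-\mu(d(X,\operatorname{supp}B)-v|t|)}$ via a sum over chains of overlapping Hamiltonian terms connecting $X$ to $\operatorname{supp}(B)$, and the Haar twirl $\Phi$ onto $X(\ell)$ together with the estimate $\norm{O-\Phi(O)}\le\sup_{\norm{B}\le 1,\operatorname{supp}B\subseteq X(\ell)^c}\norm{[O,B]}$ converts the commutator bound into the stated operator approximation. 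One minor slip in the intermediate display: in $\norm{[A_X,B(-t)]}\le\norm{[A_X,B]}+2\sum_{Z:Z\cap X\neq\varnothing}\norm{h_Z}\int_0^t\norm{[[A_X,h_Z],B(-s)]}\,\dd s$, the prefactor $2\norm{h_Z}$ is redundant, since $h_Z$ already sits inside the nested commutator in the integrand; that factor legitimately appears one step later when you pass to the normalized quantity $C_W$ via $\norm{[A_X,h_Z]}\le 2\norm{A_X}\norm{h_Z}$. This does not affect the structure of the argument. You also correctly observe that bounded interaction degree is needed for the chain count to close; this is implicit in the lemma as stated but automatic in the geometrically local lattice setting where the paper actually uses the bound (Corollary~\ref{cor:lieb-time} and Lemma~\ref{lem:chen-local}).
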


The above statement applies to any interaction graph; we require here a lattice.

\begin{corollary}\label{cor:lieb-time}
Let $H$ be a $k$-local $n$-qubit Hamiltonian for constant $k$ on an $r$-dimensional lattice. For any operator $A(0)$ acting on region $X \subseteq [n]$, there exists an operator $\tilde A(t)$ and a constant $c > 0$ such that for any $\epsilon > 0$,
\begin{equation}
    \norm{A(t) - \tilde A(t)} \leq \epsilon |X| \norm{A}
\end{equation}
and $\tilde A(t)$ acts nontrivially on region
\begin{equation}
    \tilde X = \left\{j : |j-X| \leq c\lr{|t| + \log\frac{1}{\epsilon}}\right\},
\end{equation}
where $|j-X|$ denotes the Euclidean distance between $j$ and the nearest point in $X$.
\end{corollary}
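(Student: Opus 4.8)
The plan is to obtain \Cref{cor:lieb-time} as a direct specialization of the general Lieb--Robinson bound in \Cref{lem:lieb-robinson}, the only real work being the translation of the combinatorial interaction distance $d(\cdot,\cdot)$ into Euclidean distance on the lattice. First I would fix the operator $A$ supported on region $X$, write $A(t) = e^{iHt}Ae^{-iHt}$, and define $\tilde A(t) := A_{X(\ell)}(t)$ to be the quasi-local restriction of $A(t)$ appearing in \Cref{lem:lieb-robinson}, where $X(\ell) = \{i : d(i,X) \le v|t| + \ell\}$ and $\ell$ is a parameter to be chosen. Here the restriction is the standard one, $A_{Y}(t) = \mathbb{E}_{U \sim \mathrm{Haar}(Y^c)}[U A(t) U^\dagger] = (\Tr_{Y^c} A(t)) \otimes (I_{Y^c}/\dim_{Y^c})$, which is an average of unitary conjugates and hence operator-norm non-increasing, so $\|\tilde A(t)\| \le \|A\|$. \Cref{lem:lieb-robinson} then gives $\|A(t) - \tilde A(t)\| \le |X|\,\|A\|\,e^{-\mu\ell}$.

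Next I would set $\ell = \mu^{-1}\log(1/\epsilon)$, so that $e^{-\mu\ell} = \epsilon$ and the error bound becomes $\|A(t) - \tilde A(t)\| \le \epsilon |X| \|A\|$, as claimed. It then remains to enclose the support region $X(\ell)$ in a Euclidean ball around $X$. Since $H$ is $k$-local with constant $k$ on an $r$-dimensional lattice (with the interaction terms of bounded norm, as is implicit), every interaction set $Z$ has bounded diameter, so there is a geometric constant $C_0 = C_0(k,r)$ such that a minimal chain of $m$ overlapping interaction sets connecting a site $i$ to $X$ traces a lattice walk of Euclidean length at most $C_0 m$; taking $Y = \{i\}$ in the definition of $d$ yields $|i - X| \le C_0\, d(i,X)$. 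Consequently,
\begin{equation}
    X(\ell) \subseteq \left\{ j : |j - X| \le C_0\lr{v|t| + \mu^{-1}\log\tfrac{1}{\epsilon}} \right\} \subseteq \left\{ j : |j - X| \le c\lr{|t| + \log\tfrac{1}{\epsilon}} \right\}
\end{equation}
with $c = C_0 \max\{v,\,\mu^{-1}\}$, which is the region $\tilde X$ in the statement.

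The only genuinely delicate point is ensuring that $c$ is uniform in the system size $n$. This holds because the Lieb--Robinson velocity $v$ and decay rate $\mu$ in \Cref{lem:lieb-robinson} depend only on the locality $k$, the maximal single-term strength $\max_Z\|h_Z\|$, and the maximum degree of the interaction hypergraph --- all $O(1)$ for a fixed-dimensional lattice with constant-local, bounded-strength terms (e.g.\ the $\sqrt n \times \sqrt n$ transverse field Ising model, where each site lies in $O(1)$ terms of norm $\le 1$) --- and because the graph-to-Euclidean conversion constant $C_0$ depends only on $k$ and the lattice dimension. Everything else is a substitution into \Cref{lem:lieb-robinson}, with no additional estimates needed.
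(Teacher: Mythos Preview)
Your proof is correct and follows the same approach as the paper: choose $\ell = \mu^{-1}\log(1/\epsilon)$ in \Cref{lem:lieb-robinson} and absorb the lattice constants into a single $c$. Your version is in fact more detailed than the paper's, which simply says ``We then simplify constants $k, v, \mu$ into $c$'' without spelling out the interaction-distance-to-Euclidean-distance comparison or the uniformity in $n$.
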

\begin{proof}
We choose $\ell = \frac{1}{\mu} \log \frac{1}{\epsilon}$ and apply Lemma~\ref{lem:lieb-robinson} to obtain
\begin{equation}
    \norm{A(t) - A_{X(\ell)}(t)} \leq \epsilon |X| \norm{A_X},
\end{equation}
where
\begin{equation}
    X(\ell) = \left\{i : d(i, X) \leq v |t| + \frac{1}{\mu}\log\frac{1}{\epsilon} \right\}.
\end{equation}
We then simplify constants $k, v, \mu$ into $c$ to obtain the desired result.
\end{proof}

We now specialize our analysis to the setting of~\cite{chen2023efficient}, recalling the notation in Appendix~\ref{app:chenlindblad}. By Corollary~\ref{cor:lieb-time}, the operators $\tilde L_a$ and $\tilde B_a$ truncated at time $R\beta$ can be approximated to operator norm error $\delta$ by operators on an $r$-dimensional lattice acting nontrivially only on a neighborhood of radius $O(R\beta + \log 1/\delta)$. To identify the appropriate choice of $R$ and $\delta$, we use Lemma~\ref{lem:chenops}, obtaining bounds on the locality of the Gibbs sampling algorithm of~\cite{chen2023efficient} for lattice Hamiltonians.

\begin{theorem}
Let $\cL$ denote the Lindbladian Gibbs sampler of~\cite{chen2023efficient} with $J$ constant-local jump operators and defined at inverse temperature $\beta$ with respect to a constant-local Hamiltonian on an $r$-dimensional lattice. There exists a channel $\cE$ that satisfies $\norm{\cL - \cE}_\diamond \leq \epsilon$ such that each Kraus operator of $\cE$ acts nontrivially only on a neighborhood of radius $\ell$ on the lattice, where
\begin{equation}
    \ell = O\lr{(\beta+1) \log \frac{J}{\epsilon}}.
\end{equation}
\end{theorem}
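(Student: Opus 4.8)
The plan is to approximate $\cL$ in two stages and then balance the two error budgets against $\epsilon$. The Lindbladian~\eqref{eq:chenlind} of~\cite{chen2023efficient} is built from Heisenberg-evolved jump operators convolved against Gaussian-type filters in the variables $\omega,t,t'$; in the first stage I truncate those filter integrals in time, and in the second stage I use the Lieb--Robinson bound to replace the remaining (finitely-spread) Heisenberg-evolved operators by strictly local ones.

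First I would invoke Theorem~\ref{thm:chen-bounds}: choosing $R_0 = c\log(2J/\epsilon)$ with $c$ the constant of that theorem produces a Lindbladian $\tilde\cL$ whose operators $\tilde B_a,\tilde L_a$ are the truncations of~\eqref{eq:lops} at time $R_0\beta$, satisfying $\norm{\cL-\tilde\cL}_\diamond\le\epsilon/2$. The key point of this step is the bookkeeping of time scales: every operator in $\tilde\cL$ is assembled from Heisenberg evolutions $A_a(s)$ with $|s| = O(R_0\beta)$ --- immediately for $\tilde L_a(\omega,t)=A_a(-t)l(\omega,t)$, and for $\tilde B_a(t,t') = \tilde A_a'(\beta t)b(t,t')$ because $A_a'$ is a product of $A_a$'s evolved for times $\pm\beta t'$ and then further evolved for time $\beta t$, so with $|t|,|t'|\le R_0$ the total physical evolution time is still $O(R_0\beta)$.

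Next I would apply the lattice Lieb--Robinson bound of Corollary~\ref{cor:lieb-time} to each Heisenberg factor $A_a(s)$ with $|s|=O(R_0\beta)$, replacing it by an operator supported on a radius-$\ell$ neighborhood of the constant-size support of $A_a$ with $\ell = O\lr{R_0\beta + \log(1/\delta')}$ at operator-norm cost $O(\delta')$ (using $|X|,\norm{A_a}=O(1)$ for constant-local jumps); multiplying the at most three such factors and using submultiplicativity yields local replacements $\hat B_a,\hat L_a$, all supported on radius-$\ell$ neighborhoods, and I set $\cE := \hat\cL$. To bound $\norm{\tilde\cL-\hat\cL}_\diamond$ I would use Lemma~\ref{lem:lindblad-dist}, extended from finite sums to the $\omega,t,t'$ integrals exactly as in the proof of Theorem~\ref{thm:chen-bounds}, together with the norm estimates of Lemma~\ref{lem:chenops} which give $\norm{L_a(\omega,t)},\norm{\tilde L_a(\omega,t)}=O(l(\omega,t))$ while the filters $l(\omega,t)$ and $b(t,t')$ are integrable and square-integrable in these variables; this produces $\norm{\tilde\cL-\hat\cL}_\diamond = O(J\delta')$ up to $\beta$-dependent constants, so $\delta' = \Theta(\epsilon/J)$ gives $\norm{\tilde\cL-\hat\cL}_\diamond\le\epsilon/2$. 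The triangle inequality then yields $\norm{\cL-\cE}_\diamond\le\epsilon$, and since $R_0 = O(\log(J/\epsilon))$ and $\log(1/\delta')=O(\log(J/\epsilon))$, the locality radius is $\ell = O\lr{(\beta+1)\log(J/\epsilon)}$.

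The genuinely delicate step is the interface between the two approximations: one must confirm that after truncating the filters the \emph{physical} Heisenberg-evolution time appearing in each operator --- especially in the nested time arguments of $B_a$ --- is really $O(R_0\beta)$, since that physical time, not the filter cutoff, sets the Lieb--Robinson light cone; and one must ensure that carrying the continuous $\omega,t,t'$ integrals through the diamond-norm bound of Lemma~\ref{lem:lindblad-dist} does not let the operator norms $\norm{L_a}$ diverge, which is precisely what the square-integrability of the filters guarantees. The remaining estimates (filter norms, constant-chasing) are routine given Lemma~\ref{lem:chenops}.
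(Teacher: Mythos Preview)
Your proposal is correct and follows essentially the same two-stage approach as the paper: first truncate the filter integrals via Theorem~\ref{thm:chen-bounds}, then localize the resulting Heisenberg-evolved operators via Corollary~\ref{cor:lieb-time}, and combine the two errors through Lemma~\ref{lem:lindblad-dist} and the triangle inequality with the same choice of $\delta'=\Theta(\epsilon/J)$. Your treatment is in fact slightly more explicit than the paper's about the nested time arguments in $B_a$ and about why the $\omega,t,t'$ integrals remain controlled after passing through Lemma~\ref{lem:lindblad-dist}, but these are the same estimates the paper relies on.
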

\begin{proof}
We apply Theorem~\ref{thm:chen-bounds}, introducing $\cL$ that integrates the operators $\tilde B_a, \tilde L_a$~\eqref{eq:lops} to time $R\beta$ for $R = c_0 \log J/\epsilon$. Observe that the jump operators are constant-local and of constant operator norm. Each of $J$ operators $\tilde B_a, \tilde L_a$ in $\tilde \cL$ given by
\begin{equation}
    \tilde B_a = \int_{-\infty}^\infty \int_{-\infty}^\infty dt \; dt' \; \tilde B_a(t, t'), \quad \tilde L_a = \int_{-\infty}^\infty dt\; l(\omega, t) \tilde L_a(\omega, t)
\end{equation}
consist only of operators evolved for time $R\beta$ and can thus each be approximated to error $\xi$ with operators of radius $O(R\beta + \log 1/\xi)$ on the lattice. Let $\cE$ replace each $\tilde B_a, \tilde L_a$ in $\cL$ with the local version, so $\norm{\tilde \cL - \cE}_\diamond = O\lr{J\xi}$. Then by the triangle inequality, taking $\xi = \epsilon/J$ gives
\begin{equation}
    \norm{\cL - \cE}_\diamond = O\lr{\epsilon}
\end{equation}
for locality $O((\beta+1) \log J/\epsilon)$.
\end{proof}

\end{document}